\tikzset{invclip/.style={clip,insert path={{[reset cm]
				(-1638 pt,-1638 pt) rectangle (1638 pt,1638 pt)}}}}
\theoremstyle{definition}
\newtheorem{definition}{Definition}[section]
\newtheorem*{remark}{Remark}
\theoremstyle{theorem}
\newtheorem{theorem}{Theorem}[section] 
\theoremstyle{corollary}
\theoremstyle{lemma} 
\newtheorem{lemma}[theorem]{Lemma}
\theoremstyle{Proposition} 
\newtheorem{Proposition}[theorem]{Proposition}
\definecolor{IKblue}{RGB}{25,25,125}
\definecolor{BSorange}{RGB}{140,50,0}
\newcommand{\merge}[0]{\Join}
\newcommand{\axiomone}[2] %I made the line thiner for this command.
{
\begin{scope}[xshift=#1cm, yshift=#2cm]
    \draw[line width=0.6pt, fill=red!10!white, opacity=0.5] (0,0) circle (0.3cm);
    \draw[line width=0.6pt] (0,0) circle (0.3cm);
    \draw[line width=0.6pt] (0,0) circle (0.15cm);
\end{scope}
}
\newcommand{\Rom}[1]{\uppercase\expandafter{\romannumeral #1\relax}}
\newcommand{\calC}{{\mathcal C}}
\newcommand{\calD}{{\mathcal D}}
\newcommand{\calE}{{\mathcal E}}
\newcommand{\calH}{{\mathcal H}}
\newcommand{\Tr}{{\rm Tr}}
\newcommand{\conv}{{\rm conv}}
\begin{document}
\title{Entanglement bootstrap approach for gapped domain walls} 	
\author{Bowen Shi}
\affiliation{Department of Physics, University of California at San Diego, La Jolla, CA 92093, USA}
\affiliation{Department of Physics, The Ohio State University, Columbus, OH 43210, USA}
\author{Isaac H. Kim}
\affiliation{3 Centre for Engineered Quantum Systems, School of Physics, University of Sydney, Sydney, NSW 2006, Australia}
\affiliation{Stanford University, Stanford, CA 94305, USA}
\date{\today}

\begin{abstract}
We develop a theory of gapped domain wall between topologically ordered systems in two spatial dimensions. We find a new type of superselection sector -- referred to as the parton sector -- that subdivides the known superselection sectors localized on gapped domain walls. Moreover, we introduce and study the properties of composite superselection sectors that are made out of the parton sectors. We explain a systematic method to define these sectors, their fusion spaces, and their fusion rules, by deriving nontrivial identities relating their quantum dimensions and fusion multiplicities. We propose a set of axioms regarding the ground state entanglement entropy of systems that can host gapped domain walls, generalizing the bulk axioms proposed in [B. Shi, K. Kato, and I. H. Kim, Ann. Phys. 418, 168164 (2020)]. Similar to our analysis in the bulk, we derive our main results by examining the self-consistency relations of an object called information convex set. As an application, we define an analog of topological entanglement entropy for gapped domain walls and derive its exact expression. 
\end{abstract}

\maketitle

\tableofcontents

\section{Introduction}
\label{sec:intro}
One of the fundamental discoveries in physics is topologically ordered phases of matter~\cite{Wen2004}. These are gapped phases of quantum many-body systems that possess low-energy excitations with fractional statistics~\cite{Leinaas1977,PhysRevLett.48.1144,PhysRevLett.53.722,Kalmeyer1987,Moore1991}. A prominent experimental example is the well-known fractional quantum Hall states~\cite{Tsui1982}. 

While these systems already exhibit a rich set of phenomena in the bulk of the material, more new physics can appear on their boundaries. The existence of a robust gapless boundary mode is well-known~\cite{1997PhRvB..5515832K,Kitaev2006solo}. The nontrivial effects of gapped boundary conditions on the topological ground state degeneracy~\cite{Bravyi1998} and low-energy excitations~\cite{Beigi2010} have also been studied. 

More generally, there can be \emph{gapped domain walls} between two different topologically ordered mediums~\cite{KitaevKong2012,2013PhRvX...3b1009L,Barkeshli2013a,Kong2014,Lan2015, Hung2015}. Gapped domain walls are not just of theoretical interest. When used in conjunction with the low-energy excitations, the domain walls can complete a universal set of topologically protected quantum gates~\cite{Cong2017}. Therefore, studies of gapped domain walls may lead to new means of building a fault-tolerant quantum computer~\cite{Kitaev2003}.

While there have been a number of beautiful prior works that studied gapped domain walls in various contexts~\cite{Beigi2010,KitaevKong2012,2011NuPhB.845..393K,2013CMaPh.321..543F,2013PhRvX...3b1009L,Barkeshli2013a,Bais2009,Kong2014,2015PhRvB..91l5124W,Lan2015, Hung2015,Hung2015a,Neupert2016,Neupert2016a,Cong2017a,2018JHEP...01..134H,2019PhRvB..99o5134M,2019arXiv190108285S,2019JHEP...05..110H,2019arXiv190706692B,Lan2019,2020arXiv200811187L}, there are still many unknowns. For one, less is known about the order parameters that characterize gapped domain walls. In the bulk of a topologically ordered system, entanglement-based measures~\cite{Kitaev2006,Levin2006,Zhang2012} are useful for characterizing the underlying topological order~\cite{Isakov2011,Jiang2012,Cincio2013}. However, similar measures for gapped domain walls are not known to the best of our knowledge.

Moreover, while a theory of gapped domain wall has been proposed already~\cite{Kong2014}, this theory is based on an assumption about the condensation algebra~\cite{Bais2009,KitaevKong2012}, which abstracts away the microscopic physics of the underlying many-body quantum system. The abstractness of this theory is both a blessing and a curse. It allows us to identify the fundamental data that characterize the gapped domain wall without ever dealing with the microscopic physics. However, the downside is that it is not always clear how to extract these data directly from the original many-body system. Moreover, one may contest that the rules set out in this theory may not constitute a complete theory of gapped domain walls. While this is a sentiment that we do not necessarily share, it will still be desirable to derive these rules from a more microscopic assumption about the underlying physical system. 

To address these issues, we applied a recently discovered approach to studying topological order~\cite{SKK2019} to systems separated by a gapped domain wall. In Ref.~\cite{SKK2019}, we derived the axioms of the fusion theory of anyon and the expression for the topological entanglement entropy -- defined as the subleading contribution to the ground state entanglement entropy -- from a set of simple assumptions on ground-state entanglement. In this paper, we extend this analysis to systems that possess a gapped domain wall, by relaxing the set of assumptions used in Ref.~\cite{SKK2019} appropriately; see Fig.~\ref{fig:axioms_all} for the summary of these assumptions.

\begin{figure}
  \centering
  \includegraphics[scale=1]{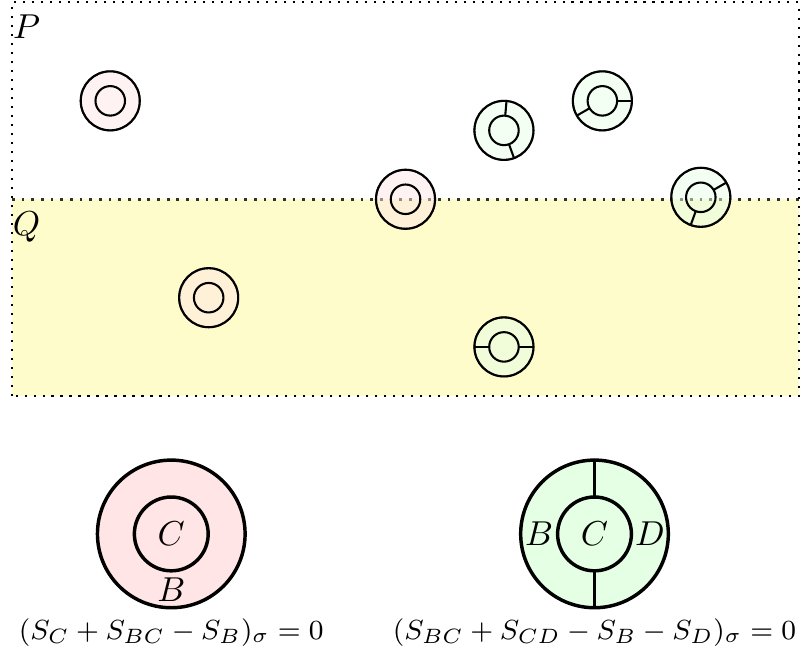}
    \caption{The starting assumptions of this paper. Here, $\sigma$ is the ground state and $S_{A}$ is the entanglement entropy of a subsystem $A$ with respect to the state $\sigma$. For details, see Section~\ref{sec:fusion_from_entanglement}. (Top) We consider topologically ordered mediums $P$ (upper half) and $Q$ (lower half) which are separated by a gapped domain wall. (Bottom) We assume that the ground state $\sigma$ locally obeys two types of entropic constraints. These constraints are imposed on balls of bounded radius. While these constraints hold everywhere in the bulk, the second constraint is relaxed on the domain wall.}
    \label{fig:axioms_all}
\end{figure}

From these assumptions, we were able to identify a new set of superselection sectors localized at the domain wall. These sectors, which we refer to as the \emph{parton sectors}, will be the main subject of this paper. These are ``parton-like'' in the sense that other superselection sectors are composite objects made from these sectors. One example of such a composite sector is the superselection sectors of point excitations on the domain wall, which have been studied in Refs.~\cite{KitaevKong2012,Kong2014}. However, there are other types of composite sectors that are new to the best of our knowledge. 

Both the parton and the composite sectors can be ``fused'' together like the superselection sectors appearing in the bulk of the topological phase. However, the ordinary rule of fusion does not always apply. When we say fusion, we usually mean that there are two sectors, say $a$ and $b$, that fuses into $c$. The state space in which $a$ and $b$ fusing into $c$ is isomorphic to the state space of some Hilbert space. However, when we fuse parton sectors, the state space in which two parton sectors fuse into another parton sector may not be isomorphic to any such state space. We refer to this phenomenon as \emph{quasi-fusion} and later explain how this difference arises.

Another strange thing about the parton sectors is that they should not be viewed as low-energy excitations. Generally, a single parton by itself cannot completely specify an excitation. Instead, parton labels should be considered as quantum numbers that partially determine the excitation.

Despite their bizarre nature, parton sectors are actual physical objects. There are operators localized on the $N$- and $U$-shaped regions in Fig.~\ref{fig_UN_intro} that can measure these sectors. More concretely, for every parton sector, there is an operator that can unambiguously detect the presence of that sector. As such, parton sectors should be treated as fundamental objects in any theory of gapped domain walls.

\begin{figure}
  \centering
  \includegraphics[scale=1]{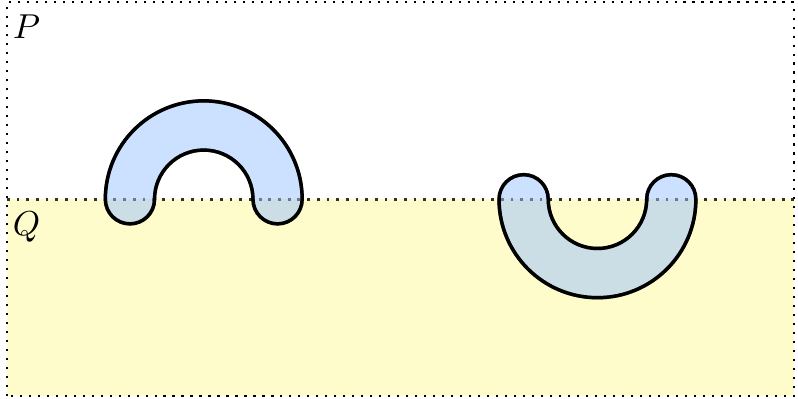}
    \caption{For every parton sector, there is an operator acting either on the $N$-shaped(left) or $U$-shaped(right) region that can unambiguously detect the sector.}
    \label{fig_UN_intro}
\end{figure}

\begin{figure}[h]
  \centering
  \includegraphics[scale=1]{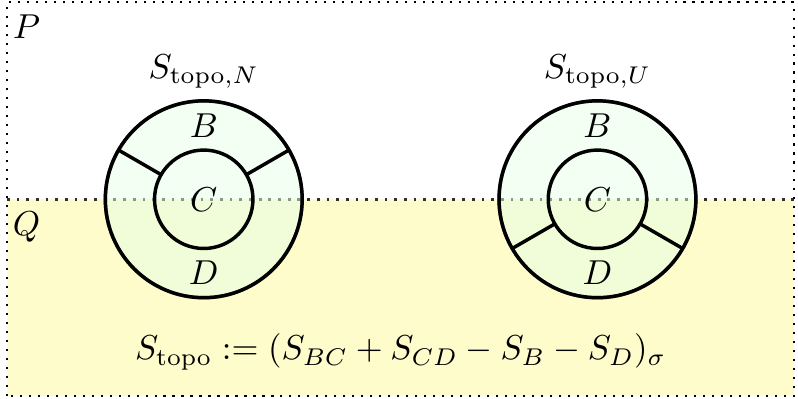}
    \caption{Subsystems involved in the calculation of the domain wall topological entanglement entropy.}
    \label{fig:tee_configs}
\end{figure}

To examine whether a given microscopic system can host parton sectors, calculating ground state entanglement can be a fruitful approach. We prove, starting from a set of assumptions summarized in Fig.~\ref{fig:axioms_all}, that the linear combination of entanglement entropy in Fig.~\ref{fig:tee_configs} must be equal to
\begin{equation}
    \begin{aligned}
    S_{\text{topo}, N} &= 2 \ln \mathcal{D}_N,\\
    S_{\text{topo}, U} &= 2 \ln \mathcal{D}_U, 
    \end{aligned}
\end{equation}
where  $\mathcal{D}_N=\sqrt{\sum_n d_n^2}$ and $\mathcal{D}_U = \sqrt{\sum_u d_u^2}$ are the total quantum dimension of two different types of parton sectors referred to as $U$- and $N$-sectors. In analogy with the topological entanglement entropy~\cite{Kitaev2006,Levin2006}, we refer to these ``order parameters'' as \emph{domain wall topological entanglement entropies}. More discussion on this order parameter will appear in our companion paper~\cite{EntanglementBootstrap_short}.

Notwithstanding the rich physics of parton sectors, perhaps the most remarkable fact of all is that all of these results followed entirely from Fig.~\ref{fig:axioms_all}. No assumption on the parent Hamiltonian was necessary. The notion of superselection sectors was derived, instead of being imposed. The existence of fusion spaces was, again, derived. These facts compel us to name our approach as \emph{entanglement bootstrap,} in analogy with the conformal bootstrap program~\cite{Ferrara1973,Polyakov1974}. 

While there are many conclusions one can make from this work, the following two stand out. First, in the presence of gapped domain walls, there is a new type of superselection sector called parton sector. Parton sectors are more fundamental than the other sectors in the sense that they subdivide the other sectors. These findings suggest that there is more to be understood about gapped domain walls than previously thought. 

The second lesson is somewhat philosophical. We often do physics by beginning with a specific Lagrangian/Hamiltonian in mind and then computing various properties of the theory from those objects. Alternatively, one may write a set of consistency equations coming from the underlying symmetry of the theory~\cite{Polyakov1974}. Our work shows that there is a third possibility, namely a possibility to study the theory from the properties of ground state entanglement. Let us again emphasize that, in our study, we did not invoke any assumption about the action or the symmetry. All that was required was the set of consistency equations coming from the property of the ground state entanglement. The fact that a new physics can be uncovered this way is, in our opinion, surprising and certainly warrants further exploration. 

The rest of this paper is structured as follows. In Section~\ref{sec:fusion_from_entanglement}, we review Ref.~\cite{SKK2019}, focusing on the key ideas that are used in this work. In Section~\ref{sec:parton}, we explain how the assumptions used in Ref.~\cite{SKK2019} are modified in the presence of gapped domain walls. In particular, we deduce the existence of the parton sectors, which are the central objects of this paper. In Section~\ref{sec:composite_sectors}, we study the composite superselection sectors that are made out of the parton sectors. We begin with a few examples and conclude with the general lessons. In Section~\ref{sec:fusion}, we introduce a method to construct the fusion spaces of these sectors. In Section~\ref{sec:fusion_rules}, we study the fusion rules. In particular, we derive a number of nontrivial identities relating the fusion multiplicities to the quantum dimensions. In Section~\ref{Sectionquasi_fusion}, we discuss the \emph{quasi-fusion rules} of the parton sectors, which generalize the ordinary fusion rules. In general, more than one fusion space is needed to describe a quasi-fusion process, even if both the parton sectors before and after the quasi-fusion are completely specified. In Section~\ref{sec:tee}, we derive various expressions for the topological entanglement entropies of domain walls. In Section~\ref{sec:string}, we discuss the properties of the string-like operators that can create the superselection sectors we have studied in this paper. We conclude in Section~\ref{sec:discussion}, listing some open problems and directions to pursue.

\section{Fusion rules from entanglement}
\label{sec:fusion_from_entanglement}
Our theory of gapped domain walls rests on our recent work on anyons~\cite{SKK2019}. Before this study, the theory of anyons was based on a mathematical framework called unitary modular tensor category theory~\cite{Kitaev2006solo}. However, in Ref.~\cite{SKK2019}, many basic rules of that framework emerged from a generic property of entanglement in gapped ground states. In this section, we provide a brief overview of this work, focusing on the parts relevant to this paper.

To start with, we explain an important concept called \emph{information convex set}~\cite{SKK2019,Kim2015sydney,Shi2018}. The information convex set is essential in understanding Ref.~\cite{SKK2019} because the key physical objects of interest emerge from this definition. To explain this concept, let us consider a subsystem of a two-dimensional lattice, denoted as $\Omega$. Let $\Omega' \supset \Omega$ be a subsystem obtained by enlarging $\Omega$ along its boundary by an amount large compared to the correlation length. The information convex set is defined as follows:
\begin{equation}
    \Sigma(\Omega) := \left\{\Tr_{\Omega'\setminus \Omega}\left( \rho_{\Omega'}\right)| \rho_b = \sigma_b \quad \forall b\in \mathcal{B}(r), b\subset \Omega' \right\},
\end{equation}
where $\sigma$ is some fixed \emph{global reference state}. It is helpful to think of this state as a ground state of some gapped Hamiltonian, although we do not make use of that fact. Here, $\mathcal{B}(r)$ is the set of balls of bounded radius $r=\mathcal{O}(1)$, where $r$ is chosen to be large compared to the correlation length. 

As it stands, aside from the fact that it is convex, the information convex set does not have any particularly noteworthy structure. However, much more can be said about this set once we incorporate physically motivated axioms on the reference state $\sigma$. To that end, Ref.~\cite{SKK2019} advocated two physical axioms. Specifically, the axioms state that
\begin{equation}
\begin{aligned}
    (S_C + S_{BC} - S_B)_{\sigma} &= 0 \\
    (S_{BC} + S_{CD} - S_B - S_D)_{\sigma} &=0
\end{aligned}
\label{eq:axioms_bulk}
\end{equation}
over a set of subsystems depicted in Fig.~\ref{fig:bulk_axiom_to_fusion_data}, where $S(\rho)=-\Tr(\rho \ln \rho)$ is the von Neumann entropy of $\rho$. Here, we specified $\sigma$ in the subscript of the parenthesis because the underlying global state is the same for all the entanglement entropies in the linear combination. The subscript of $S$ represents the relevant subsystem. For instance, $S_B$ appearing in an expression like $(\ldots + S_B+\ldots)_{\sigma}$ represents $S(\sigma_B)$.

\begin{figure}[h]
	\centering
        \includegraphics[width=0.95\columnwidth]{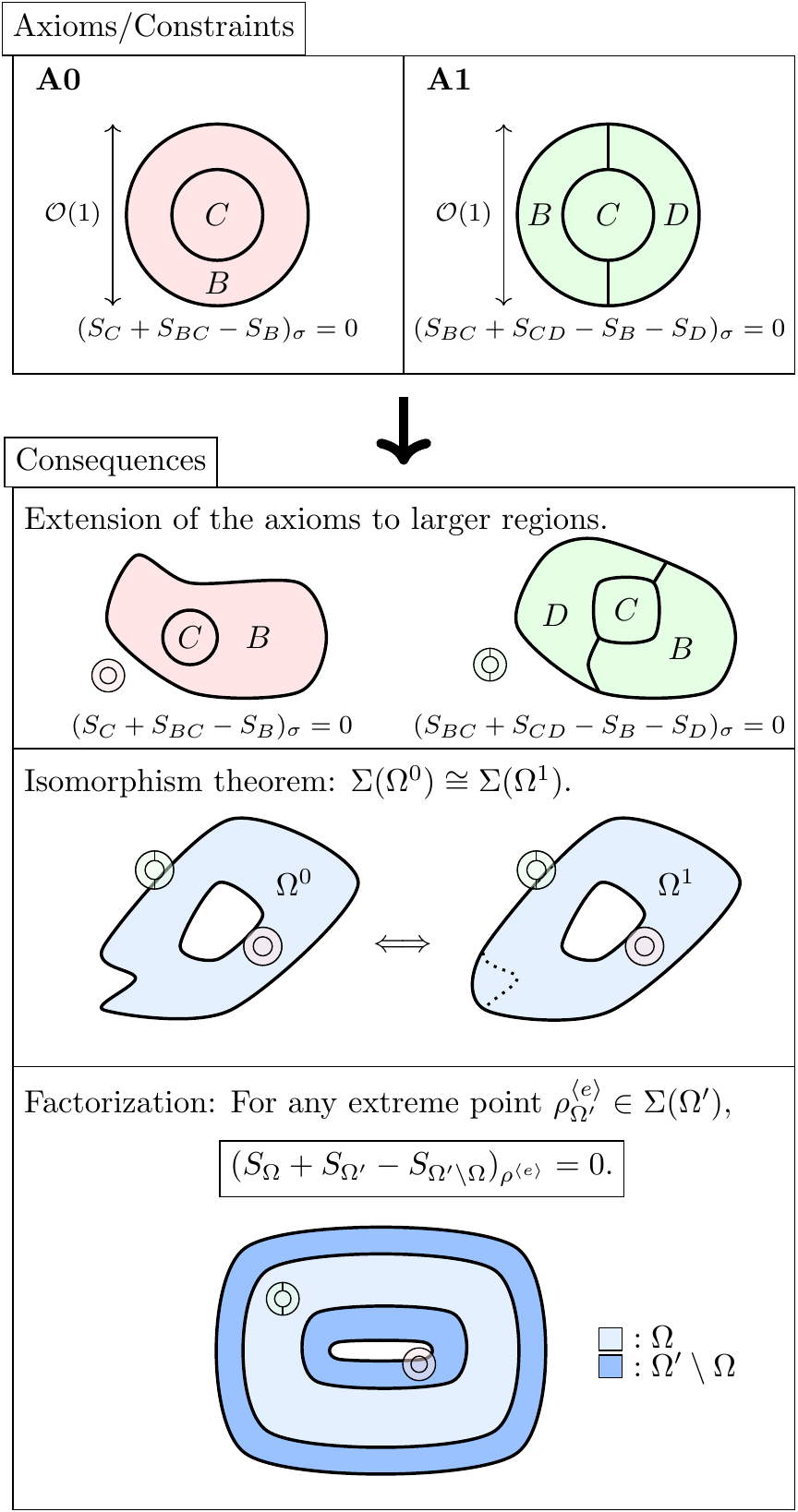}
	\caption{The axioms \textbf{A0} and \textbf{A1} of Ref.~\cite{SKK2019} and their consequences. These axioms, which are defined on a region of size $\mathcal{O}(1)$, imply that the same entropic conditions hold at larger length scales; see the first figure in the ``consequences.'' The isomorphism and the factorization property holds if the subsystems' thicknesses are larger than $2r$. Here, $r$ is the radius of the disks on which the axioms are imposed. While we only depicted annuli in this figure, the same consequences apply to \emph{any} sufficiently smooth subsystems. }
	\label{fig:bulk_axiom_to_fusion_data}
\end{figure}

Equation.~\eqref{eq:axioms_bulk} is a reasonable assumption because it follows from the well-known expression for the ground state entanglement entropy of gapped systems~\cite{Kitaev2006,Levin2006}:
\begin{equation}
    S(\sigma_A) = \alpha |\partial A| - \gamma + \ldots, \label{eq:tee}
\end{equation}
where $A$ is a simply connected subsystem, $\alpha$ is a non-universal constant, $\gamma$ is the topological entanglement entropy, and the ellipsis is the subleading term that vanishes in the $|\partial A| \to \infty$ limit.\footnote{While Eq.~\eqref{eq:axioms_bulk} must be assumed to hold exactly in Ref.~\cite{SKK2019}, we expect the arguments of the paper to go through even if we the conditions only hold approximately.} In the absence of subsystem symmetries~\cite{Williamson2019}, Eq.~\eqref{eq:tee} is expected to hold. Therefore, the fact that Eq.~\eqref{eq:axioms_bulk} follows from Eq.~\eqref{eq:tee} justifies the physical relevance of our axioms.

\begin{figure}[h]
  \centering
  \includegraphics[scale=0.95]{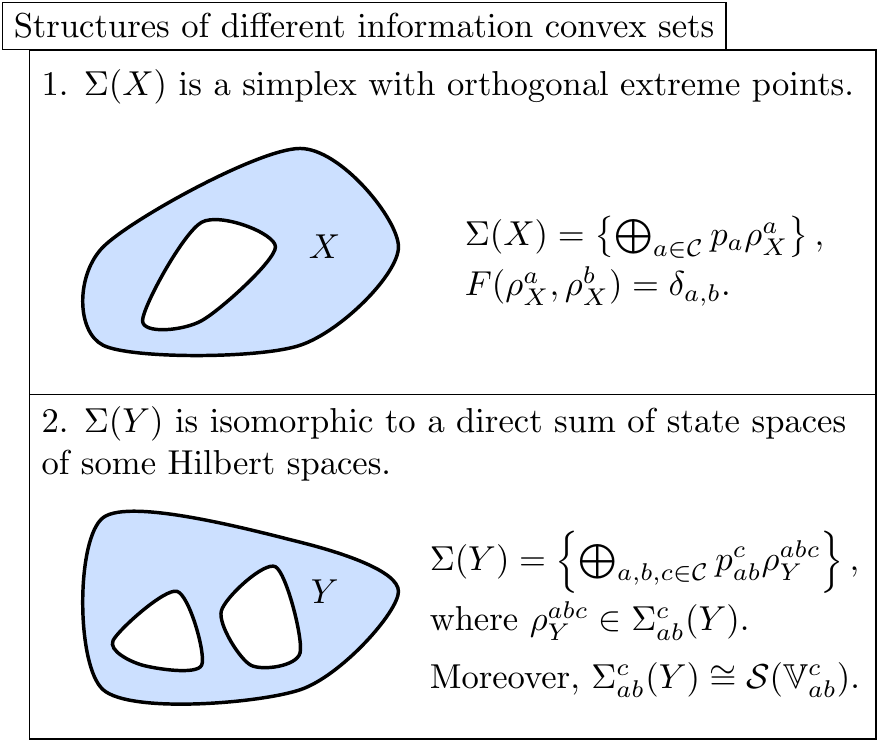}
	\caption{The technical consequences in Fig.~\ref{fig:bulk_axiom_to_fusion_data} lead to the structural statements about the information convex sets. The proof of the first statement is reproduced in Section~\ref{sec:intro_supsec_fusion}. For the proof of the second statement, see Ref.~\cite{SKK2019}. Here, $X$ is an annulus and $Y$ is a two-hole disk. $\mathcal{S}(\mathbb{V}_{ab}^c)$ is the set of density matrices acting on a finite dimensional Hilbert space  $\mathbb{V}_{ab}^c$.}
	\label{fig:iso_fac_to_fusion_data}
\end{figure}

These axioms lead to three important consequences, summarized in Fig.~\ref{fig:bulk_axiom_to_fusion_data}. We will focus on discussing their meanings, referring Ref.~\cite{SKK2019} for the proof.

The first consequence is that Eq.~\eqref{eq:axioms_bulk} holds at larger length scales. Recall that the axioms only apply to balls of bounded radius. The same set of constraints hold on arbitrarily large subsystems.

The second consequence is the isomorphism theorem.
\begin{theorem}
[Isomorphism theorem~\cite{SKK2019}]
 If $\Omega^0$ and $\Omega^1$ are connected by a path $\{\Omega^t\}_{t\in[0,1]}$, there is an isomorphism $\Phi$ between $\Sigma(\Omega^0)$ and $\Sigma(\Omega^1)$ uniquely determined by the path. Moreover, this isomorphism preserves the distance and entropy difference between two elements of the information convex sets: for any $\rho,\lambda \in \Sigma(\Omega^0)$,
\begin{equation}
\boxed{
\begin{aligned}
    D(\rho, \lambda) &= D(\Phi(\rho), \Phi(\lambda)) \\
    S(\rho) - S(\lambda) &= S(\Phi(\rho)) - S(\Phi(\lambda)),
\end{aligned}
}
\end{equation}
where $D(\cdot, \cdot)$ is any distance measure that is non-increasing under completely-positive trace preserving maps.
\label{thm:isomormphism_bulk}
\end{theorem}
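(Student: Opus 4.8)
The plan is to build $\Phi$ out of a sequence of local moves along the path, and then to verify that the result is path-determined, distance-preserving, and entropy-difference-preserving. First I would discretize the path: any admissible family $\{\Omega^t\}_{t\in[0,1]}$ can be written as a finite composition of elementary deformations, each either \emph{growing} $\Omega$ by attaching a disk $b$ of radius $\mathcal{O}(1)$ along part of $\partial\Omega$, or \emph{shrinking} $\Omega$ by deleting such a disk. It then suffices to produce, for each elementary growth $\Omega\subset\Omega'=\Omega\cup b$, an affine isomorphism $\Sigma(\Omega)\to\Sigma(\Omega')$ with the required properties, to use its inverse for a shrink, and to let $\Phi$ be the composition along the path.

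The restriction map $\Sigma(\Omega')\to\Sigma(\Omega)$, $\rho\mapsto\Tr_b\rho$, is manifestly affine and CPTP, and it lands in $\Sigma(\Omega)$ because the small-ball constraints defining $\Sigma(\Omega)$ form a sub-collection of those defining $\Sigma(\Omega')$. The content is the extension direction. Given $\rho\in\Sigma(\Omega)$, I would invoke the consequence of axiom \textbf{A0}, propagated to large scales (first box of Fig.~\ref{fig:bulk_axiom_to_fusion_data}), that $\sigma$ is a quantum Markov chain across $b-N-(\Omega\setminus N)$ for a bounded collar $N\subset\Omega$ separating $b$ from the rest of $\Omega$. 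Let $\mathcal{E}_{N\to Nb}$ be the associated (linear) Petz recovery channel -- identity on $\Omega\setminus N$, recovery on $N$ -- and set $\hat\rho:=\mathcal{E}_{N\to Nb}(\rho)$, the unique state extending $\rho$ that is Markov across the same split and agrees with $\sigma$ on $Nb$. One checks $\Tr_b\hat\rho=\rho$, and that $\hat\rho$ matches $\sigma$ on every small ball $\beta\subset\Omega'$: if $\beta$ avoids $b$ this is inherited from $\rho\in\Sigma(\Omega)$, and if $\beta$ meets $b$ then $\beta$ lies in a bounded neighborhood of $b$ on which $\hat\rho$ equals $\sigma$ by construction; hence $\hat\rho\in\Sigma(\Omega')$. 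Uniqueness is where axiom \textbf{A1} enters: the axioms propagated to $\Sigma(\Omega')$ force strong subadditivity to be saturated across $b\,|\,N\,|\,(\Omega\setminus N)$, so any element of $\Sigma(\Omega')$ restricting to $\rho$ is Markov across that split and agrees with $\sigma$ on the bounded region $Nb$; such a state is uniquely the Petz reconstruction, so $\hat\rho$ is unique. Linearity of $\mathcal{E}_{N\to Nb}$ gives affineness, and existence plus uniqueness give bijectivity.

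With the elementary isomorphisms in hand, the three claims follow. Distance: each elementary map is CPTP, so $D$ is non-increasing, while extension followed by restriction is the identity, so no elementary map can strictly decrease $D$; hence each of them, and any composition, is a $D$-isometry. Entropy differences: on an elementary growth the Markov structure gives $S(\hat\rho)=S(\hat\rho_{Nb})+S(\hat\rho_\Omega)-S(\hat\rho_N)$, and since $Nb$ and $N$ are of bounded size $\hat\rho$ agrees there with $\sigma$, whence $S(\hat\rho)-S(\rho)=S(\sigma_{Nb})-S(\sigma_N)$, a constant independent of $\rho$; the same holds with opposite sign for shrinks, so any composition preserves entropy differences. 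Path-independence: two elementary decompositions of the same path are related by interchanging moves supported on disjoint disks -- whose Petz maps act on commuting factors and hence commute -- together with obvious cancellations of a growth followed by the inverse shrink, so the resulting $\Phi$ depends only on the path. The main obstacle is the extension step, and within it the uniqueness of the Markov extension from \textbf{A1}: this is exactly what upgrades a list of CPTP maps into a genuine isomorphism of convex sets and is the technical heart of the theorem.
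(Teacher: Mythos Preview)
Your outline follows the paper's architecture closely---discretize the path into elementary extensions/restrictions, implement extension by the Petz map, implement restriction by partial trace, and compose. The entropy-difference and distance arguments are essentially correct. However, there is a genuine gap at exactly the point the paper flags as ``the subtle part of the proof.''

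The issue is your claim that $\hat\rho:=\mathcal{E}_{N\to Nb}(\rho)\in\Sigma(\Omega')$. You verify that $\hat\rho$ matches $\sigma$ on every small ball $\beta\subset\Omega'$, and conclude membership in $\Sigma(\Omega')$. But that is not the definition used here: $\Sigma(\Omega')$ consists of reductions of states on an \emph{enlarged} region $\Omega''\supset\Omega'$ that match $\sigma$ on all balls in $\Omega''$. Local consistency on $\Omega'$ alone is a strictly weaker condition (this is the set the paper calls $\hat\Sigma(\Omega')$), and showing $\hat\Sigma(\Omega')=\Sigma(\Omega')$ is precisely what requires the merging lemma and merging theorem (Lemma~\ref{lemma:merging_lemma} and Theorem~\ref{thm:merging_info_convex_set}); see the last two paragraphs of Appendix~\ref{appendix:isomorphism}. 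Your argument therefore assumes the hard step rather than proving it. Concretely, you have shown uniqueness of an extension of $\rho$ within $\Sigma(\Omega')$ (your \textbf{A1} paragraph is correct for that), but not existence; surjectivity of $\Tr_b:\Sigma(\Omega')\to\Sigma(\Omega)$ is exactly what the merging technique supplies.

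A secondary point: the conditional independence $I(\Omega\setminus N:b\,|\,N)=0$, both for $\sigma$ and for elements of $\Sigma(\Omega')$, comes from the extension of axiom~\textbf{A1}, not \textbf{A0} (cf.\ Fig.~\ref{fig:isomorphism_proof_sketch} and Lemma~\ref{prop:CMI_general}). Axiom \textbf{A0} gives $S_C+S_{BC}-S_B=0$, which is not a Markov condition.
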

\noindent
Here we say that a path exists between two subsystems if they can be smoothly deformed into each other without changing the topology of the subsystem.\footnote{In order to not run into any pathological counterexamples, it is convenient to only consider subsystems whose thicknesses are at least a few times larger compared to $r$.} This theorem implies that there are ``conserved quantities'' which remain invariant under deformations of the subsystems. These quantities include the distance between two states in the information convex set and their entropy difference. 

The third consequence concerns the factorization property of the extreme points. Let $\Omega$ be an arbitrary subsystem. Consider a subsystem $\Omega'\supset \Omega$ that can be smoothly deformed into $\Omega$, where  $\Omega'\setminus \Omega$ is a ``shell'' that covers the boundary of $\Omega$. We shall refer to $\Omega' \setminus \Omega$ as the \emph{thickened boundary} of $\Omega'$. This will be an important concept that will be used throughout this paper.
Let  $\rho^{\langle e\rangle}_{\Omega'} \in \Sigma(\Omega')$ be an extreme point. Then we have
\begin{equation}
\boxed{
    (S_{\Omega} + S_{\Omega'} - S_{\Omega' \setminus \Omega})_{\rho^{\langle e\rangle}}=0.
}    
    \label{eq:factorization_extreme_points}
\end{equation}
To see why we refer to Eq.~\eqref{eq:factorization_extreme_points} as the factorization property, consider a purification of $\rho^{\langle e\rangle}_{\Omega'}$, which we denote as $|\rho^{\langle e\rangle}\rangle_{\Omega' \Omega'^c}$ where $\Omega'^c$ is the purifying system of $\Omega'$. By using the fact that the von Neumann entropy of a state is equal to that of its purifying space, we can conclude
\begin{equation}
    \begin{aligned}
    I(\Omega: \Omega'^c)_{|\rho^{\langle e\rangle}\rangle_{\Omega' \Omega'^c}} &=  (S_{\Omega} + S_{\Omega'} - S_{\Omega' \setminus \Omega})_{\rho^{\langle e\rangle}} \\
    &=0,
    \end{aligned}
\end{equation}
where $I(A:B)_{\rho} := (S_A + S_B - S_{AB})_{\rho}$ is the mutual information between $A$ and $B$ over a state $\rho$. In other words, $|\rho^{\langle e\rangle}\rangle_{\Omega' \Omega'^c}$, upon tracing out $\Omega' \setminus \Omega$, becomes a product state over $\Omega$ and $\Omega'^c$.

These three consequences are the main workhorses of our theory. Below, we will see the power of these consequences in action, by deriving several nontrivial facts about anyon theory. We urge the readers to carefully digest the ensuing material before moving to the rest of the paper, as the key ideas remain the same while the setup becomes more intricate as we move forward.

\subsection{Superselection sectors}\label{sec:intro_supsec_fusion}
In the theory of anyon, a superselection sector is a topological charge associated with a point-like excitation. In our theory, we define the superselection sectors as the extreme points of an information convex set over an annulus. In this section, we justify this definition by showing that different extreme points are orthogonal to each other. In particular, different extreme points can be perfectly distinguished from each other by some physical experiment. 

To prove this fact, we set up the notation as follows. Consider an annulus $X$ and two additional annuli $X'$ and $X''$ such that $X' \supset X$ and $X'X''$ is again an annulus; see Fig.~\ref{fig:superselection_sketch}. Without loss of generality, consider two extreme points of $\Sigma(X)$, denoted as $\rho_X$ and $\rho'_X$.

The key idea is to map these extreme points to the extreme points in $\Sigma(XX'')$ by using Theorem~\ref{thm:isomormphism_bulk}. Then, we apply the factorization property of extreme points to argue that these extreme points must factorize over $X$ and $X''$. Our claim will be an immediate consequence of this last fact.

As a first step, note that any distance measure between $\rho_X$ and $\rho'_X$ is invariant under the isomorphism associated with a smooth deformation of $X$; see Theorem~\ref{thm:isomormphism_bulk}. In particular, the fidelity between the two satisfies the following identity
\begin{equation}
    F(\rho_{X}, \rho_{X}') = F(\rho_{X'X''}, \rho_{X' X''}'),
\end{equation}
where $\rho_{X'X''}$ and  $\rho_{X' X''}'$ are the extreme points of $\Sigma(X'X'')$ obtained from the isomorphism. Moreover, because fidelity is non-decreasing under partial trace, we conclude~\footnote{While fidelity is not a distance measure, one can relate it to a distance measured called Bures distance, defined as $\sqrt{1-F(\cdot, \cdot)}$.}
\begin{equation}
\begin{aligned}
    F(\rho_{X}, \rho_{X}') &= F(\rho_{X'X''}, \rho_{X'X''}') \\
    &\leq F(\rho_{X X''}, \rho_{X X''}'). \label{eq:fedlity_le_X}
\end{aligned}
\end{equation}

\begin{figure}[h]
  \centering
  
  \includegraphics[scale=1]{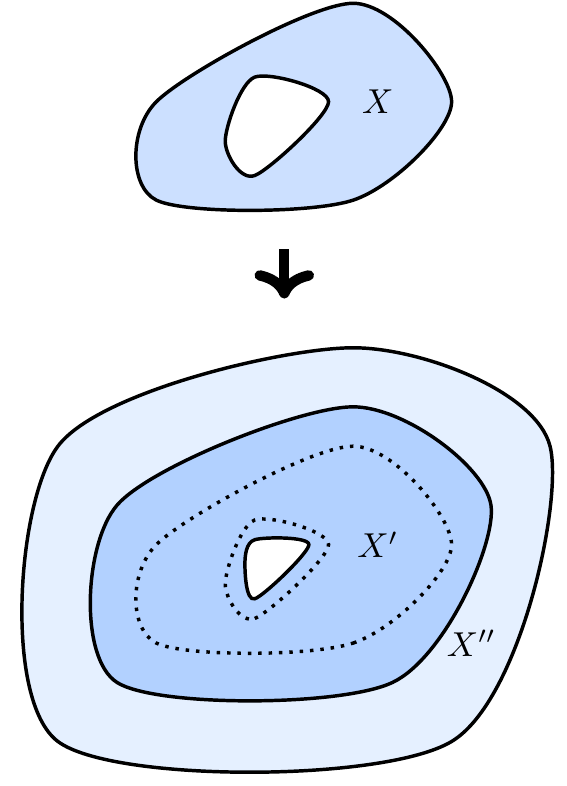}
	\caption{Subsystems involved in the proof of the orthogonality of extreme points in $\Sigma(X)$. The annulus enclosed in the dotted region of the second figure represents the blue annulus $X$ in the top figure. By applying the isomorphism theorem (Theorem~\ref{thm:isomormphism_bulk}), the annulus $X$ is deformed into a larger annulus $X'\supset X$ and the union of $X'$ (dark blue) with $X''$ (light blue).}
	\label{fig:superselection_sketch}
\end{figure}

Secondly, by the factorization of the extreme points, we have 
\begin{equation}
    (S_{X} + S_{X'} - S_{X'\setminus X})_{\rho} = 0.
\end{equation}
By using the strong subadditivity of entropy (SSA)~\cite{Lieb1973}, we get
\begin{equation}
\begin{aligned}
    I(X: X'')_{\rho} &\leq (S_{X} + S_{X'} - S_{X'\setminus X})_{\rho} \\
    &=0.
\end{aligned}
\end{equation}
Therefore, $\rho_{X'X''}$, upon restricting to $XX''$, becomes a product state over $X$ and $X''$. The same conclusion applies to $\rho'_{X'X''}$ because $\rho'$ is an extreme point too.

Combining these two observations, we conclude
\begin{equation}
\begin{aligned}
    F(\rho_{X}, \rho_{X}') \leq F(\rho_X, \rho_X') F(\rho_{X''}, \rho'_{X''}).
\end{aligned}
\end{equation}
Again using the isomorphism theorem (Theorem~\ref{thm:isomormphism_bulk}), we get
\begin{equation}
    F(\rho_{X}, \rho_{X}')\leq F(\rho_{X}, \rho_{X}')^2.
\end{equation} 
Since $F(\rho_{X}, \rho_{X}') \in [0,1]$ by the definition of fidelity, the only allowed values are $F(\rho_{X}, \rho_{X}')\in \{0, 1\}$. Therefore, given any two extreme points, they must be either orthogonal to each other or exactly equal to each other, thus proving our claim.

Therefore, without loss of generality, we can characterize $\Sigma(X)$ as a simplex with orthogonal extreme points.\footnote{Here, the orthogonality means that the Hilbert-Schmidt inner product of two-density matrices is $0$. Namely, $\Tr(\rho^{\dagger} \sigma)=0$.} Specifically, we have
\begin{equation}
   \Sigma(X) =\left\{ \bigoplus_{a} p_a \rho_X^a  : \sum_a p_a = 1, p_a\ge 0 \right\},
\end{equation}
where different extreme points $\{ \rho_X^a \}$ are supported on orthogonal subspaces. Provided that the underlying Hilbert space is finite-dimensional, $a$ belongs to a finite set 
\begin{equation}
  \calC=\{ 1, a, b, \cdots \}, 
\end{equation} 
where ``1'' is the vacuum sector, the extreme point of which is obtained by restricting $\sigma$ to $X$. 

To each of the extreme points, we can define a notion of \emph{quantum dimension}. Let $\rho_X^a$ be an extreme point of $\Sigma(X)$. We \emph{define} the quantum dimension of the superselection sector $a$ as
\begin{equation}
\boxed{
    d_a :=\exp \left(\frac{S(\rho_X^a) - S(\rho_X^1)}{2} \right), \label{eq:d_a_definition}
    }
\end{equation}
Note that, even though we have not specified the annulus here, this definition is still well-defined because of the isomorphism theorem (Theorem~\ref{thm:isomormphism_bulk}). It follows from this definition that $d_1=1$ and $d_a >0$ for all $a\in \mathcal{C}$. 

Our definition of the quantum dimension is not standard. However, this definition is equivalent to the more widely-held definition~\cite{SKK2019}. We prove this in Section~\ref{sec:intro_qd_fm} by showing that our quantum dimensions are completely determined by the fusion multiplicities, as is the case in the more conventional theory of anyon~\cite{Kitaev2006solo}.

\subsection{Fusion multiplicities}\label{sec:intro_qd_fm}
In this section, we derive the following fundamental equation
\begin{equation}
\boxed{
	d_ad_b = \sum_{c\in \mathcal{C}} N_{ab}^c d_c,
}\label{eq:fusion_identity_bulk}
\end{equation}
where $N_{ab}^c$ is the dimension of the fusion space and $d_a$ is defined in the previous section~\cite{Shi2018,SKK2019}. This equation implies that the quantum dimensions \emph{are} the quantum dimensions of anyons. 

\begin{figure}[h]
	\centering
	\includegraphics[scale=1]{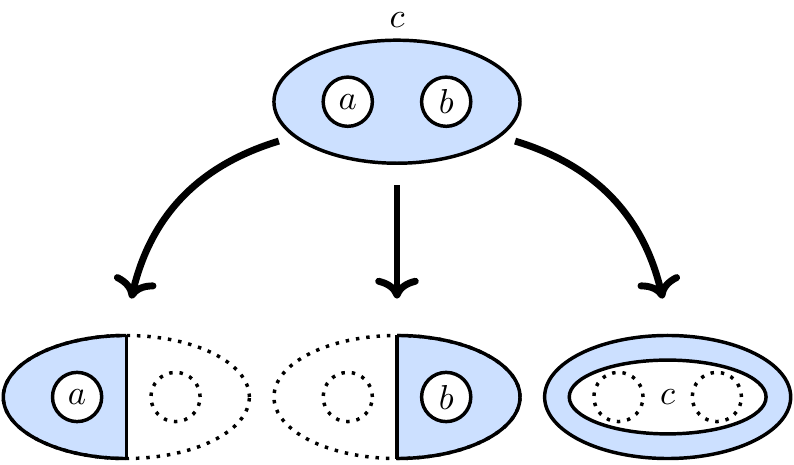}
	\caption{By restricting the elements of $\Sigma_{ab}^c(Y)$ to three different annuli, we obtain extreme points of $\Sigma(X)$ for some annulus $X$. These extreme points correspond to the superselection sectors $a, b,$ and $c$. }
	\label{fig:info_convex_two_hole}
\end{figure}

We begin by briefly explaining what we mean by a fusion space, deferring the proof of Eq.~\eqref{eq:fusion_identity_bulk} for the moment. The fusion space is defined in terms of the information convex set of a two-hole disk, say $Y$. Reference~\cite{SKK2019} completely characterized this set, under the same set of assumptions we have used so far. Specifically,
\begin{equation}
\Sigma(Y) =\left\{ \bigoplus_{a,b,c} p_{ab}^c \rho_Y^{abc}  : \rho_Y^{abc} \in \Sigma_{ab}^c(Y) \right\},
\end{equation}
where $\{p_{ab}^c\}$ is a probability distribution and $\Sigma_{ab}^c(Y)$ is a set of states in $\Sigma(Y)$ whose reduced density matrices on the three annuli are the extreme points associated with superselection sectors $a, b,$ and $c$; see Fig.~\ref{fig:info_convex_two_hole}. Importantly, $\Sigma_{ab}^c(Y)$ is isomorphic to the state space of some finite-dimensional Hilbert space. This is our definition of the fusion space. The dimension of this fusion space is $N_{ab}^c$.

Below, we focus on the derivation of Eq.~\eqref{eq:fusion_identity_bulk}, by first explaining the \emph{merging technique}, and then applying this technique to our setup. 

\subsubsection{Merging}
Equation~\eqref{eq:fusion_identity_bulk} follows from an extremely useful technique called \emph{merging}. The merging technique addresses the following problem. Let $\rho \in \Sigma(\Omega)$ and $\lambda \in \Sigma(\Omega')$ be the elements belonging to two different information convex sets. Can we construct a state in $\Sigma(\Omega \cup \Omega')$ that is consistent with both $\rho$ and $\lambda$? Obviously, this is not always possible because such a state may not even exist.\footnote{As a simple example, let $\rho$ be a maximally entangled state between two subsystems, say $A$ and $B$, and $\lambda$ be a maximally entangled state between $B$ and $C$. By the monogamy of entanglement, there cannot be any tripartite state over $ABC$ that is consistent with both $\rho$ and $\lambda$.} Moreover, even if there exists a state consistent with both $\rho$ and $\lambda$, that state may not belong to $\Sigma(\Omega \cup \Omega')$. With the merging technique, we can ensure both.

The following two statements are the key. First, for general quantum states, we have the following \emph{merging lemma.}
\begin{lemma}[Merging Lemma \cite{Kato2016}]\label{lemma:merging_lemma}
	If there is a pair of quantum states $\rho_{ABC}$ and $\lambda_{BCD}$ satisfying
	$\rho_{BC}=\lambda_{BC}$
	and $I(A:C\vert B)_{\rho}= I(B:D\vert C)_{\lambda}=0$,  there exists a unique quantum state $\tau_{ABCD}$ such that 
	\begin{eqnarray}
	\Tr_D \tau_{ABCD} &=& \rho_{ABC} \nonumber\\
	\Tr_A \tau_{ABCD} &=& \lambda_{BCD} \nonumber\\
	I(A: CD\vert B)_{\tau} &=& I(AB: D\vert C)_{\tau}=0. \nonumber
	\end{eqnarray}
\end{lemma}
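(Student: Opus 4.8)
The plan is to prove the Merging Lemma by an explicit construction of $\tau_{ABCD}$ using the Petz recovery map (transpose channel), which is the natural tool whenever a quantum Markov condition $I(A:C|B)=0$ is present. Recall the structural theorem of Hayden--Jozsa--Petz--Winter: $I(A:C|B)_\rho = 0$ iff the Hilbert space of $B$ decomposes as $\mathcal{H}_B = \bigoplus_j \mathcal{H}_{B_j^L}\otimes \mathcal{H}_{B_j^R}$ such that $\rho_{ABC} = \bigoplus_j q_j\, \rho_{AB_j^L}\otimes \rho_{B_j^R C}$. The idea is that this decomposition, being a property of $\rho_{BC}$ alone when combined with the matching analogous decomposition forced by $I(B:D|C)_\lambda=0$, lets us glue $A$ and $D$ onto a common block structure.

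First I would invoke the HJPW decomposition for $\rho_{ABC}$ along $B$, and separately the HJPW decomposition for $\lambda_{BCD}$ along $C$, writing $\mathcal{H}_C = \bigoplus_k \mathcal{H}_{C_k^L}\otimes\mathcal{H}_{C_k^R}$ with $\lambda_{BCD} = \bigoplus_k r_k\, \lambda_{B C_k^L}\otimes \lambda_{C_k^R D}$. Next I would use the hypothesis $\rho_{BC} = \lambda_{BC}$ to reconcile these two pictures: the marginal on $BC$ must simultaneously respect both block structures. Then I would define $\tau_{ABCD}$ by "stitching" — concretely, setting $\tau_{ABCD} := \mathcal{R}^{C\to CD}_{\lambda}\bigl(\rho_{ABC}\bigr)$ where $\mathcal{R}^{C\to CD}_{\lambda}$ is the Petz map that reconstructs $D$ from $C$ using $\lambda_{CD}$ as reference. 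Equivalently one can write it symmetrically via a rotated-Petz / Fawzi--Renner style map, but since we are assuming exact conditional independence the ordinary Petz map suffices and is cleanest. The verification that $\Tr_D\tau = \rho_{ABC}$ is immediate from the Petz map being trace-preserving and acting as identity on the $C$-marginal; the verification that $\Tr_A\tau = \lambda_{BCD}$ uses that $\Tr_A\rho_{ABC} = \rho_{BC} = \lambda_{BC}$ and that the Petz map applied to $\lambda_{BC}$ returns $\lambda_{BCD}$ (the defining property of the recovery map). The two vanishing conditional mutual informations of $\tau$ follow because the Petz map $\mathcal{R}^{C\to CD}$ acts trivially on the $AB$ side, so it cannot create correlations between $D$ and $AB$ beyond those mediated by $C$, giving $I(AB:D|C)_\tau = 0$; and $I(A:CD|B)_\tau = 0$ follows by combining $I(A:C|B)_\rho=0$ with the fact that $D$ is attached only through $C$, formally via the data-processing/chain-rule identity $I(A:CD|B)_\tau = I(A:C|B)_\tau + I(A:D|BC)_\tau$ with both terms shown to vanish.

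For uniqueness, I would argue as follows: any $\tau'_{ABCD}$ satisfying the three displayed conditions must, by $I(AB:D|C)_{\tau'}=0$, be recoverable from its own $ABC$-marginal by the Petz map built from its $CD$-marginal; but the $ABC$-marginal is forced to be $\rho_{ABC}$ and the $CD$-marginal is forced to be $\lambda_{CD}$ (it is the marginal of $\lambda_{BCD}$), so $\tau'$ equals the same Petz reconstruction we used to define $\tau$. Hence $\tau' = \tau$.

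The main obstacle I anticipate is the bookkeeping in the "reconciliation" step: showing that the Petz recovery map attached along $C$ using $\lambda_{CD}$ is compatible with the block structure coming from $I(A:C|B)_\rho=0$, so that applying it to $\rho_{ABC}$ genuinely lands in a state for which \emph{both} conditional mutual informations vanish rather than just one. One must check that recovering $D$ does not spoil the $A$--$C$ Markov structure relative to $B$ — this is where the equality $\rho_{BC}=\lambda_{BC}$ does real work, since it ensures the $C$-block decomposition seen by the Petz map is a coarsening consistent with what $\rho_{ABC}$ already exhibits. Making this precise, ideally by passing to the explicit direct-sum form on $\mathcal{H}_B$ and $\mathcal{H}_C$ and writing $\tau$ blockwise, is the step that deserves the most care; everything else is a routine application of the defining properties of the Petz map and the chain rule for conditional mutual information.
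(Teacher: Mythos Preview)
The paper does not prove this lemma; it is quoted verbatim as a result from Kato's paper (reference \cite{Kato2016}), so there is no in-paper proof to compare against. Your approach via the Petz recovery map is indeed the standard one and is essentially what the cited source does.

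That said, there is a genuine gap in your argument as written. Your justification that $\Tr_D\tau = \rho_{ABC}$ ``is immediate from the Petz map being trace-preserving and acting as identity on the $C$-marginal'' is incorrect: trace-preservation of $\mathcal{R}^{C\to CD}_\lambda$ means $\Tr_{CD}\,\mathcal{R}(X)=\Tr_C X$, not $\Tr_D\,\mathcal{R}(X)=X$. The composition $\Tr_D\circ\mathcal{R}^{C\to CD}_\lambda$ is \emph{not} the identity map on $C$ in general, so applying $\mathcal{R}^{C\to CD}_\lambda$ to $\rho_{ABC}$ and then tracing out $D$ need not return $\rho_{ABC}$ without further input. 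The missing ingredient is precisely the \emph{other} Markov condition $I(A:C|B)_\rho=0$: it gives a second Petz map $\mathcal{R}^{B\to AB}_\rho$ with $\rho_{ABC}=\mathcal{R}^{B\to AB}_\rho(\rho_{BC})$, and one then checks that the two constructions agree,
\[
\tau_{ABCD}=\bigl(\mathrm{id}_{AB}\otimes\mathcal{R}^{C\to CD}_\lambda\bigr)(\rho_{ABC})=\bigl(\mathcal{R}^{B\to AB}_\rho\otimes\mathcal{R}^{C\to CD}_\lambda\bigr)(\rho_{BC})=\bigl(\mathcal{R}^{B\to AB}_\rho\otimes\mathrm{id}_{CD}\bigr)(\lambda_{BCD}),
\]
using $\rho_{BC}=\lambda_{BC}$. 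From the rightmost expression $\Tr_D\tau=\mathcal{R}^{B\to AB}_\rho(\lambda_{BC})=\mathcal{R}^{B\to AB}_\rho(\rho_{BC})=\rho_{ABC}$ is now actually immediate, and the two vanishing conditional mutual informations of $\tau$ follow symmetrically from the two outer expressions. Your instinct that ``the equality $\rho_{BC}=\lambda_{BC}$ does real work'' in the reconciliation step is exactly right; you just invoked it one step too late.
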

Here $I(A:C\vert B)_{\rho}:= (S_{AB} + S_{BC} - S_B - S_{ABC})_{\rho}$ is the \emph{conditional mutual information}. 

Second, with an additional assumption, elements of the information convex sets are ``closed'' under the merging operation. Specifically, the density matrices belonging to information convex sets can merge into an element of another information convex set. We refer to this fact as the \emph{merging theorem}.
\begin{theorem}[Merging Theorem \cite{SKK2019}]\label{thm:merging_info_convex_set}
	Consider two density matrices  $\rho_{ABC}\in {\Sigma}(ABC)$ and  $\lambda_{BCD}\in {\Sigma}(BCD)$. Consider the following three conditions:
\begin{enumerate}
	\item $\rho_{BC}= \lambda_{BC}$ and  $I(A:C\vert B)_{\rho}= I(B:D\vert C)_{\lambda}=0$.
	\item There exists a partition $B'C'=BC$, such that no disk of radius $r$ overlaps with both $AB'$ and $CD$.
	\item $I(A:C'\vert B')_{\rho}= I(B':D\vert C')_{\lambda}=0$.
	\end{enumerate} 
If these three conditions hold, the resulting density matrix generated by $\rho_{ABC}$ and $\lambda_{BCD}$ using the merging lemma (Lemma~\ref{lemma:merging_lemma}) belongs to ${\Sigma(ABCD)}$.
\end{theorem}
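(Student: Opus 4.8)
The plan is to first produce the candidate merged state with the Merging Lemma and then certify that it lies in $\Sigma(ABCD)$ by checking local consistency on every ball of radius $r$. Condition~(1) is exactly the hypothesis of Lemma~\ref{lemma:merging_lemma} with the cut $BC$: $\rho_{BC}=\lambda_{BC}$ and $I(A:C|B)_{\rho}=I(B:D|C)_{\lambda}=0$. So the lemma hands us a unique $\tau_{ABCD}$ with $\Tr_{D}\tau=\rho_{ABC}$, $\Tr_{A}\tau=\lambda_{BCD}$, and $I(A:CD|B)_{\tau}=I(AB:D|C)_{\tau}=0$. Everything then reduces to showing $\tau\in\Sigma(ABCD)$.

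The second step is to upgrade the Markov structure of $\tau$ from the cut $BC$ to the geometrically convenient cut $B'C'$. Since $\tau_{ABC}=\rho_{ABC}$, condition~(3) gives $I(A:C'|B')_{\tau}=0$; combining this with the chain rule $I(A:C'D|B')_{\tau}=I(A:C'|B')_{\tau}+I(A:D|B'C')_{\tau}$ and the bound $I(A:D|B'C')_{\tau}=I(A:D|BC)_{\tau}\le I(A:CD|B)_{\tau}=0$ yields $I(A:C'D|B')_{\tau}=0$, and symmetrically $I(AB':D|C')_{\tau}=0$. In effect $\tau$ is simultaneously the merge of $\rho$ and $\lambda$ along $B'C'$, with a Markov structure adapted to the partition that enters condition~(2).

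The third step is the consistency check. Let $b$ be a ball of radius $r$ relevant to the definition of $\Sigma(ABCD)$. By condition~(2), $b$ cannot meet both $AB'$ and $CD$. If $b$ misses $CD$, then (using $AB'C'=ABC$ and $ABCD=ABC\cup CD$) we get $b\subseteq ABC$, so $\tau_{b}=\rho_{b}=\sigma_{b}$ since $\rho\in\Sigma(ABC)$; if instead $b$ misses $AB'$, then in particular $b$ misses $A$, so $b\subseteq B'C'D=BCD$ and $\tau_{b}=\lambda_{b}=\sigma_{b}$ since $\lambda\in\Sigma(BCD)$. To match the definition of the information convex set literally — which asks for $\tau$ to be the restriction of a locally consistent state on a thickened region $(ABCD)'$ — one extends $\tau$ outward by merging the extensions of $\rho$ and $\lambda$ to $(ABC)'$ and $(BCD)'$ across the collar, the CMI conditions for that outer merge being produced by the same chain-rule/monotonicity computation as in step two, using that $B',C'$ were chosen with a collar wider than $r$. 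The restriction of this extension to $ABCD$ is $\tau$, so $\tau\in\Sigma(ABCD)$.

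I expect the main obstacle to be exactly this last point: passing from ``$\tau_{b}=\sigma_{b}$ for every ball contained in $ABCD$'' to ``$\tau$ genuinely belongs to $\Sigma(ABCD)$,'' i.e.\ exhibiting a locally consistent extension over the thickened region rather than just a locally consistent state on $ABCD$ itself. This is the reason the theorem carries the extra data in condition~(3) and the repartition $B'C'$ on top of the bare merging data of condition~(1), and why condition~(2) is phrased in terms of $AB'$ rather than $A$. Making the geometric claim ``every radius-$r$ ball lies in $ABC$ or in $BCD$'' airtight for general, possibly irregular, $A,B,C,D$ also needs some care, but it is driven entirely by condition~(2); the remainder is bookkeeping with strong subadditivity and the chain rule for conditional mutual information.
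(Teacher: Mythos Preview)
Your first two steps are sound: applying Lemma~\ref{lemma:merging_lemma} along the cut $BC$ and then upgrading the Markov structure to the cut $B'C'$ via the chain rule and monotonicity is exactly right, and the paper uses the same identities in Eq.~\eqref{eq:merge_identities1}. The local-consistency check on balls inside $ABCD$ is also correct and uses condition~(2) in the intended way.

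The gap is precisely where you flag it. To conclude $\tau\in\Sigma(ABCD)$ you must exhibit a locally consistent state on the thickened region $(ABCD)'$ whose restriction is $\tau$; checking $\tau_b=\sigma_b$ for balls $b\subset ABCD$ is necessary but not sufficient. Your proposed fix---merge the extensions of $\rho$ and $\lambda$ to $(ABC)'$ and $(BCD)'$---runs into a circularity: those extensions are only guaranteed to \emph{exist} by the definition of $\Sigma$, with no a~priori uniqueness, so you cannot assert that they agree on the overlap $(ABC)'\cap(BCD)'$ (which is larger than $BC$), nor that they satisfy the required CMI conditions across the enlarged cut. Establishing those facts is essentially equivalent to the isomorphism theorem, whose proof in turn invokes the merging theorem.

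The paper (following Appendix~C of Ref.~\cite{SKK2019}, and sketched here at the end of Appendix~\ref{appendix:isomorphism}) breaks this circularity by introducing an auxiliary set $\hat\Sigma(\Omega)$ defined \emph{without} the extra layer: membership requires only local consistency on balls in $\Omega$ together with certain \emph{internal} conditional-independence conditions near $\partial\Omega$ that mimic what the layer would enforce. For $\hat\Sigma$ the merging theorem is provable directly, since one only has to check local data and internal CMI---no extension needs to be produced. One then uses the $\hat\Sigma$ merging theorem to show that every element of $\hat\Sigma(\Omega)$ can in fact be extended outward, yielding $\hat\Sigma(\Omega)=\Sigma(\Omega)$, and the merging theorem for $\Sigma$ follows. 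Your conditions~(2) and~(3) are exactly the data that certify $\tau\in\hat\Sigma(ABCD)$; what is missing from your argument is this intermediate object and the bootstrap $\hat\Sigma=\Sigma$.
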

\noindent

In this paper, to ensure the merged state is in an information convex set, we shall exclusively use the merging theorem. If the conditions in the merging theorem are satisfied, we shall denote the \emph{merged state} of $\rho$ and $\sigma$ as:
\begin{equation}
\rho \merge \lambda.
\end{equation}

Merged states are useful because they satisfy the following nontrivial identities: 
\begin{equation}
\boxed{
	\begin{aligned}
	I(A:CD|B)_{\rho \merge \lambda} &= I(AB:D|C)_{\rho \merge \lambda} = 0,\\
	I(A:C'D|B')_{\rho \merge \lambda} &= I(AB':D|C')_{\rho \merge \lambda} = 0, \\
	I(A:D|BC)_{\rho\merge \lambda} &=0,
	\end{aligned}
}\label{eq:merge_identities1}
\end{equation}
which implies that $\rho \merge \lambda$ is the \emph{maximum-entropy state} consistent with both $\rho$ and $\lambda$. This fact follows from SSA~\cite{Lieb1973}. Moreover,
\begin{equation}
\boxed{
	S(\rho \merge \lambda) - S(\rho' \merge \lambda) = S(\rho) - S(\rho'),
}\label{eq:merge_identities2}
\end{equation}
where we implicitly assumed that both $\rho$ and $\rho'$ can be merged with $\lambda$. 

To explain the utility of the merging theorem, we discuss a simple example. We will discuss merging two density matrices in a toy setup, focusing on the logic behind why they can be merged. 

Consider an annulus $ABC$ and a disk-like region $BCD$ that overlap on a disk-like region; see Fig.~\ref{fig:merge_example}. We will consider density matrices $\rho\in \Sigma(ABC)$ and $\lambda \in \Sigma(BCD)$.\footnote{In fact, $\lambda_{BCD}=\sigma_{BCD}$ on disk $BCD$. This is because any element of the information convex set is indistinguishable with the reference state on any disk~\cite{SKK2019}.} These two density matrices have identical reduced density matrix on disk $BC$. Provided that the overlapping region is sufficiently thick so that the distance between $A$ and $D$ is large, the requisite conditions in the merging theorem (Theorem~\ref{thm:merging_info_convex_set}) can be satisfied with an appropriate choice of $B$ and $C$. 

The first condition can be satisfied by partitioning the overlapping region as in Fig.~\ref{fig:merge_example}. We can use SSA and the extensions of axioms to derive the conditional independence condition. For example, in order to prove $I(A:C|B)_{\rho}=0$, consider an auxiliary subsystem $E$ introduced in Fig.~\ref{fig:conditional_independence_annulus}. By using the isomorphism theorem, one can extend $\rho$ to a density matrix  $\rho'$ in $\Sigma(ABCE)$. Such density matrix on a disk-like region $BCE$ is consistent with the reference state $\sigma$. By the extension of axiom, one can thus see that
\begin{equation}
\begin{aligned}
I(A:C|B)_{\rho} &\leq (S_{BC} +S_{CE} - S_B - S_E)_{\rho'} \\
&=0.
\end{aligned}
\end{equation}
Conditional independence of other sets of subsystems can be obtained in a similar way. 

\begin{figure}[h]
  \centering
      \includegraphics[scale=1]{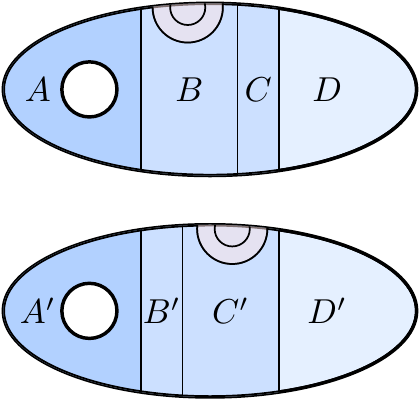}
	\caption{Two ways of partitioning the overlapping subsystem ($BC$). Because the overlapping region was chosen so that $A$ and $D$ are sufficiently far apart from each other, we can choose $B'$ and $C$ to be separated by more than $2r$, where $r$ is the radius of the disk on which our axioms are postulated (red).}
	\label{fig:merge_example}
\end{figure}

\begin{figure}[h]
  \centering
        \includegraphics[scale=1]{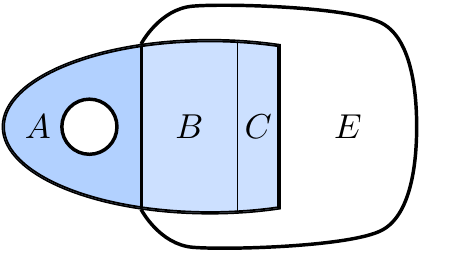}
	\caption{A partition of an annulus (blue) into $A,B,$ and $C$. Here, $E$ is an auxiliary subsystem used in the proof of the vanishing conditional mutual information. }
	\label{fig:conditional_independence_annulus}
\end{figure}

Within this example, the conditions in Theorem~\ref{thm:merging_info_convex_set} can be satisfied because the overlapping region separates the non-overlapping parts sufficiently far apart from each other. This observation holds quite generally, as we shall repeatedly see throughout this paper.

\subsubsection{Derivation}
Armed with the merging technique, we are now in a position to derive Eq.~\eqref{eq:fusion_identity_bulk}. To do so, we will merge two density matrices with supports overlapping on a disk-like region. Partition these annuli into $ABC$ and $BCD$, similar to the partition we had before; see Fig.~\ref{fig:merging_two_annuli}.

\begin{figure}[h]
  \centering
          \includegraphics[scale=1]{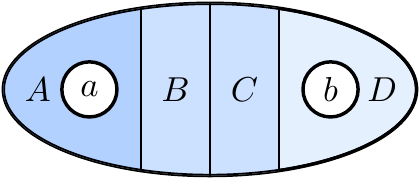}
	\caption{The set of subsystems used in merging the extreme points of $\Sigma(ABC)$ and $\Sigma(BCD)$. Here, $a$ and $b$ are the superselection sectors that the respective extreme points represent. While we also require $BC$ to be partitioned into $B'C'$ such that the requisite conditions in Theorem~\ref{thm:merging_info_convex_set} are satisfied, once these conditions are verified, we will not make use of this partition. This is why we did not describe these subsystems in this figure. }
	\label{fig:merging_two_annuli}
\end{figure}

We can merge the density matrices in $\Sigma(ABC)$ with the density matrices in $\Sigma(BCD)$ provided that the distance between $A$ and $D$ is sufficiently large. To see why, first note that these density matrices have identical density matrices on the overlapping region. Secondly, one can prove the requisite conditions on the conditional mutual information, again by utilizing the auxiliary subsystem introduced in Fig.~\ref{fig:conditional_independence_annulus}.

While one can merge any pair of density matrices from $\Sigma(ABC)$ and $\Sigma(BCD)$, we will merge the extreme points. Without loss of generality, let $\rho_{ABC}^a \in \Sigma(ABC)$ and $\lambda_{BCD}^b \in \Sigma(BCD)$ be a pair of extreme points associated with the superselection sectors $a$ and $b$. The merged state, 
\begin{equation}
\tau_{ABCD}^{a\merge b} := \rho^a_{ABC} \merge \lambda^b_{BCD},
\end{equation}
according to Eq.~\eqref{eq:merge_identities1}, obeys the following identity:
\begin{equation}
\begin{aligned}
(S_{ABCD})_{\tau^{a\merge b}} &= (S_{ABC} + S_{BCD} - S_{BC})_{\tau^{a\merge b}} \\
&=2\ln (d_a d_b) + (S_{ABC} + S_{BCD} - S_{BC})_{\sigma}
\end{aligned}
\end{equation}
In the first line, we used the property of the merged state. In the second line, we used the definition of the quantum dimensions. The second term of the second line can be interpreted as the entropy of the merged state $\tau_{ABCD}^{1\merge 1}$, which is equal to the reference state restricted to $ABCD$.\footnote{This is a fact proved in Ref.~\cite{SKK2019}.} Therefore,
\begin{equation}
S(\tau_{ABCD}^{a\merge b}) = S(\sigma_{ABCD}) + 2\ln (d_a d_b). \label{eq:1st_calculation}
\end{equation}

Note that $\tau_{ABCD}^{a\merge b}$ is the maximum-entropy state of $\Sigma(ABCD)$ that is consistent with the density matrices of the two annuli. We can solve this maximization problem directly, which, by definition, must agree with Eq.~\eqref{eq:1st_calculation}.

For this derivation, we use the structure of the information convex set of a two-hole disk summarized at the bottom of Fig.~\ref{fig:iso_fac_to_fusion_data}.\footnote{The proof of this statement also follows from the axioms in Fig.~\ref{fig:bulk_axiom_to_fusion_data}; see Ref.~\cite{SKK2019} for more detail.} We shall refer to this two-hole disk as $Y:=ABCD$. Because $\tau_{ABCD}^{a\merge b} \in \Sigma(Y)$, without loss of generality, 
\begin{equation}
\tau_{Y}^{a\merge b} = \bigoplus_c p_c \, \rho^{abc}_Y,\quad \rho^{abc}_Y \in \Sigma_{ab}^c(Y), \label{eq:tau_form}
\end{equation}
for some probability distribution $\{ p_c \}$. Density matrices in different $\Sigma_{ab}^c(Y)$ are mutually orthogonal to each other. Because $\tau_{Y}^{a\merge b}$ maximizes the entropy, its entropy is 
\begin{equation}
\begin{aligned}
S(\tau_{Y}^{a\merge b}) &= \max_{\substack{\{p_c\},\\ \{\rho^{abc}_Y \}}}\left(H\left(\{p_c\}\right) + \sum_c p_c S(\rho^{abc}_Y) \right) \\
&= \max_{\{p_c\}}\left(H\left(\{ p_c\}\right) + \sum_c p_c \max_{\rho^{abc}_Y}S(\rho^{abc}_Y)\right) 
\end{aligned} \label{eq:maximizing_entropy_ab}
\end{equation}
where $H(\{ p_c\}) = -\sum p_c \ln p_c$ is the Shannon entropy of the probability distribution $\{p_c \}$.

This is the key identity:
\begin{equation}
\max_{\rho^{abc}_Y \in \Sigma_{ab}^c(Y)} S(\rho^{abc}_Y) = S(\sigma_Y) + \ln N_{ab}^c + \ln (d_ad_b d_c), \label{eq:entropy of center abc}
\end{equation}
which we derive in two steps. First, we show that the extreme points within $\Sigma_{ab}^c(Y)$ have identical entropies. Because this space is isomorphic to the state space of a $N_{ab}^c$-dimensional Hilbert space, the maximum is attained by taking $\rho_Y^{abc}$ to be a uniform mixture of $N_{ab}^c$ orthogonal extreme points within $\Sigma_{ab}^c(Y)$. Second, we show that the entropy of the extreme points are equal to $S(\sigma_Y) + \ln (d_ad_bd_c)$.

For the first step, we use the factorization property of the extreme points. Recall that any extreme point $\rho_{Y'}^{\langle e\rangle} \in \Sigma_{ab}^c(Y')$ satisfies 
\begin{equation}
(S_Y + S_{Y'} - S_{Y'\setminus Y})_{\rho^{\langle e\rangle}} = 0,
\end{equation}
where $Y' \supset Y$ is a two-hole disk that is expanded along the boundary of $Y$ by an amount large compared to the correlation length. By using the fact that the reduced density matrices of the elements in $\Sigma_{ab}^c(Y')$ on $Y'\setminus Y$ are identical and the fact that the entropy difference over $Y$ and $Y'$ are equal, we can conclude that the entropy of extreme points are identical.

In the second step, we seek to prove
\begin{equation}
S(\rho^{\langle e\rangle abc}_Y) = S(\sigma_Y) + \ln (d_ad_bd_c) \label{eq:2-hole_quantum_dim}
\end{equation}
for any extreme point $\rho^{\langle e\rangle abc}_Y \in \Sigma_{ab}^c(Y)$. We can derive this fact by comparing the entropy of $\rho_Y^{\langle e\rangle abc}$ to $\sigma_Y$. More specifically, again consider $Y'\supset Y$ which is obtained by enlarging $Y$ along its boundary by a large enough amount.\footnote{The thickness of $Y'\setminus Y$ should be greater than $2r$ so that the simplex theorem applies to the three annuli subsystems of $Y'\setminus Y$.} The extreme points of $\Sigma_{ab}^c(Y)$ are extended into the extreme points of $\Sigma_{ab}^c(Y')$ by using the isomorphism theorem. By the factorization of the extreme points (Eq.~\eqref{eq:factorization_extreme_points}), we obtain:
\begin{equation}
\begin{aligned}
S(\rho^{\langle e\rangle abc}_Y ) + S(\rho^{\langle e\rangle abc}_{Y'} ) - S(\rho^{\langle e\rangle abc}_{Y'\setminus Y} ) &= 0, \\
S( \sigma_Y) + S( \sigma_{Y'} ) - S( \sigma_{Y'\setminus Y} ) &= 0,
\end{aligned}
\label{eq:shell_entanglement}
\end{equation}
where in the second line we used the fact that $\sigma_Y$ is an extreme point of $\Sigma_{11}^1(Y)$.\footnote{This is a fact discussed in Ref.~\cite{SKK2019}. Our axioms imply that $\Sigma_{11}^1(Y)$ contains a unique element. Since $\sigma_Y$ trivially belongs to $\Sigma_{11}^1(Y)$, it must be an extreme point.} By subtracting the second equation from the first, we obtain
\begin{equation}
( S_Y + S_{Y'})_{\rho^{\langle e\rangle abc}} - ( S_Y + S_{Y'})_{\sigma} = 2 \ln (d_a d_b d_c). \label{eq:temp28}
\end{equation}
The nontrivial part lies on obtaining the right hand side of Eq.~\eqref{eq:temp28}. This expression can be derived by noting that $Y'\setminus Y$ is a union of three annuli and the fact that the reduced density matrix of any element of $\Sigma_{ab}^c(Y')$ over $Y'\setminus Y$ is a tensor product of the extreme points associated with definite superselection sectors $a$, $b$, and $c$.\footnote{Technically speaking, one also needs to prove this fact. We gloss over this subtlety here, referring the readers to Ref.~\cite{SKK2019} for the rigorous proof.} The first two terms in the left-hand side of Eq.~\eqref{eq:temp28} are actually equal to each other due to the isomorphism theorem. Thus, we have proved Eq.~(\ref{eq:2-hole_quantum_dim}).

Plugging Eq.~(\ref{eq:entropy of center abc}) into Eq.~(\ref{eq:maximizing_entropy_ab}) we obtain
\begin{equation}
\begin{aligned}
S(\tau_{Y}^{a\merge b}) - S(\sigma_Y) =
& \max_{\{ p_c\}}\left( H\left(\{p_c \}\right) + \sum_c p_c \ln (N_{ab}^c d_c) \right)\\
& +\ln(d_a d_b).
\end{aligned}
\end{equation}
This maximization problem can be solved by minimizing the free energy of a fictitious Hamiltonian $H(c):=-\ln (N_{ab}^c d_c)$ that depends on the superselection sector $c$ with respect to a ``temperature'' of $T=1$. The partition function of this fictitious Hamiltonian is $Z=\sum_c N_{ab}^c d_c$. Therefore, the free energy, defined as $F= \sum_c p_c H(c) - T H(\{p_c\})$, is minimized as $\min_{\{p_c\}} F=- T\ln Z= -\ln (\sum_c N_{ab}^c d_c)$. The minimum is obtained when $p_c = \frac{N_{ab}^c d_c}{d_a d_b}$.
\footnote{A mathematically equivalent fact is that relative (Shannon) entropy is nonnegative. For two probability distributions $\{p_i\}$ and $\{q_i\}$, $\min_{ \{p_i\} } \left(\sum_i  p_i \ln(p_i/q_i) \right)= 0$. The minimum is obtained if and only if the two probability distributions are identical.}  Therefore, we obtain
\begin{equation}
S(\tau_{Y}^{a\merge b})  = S(\sigma_{ABCD}) + \ln(d_ad_b) + \ln(\sum_{c}N_{ab}^c d_c). \label{eq:2nd_calculation}
\end{equation}

By comparing Eq.~(\ref{eq:1st_calculation}) to Eq.~(\ref{eq:2nd_calculation}), we conclude that
\begin{equation}
\ln(d_ad_b) + \ln(\sum_{c}N_{ab}^c d_c) = 2 \ln(d_ad_b),
\end{equation}
which, after rearranging the terms, becomes Eq.~\eqref{eq:fusion_identity_bulk}.

To conclude, we have sketched the proof of Eq.~\eqref{eq:fusion_identity_bulk}. The key idea was to merge density matrices associated with two superselection sectors. The entropy of the merged state can be calculated in two different ways, one that is obtained from the entropy of the reduced density matrices over smaller regions and another obtained by directly maximizing the entropy. The gist of the second calculation follows from Eq.~\eqref{eq:factorization_extreme_points}.

The ideas sketched above are, in fact, powerful enough to derive a whole slew of consistency relations, such as~\cite{SKK2019}
\begin{equation}
\begin{aligned}
N_{ab}^c &= N_{ba}^c \\
N_{1a}^c &= N_{a1}^c = \delta_{a,c} \\
\forall a, \exists !\, \bar{a} \,\,\text{ s.t. }\,\, N_{ab}^1 &= \delta_{b,\bar{a}} \\
N_{ab}^c &= N_{\bar{b}\bar{a}}^{\bar{c}} \\
\sum_i N_{ab}^i N_{ic}^d &= \sum_j N_{aj}^d N_{bc}^j.
\end{aligned}
\label{eq:axioms_bulk_ch2}
\end{equation} 
A generalization of these identities will be discussed in Section~\ref{sec:fusion_rules}.

\section{Gapped domain walls: a tale of parton sector} \label{sec:parton}

The results we sketched in Section~\ref{sec:fusion_from_entanglement} follow from the axioms described in Fig.~\ref{fig:bulk_axiom_to_fusion_data}. However, these assumptions become inadequate in the vicinity of gapped domain walls. On a domain wall, we need to relax these assumptions appropriately.

\begin{figure}[h]
	\centering
          \includegraphics[scale=1]{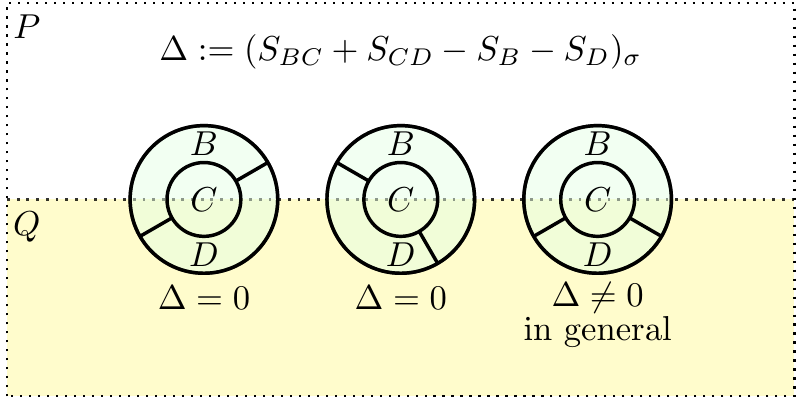}
	\caption{If we use the ansatz for the entanglement entropy in Eq.~\eqref{eq:ee_ansatz}, the linear combination of entanglement entropy $(S_{BC} + S_{CD} - S_B - S_D)_{\sigma}$ becomes $0$ for the first two choices. However, the same conclusion does not generally hold for the third choice.}
	\label{fig:entorpy_condition_wall2}
\end{figure}

Here is a heuristic discussion on this issue. Consider two topologically ordered mediums that are separated by a gapped domain wall. We shall refer to the bulk phases lying on different sides of the domain wall as $P$ and $Q$.  Suppose that the entanglement entropy of a region $A$ has the following form:
\begin{equation}
S(\sigma_A) = \alpha |\partial A | - \gamma(A)+\ldots, \label{eq:ee_ansatz}
\end{equation}
where the first term is the leading area law term that can be canceled from an appropriate linear combination, the second term $\gamma(A)$ is a constant that depends on $A$, and the ellipses represent the subleading correction that vanishes in the $|\partial A| \to \infty$ limit. Based on the study of entanglement entropy in the bulk~\cite{SKK2019}, we can make a somewhat speculative but reasonable assumption about $\gamma(A)$: that it is invariant under smooth deformations of $A$.\footnote{Unlike the subleading contribution in the bulk~\cite{Kitaev2006,Levin2006}, it is unclear if $\gamma(A)$ can be always obtained from a linear combination of entanglement entropies.  Therefore, it is unclear whether individual $\gamma(A)$ has a well-defined physical meaning. However, as we shall see in Section~\ref{sec:tee}, certain linear combinations of entanglement entropies do have clear physical meanings.}  By smooth deformation, we mean any deformation that retains the topology of $A$ and its restrictions to $P$ and $Q$. 

Once we accept this hypothesis, we can immediately verify that $(S_{BC} + S_{CD} - S_B - S_D)_{\sigma} = 0$ for the first two choices of subsystems described in Fig.~\ref{fig:entorpy_condition_wall2}. This is because $\gamma(BC)=\gamma(CD)=\gamma(B)=\gamma(D)$ for these subsystem choices. However, this hypothesis does not imply that the same linear combination of entanglement entropy vanishes for the third choice. For the third choice, $\gamma(BC)$ and $\gamma(B)$ are allowed to take different values since $B$ cannot be smoothly deformed into $BC$. 

This observation motivates a relaxed set of axioms to study gapped domain walls, summarized in  Fig.~\ref{fig:entropy_condition_wall}. We emphasize that the boundary between $B$ and $D$ can deform arbitrarily so long as they do not cross the domain wall. We will not make any assumption about the value of $(S_{BC} + S_{CD} - S_B - S_D)_{\sigma}$ for the rightmost subsystems in Fig.~\ref{fig:entorpy_condition_wall2}. Remarkably, its value is highly constrained, as we explain in Section~\ref{sec:tee}.

\begin{figure}[h]
  \centering
  \includegraphics[scale=1]{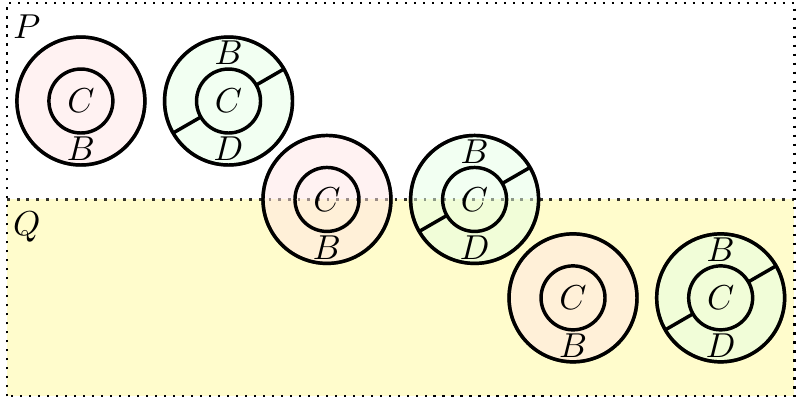}
    \caption{On top of the bulk axioms in $P$ and $Q$, we assume that $(S_C + S_{BC} - S_B)_{\sigma}=0$ (red) and $(S_{BC} + S_{CD} - S_B - S_D)_{\sigma}=0$ (green) on the domain wall. The subsystems are allowed to be deformed as long as the boundaries between $B$ and $D$ do not cross the domain wall.}
    \label{fig:entropy_condition_wall}
\end{figure}

The new axioms in Fig.~\ref{fig:entropy_condition_wall} directly lead to a definition of \emph{parton sectors.} This is a new type of superselection sector that is localized on either side of the domain wall. We refer to these sectors as parton sectors because they subdivide the known superselection sectors of point-like excitations on the domain wall \cite{KitaevKong2012,Kong2014}. As in the discussion in Section~\ref{sec:intro_supsec_fusion}, the properties of the parton sectors can be derived from three important consequences: extension of axioms, isomorphism theorem, and factorization of extreme points.

Let us formally state these consequences below, deferring the proofs to Appendixes~\ref{appendix:extensions_of_axioms}, \ref{appendix:isomorphism}, and \ref{appendix:extreme_points_factorization}. First, the axioms can be extended to arbitrarily large regions. Secondly, a generalization of the isomorphism theorem holds. 
\begin{theorem}[Isomorphism Theorem]
Consider a reference state for which the axioms in Fig.~\ref{fig:entropy_condition_wall} apply.
 If $\Omega^0$ and $\Omega^1$ are connected by a path $\{\Omega^t\}_{t\in [0,1]}$, there is an isomorphism $\Phi$ between $\Sigma(\Omega^0)$ and $\Sigma(\Omega^1)$ uniquely determined by the path. Moreover, this isomorphism preserves the distance and entropy difference between two elements of the information convex sets: for $\rho,\lambda\in \Sigma(\Omega^0)$,
\begin{equation}
\boxed{
\begin{aligned}
    D(\rho, \lambda) &= D(\Phi(\rho), \Phi(\lambda)), \\
    S(\rho) - S(\lambda) &= S(\Phi(\rho)) - S(\Phi(\lambda)),
\end{aligned}
}
\end{equation}
where $D(\cdot, \cdot)$ is any distance measure that is non-increasing under completely-positive trace preserving maps.
\label{thm:isomorphism}
\end{theorem}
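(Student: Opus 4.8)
The plan is to mirror the proof of the bulk isomorphism theorem (Theorem~\ref{thm:isomormphism_bulk}) of Ref.~\cite{SKK2019}, upgrading it to allow elementary deformations that touch the domain wall. First I would reduce the statement to the case of an \emph{elementary} deformation: given the path $\{\Omega^t\}_{t\in[0,1]}$, choose $0 = t_0 < t_1 < \dots < t_n = 1$ fine enough that for each $i$ the symmetric difference $\Omega^{t_i} \triangle \Omega^{t_{i+1}}$ is contained in a ball of radius $O(r)$ lying in a collar of the common boundary. Because the path preserves the topology of $\Omega^t$ together with its restrictions to $P$ and $Q$, each such move is of one of two types: a \emph{bulk move}, whose support lies entirely inside $P$ (or entirely inside $Q$) away from the wall, and a \emph{wall move}, whose support is a half-disk-like region straddling the wall. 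Bulk moves are handled verbatim as in Ref.~\cite{SKK2019}; the content here is the wall moves.

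For a single elementary enlargement $\Lambda \subset \Lambda^+$, where $\Lambda^+\setminus\Lambda$ is a thin shell in a collar of $\partial\Lambda$ (possibly straddling the wall), I would build two channels: the restriction $\mathcal{R}=\Tr_{\Lambda^+\setminus\Lambda}:\Sigma(\Lambda^+)\to\Sigma(\Lambda)$, which is manifestly well defined and CPTP, and an \emph{extension} $\mathcal{E}:\Sigma(\Lambda)\to\Sigma(\Lambda^+)$ obtained by merging an element of $\Sigma(\Lambda)$ with the reference state $\sigma$ restricted to a neighborhood of the new shell, using the Merging Theorem (Theorem~\ref{thm:merging_info_convex_set}). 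The hypotheses of that theorem hold because one can take the overlap collar thick enough (at least a few multiples of $r$) that a partition $B'C'$ exists with no radius-$r$ ball -- including the wall-adapted balls on which the axioms of Fig.~\ref{fig:entropy_condition_wall} are imposed -- meeting both the interior of $\Lambda$ and the new shell; the requisite vanishing conditional mutual informations then follow from the extension of the axioms to large regions (Appendix~\ref{appendix:extensions_of_axioms}), exactly as in the annulus example of Section~\ref{sec:fusion_from_entanglement}, now using the relaxed domain-wall axioms wherever the shell crosses the wall.

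Next I would check that $\mathcal{R}$ and $\mathcal{E}$ are mutually inverse. The identity $\mathcal{R}\circ\mathcal{E}=\mathrm{id}$ is immediate from the merging lemma (Lemma~\ref{lemma:merging_lemma}): tracing out the freshly-merged shell recovers the original state. For $\mathcal{E}\circ\mathcal{R}=\mathrm{id}$, note that any $\rho\in\Sigma(\Lambda^+)$ satisfies the conditional-independence relation $I(\text{interior of }\Lambda : \text{shell}\mid\text{collar})_\rho = 0$ by the extended axioms, and its reduced state on the shell-neighborhood coincides with $\sigma$ there; hence $\rho$ is already the unique merged state reconstructed from $\rho_\Lambda$ and the local reference state, i.e.\ $\rho = \mathcal{E}(\mathcal{R}(\rho))$. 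Composing the elementary isomorphisms along the discretization yields a bijection $\Phi:\Sigma(\Omega^0)\to\Sigma(\Omega^1)$, whose independence of the chosen discretization (hence uniqueness given the path) follows from a common-refinement argument as in Ref.~\cite{SKK2019}. The two boxed identities then drop out elementary step by elementary step: distance preservation because $\mathcal{R}$ and $\mathcal{E}$ are both CPTP and mutually inverse, so monotonicity of $D$ forces $D(\rho,\lambda)=D(\mathcal{R}\rho,\mathcal{R}\lambda)$; and entropy-difference preservation because $\mathcal{E}(\rho)=\rho\merge\sigma_{(\cdot)}$ with a fixed second argument, so Eq.~\eqref{eq:merge_identities2} gives $S(\mathcal{E}\rho)-S(\mathcal{E}\lambda)=S(\rho)-S(\lambda)$.

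The main obstacle I expect is the discretization and topological bookkeeping for wall moves: one must show that the constraint ``preserves the topology of $\Omega^t$ and of its restrictions to $P$ and $Q$'' is exactly strong enough to guarantee that the path factors into elementary moves small enough for the Merging Theorem to apply near the wall (so that, e.g., no move pinches off or merges components touching the wall, and the wall-adapted shells always admit the good partition $B'C'$), and that the resulting $\Phi$ is insensitive to these choices. All the entropic input is already in hand -- the extension of the bulk and domain-wall axioms (Appendix~\ref{appendix:extensions_of_axioms}), SSA, and the merging machinery -- so the residual work is geometric rather than information-theoretic.
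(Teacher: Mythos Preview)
Your proposal is correct and follows essentially the same approach as the paper's Appendix~\ref{appendix:isomorphism}: reduce to elementary steps, use partial trace one way and a Petz/merging extension the other, verify they are mutual inverses via the conditional independence supplied by the extended axioms, and read off distance and entropy-difference preservation from CPTP monotonicity and the merged-state identity. The only cosmetic difference is that the paper names the extension map explicitly as the Petz map $\calE_{B\to BC}^{\sigma}$ rather than describing it as merging with $\sigma$, but these coincide on the relevant states.
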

\noindent
In order to understand this theorem, it is important to understand what a modified definition of the ``path'' means in the presence of a domain wall. This is most convenient to understand in the continuum limit. We say that two subsystems are connected by a path if they can be continuously deformed into each other. Specifically, let $\mathcal{M}$ be the manifold, which is divided into $\mathcal{M}= \mathcal{M}_P \cup \mathcal{M}_Q$, where $\mathcal{M}_P$ is the part that hosts the topological phase $P$ and $\mathcal{M}_Q$ is the part that hosts the phase $Q$. We say that there is a \emph{path} between $\Omega^0 \subset \mathcal{M}$ and $\Omega^1\subset \mathcal{M}$ if there is a one-parameter family of homeomorphism $\phi_t: \Omega^0 \hookrightarrow \mathcal{M}$ such that $\phi_t(\Omega^0)$ restricted to $P$ and $Q$ are both homeomorphisms for $t\in [0,1]$, $\phi_0(\Omega^0)=\Omega^0$, and $\phi_1(\Omega^0)=\Omega^1$; see Fig.~\ref{fig:path_connected_wall}. Note that this definition of path is a refinement of that in the bulk.

\begin{figure}[h]
  \centering
    \includegraphics[scale=1]{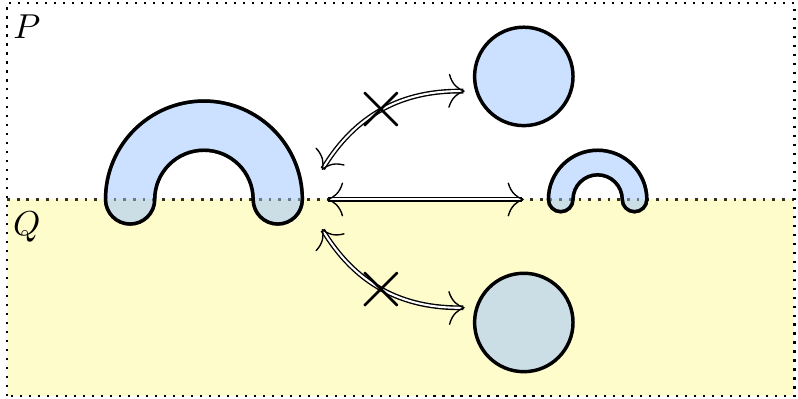}
    \caption{The subsystem on the left side is homeomorphic to a disk. However, it is not connected to a disk in $P$ (or $Q$) by any path because its part that lies on $Q$ is a union of two disks. There is no homeomorphism that maps those disks to either a single disk or an empty set.}
    \label{fig:path_connected_wall}
\end{figure}

Third, we can extend Eq.~\eqref{eq:factorization_extreme_points} to the gapped domain wall. Specifically, the exact same equation holds, irrespective of the presence of the domain wall. We restate this fact here for readers' convenience. Given an extreme point of an information convex set over $\Omega$, let $\Omega'\supset \Omega$ be a region obtained from $\Omega$ by enlarging\footnote{An enlargement is associated with a path connecting $\Omega$ and $\Omega'$. By the isomorphism theorem, there is an isomorphism between $\Sigma(\Omega)$ and $\Sigma(\Omega')$.} $\Omega$ along the boundary, such that $\Omega'\setminus \Omega$ is the thickened boundary of $\Omega'$. Then, for any extreme point $\rho^{\langle e\rangle}_{\Omega'} \in \Sigma(\Omega')$, we have:
\begin{equation}
\boxed{
    (S_{\Omega} + S_{\Omega'} - S_{\Omega' \setminus \Omega})_{\rho^{\langle e\rangle}}=0.
}
\end{equation}

\subsection{Parton sectors}
Now we are ready to define the parton sectors, the fundamental objects of our theory. To define these sectors, we choose the subsystems described in Fig.~\ref{fig:n_and_u}. We will refer to the left diagram as an $N$-shaped region and the right diagram as a $U$-shaped region. Their information convex sets form simplices with orthogonal extreme points, each of which labels a parton sector.

There are two types of parton sectors, one associated with the $N$-shaped region and the one associated with the $U$-shaped region. We shall refer to the former as an $N$-type superselection sector and the latter as a $U$-type superselection sector to evoke the shape of the underlying regions. 

Let us first explain why these sectors are well-defined. We focus only on the $N$-type superselection sector. Because the same argument applies to the $U$-type superselection sector, we omit that discussion.

\begin{figure}[h]
  \centering
      \includegraphics[scale=1]{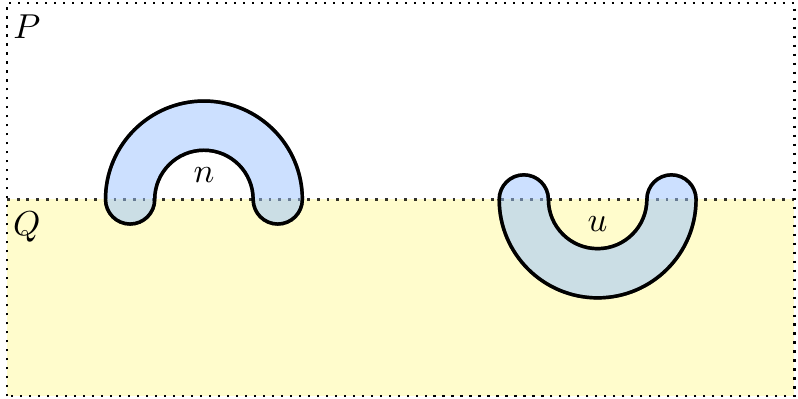}
    \caption{Subsystems associated with the $N$-type (left) and $U$-type (right) superselection sectors. We use $n$ and $u$ denote the respective sectors.}
    \label{fig:n_and_u}
\end{figure}

Our proof is based on the following choice of subsystems, which we depict in Fig.~\ref{fig:simplex_n_sector}. Without loss of generality, consider an $N$-shaped region $N$. Let $N' \supset N$ be a subsystem obtained by enlarging $N$ along its boundary. Let $N''$ be another $N$-shaped region disjoint from $N'$ such that $N'N''$ is again an $N$-shaped region. 

\begin{figure}[h]
  \centering
        \includegraphics[scale=1]{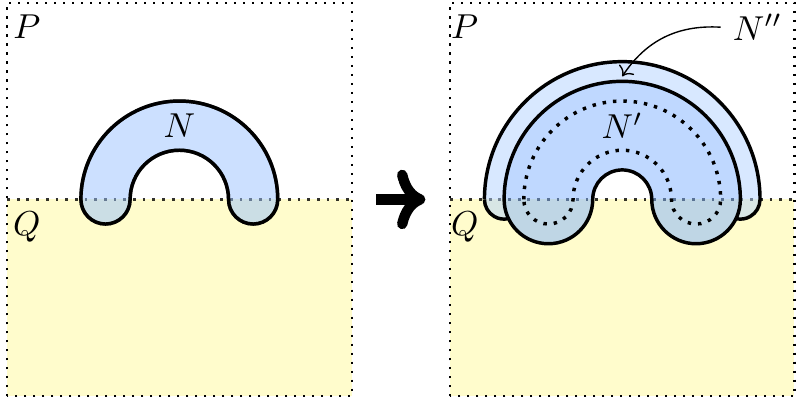}
    \caption{Subsystems involved in the proof of the orthogonality of the extreme points in $\Sigma(N)$. Here, $N'\supset N$ (dark blue) is obtained from $N$ by enlarging $N$ along the boundary. $N''$ (light blue) is chosen in such a way that $N'N''$ is again a \emph{N}-shaped subsystem. On the right, the dotted region represents $N$.}
    \label{fig:simplex_n_sector}
\end{figure}

Consider a pair of extreme points $\rho_N, \rho'_N \in \Sigma(N)$. As discussed in Section~\ref{sec:intro_supsec_fusion}, the key idea is to map these extreme points to the extreme points in $\Sigma(NN'')$ by using the isomorphism theorem (Theorem~\ref{thm:isomorphism}). These extreme points must be factorized over $N$ and $N''$, which immediately implies our claim.

Let $\rho_{N'N''}$ and $\rho_{N'N''}'$ be the extreme points of $\Sigma(N'N'')$ obtained by applying the isomorphism theorem (Theorem~\ref{thm:isomorphism}) between $\Sigma(N)$ and $\Sigma(N'N'')$ to $\rho_N$ and $\rho_N'$ respectively. Their fidelity is equal to the fidelity between $\rho_N$ and $\rho_N'$.
\begin{equation}
    F(\rho_N, \rho_{N}') = F(\rho_{N'N''}, \rho_{N'N''}').
\end{equation}
Because fidelity is non-decreasing under partial trace, we have
\begin{equation}
\begin{aligned}
F(\rho_{N},\rho_{N}') &= F(\rho_{N'N''}, \rho_{N'N''}')\\
&\leq F(\rho_{N N''}, \rho_{N N''}'). \label{eq:fidelity_le_N}
\end{aligned}
\end{equation}

Because both $\rho_{N N''}$ and $\rho_{N N''}'$ are extreme points, they factorize over $N$ and $N''$.
\begin{equation}
    F(\rho_{NN''}, \rho_{NN''}') = F(\rho_{N}, \rho_{N}') F(\rho_{N''}, \rho_{N''}').
\end{equation}
In particular, by the isomorphism theorem, we get
\begin{equation}
    F(\rho_{N}, \rho_{N}')\leq F(\rho_{N}, \rho_{N}')^2.
\end{equation}
Therefore $F(\rho_{N}, \rho_{N}')$ is either $0$ or $1$. 

Therefore we can characterize $\Sigma(N)$ as 
\begin{equation}
    \Sigma(N) = \left\{ \bigoplus_n p_n \rho^n_N : \sum_n p_n=1, p_n\geq 0  \right\},
\end{equation}
where different extreme points $\{ \rho^n_N \}$ are supported on orthogonal subspaces. The same argument applies to $\Sigma(U)$:
\begin{equation}
    \Sigma(N) = \left\{ \bigoplus_u p_u \rho^u_U : \sum_u p_u=1, p_u\geq 0  \right\}.
\end{equation}

We shall formally denote these superselection sectors as 
\begin{equation}
\begin{aligned}
    \calC_N&=\{ 1, n,\cdots \},\\
    \calC_U&=\{ 1, u,\cdots \}.
\end{aligned}
\end{equation}

Like in the bulk, we will define the quantum dimensions of the parton sectors as
\begin{equation}
\boxed{
\begin{aligned}
    d_n &:= \exp \left( \frac{S(\rho_{N}^n) - S(\rho_{N}^1)}{2} \right), \\
    d_u &:= \exp \left( \frac{S(\rho_{U}^u) - S(\rho_{U}^1)}{2} \right),
\end{aligned}
}
\end{equation}
where $N$ and $U$ are $N$- and $U$-shaped regions respectively, and the ``1'' in the superscript means that the density matrix is obtained by tracing out all but the region in the subscript over the global reference state $\sigma$.

% Folding:
Readers may wonder whether the parton sectors we identify can be understood using the ``folding technique"~\cite{Beigi2010,KitaevKong2012,Kong2014}. The folding technique turns a system $P$ and $Q$ separated by a domain wall into a system $P\otimes \bar{Q}$ (upper half-plane) and the vacuum (lower half-plane) separated by a gapped boundary. (Here $\bar{Q}$ is $Q$ reflected along the domain wall.) Historically, the folding technique was useful in understanding the superselection sectors of point excitations on the gapped domain wall (i.e., the $O$-type sectors we shall discuss below in Sec.~\ref{sec:O_type}) by identifying them to with the superselection sectors of point excitations on the gapped boundary of the folded non-chiral system. However, we are not aware of any way to understand the parton sectors by directly looking at the folded system. The intuitive reason behind this is that the $N$-shaped subsystem cannot be obtained by unfolding any subsystem.

Let us comment on the physical interpretation of the parton sector. We emphasize that a parton sector generally does not specify a localized excitation. Specifically, if the reference state is a ground state of some local Hamiltonian, its low-energy excitation is not always uniquely determined by a single parton sector. Often extra information is required to specify such an excitation, as we explain in Section~\ref{sec:composite_sectors}.

Instead, it is better to view them as ``quantum numbers'' that partially specify excitations. Because the extreme points of $\Sigma(N)$ (as well as $\Sigma(U)$) are orthogonal to each other,  there is a set of projectors that project out a unique sector. In principle, one should be able to measure these projectors, thereby obtaining these ``quantum numbers.''

\section{Composite sectors}
\label{sec:composite_sectors}

In this section, we will study the \emph{composite superselection sectors}. These are superselection sectors localized on the domain wall that can carry multiple parton labels:
\begin{equation}
\calC_{\textrm{composite}}  = \bigcup_{\substack{n_1, n_2,\cdots\in \mathcal{C}_N \\
		u_1,u_2,\cdots \in \mathcal{C}_U}}   \calC_{\textrm{composite}}^{[n_1,u_1, n_2,u_2,\cdots]}.
\end{equation}
As before, a superselection sector is associated with some region. This region may contain $N$- and $U$-shaped regions as its subsystems, which contain partial information about the composite sectors. Specifically, recall that there are projectors localized on $N$- and $U$-shaped subsystems that can measure the $N$- and $U$-type superselection sectors. One can measure those projectors to determine the parton labels.

There can be multiple composite sectors that carry the same parton labels. In other words, the collection of parton sectors do not uniquely specify a composite sector. This is actually not a strange phenomenon. If the domain wall is trivial, the parton sectors are also trivial. Because the underlying subsystem is topologically a disk, its information convex set has a unique element~\cite{SKK2019}. However, we can instead consider an annulus, which clearly has \emph{N}- and \emph{U}-shaped regions as its subsystems. The information convex sets of these subsystems are trivial, but the information convex set of an annulus is not; see Section~\ref{sec:intro_supsec_fusion} for the discussion. Therefore, even after specifying the parton sectors, there is a leftover degree of freedom that remains unspecified.  

This ``composition rule'' of superselection sectors is somewhat mundane in the bulk. However, in the presence of a gapped domain wall, we can have a much richer structure. In Section~\ref{sec:O_type}, we shall study a composite sector that can be identified with the point-like excitations studied in Refs.~\cite{KitaevKong2012,Kong2014}. However, we shall see in Section~\ref{sec:snake} and \ref{sec:bbN_bbU} that there are other types of composite sectors as well. They are new to the best of our knowledge. While we do not believe that we have an exhaustive list of composite sectors, we expect to be able to characterize any reasonable composite superselection sectors by using the general observations summarized in Section~\ref{sec:sectors_general}.

Before we delve into these details, let us make a remark on our convention. We will frequently use the following short-hand notation for the merged state:
\begin{equation}
    \tau^{a \protect\merge b} := \rho^a \protect\merge \lambda^b, \label{eq:merge_convention}
\end{equation}
where $\rho^a$ and $\lambda^b$ are associated with superselection sectors $a$ and $b$. Both $\rho^a$ and $\lambda^b$ are elements of some information convex sets. The choice of these sets will depend on the context. 

\subsection{$O$-type sectors}
\label{sec:O_type}
The first of these composite sectors is the $O$-type superselection sector. These sectors correspond to the extreme points of an annulus on the gapped domain wall; see Fig.~\ref{fig:o_type_sector}. These extreme points are orthogonal to each other because the exposition in Section~\ref{sec:intro_supsec_fusion} applies here as well. Physically, these sectors are the superselection sectors of the point-like excitations on the gapped domain wall, studied in Refs.~\cite{KitaevKong2012,Kong2014}. 
\begin{figure}[h]
  \centering
          \includegraphics[scale=1]{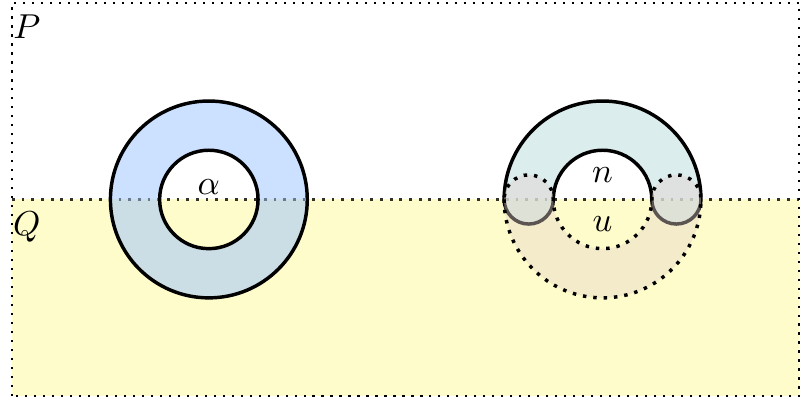}
   \caption{(Left) A subsystem associated with the $O$-type superselection sector. (Right) Upon tracing out the annulus, one can obtain a \emph{N}-shaped and \emph{U}-shaped subsystems, which must lie on the information convex set of $\Sigma(N)$ and $\Sigma(U)$, respectively.}
    \label{fig:o_type_sector}
\end{figure}
\noindent
We shall denote the set of $O$-type superselection sectors as 
\begin{equation}
   \mathcal{C}_O=\{ 1,\alpha,\beta,\cdots\},
\end{equation}
where we use Greek letters starting from $\alpha$ to denote these sectors. 

Let us explain in what sense the $O$-type sectors are composite. Consider an extreme point on the annulus that represents the sector $\alpha \in \calC_O$. Upon tracing out a disk-like region on $Q$, we get a density matrix over an $N$-shaped subsystem. Similarly, by tracing out a disk-like region on $P$, we obtain a density matrix over a \emph{U}-shaped subsystem. Moreover, these density matrices are elements of $\Sigma(N)$ and $\Sigma(U)$, respectively. 

The elements we obtain this way are not just any element; they are extreme points. To see why, consider an extreme point on the annulus that represents a $O$-type sector. As we discussed in Section~\ref{sec:intro_supsec_fusion}, if we extend an annulus to a thicker annulus and trace out the middle of the thicker annulus to obtain two annuli, these two annuli are decoupled. Importantly, subsystems of the two annuli must be also decoupled.
\begin{figure}[h]
    \centering
          \includegraphics[scale=1]{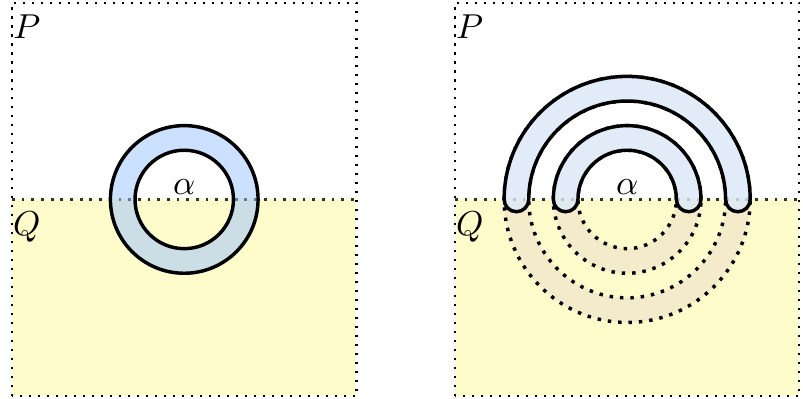}
    \caption{(Left) We begin with an extreme point corresponding to $\alpha \in \mathcal{C}_O$. (Right) By using the isomorphism theorem, the density matrix is extended to a larger annulus. Upon tracing out the middle of this larger annulus, we obtain two annuli. By the factorization of the extreme points, the state over these two annuli is a product state. Consequently, the \emph{N}-shaped subsystems in these two annuli must also be in a product state. }
    \label{fig:decoupling_n_shaped}
\end{figure}

In particular, the state over the two $N$-shaped regions in these annuli is factorized; see Fig.~\ref{fig:decoupling_n_shaped}. This is the key reason why the state is an extreme point. Let the density matrix in one of these two $N$-shaped subsystems (say $N$) to be 
\begin{equation}
    \rho_N = \bigoplus_{n\in\mathcal{C}_N} p_n \rho^n_{N}.
\end{equation}
By the isomorphism theorem, the density matrix over $NN'$ is
\begin{equation}
    \rho_{NN'} = \bigoplus_{n\in\mathcal{C}_N} p_n \rho^n_{N} \otimes \rho^n_{N'},
\end{equation}
where $N'$ is the other $N$-shaped subsystem in Fig.~\ref{fig:decoupling_n_shaped} separated from $N$. Therefore, the mutual information between the two regions is
\begin{equation}
    I(N:N') = H(\{ p_n\}).
\end{equation}
This has to be zero because the underlying state is a product state. The only possibility is that $p_n$ must be equal to $1$ for some $n$ and $0$ for other elements in $\mathcal{C}_N$. Therefore, the reduced density matrix over $N$ is an extreme point of $\Sigma(N)$. Similarly, the reduced density matrix over $U$ is an extreme point of $\Sigma(U)$.

Therefore, $\mathcal{C}_O$ must be a disjoint union of the following form:
\begin{equation}
    \mathcal{C}_O = \bigcup_{\substack{n\in \mathcal{C}_N \\
    u\in \mathcal{C}_U}} \mathcal{C}_O^{[n,u]}, \label{eq:E^nu}
\end{equation}
where $\mathcal{C}_O^{[n,u]} \subset \mathcal{C}_O$ is a subset in which the $N$- and $U$-type superselection sectors are fixed to $n$ and $u$.

The quantum dimension of this sector, which we define as
\begin{equation}
\boxed{
    d_{\alpha} := \exp\left(\frac{S(\rho^{\alpha}_{O}) - S(\rho^{1}_{O})}{2}\right),
}
\end{equation}
has a nontrivial relation with the quantum dimension of the parton sectors. Specifically, 
\begin{equation}
\boxed{
    d_n^2 d_u^2 = \frac{\sum_{\alpha \in \mathcal{C}_O^{[n,u]}} d_{\alpha}^2}{\sum_{\alpha \in \mathcal{C}_O^{[1,1]}} d_{\alpha}^2}.
}\label{eq:qd_parton_composite}
\end{equation}

\begin{figure}[h]
  \centering
            \includegraphics[scale=1]{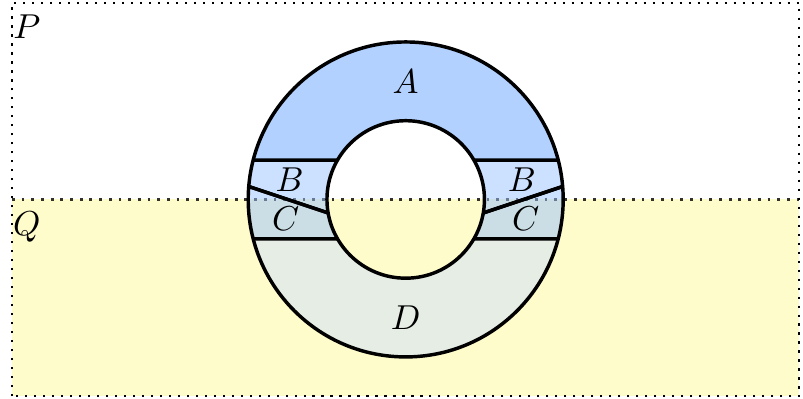}
	\caption{Merging a pair of parton sectors to obtain an $O$-type composite sector. The subsystem $N=ABC$ carries $n\in\calC_N$ and the subsystem $U=BCD$ carries $u\in\calC_U$. In the main text, $O$ is defined to be $ABCD$ in this figure.}
	\label{fig:merging_into_composite_point}
\end{figure}

To derive this relation, we use the merging technique used in Section~\ref{sec:intro_qd_fm}. Specifically, we merge extreme points of $\Sigma(N)$ and $\Sigma(U)$ to obtain an element in $\Sigma(O)$, where $O$ is an annulus on the domain wall; see Fig.~\ref{fig:merging_into_composite_point}. Without loss of generality, let us refer to these extreme points as $\rho^n_N$ and $\rho^u_U$. For the merged state $\tau^{n\merge u}_{O} := \rho^n_N \merge \rho_U^u$, its entropy is equal to 
\begin{equation}
    S(\tau^{n\merge u}_O) = \ln (d^2_n d^2_u) + (S_{N} + S_{U} - S_{N\cap U})_{\sigma}.
\end{equation}
On the other hand, we can directly obtain the maximum entropy consistent with the given extreme points in $\Sigma(N)$ and $\Sigma(U)$:
\begin{equation}
\begin{aligned}
    S(\tau^{n\merge u}_O) &= \max_{\{p_{\alpha}\} } \left(H(\{p_{\alpha}\}) + \sum_{\alpha \in \mathcal{C}_O^{[n,u]}} p_{\alpha} \ln d^2_{\alpha} \right) + S_{O}(\sigma) \\
    &= \ln \left(\sum_{\alpha \in \mathcal{C}_O^{[n,u]}} d_{\alpha}^2 \right) + S(\sigma_O).
\end{aligned}
\end{equation}
Let $\tau^{1\merge 1}_O$ be the state merged from extreme points $1\in\mathcal{C}_N$ and $1\in \mathcal{C}_U$. We get
\begin{equation}
\begin{aligned}
    S(\tau^{n\merge u}_O) - S(\tau^{1\merge 1}_O) &= \ln \left(d_n^2d_u^2 \right) \\
    &= \ln \left(\frac{\sum_{\alpha\in \mathcal{C}_O^{[n,u]}} d_{\alpha}^2}{\sum_{\alpha\in \mathcal{C}_O^{[1,1]}} d_{\alpha}^2} \right),
\end{aligned}
\end{equation}
which leads to Eq.~\eqref{eq:qd_parton_composite}.

\subsection{Snake sectors}
\label{sec:snake}
While the $O$-type sector has appeared in the literature already, there are other composite sectors that are new to the best of our knowledge. One such example is the \emph{snake sector}, or alternatively, a \emph{S}-type sector. This is a superselection sector associated with the ``snake''-shaped regions, e.g., $S$ and $S'$ in Fig.~\ref{fig:snake}. The information convex sets of these subsystems are isomorphic to a simplex with a finite number of orthogonal extreme points. These snake sectors are again composite sectors of $N$-type and $U$-type sectors, and therefore many of the discussions about $\calC_O$ in Section~\ref{sec:O_type} apply here as well.

\begin{figure}[h]
  \centering
              \includegraphics[scale=1]{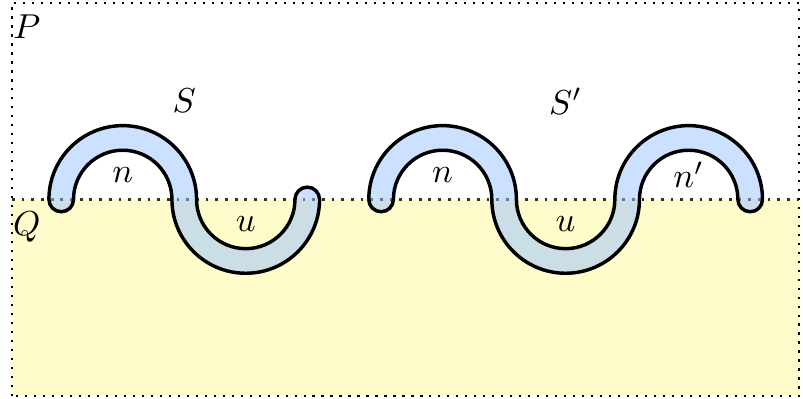}
	\caption{Snake-shaped subsystems.}
	\label{fig:snake}
\end{figure}

Let $S$ be the simplest snake-shaped region in Fig.~\ref{fig:snake}. The set of snake sectors is a disjoint union of the following form.
 \begin{equation}
 \calC_S = \bigcup_{\substack{n\in \mathcal{C}_N \\ u\in \mathcal{C}_U}} \calC_S^{[n,u]},
 \end{equation}
where the extreme points associated with $\calC_S^{[n,u]}$ carry $n\in \mathcal{C}_N$ and $u\in \mathcal{C}_U$.

Also, we can define the quantum dimensions as follows
\begin{equation}
\boxed{
d_{s} :=   \exp\left(\frac{S(\rho_{S}^{s}) - S(\rho^1_{S})}{2}\right), 
}
\end{equation}
where $\rho_S^s$ is an extreme point of $\Sigma(S)$.

There is a nontrivial identity between $\{d_s\}$ and the quantum dimension of the parton sectors:
\begin{equation}
\boxed{
 d_n^2 d_u^2 = \sum_{s\in \calC_S^{[n,u]} } d_s^2.
}\label{eq:d_s_consistency}
\end{equation}
The proof is essentially the same as the proof of Eq.~\eqref{eq:qd_parton_composite}, the only difference being that $\sum_{s \in \calC_S^{[1,1]}} d_s^2=1$. 

This last fact follows from the fact that  $|\calC_S^{[n,1]}| = |\calC_S^{[1,u]}|=1$.
That $\calC_S^{[n,1]}$ has a unique element follows from the observation that an element of $\Sigma(S)$ that carries parton sector $u=1$ is the reduced density matrix of a certain element in $\Sigma(SD)$, where $SD$ is an $N$-shaped subsystem; see Fig.~\ref{fig:fill_the_slot}. Here, $D$ is a disk on the domain wall, which fills the ``slot'' in Fig.~\ref{fig:fill_the_slot} and turns the  $U$-shaped arc into a disk on the domain wall. (In more detail, we need to divide $S$ into $ABC$ in an obvious way, in which $BCD$ is another disk on the domain wall. Then, we use the merging theorem. Note that the merging is possible because $u=1$.) The proof of $|\calC_S^{[1,u]}|=1$ is analogous.

\begin{figure}[h]
	\centering
              \includegraphics[scale=1]{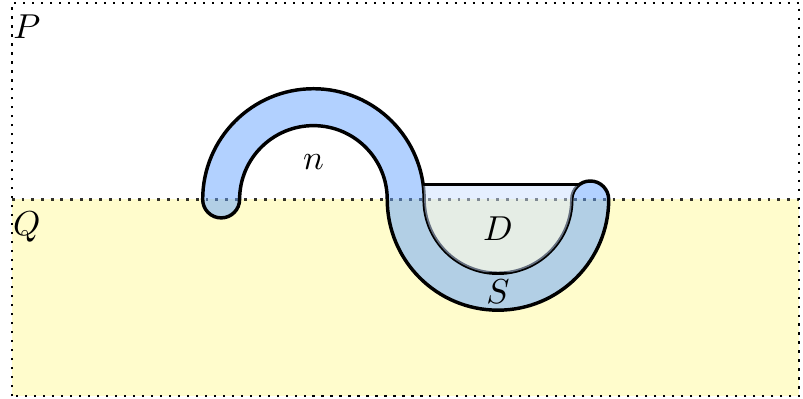}
	\caption{For the proof of $|\calC_S^{[n,1]}|=1 $. A snake-shaped subsystem $S$ and an $N$-shaped subsystem $SD$. Here, $D$ is a disk on the domain wall and it fills a slot.}
	\label{fig:fill_the_slot}
\end{figure}

\subsection{$\mathbb{N}$- and $\mathbb{U}$-type sectors}\label{sec:bbN_bbU}
There are composite sectors that play a crucial role in studying the fusion space of the aforementioned superselection sectors. These are the $\mathbb{N}$- and $\mathbb{U}$-type sectors; see Fig.~\ref{fig:NNUU}. The underlying subsystems are annuli on the domain wall which are not path-connected to any $O$-shaped subsystem. It should be obvious -- from the discussion about the bulk superselection sectors and the parton sectors -- that the information convex set associated with this subsystem is also isomorphic to a simplex formed by a finite number of mutually orthogonal extreme points. Moreover, these are composite sectors in a sense that, upon tracing out the appropriate subsystems, one can obtain two $N$- and $U$-shaped subsystems. Moreover, the argument that leads to Eq.~\eqref{eq:E^nu} also applies here, which implies that $\mathcal{C}_{\mathbb{N}}$ is a disjoint union of sets labeled by $n, n',u,$ and $u'$, where $n,n' \in \mathcal{C}_{N}$ and $u,u' \in \mathcal{C}_U$. However, we will not use this fact in this paper. 
\begin{figure}[h]
	\centering
              \includegraphics[scale=1]{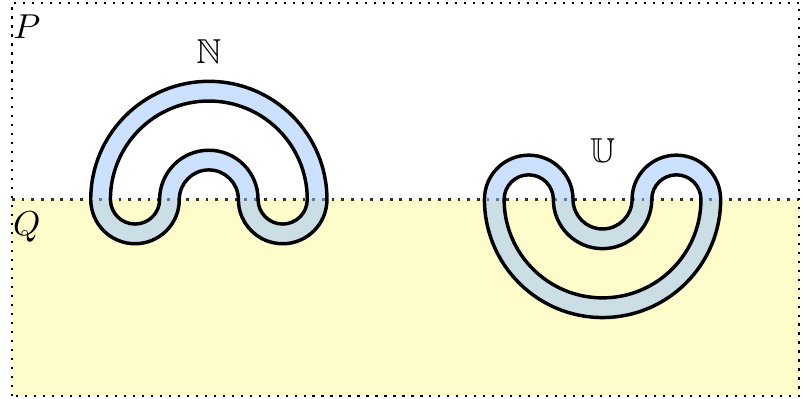}
	\caption{Subsystem choices for $\calC_{\mathbb{N}}$ and $\calC_{\mathbb{U}}$.}
	\label{fig:NNUU}
\end{figure}

We define the quantum dimensions of these sectors as follows:
\begin{equation}
\boxed{
\begin{aligned}
d_{\mathcal{N}} &=   \exp\left(\frac{S(\rho_{\mathbb{N}}^{\mathcal{N}}) - S(\rho^1_{\mathbb{N} })}{2}\right), \\	
d_{\mathcal{U} }   &=    \exp\left(\frac{S(\rho^{\mathcal{U}}_{\mathbb{U}}) - S(\rho^1_{\mathbb{U}})}{2}\right),
\end{aligned}  
}
\end{equation}
where $\mathcal{N}\in \mathcal{C}_{\mathbb{N}}$ and $\mathcal{U} \in \mathcal{C}_{\mathbb{U}}$ are the $\mathbb{N}$- and $\mathbb{U}$-type superselection sectors. We again use the superscript ``1'' to denote the extreme point obtained from the reference state $\sigma$.

There is a natural notion of embedding:
\begin{equation}
\begin{aligned}
    \eta_N&: \mathcal{C}_N \hookrightarrow \mathcal{C}_{\mathbb{N}} \\
    \eta_U&: \mathcal{C}_U \hookrightarrow \mathcal{C}_{\mathbb{U}},
\end{aligned}
\label{eq:embedding_N2NN}
\end{equation}
which is defined by tracing out the interior of the \emph{N}-shaped (or \emph{U}-shaped) subsystem; see Fig.~\ref{fig:eta}.
\begin{figure}[h]
    \centering
    \includegraphics[scale=1]{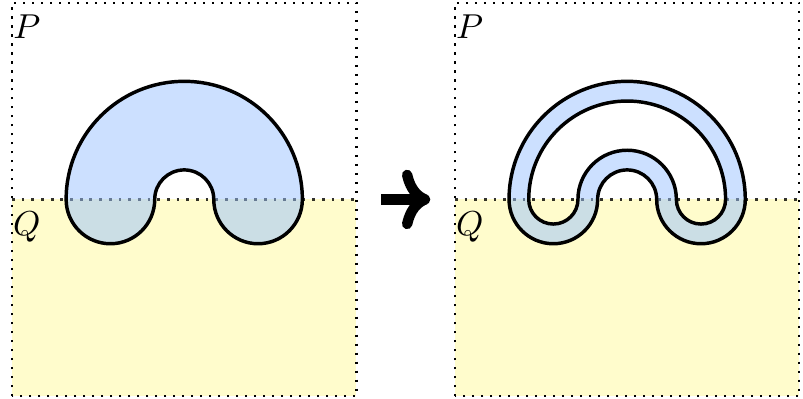}
    \caption{Given an extreme point of $\Sigma(N)$, we can obtain an extreme point of $\Sigma(\mathbb{N})$ by tracing out the middle part.}
    \label{fig:eta}
\end{figure}

In Eq.~\eqref{eq:embedding_N2NN}, we are implicitly asserting that an extreme point of $\Sigma(N)$, upon traced out the middle part, becomes an extreme point of an information convex set of a $\mathbb{N}$-shaped region. Below, we briefly sketch the underlying reason. 

Consider an $N$-shaped subsystem $N$, which is partitioned into $N'$ and $\mathbb{N} = \mathbb{N}_{\text{in}} \mathbb{N}_{\text{middle}} \mathbb{N}_{\text{out}}$, where $\mathbb{N}_{\text{in}}, \mathbb{N}_{\text{middle}},$ and $\mathbb{N}_{\text{out}}$ are non-overlapping $\mathbb{N}$-shaped regions. Specifically, we have the following sequence of $N$-shaped regions:
\begin{equation}
\begin{aligned}
    N' &\subset N'\mathbb{N}_{\text{in}} \\
    &\subset N'\mathbb{N}_{\text{in}}\mathbb{N}_{\text{middle}} \\
    &\subset N'\mathbb{N}_{\text{in}}\mathbb{N}_{\text{middle}}\mathbb{N}_{\text{out}},
\end{aligned}
\end{equation}
and the following sequence of $\mathbb{N}$-shaped regions:
\begin{equation}
    \begin{aligned}
        \mathbb{N}_{\text{in}}  &\subset \mathbb{N}_{\text{in}} \mathbb{N}_{\text{middle}} \\
        &\subset \mathbb{N}_{\text{in}} \mathbb{N}_{\text{middle}} \mathbb{N}_{\text{out}} \\
        &= \mathbb{N}.
    \end{aligned}
\end{equation}

Let $\rho^{\langle e\rangle}_N$ be an extreme point of $\Sigma(N)$. By the factorization property of the extreme point, we have
\begin{equation}
    I(N'\mathbb{N}_{\text{in}}: \mathbb{N}_{\text{out}})_{\rho^{\langle e\rangle}} =0.
\end{equation}
By the monotonicity of the mutual information, we get
\begin{equation}
I(\mathbb{N}_{\text{in}}: \mathbb{N}_{\text{out}})_{\rho^{\langle e\rangle}} =0.
\end{equation}
This is possible only if the reduced density matrix of $\rho^{\langle e\rangle}_N$ over $\mathbb{N}$ is an extreme point. 

Moreover, using the factorization property, we can derive:
\begin{equation}
\boxed{
\begin{aligned}
    d_{\eta_N(n)} &= d_n^2,\\
    d_{\eta_U(u)} &= d_u^2.
\end{aligned}
}\label{eq:eta_relation}
\end{equation}
To see why, without loss of generality, consider the subsystems described in Fig.~\ref{fig:eta_relation}. Here, both $N$ and $N'$ are $N$-shaped subsystems. Importantly, $N'\setminus N$ is a $\mathbb{N}$-shaped subsystem. Using the factorization property of the extreme points, we get:
\begin{equation}
\begin{aligned}
    (S_N + S_{N'} - S_{N'\setminus N})_{\rho_{N'}^n}&=0,\\
    (S_N + S_{N'} - S_{N'\setminus N})_{\rho_{N'}^1}&=0,
\end{aligned}
\end{equation}
where $\rho_{N'}^n \in \Sigma(N')$ is an extreme point associated with the sector $n\in \mathcal{C}_{N}$. From these equations and the definition of the quantum dimension, Eq.~\eqref{eq:eta_relation} follows. 
\begin{figure}[h]
  \centering
  \includegraphics[scale=1]{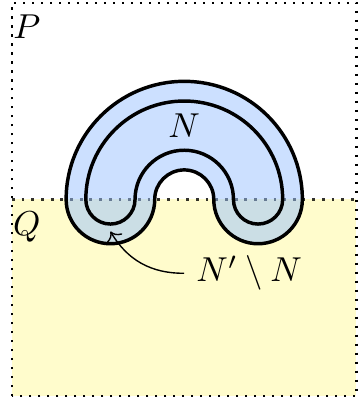}
    \caption{Subsystems involved in the proof of Eq.~\eqref{eq:eta_relation}}
    \label{fig:eta_relation}
\end{figure}

Later in Section~\ref{Sec:consistency_N_U}, we shall see that there is a one-to-one map between the set of $\mathbb{N}$-type sectors and the set of $\mathbb{U}$-type sectors. We denote this fact as follows:
\begin{equation}
\varphi:\calC_{\mathbb{N}}\to \calC_{\mathbb{U}},
\end{equation}
where $\varphi$ is a bijection. Later, we will show that this map preserves the quantum dimensions, namely
\begin{equation}
d_{\mathcal{N}} = d_{\varphi({\mathcal{N}}) }. \label{eq:1046_1}
\end{equation}
This would be certainly true if the domain wall is trivial since both subsystems can be smoothly deformed to an annulus. However, because $\mathbb{N}$ and $\mathbb{U}$ cannot be smoothly deformed into each other, Eq.~\eqref{eq:1046_1} is a nontrivial fact in general.

To summarize, different sets of superselection sectors are related to each other in the following way:
\begin{equation}
\begin{tikzcd}
\mathcal{C}_N \arrow[d, hook, "\eta_N"]  & \mathcal{C}_U \arrow[d, hook, "\eta_U"]
\\
\mathcal{C}_{\mathbb{N}} \arrow[leftrightarrow]{r} & \mathcal{C}_{\mathbb{U}}
\end{tikzcd}.
\end{equation}
\noindent
While the cardinality of $\mathcal{C}_N$ is generally different from that of $\mathcal{C}_U$, those two sets may be indirectly related to each other via $\mathcal{C}_{\mathbb{N}}$ and $\mathcal{C}_{\mathbb{U}}$.

\subsection{Generalities}
\label{sec:sectors_general}
In this section, we introduce general facts about superselection sectors. First, we explain an all-encompassing recipe to show that an information convex set of a subsystem is isomorphic to a simplex with orthogonal extreme points. The following discussion will assume the continuum limit, in which the familiar notion of topology is well-defined.

The following definition will be important.
\begin{definition}[Sectorizable Region]
A subsystem $R$ is \emph{sectorizable} if there is a region $\widehat{R}$ such that:
\begin{enumerate}
	\item $\widehat{R}$ contains disjoint regions $R$ and $R''$ and
	\item both $R$ and $R''$ can be connected to $\widehat{R}$ by a path, where the path is a sequence of extensions.
\end{enumerate}
\label{def:sectorizable}
\end{definition}

This definition is important because the information convex set of any sectorizable region is a simplex with orthogonal extreme points.
\begin{lemma} \label{lemma:sectorizable}
Let $R$ be a sectorizable region. Then
\begin{equation}
    \Sigma(R) = \left\{\bigoplus_I p_I \rho^I_R : \sum_I p_I=1, p_I\geq 0 \right\},
\end{equation}
where $\{ \rho^I_R \}$ is a set of density matrices that are mutually orthogonal to each other.
\end{lemma}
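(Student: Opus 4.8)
The plan is to mimic precisely the orthogonality arguments already carried out for $\Sigma(X)$ in Section~\ref{sec:intro_supsec_fusion} and for $\Sigma(N)$ in Section~\ref{sec:parton}, observing that those two proofs only used three ingredients: (i) the existence of a disjoint copy $R''$ of $R$ inside a larger region $\widehat R$ that is reached from both $R$ and $R''$ by paths; (ii) the isomorphism theorem (Theorem~\ref{thm:isomorphism}), which lets us transport extreme points of $\Sigma(R)$ to extreme points of $\Sigma(\widehat R)$ while preserving fidelity and entropy differences; and (iii) the factorization property of extreme points, which forces an extreme point of $\Sigma(\widehat R)$ to be a product state across the thickened boundary separating $R$ from $R''$. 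The definition of \emph{sectorizable} packages exactly ingredient (i), and (ii)–(iii) are available for any region in the presence of the domain wall by the results quoted just before this lemma. So the proof is essentially a matter of checking that the earlier argument never used anything special about annuli or $N$-shaped regions.

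Here is the sequence of steps. First I would take two arbitrary extreme points $\rho_R,\rho'_R\in\Sigma(R)$ and use the path from $R$ to $\widehat R$ (which, being a sequence of extensions, comes with an isomorphism $\Phi$) to map them to extreme points $\rho_{\widehat R},\rho'_{\widehat R}\in\Sigma(\widehat R)$. By Theorem~\ref{thm:isomorphism}, $F(\rho_R,\rho'_R)=F(\rho_{\widehat R},\rho'_{\widehat R})$. Second, since $R\cup R''\subset\widehat R$ and fidelity is nondecreasing under partial trace, $F(\rho_{\widehat R},\rho'_{\widehat R})\le F(\rho_{RR''},\rho'_{RR''})$. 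Third, I would invoke the factorization property: because $\rho_{\widehat R}$ is an extreme point and $\widehat R$ is obtained from, say, $R$ by enlargement, with $\widehat R\setminus R$ a thickened boundary that (by disjointness of $R$ and $R''$ inside $\widehat R$) contains the shell separating $R$ from $R''$, strong subadditivity together with Eq.~\eqref{eq:factorization_extreme_points} gives $I(R:R'')_{\rho_{\widehat R}}=0$, so $\rho_{RR''}=\rho_R\otimes\rho_{R''}$; likewise for $\rho'$. Fourth, combining these, $F(\rho_R,\rho'_R)\le F(\rho_R,\rho'_R)\,F(\rho_{R''},\rho'_{R''})$, and since $R''$ is itself reached from $\widehat R$ (equivalently $\widehat R$ from $R''$) by a path, the isomorphism theorem again identifies $F(\rho_{R''},\rho'_{R''})$ with $F(\rho_R,\rho'_R)$, giving $F(\rho_R,\rho'_R)\le F(\rho_R,\rho'_R)^2$. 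Since $F\in[0,1]$ this forces $F(\rho_R,\rho'_R)\in\{0,1\}$, i.e.\ distinct extreme points are orthogonal. The convexity of $\Sigma(R)$ is automatic, so $\Sigma(R)$ is a simplex with orthogonal extreme points, and finite-dimensionality of the underlying Hilbert space makes the index set $\{I\}$ finite, yielding the claimed decomposition.

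The main obstacle is purely bookkeeping about \emph{which} factorization to invoke and in particular making sure the thickened boundary of $\widehat R$ (relative to the embedded copy of $R$) genuinely shields $R$ from $R''$. In the annulus and $N$-shaped cases this was visually obvious from the figures, but in the general setting one must argue it from Definition~\ref{def:sectorizable}: because the path from $R$ to $\widehat R$ is a sequence of extensions, $\widehat R$ deformation-retracts onto $R$ and $\widehat R\setminus R$ is a collar of $\partial R$ inside $\widehat R$; since $R''\subset\widehat R$ is disjoint from $R$, it lies in $\widehat R\setminus R$, hence on the far side of that collar, so SSA applied to the split $R \mid (\widehat R\setminus R) \mid R''$ together with the vanishing of the relevant conditional mutual information (a consequence of Eq.~\eqref{eq:factorization_extreme_points} exactly as in Section~\ref{sec:intro_supsec_fusion}) gives $I(R:R'')=0$. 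One should also remark that the same argument applies symmetrically with the roles of $R$ and $R''$ interchanged, which is what licenses the final use of the isomorphism theorem to equate $F(\rho_{R''},\rho'_{R''})$ with $F(\rho_R,\rho'_R)$. With these points spelled out, the proof is a direct transcription of the two special cases already established.
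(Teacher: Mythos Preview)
Your proposal is correct and follows essentially the same approach as the paper's own proof sketch: extend two extreme points of $\Sigma(R)$ to $\widehat R$ via the isomorphism theorem, use monotonicity of fidelity under the partial trace to $RR''$, invoke the factorization property of extreme points to get a product state over $R$ and $R''$, and conclude $F\le F^2$. The paper states this more tersely and does not spell out the collar/shielding bookkeeping you highlight, but the logical skeleton is identical.
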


The proof of this lemma is straightforward, because it is a simple generalization of what we have been discussing so far. For completeness, we sketch the proof below. First, extend $R$ to $\widehat{R}$ using the isomorphism theorem, and then trace out $\widehat{R} \setminus (R \cup R'')$. We can obtain the following inequality:
\begin{equation}
    F(\rho_R, \rho'_{R}) \leq F(\rho_{RR''}, \rho'_{RR''}), \label{eq:fedility_le_general}
\end{equation}
where $\rho_R$ and $\rho'_R$ are two extreme points of $\Sigma(R)$, and $\rho_{RR''}$ and $\rho'_{RR''}$ are obtained from the former density matrices by an extension to $\widehat{R}$ and a partial trace over $\widehat{R} \setminus (R \cup R'')$. Note that Eq.~(\ref{eq:fedlity_le_X}) and Eq.~(\ref{eq:fidelity_le_N}) are special cases of Eq.~(\ref{eq:fedility_le_general}). 
By the factorization property,  we get $F(\rho_R, \rho'_R) \leq F(\rho_R, \rho'_R)^2$. Therefore, $F(\rho_R, \rho'_R)$ must be either $0$ or $1$. This proves Lemma~\ref{lemma:sectorizable}.

Second, for two sectorizable subsystems, the set of extreme points obeys the ``product rule.'' 

\begin{lemma}[Product rule]\label{lemma:product_rule}
Let $\Omega_1$ and $\Omega_2$ be sectorizable subsystems which are disjoint from each other. Then the subsystem $\Omega_1\Omega_2$ is another sectorizable subsystem and
\begin{equation}
\boxed{
\mathcal{C}_{\Omega_1  \Omega_2} \cong \mathcal{C}_{\Omega_1}  \times \mathcal{C}_{\Omega_2},
}
\label{eq:product_rule}
\end{equation}
where $\mathcal{C}_{\Omega_1}$, $\mathcal{C}_{\Omega_2}$ and $\mathcal{C}_{\Omega_1  \Omega_2}$ are the set of superselection sectors associated with sectorizable subsystems $\Omega_1$, $\Omega_1$ and $\Omega_1\Omega_2$ respectively.
Moreover, every extreme point of $\Sigma(\Omega_1  \Omega_2)$ is a tensor product of extreme points of $\Sigma(\Omega_1)$ and $\Sigma(\Omega_2)$.
\end{lemma}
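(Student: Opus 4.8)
The plan is to establish Lemma~\ref{lemma:product_rule} in three stages: first show $\Omega_1\Omega_2$ is sectorizable, then prove that every extreme point of $\Sigma(\Omega_1\Omega_2)$ factorizes as a tensor product of extreme points of the factors, and finally deduce the bijection of sector labels. For the first stage, since $\Omega_i$ is sectorizable there is a $\widehat{\Omega}_i$ with disjoint copies $\Omega_i, \Omega_i''$ each path-connected to $\widehat{\Omega}_i$ by extensions; because $\Omega_1$ and $\Omega_2$ are disjoint I can shrink the ambient region so that $\widehat{\Omega}_1$ and $\widehat{\Omega}_2$ (and their auxiliary copies) are also disjoint, and then take $\widehat{\Omega_1\Omega_2} := \widehat{\Omega}_1\widehat{\Omega}_2$ with $(\Omega_1\Omega_2)'' := \Omega_1''\Omega_2''$. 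The two required paths are obtained by running the paths in the two factors in parallel, which is legitimate since each stays within its own disjoint ambient region; this verifies Definition~\ref{def:sectorizable}, and hence Lemma~\ref{lemma:sectorizable} already tells us $\Sigma(\Omega_1\Omega_2)$ is a simplex with orthogonal extreme points, so $\mathcal{C}_{\Omega_1\Omega_2}$ makes sense.

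The heart of the argument is the factorization claim. Let $\rho^{\langle e\rangle}$ be an extreme point of $\Sigma(\Omega_1\Omega_2)$. The idea is to enlarge each $\Omega_i$ slightly inside the disjoint ambient regions to $\Omega_i^+$ with $\Omega_i^+\setminus\Omega_i$ a thickened boundary, so that $\Omega_1^+\Omega_2^+\supset \Omega_1\Omega_2$ is an enlargement and $(\Omega_1^+\Omega_2^+)\setminus(\Omega_1\Omega_2) = (\Omega_1^+\setminus\Omega_1)\cup(\Omega_2^+\setminus\Omega_2)$ is its thickened boundary. Applying the isomorphism theorem (Theorem~\ref{thm:isomorphism}) we extend $\rho^{\langle e\rangle}$ to an extreme point of $\Sigma(\Omega_1^+\Omega_2^+)$, and the factorization property of extreme points gives vanishing mutual information between $\Omega_1\Omega_2$ and the thickened boundary, hence in particular $I(\Omega_1 : \Omega_2^+\setminus\Omega_2) = 0$ and, by monotonicity, any cross term vanishes. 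A cleaner route: choose the enlargement so that $\Omega_2^+\setminus\Omega_2$ separates $\Omega_1$ from a collar around $\Omega_2$; then vanishing mutual information of $\Omega_1$ with the collar, combined with running the same argument with the roles of $1$ and $2$ swapped and using $I(\Omega_1:\Omega_2)\le I(\Omega_1:\text{collar}) = 0$, shows the reduced state on $\Omega_1\Omega_2$ is a product $\rho^{\langle e\rangle}_{\Omega_1}\otimes\rho^{\langle e\rangle}_{\Omega_2}$. Each tensor factor is a reduced state of an extreme point, so by the same decoupling-implies-extreme-point reasoning used for the $O$-type sectors in Section~\ref{sec:O_type} (extend to a thicker region, trace out the middle, the two resulting copies decouple forcing a definite label), each factor is an extreme point of the corresponding information convex set.

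From this the isomorphism $\mathcal{C}_{\Omega_1\Omega_2}\cong\mathcal{C}_{\Omega_1}\times\mathcal{C}_{\Omega_2}$ follows: the map sending an extreme point to the ordered pair of its reduced extreme points is well-defined by the factorization just proved, it is injective because an extreme point is determined by its restrictions to $\Omega_1$ and $\Omega_2$ (they reconstruct the product state), and it is surjective because given extreme points $\rho^{I_1}_{\Omega_1}$ and $\rho^{I_2}_{\Omega_2}$ the product state $\rho^{I_1}_{\Omega_1}\otimes\rho^{I_2}_{\Omega_2}$ lies in $\Sigma(\Omega_1\Omega_2)$ — this last point needs the merging theorem (Theorem~\ref{thm:merging_info_convex_set}), where the disjointness of $\Omega_1$ and $\Omega_2$ (so no radius-$r$ disk meets both) supplies the required conditional-independence conditions trivially, and the merged state is exactly the tensor product. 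Being extreme on each factor, it is extreme in the product, giving surjectivity.

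The main obstacle I anticipate is purely topological bookkeeping rather than any new entropic input: one must check that the enlargements $\Omega_i^+$ and the ambient regions $\widehat{\Omega}_i$ can genuinely be chosen disjoint and that the concatenated parallel paths are valid paths in the sense of the (domain-wall-refined) definition preceding Theorem~\ref{thm:isomorphism} — i.e.\ that at no stage do the deforming regions in the two factors collide or change the topology of their restrictions to $P$ and $Q$. Once disjointness is arranged, every analytic step is a verbatim repetition of arguments already established (factorization of extreme points, isomorphism theorem, merging theorem, and the decoupling argument from Section~\ref{sec:O_type}), so the proof is essentially a matter of assembling known pieces in parallel across the two disjoint factors.
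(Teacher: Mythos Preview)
Your overall architecture is sound and matches the paper's, but the factorization step contains a genuine gap. You write that ``the factorization property of extreme points gives vanishing mutual information between $\Omega_1\Omega_2$ and the thickened boundary, hence in particular $I(\Omega_1 : \Omega_2^+\setminus\Omega_2) = 0$.'' This is not what Eq.~\eqref{eq:factorization_extreme_points} says: it asserts that in a purification of $\rho^{\langle e\rangle}_{\Omega_1^+\Omega_2^+}$, the inner region $\Omega_1\Omega_2$ decouples from the \emph{purifying complement}, not from the thickened boundary $(\Omega_1^+\setminus\Omega_1)(\Omega_2^+\setminus\Omega_2)$, which sits inside the system and is in fact adjacent to (and generally correlated with) $\Omega_1\Omega_2$. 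Neither your first route nor your ``cleaner route'' extracts $I(\Omega_1:\Omega_2)=0$ from the factorization of the joint extreme point alone.

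The paper's argument inverts the order of your two middle steps, and this inversion is what makes it work. First, using the sectorizable structure (the copies $\Omega_1'',\Omega_2''$ inside $\widehat{\Omega}_1\widehat{\Omega}_2$) together with factorization, one shows $I(\Omega_1\Omega_2:\Omega_1''\Omega_2'')=0$ and hence $I(\Omega_i:\Omega_i'')=0$; this is exactly the ``decoupling-implies-extreme-point'' reasoning from Section~\ref{sec:O_type} that you already invoke, and it establishes that the reduction of $\rho^{\langle e\rangle}$ to each $\Omega_i$ is an extreme point of $\Sigma(\Omega_i)$. Only \emph{then} can one apply the factorization property to the extreme point of $\Sigma(\Omega_1)$ itself: this says $\Omega_1\setminus\partial\Omega_1$ decouples from everything outside $\Omega_1$, in particular from $\Omega_2$, and the isomorphism theorem upgrades this to $I(\Omega_1:\Omega_2)=0$. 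So the fix is simply to run your extremality-of-reductions step before, not after, the product-state claim; once reordered, your argument coincides with the paper's. Your surjectivity argument via the merging theorem is fine (and in fact more explicit than the paper, which leaves that direction implicit).
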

\begin{proof}
First, the fact that $\Omega_1\Omega_2$ is again a sectorizable subsystem is easy to verify. The two conditions in	Definition~\ref{def:sectorizable} are verified by letting $\widehat{\Omega_1\Omega_2}= \widehat{\Omega}_1 \widehat{\Omega}_2$ and $(\Omega_1\Omega_2)''=\Omega''_1\Omega''_2$. 

Second, note that any extreme point of $\Sigma(\Omega_1 \Omega_2)$, once restricted to either $\Omega_1$ or $\Omega_2$, becomes an extreme point of $\Sigma(\Omega_1)$ and $\Sigma(\Omega_2)$ respectively. 
This is because, once we extend an extreme point in $\Sigma(\Omega_1 \Omega_2)$ to an element of $\Sigma(\widehat{\Omega_1\Omega_2})$ by using the isomorphism theorem and tracing out the appropriate subsystems, the mutual information of this state between $\Omega_1\Omega_2$ and $\Omega''_1\Omega''_2$ is zero. This fact follows from the factorization property of the extreme points. Therefore, the state must be factorized over $\Omega_1$ and $\Omega''_1$. The same factorization holds over $\Omega_2$ and $\Omega''_2$. Such factorization is possible only if the reduced density matrix of any extreme point of $\Sigma(\Omega_1 \Omega_2)$ over $\Omega_1$ and $\Omega_2$ are extreme points.

Now, we can use the factorization property of the extreme point of $\Sigma(\Omega_1)$ as follows. Note that the extreme points of $\Sigma(\Omega_1\Omega_2)$, restricted to $\Omega_1 \setminus \partial \Omega_1$, where $\partial \Omega_1$ is the thickened boundary of $\Omega_1$, must be factorized with anything that is outside of $\Omega_1$. Therefore, these extreme points must be factorized between $\Omega_1 \setminus \partial \Omega_1$ and $\Omega_2$. Using the isomorphism theorem, we conclude that the extreme points over $\Sigma(\Omega_1\Omega_2)$ must be factorized over $\Omega_1$ and $\Omega_2$.  
\end{proof}

\subsection{Summary}
We have so far studied the parton sectors and its (derivative) composite sectors. Below, we summarize our key results for the readers' convenience. First, we have summarized these superselection sectors in Fig.~\ref{fig:sectors}. Note that the set of composite sectors can be decomposed further into a disjoint union of sets, each of which is labeled by the parton sectors. For instance, $O$- and $S$-type sectors are labeled by an $N$- and a $U$-type sector. On the other hand, $\mathbb{N}$- and $\mathbb{U}$-type sectors are labeled by two $N$- and two $U$-type sectors.

\begin{figure}[h]
	\centering
\includegraphics[scale=1]{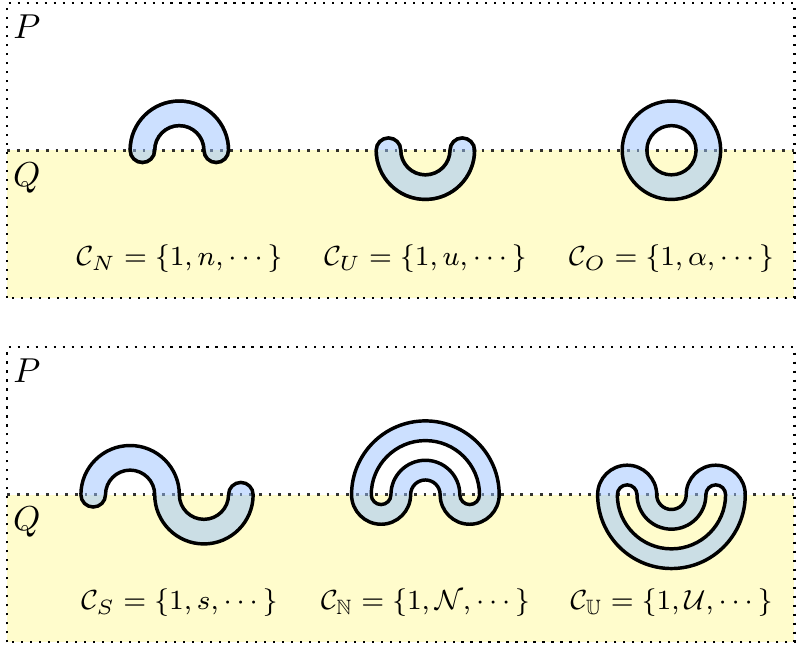}
	\caption{A list of subsystem topologies and the corresponding superselection sector labels.}\label{fig:sectors}
\end{figure}

The quantum dimensions of these sectors are all defined in the same way, in terms of the entanglement entropy of the extreme point associated with the superselection sector. 

We derived the following identities:
\begin{equation}
\begin{aligned}
    d_n^2d_u^2 &= \frac{\sum_{\alpha \in \mathcal{C}_O^{[n,u]}} d_{\alpha}^2}{\sum_{\alpha \in \mathcal{C}_O^{[1,1]}} d_{\alpha}^2} \\
    &= \sum_{s\in \mathcal{C}_S^{[n,u]}} d_s^2.
\end{aligned}
\end{equation}
Moreover, we studied the maps $\eta_N$ (as well as $\eta_U$) and $\varphi$ which has the following properties. The maps $\eta_N$ and $\eta_U$ are embeddings from $\mathcal{C}_N$ to $\mathcal{C}_{\mathbb{N}}$ and from $\mathcal{C}_U$ to $\mathcal{C}_{\mathbb{U}}$ respectively, such that
\begin{equation}
\begin{aligned}
    d_{\eta_N(n)} &= d_n^2\\
    d_{\eta_U(u)} &= d_u^2.
\end{aligned}
\end{equation}
$\varphi$ is a bijection between $\mathcal{C}_{\mathbb{N}}$ and $\mathcal{C}_{\mathbb{U}}$ such that
\begin{equation}
\begin{aligned}
    d_{\varphi(\mathcal{N})} &= d_{\mathcal{N}} \\
    d_{\varphi^{-1}(\mathcal{U})} &= d_{\mathcal{U}}.
\end{aligned}
\end{equation}

Finally, we mention that  anti-sectors are well defined for $\calC_N$, $\calC_U$ and $\calC_O$. The quantum dimension of every sector is equal to that of its anti-sector.
\begin{equation}
   \begin{aligned}
       d_{\bar{n}}&= d_n, \quad \bar{\bar{n}}=n, \quad \forall n\in \calC_N,\\
        d_{\bar{u}}&= d_u,\quad \bar{\bar{u}}=u, \quad \forall u\in \calC_U,\\
         d_{\bar{\alpha}}&= d_\alpha, \quad \bar{\bar{\alpha}}=\alpha, \quad \forall \alpha\in \calC_O,
   \end{aligned} 
\end{equation}
where we have used a ``bar'' over a sector label to denote its anti-sector.
We will prove these identities later.

\section{Fusion spaces}\label{sec:fusion}
In this section, we define and study the fusion spaces of the superselection sectors introduced in Section~\ref{sec:parton} and \ref{sec:composite_sectors}. To understand our definition of fusion space, it will be instructive to recall the definition of fusion space in the theory of anyon. In the anyon theory, a fusion space is a \emph{Hilbert space}. Specifically, when a sector $a$ and $b$ fuse into another sector $c$, there is a leftover degree of freedom, described by a state space of some Hilbert space. This underlying Hilbert space \emph{is} the fusion space. In this paper, we will adhere strictly to this rule and ascribe a fusion space to any space isomorphic to a state space of some Hilbert space. 

Without loss of generality, consider an information convex set $\Sigma(\Omega)$ associated with a subsystem $\Omega$. To characterize $\Sigma(\Omega)$, it will be helpful to study the information convex set of its thickened boundary $\partial \Omega$; see Fig.~\ref{fig:fusion_characterization}. Because $\partial \Omega$ is a sectorizable region, $\Sigma(\partial \Omega)$ is a simplex with orthogonal extreme points; see Definition~\ref{def:sectorizable} and Lemma~\ref{lemma:sectorizable}. Moreover, the factorization property of the extreme points implies that every extreme point of $\Sigma(\Omega)$ reduces to an extreme point of $\Sigma(\partial \Omega)$. Therefore, elements of $\Sigma(\Omega)$ can be divided further in terms of the extreme points of $\Sigma(\partial \Omega)$:
\begin{equation}
\boxed{
    \Sigma(\Omega) = \left\{ \bigoplus_{I \in \mathcal{C}_{\partial \Omega}} p_{I} \rho_{\Omega}^{I} : \sum_{I} p_{I} = 1, p_{I} \geq 0 \right\},
}\label{eq:fusion_space1}
\end{equation}
where $\mathcal{C}_{\partial \Omega}$ is a set of superselection sectors associated with the extreme points of $\Sigma(\partial \Omega)$ and $\rho_{\Omega}^{I}$ is an element of $\Sigma(\Omega)$ that, upon restricting to $\partial \Omega$, becomes an extreme point $\rho^I_{\partial \Omega} \in \Sigma(\partial \Omega)$. The set of $\rho_{\Omega}^{I}$ with a fixed $I$ forms a convex subset of $\Sigma(\Omega)$, which we shall denote as $\Sigma_I(\Omega)$. 

\begin{figure}[h]
    \centering
\includegraphics[scale=1]{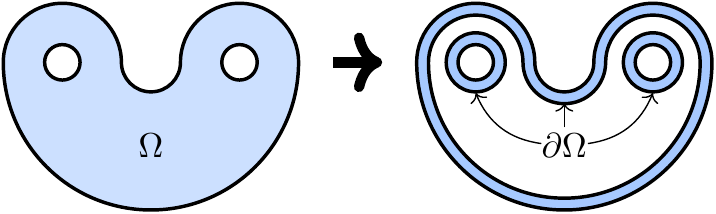}
    \caption{An example of thickened boundaries. Here, $\Omega$ is a sufficiently thick and smooth two-hole disk. The thickened boundary $\partial \Omega$ is a sectorizable subsystem. Furthermore, $\partial \Omega$ is a union of three disjoint annuli, each of which is a sectorizable subsystem.}
    \label{fig:fusion_characterization}
\end{figure}

It remains to characterize $\mathcal{C}_{\partial \Omega}$ and $\Sigma_{I}(\Omega)$. For $\mathcal{C}_{\partial \Omega}$, we can use the general strategy explained in Section~\ref{sec:composite_sectors}. For instance, if $\partial \Omega$ has multiple connected components, the set of superselection sectors $\mathcal{C}_{\partial \Omega}$ obeys the product rule (Lemma~\ref{lemma:product_rule}). For example, in Fig.~\ref{fig:fusion_characterization}, $\partial \Omega$ is the union of three disjoint annuli. In this case, $I\in \mathcal{C}_{\partial \Omega}$ is a triple of superselection sectors of the three annuli, i.e., $ \{I\} \cong \{(a,b,c)\}$ where $a, b,$ and $c$ belong to the set of superselection sectors associated with an annulus. 

For $\Sigma_I(\Omega)$, we can prove the following fact:
\begin{equation}
\boxed{
    \Sigma_{I}(\Omega) \cong \mathcal{S}(\mathbb{V}_I), \label{eq:fusion_space2}
}
\end{equation}
where $\mathcal{S}(\mathbb{V}_I)$ is the state space of a finite-dimensional Hilbert space $\mathbb{V}_I$, which generally depends on the choice of $I$. Combined with Eq.~\eqref{eq:fusion_space1}, this implies that one can assign a fusion space to any sufficiently smooth and thick subsystem. The dimension of the Hilbert space, $N_I =\dim \mathbb{V}_I$, is a non-negative integer known as the \emph{fusion multiplicity}.
 
As a sanity check, we can see that the fusion space defined in Eq.~\eqref{eq:fusion_space2} produces sensible results in known setups. In Fig.~\ref{fig:fusion_characterization}, $N_I$ is simply $N_{ab}^c$, the multiplicity for the fusion of anyons $a$ and $b$ into an anyon $c$. In that context, Eq.~(\ref{eq:fusion_space2}) was derived in Theorem~4.5 of Ref.~\cite{SKK2019}.

The proof of Eq.~\eqref{eq:fusion_space2} for general subsystems can be done similarly as the proof of Theorem~4.5 of Ref.~\cite{SKK2019}. Moreover, we provide an alternative proof which is simpler; see the \emph{Hilbert space theorem} (Theorem~\ref{thm:Hilbert}) in Appendix~\ref{appendix:fusion_space}. 

\subsection{Fusion on gapped domain walls}
\label{sec:fusion_examples}

In this section, we list a few examples of fusion spaces on gapped domain walls. A (partial) list of relevant subsystems is described in Fig.~\ref{fig:fusion_summary}. For example, we can consider a two-hole disk on the domain wall, both of the holes sitting on the domain wall; see the first figure in Fig.~\ref{fig:fusion_summary}. The thickened boundary of that region is a union of three disjoint annuli on the domain wall, with the extreme points labeled by $\mathcal{C}_O$.  Hence, the fusion space of the two-hole disks on the domain wall can be labeled by a triple $(\alpha, \beta, \gamma)$, where $\alpha, \beta, \gamma \in \mathcal{C}_O$. We may formally denote the fusion space as $\mathbb{V}_{\alpha\beta}^{\gamma}$ and the fusion multiplicity as $N_{\alpha\beta}^{\gamma}$.

The other examples listed in Fig.~\ref{fig:fusion_summary} can be understood in a similar way. While we have discussed our notation of superselection sectors in Section~\ref{sec:composite_sectors}, we restate it below for the readers' convenience:
\begin{equation}
\begin{aligned}
    \alpha,\beta,\gamma &\in \calC_O,\\
    a&\in \calC_P, \\
    x&\in \calC_Q, \\
    \mathcal{U}&\in \calC_{ \mathbb{U} }, \\
    \mathcal{N}&\in \calC_{ \mathbb{N} },
\end{aligned}
\end{equation}
where $\calC_P$ and $\calC_Q$ denote the set of anyon labels in phases $P$ and $Q$ respectively.

\begin{figure}[h]
	\centering
\includegraphics[scale=1]{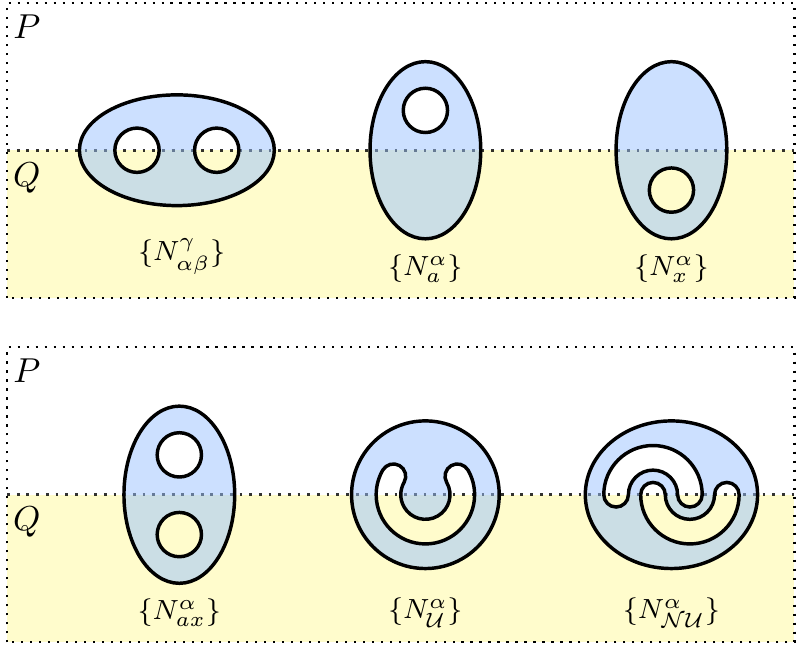}
	\caption{Subsystems that are relevant to the study of fusion spaces on gapped domain walls. Also shown are the labels for the fusion multiplicities.}\label{fig:fusion_summary}
\end{figure}

In Section~\ref{sec:fusion_rules}, we will in fact derive the fusion \emph{rules} that these fusion spaces must obey and derive intricate constraints on the fusion multiplicities. Let us briefly mention these results, deferring the details to Section~\ref{sec:fusion_rules}. We can formally express the following fusion processes:
\begin{equation}
\boxed{    \begin{aligned}
        \alpha \times \beta &=\sum_{\gamma} N_{\alpha \beta}^{\gamma} \gamma \\
        a &= \sum_{\alpha} N_a^{\alpha} \alpha \\
        x &= \sum_{\alpha} N_x^{\alpha} \alpha \\
        a \times x &= \sum_{\alpha} N_{ax}^{\alpha} \alpha .
    \end{aligned}
    } \label{eq:fusion_summary}
\end{equation}
Here, for any choice of sectors on the left-hand side there exists at least one fusion result on the right-hand side.

However, the same cannot be said about the fusion processes involving multiplicities $N^{\alpha}_{\mathcal{U}}$ and $N^{\alpha}_{\mathcal{N}\mathcal{U}}$. For example, for a particular choice of $\mathcal{N}$ and $\mathcal{U}$, there may not be an $\alpha$ such that $N_{\mathcal{N} \mathcal{U}}^{\alpha} \neq 0$.  We will revisit this issue in Section~\ref{sec:fusion_rules}.

\subsection{Quasi-fusion of parton sectors}\label{Sec:quasi_fusion_subsection}
We have seen examples of fusion spaces in Fig.~\ref{fig:fusion_summary}. They involve composite sectors on the domain wall as well as the superselection sectors of anyons in the 2D bulk $P$ and $Q$. One may wonder whether there is a similar generalization of fusion spaces to parton sectors. What happens if we ``fuse'' a pair of parton sectors (say $n,n'\in \calC_{N}$) together? Can they fuse into another parton sector $n''\in \calC_N$? Can we define a fusion space ($\mathbb{V}_{nn'}^{n''}$) associated with a triple of parton sectors?

Surprisingly, the answer to the last question is ``no.'' To see why, let us first formalize the problem. Consider a $M$-shaped subsystem ($M$) shown in Fig.~\ref{fig:M-shape_and_quasi_fusion}. There are three $N$-shaped subsystems, associated with superselection sectors $n, n',$ and $n''$ without loss of generality. The question is whether the state space with a fixed choice of $n, n',$ and $n''$ is isomorphic to a state space of some Hilbert space.

\begin{figure}[h]
  \centering
\includegraphics[scale=1]{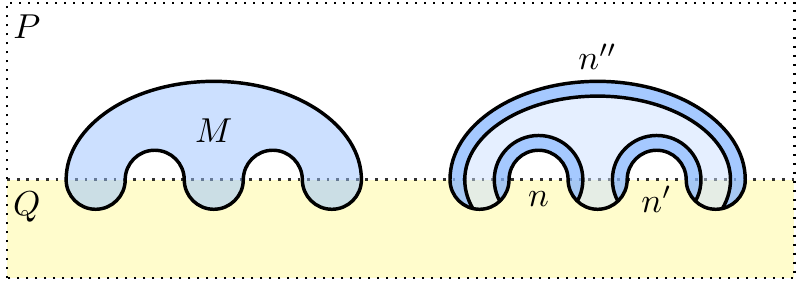}
	\caption{An $M$-shaped region. Three parton sectors $n, n', n'' \in \calC_N$ can be detected from the three $N$-shaped subsystems (darker color) of $M$.}
	\label{fig:M-shape_and_quasi_fusion}
\end{figure}

It turns out that this is not the case. What \emph{is} correct is the fact that $n, n',$ and $n''$ partially characterize the extreme points of $\Sigma(\partial M)$, where $\partial M$ is the thickened boundary of $M$. However, they do not characterize the extreme points of $\Sigma(\partial M)$ completely. This is because $\partial M$ is not a union of three $N-$shaped regions; the $N$-shaped regions associated with $n, n',$ and $n''$ are part of $\partial M$ but not all of it. Therefore, even after specifying $n, n',$ and $n'',$ one may have more than one fusion space, each labeled by an extreme point of $\Sigma(\partial M)$. We shall refer to this phenomenon as \emph{quasi-fusion} of parton sectors.

\begin{figure}[h]
	\centering
\includegraphics[scale=1]{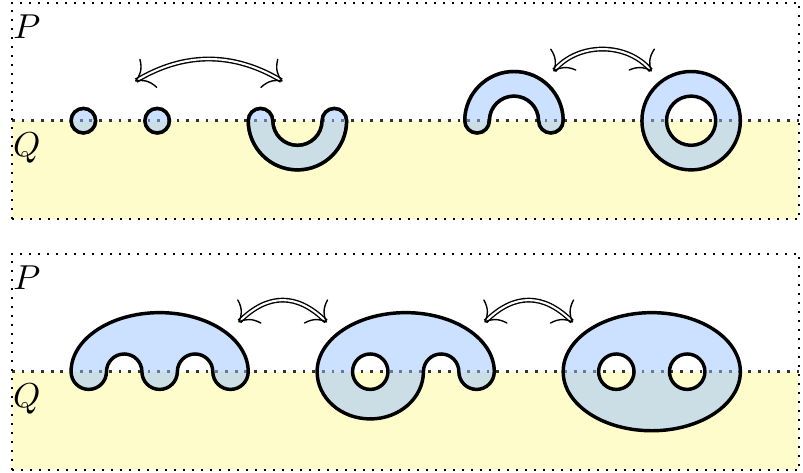}
	\caption{If $Q$ has trivial anyon content, we can apply topology-changing operation to the subsystems in $Q$ side without affecting the structure of the information convex sets. In this case, $\calC_U=\{1\}$ and $\calC_N \cong \calC_O$. Moreover, the quasi-fusion rule of $N$-type parton sectors is identical to the conventional fusion rule of point excitations on the domain wall.}
	\label{fig:boundary_connect_M-shape}
\end{figure}

However, when one side of the bulk phase, say $Q$, has a trivial anyon content, there is a unique fusion space (which can be labeled as $\mathbb{V}_{nn'}^{n''}$) for each choice of $n,n',n''\in \calC_N$. In this specific instance, the conventional fusion rule applies to the parton sectors as well.

Importantly, our statement applies even if the bulk phase with a trivial anyon content has a nonzero chiral central charge. A nontrivial example is the so-called $E_8$ state~\cite{Kitaev2006solo}. A proof of our claim is presented in Appendix~\ref{appendix:gapped boundary}. The key idea is that trivial anyon content implies a new type of entropic constraint. This new constraint allows us to prove a strengthening of the isomorphism theorem in which the underlying subsystems can undergo a topology change. We sketched this idea in Fig.~\ref{fig:boundary_connect_M-shape}, deferring the details to Appendix~\ref{appendix:gapped boundary}.

\section{Fusion rules}\label{sec:fusion_rules}

So far, we have defined a number of different superselection sectors and their fusion spaces. In this section, we will study their fusion rules.

To put our work into a context, let us recall the fusion rules in the bulk. Formally, we can write
\begin{equation}
    a\times b = \sum_{c} N_{ab}^c c,
\end{equation}
where $a, b,$ and $c$ are superselection sectors in the bulk; as we discussed in Section~\ref{sec:fusion_from_entanglement}, these are associated with the extreme points of the information convex sets of an annulus. $N_{ab}^c$ is the fusion multiplicity of $a$ and $b$ fusing into $c$.

In Ref.~\cite{SKK2019}, we were able to derive the following facts. 
\begin{equation}
\begin{aligned}
N_{ab}^c &= N_{ba}^c \\
N_{1a}^c &= N_{a1}^c = \delta_{a,c} \\
\forall a, \exists !\, \bar{a} \,\,\text{ s.t. }\,\, N_{ab}^1 &= \delta_{b,\bar{a}} \\
N_{ab}^c &= N_{\bar{b}\bar{a}}^{\bar{c}} \\
\sum_i N_{ab}^i N_{ic}^d &= \sum_j N_{aj}^d N_{bc}^j.
\end{aligned}
\label{eq:axioms_bulk_sec_fusion}
\end{equation} 
The first line says the fusion rule is commutative. The second line says the fusion with the vacuum is trivial. The third line implies that anti-sector is unique. The fourth line is a symmetry of the fusion multiplicity involving the replacement of sectors with their anti-sectors. The last line says the composition of fusion multiplicities is associative.

Furthermore, the quantum dimensions -- defined in terms of the entropy difference Eq.~\eqref{eq:d_a_definition} -- are constrained by the fusion multiplicities by the following equation:
\begin{equation}
d_a d_b = \sum_c N_{ab}^c d_c. \label{eq:quantum_dimension_bulk}
\end{equation}
In fact, this equation completely determines the set of (positive) quantum dimensions because the fusion multiplicities satisfy Eq.~\eqref{eq:axioms_bulk_sec_fusion}. It follows from this constraint that $d_1=1$ and $d_{a}= d_{\bar{a}} \ge 1$ for any $a$. Furthermore, $d_a$ is quantized in the sense that it cannot take an arbitrary value; for example, it cannot take any value in the interval $(1,\sqrt{2})$.

The primary purpose of this section is to derive identities on the fusion multiplicities analogous to these equations. We further derive the quantization of the quantum dimensions of parton sectors by relating them to these fusion multiplicities. We shall go through the fusion spaces described in Fig.~\ref{fig:fusion_summary} and derive their respective fusion rules.

\subsection{Fusion rules for $O$-type sectors} \label{Sec:fusion_alpha}

As a starter, let us first consider the fusion space formed by two sectors in $\mathcal{C}_O$ fusing into another sector in $\mathcal{C}_O$. We shall refer to these sectors as $\alpha, \beta,$ and $\gamma$. This fusion space can be defined over the information convex set over the blue subsystem described in Fig.~\ref{fig:alphabetagamma}, with the appropriately chosen superselection sectors. Formally, we can write this as
\begin{equation}
\alpha\times \beta = \sum_{\gamma} N_{\alpha\beta}^{\gamma} \,\gamma.
\end{equation}
\begin{figure}[h]
	\centering
\includegraphics[scale=1]{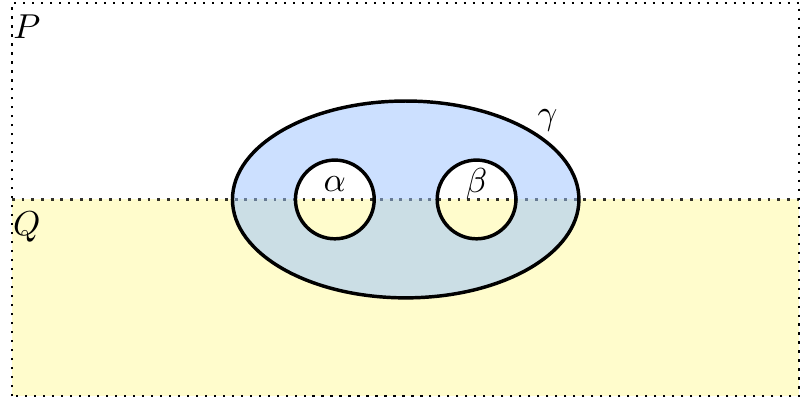}
	\caption{The subsystem choice and sector labels relevant to the fusion space $\mathbb{V}_{\alpha\beta}^{\gamma}$.}
	\label{fig:alphabetagamma}
\end{figure}

The fusion rules of the point-like superselection sectors on the domain wall are very similar to those of the bulk superselection sectors. We first summarize the results and provide some basic explanations. A discussion on the proof will then follow.

The following facts about the fusion multiplicities $\{ N_{\alpha\beta}^{\gamma} \}$ are derived from our assumptions.
\begin{equation}
\begin{aligned}
N_{1\alpha}^{\gamma} &= N_{\alpha 1}^{\gamma} = \delta_{\alpha,\gamma} \\
\forall \alpha, \exists !\, \bar{\alpha} \,\,\text{ s.t. }\,\, N_{\alpha\beta}^1 &= \delta_{\beta,\bar{\alpha}} = \delta_{\alpha,\bar{\beta}}  \\
N_{\alpha\beta}^{\gamma} &= N_{\bar{\beta}\bar{\alpha}}^{\bar{\gamma}} \\
\sum_{i\in \calC_O} N_{\alpha\beta}^i N_{i\gamma}^{\delta} &= \sum_{j\in \calC_O} N_{\alpha j}^{\delta} N_{\beta\gamma}^j.
\end{aligned}
\label{eq:wall_excitation_fusion}
\end{equation}
First, let us compare these identities with the bulk identities in Eq.~\eqref{eq:axioms_bulk_sec_fusion}. Every identity in Eq.~\eqref{eq:wall_excitation_fusion} has an analogous identity in the bulk. However, one bulk identity is generally violated in this context. Specifically, $N_{\alpha\beta}^{\gamma}\ne N_{\beta\alpha}^{\gamma}$ in general, in contrast to the identity $N_{ab}^c = N_{ba}^c$. Intuitively, this is because there is no room to permute two domain wall sectors.\footnote{This does not imply domain wall sectors are confined onto the domain wall. They are not. See Section~\ref{sec:string} for an explanation of this point.}

There is an identity which relates the quantum dimensions $\{d_{\alpha}\}$ to the fusion multiplicities $\{N_{\alpha\beta}^{\gamma}\}$:
\begin{equation}
d_{\alpha} d_{\beta} =\sum_{\gamma} N_{\alpha \beta}^{\gamma} d_{\gamma} . \label{eq:quantum_dimension_wall}
\end{equation}
This identity is analogous to Eq.~(\ref{eq:quantum_dimension_bulk}).
It completely determines the set of quantum dimensions $\{d_{\alpha} \}$ because the fusion multiplicities satisfy Eq.~\eqref{eq:wall_excitation_fusion}. Then it follows that $d_1=1$ and $d_{\alpha}= d_{\bar{\alpha}} \ge 1$ for $\forall \alpha\in\calC_O$. Furthermore, the quantum dimension $d_{\alpha}$ is quantized, just like its bulk counterpart. This completes the summary of the fusion properties of $O$-type superselection sectors.

In terms of proofs, Eq.~\eqref{eq:quantum_dimension_wall} follows from the same line of argument explained in Section~\ref{sec:fusion_from_entanglement}. Also, the proofs of the triviality of the vacuum and the associativity relation, [the first and fourth lines of Eq.~\eqref{eq:wall_excitation_fusion}], are identical to their bulk counterparts. We refer the readers to Ref.~\cite{SKK2019} for these proofs.

However, the proofs on the two properties involving the anti-sectors [the second and the third lines of Eq.~\eqref{eq:wall_excitation_fusion}] need to be modified a bit.

\subsubsection{Proofs related to anti-sectors}
Below, we derive the fact that, for each $\alpha \in \calC_O$, there is a unique anti-sector $\bar{\alpha} \in \calC_O$, such that 
\begin{equation}
\begin{aligned}
N_{\alpha\beta}^1 &= \delta_{\beta,\bar{\alpha}} = \delta_{\alpha,\bar{\beta}}  \\
N_{\alpha\beta}^{\gamma} &= N_{\bar{\beta}\bar{\alpha}}^{\bar{\gamma}} .
\end{aligned}\label{eq:to_proof}
\end{equation}

To prove these facts, it will be convenient to instead prove the following weaker statements. 
\begin{enumerate}
	\item [(i)] $\forall\alpha \in \mathcal{C}_O$, $\exists! \overrightarrow{\alpha}, \overleftarrow{\alpha} \in \mathcal{C}_O$ s.t.	
	$N_{\alpha \beta}^1 = \delta_{\beta,\overrightarrow{\alpha}} =  \delta_{\alpha,\overleftarrow{\beta}}$.
	\item [(ii)] $N_{\alpha \beta}^{\gamma} = N_{\overrightarrow{\beta} \overrightarrow{\alpha}}^{\overrightarrow{\gamma}}$.
\end{enumerate} 
Statement (i) means any $\alpha \in \mathcal{C}_O$ has a ``left anti-sector'' $\overleftarrow{\alpha}$ and a ``right anti-sector'' $\overrightarrow{\alpha}$.

These two statements as a whole is weaker than Eq.~\eqref{eq:to_proof}. Nevertheless, with the established triviality of the vacuum, $N_{1\alpha}^{\gamma} = N_{\alpha 1}^{\gamma} = \delta_{\alpha,\gamma}$, we can derive Eq.~\eqref{eq:to_proof} from these two weaker statements.
To see why, first note that $\overleftarrow{\overrightarrow{\alpha}} = \overrightarrow{\overleftarrow{\alpha}}= \alpha$, which follows from statement (i) alone; this is because statement (i) implies $N^1_{\alpha \overrightarrow{\alpha}}=N^1_{\overleftarrow{\overrightarrow{\alpha}} \overrightarrow{\alpha}}=1$ and $N^1_{\overleftarrow{\alpha} \alpha} =N^1_{\overleftarrow{\alpha} \overrightarrow{\overleftarrow{\alpha}} }=1$.
Moreover, statement (i) and the triviality of the vacuum imply that $1=\overrightarrow{1}= \overleftarrow{1}$; this is because statement (i) implies $N^1_{1 \overrightarrow{1}}= N^1_{\overleftarrow{1} 1} =1$.
Next, we choose $\gamma=1$ and $\alpha=\overleftarrow{\beta}$ for statement (ii). We see that
$N^1_{\overleftarrow{\beta} \beta}= N^1_{\overrightarrow{\beta} \beta} =1, \forall \beta$.
Thus, $\overleftarrow{\alpha} =\overrightarrow{\alpha}$, $\forall \alpha\in \calC_O$. In other words, the left anti-sector and the right anti-sector are identical.
Therefore, there is a unique anti-sector for every superselection sector  in $\mathcal{C}_O$. We denote the unique anti-sector of $\alpha$ as $\bar{\alpha}$. Plugging this result into the two statements, we arrive at Eq.~\eqref{eq:to_proof}.

We have seen that we only need to prove the two statements above in order to derive Eq.~\eqref{eq:to_proof}. Below, we provide these proofs.

\begin{figure}[h]
	\centering
\includegraphics[scale=1]{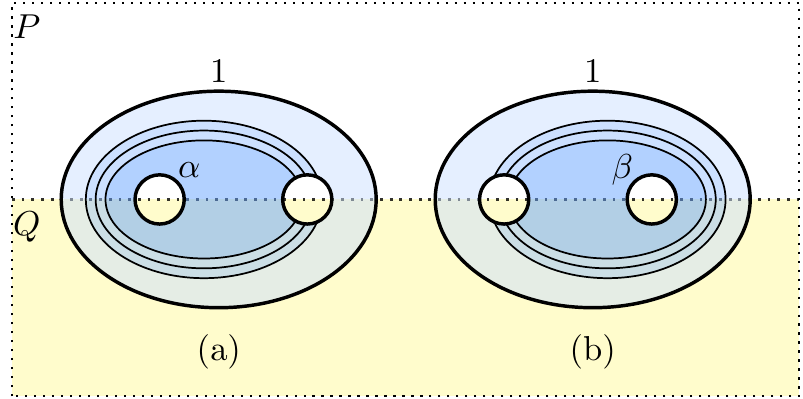}
	\caption{ (a) Merging extreme points associated with $1\in \calC_O$ and $\alpha\in \calC_O$. (b) Merging extreme points associated with $1\in \calC_O$ and $\beta \in \calC_O$. }
	\label{fig:merging_anti_alpha}
\end{figure}

Let us first focus on statement (i), the uniqueness of the anti-sector maps $\alpha \to \overrightarrow{\alpha}$ and $\alpha \to \overleftarrow{\alpha}$.
The idea is similar to the proof of Proposition~4.9 of Ref.~\cite{SKK2019}. Specifically, we can merge an extreme point carrying the sector $\alpha\in\calC_O$ with another extreme point carrying the sector $1\in \calC_O$. See Fig.~\ref{fig:merging_anti_alpha}(a). Here, the annulus that carries the sector $\alpha$ is inside the annulus that carries $1$. The existence of the merged state implies that $\forall \alpha, \,\exists \beta $ s.t. $N^1_{\alpha\beta}\ge 1$.
The entropy of the merged state can be obtained in two different ways, leading to the following equation:
\begin{equation}
2\ln d_{\alpha} = \ln d_{\alpha} + \ln (\sum_{\beta} N_{\alpha\beta}^{1} d_{\beta}), \label{eq:for_2'}
\end{equation}
where we have used the fact that $N_{1\beta}^1 =\delta_{ \beta,1 }$. Equation.~(\ref{eq:for_2'}) further simplifies into 
\begin{equation}
d_{\alpha} = \sum_{\beta} N_{\alpha\beta}^{1} d_{\beta}. \label{eq:L_2'}
\end{equation}
With the exact same approach, we can derive the following identity: 
\begin{equation}
d_{\beta} = \sum_{\alpha} N_{\alpha\beta}^{1} d_{\alpha} \label{eq:R_2'}
\end{equation}
by considering the merging process in Fig.~\ref{fig:merging_anti_alpha}(b).\footnote{Unlike the bulk version of the proof, we cannot ``rotate'' the 2-hole disk on the domain wall to switch the two holes. This is why we need the merging process Fig.~\ref{fig:merging_anti_alpha}(b).} 
For a chosen $\alpha$, we pick a $\beta$ such that $N^1_{\alpha\beta} \ge 1$. (As we have discussed, such a choice always exists.) For such $\beta$, $d_{\alpha} \geq d_{\beta}$.  Similarly, for a chosen $\beta$ there must be at least one $\alpha$ such that $N_{\alpha \beta}^1 \geq 1$ and $d_{\beta} \geq d_{\alpha}$. Therefore, $d_{\alpha}=d_{\beta}$ if $N_{\alpha\beta}^1\ge 1$. Obviously, these two quantum dimensions cannot be equal to each other if for the chosen $\alpha$, $N_{\alpha \beta}^1 \geq 1$ for more than one choice of $\beta$, nor can this happen if $N^1_{\alpha\beta}>1$. Therefore, we conclude that statement (i) is true. 

As a byproduct of this analysis, we have also found that
\begin{equation}
d_{\alpha}=d_{\overleftarrow{\alpha}}=d_{\overrightarrow{\alpha}}. \label{eq:apendix_B_dim_arrow}
\end{equation}

\begin{figure}[h]
	\centering
\includegraphics[scale=1]{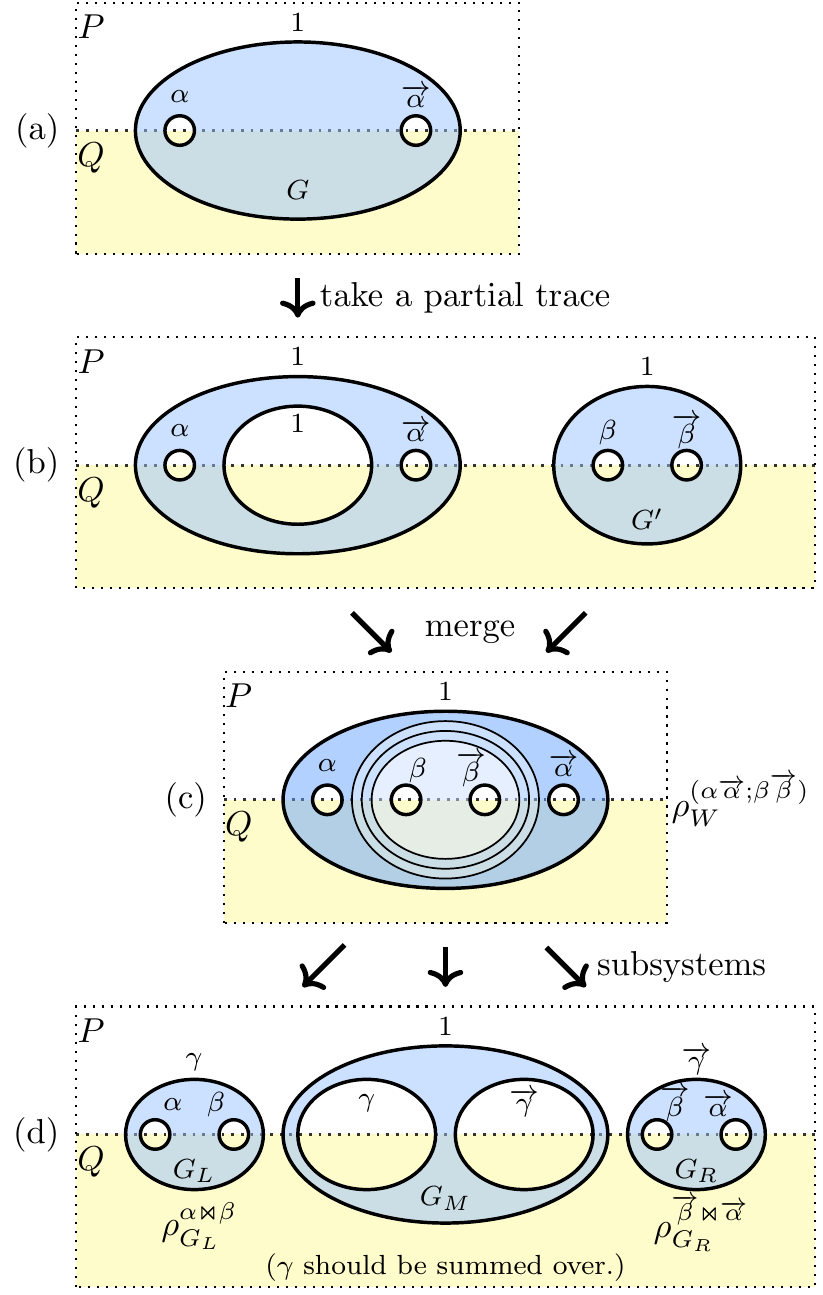}
	\caption{The overall picture of the proof of statement (ii).}
	\label{fig:4_hole_combined}
\end{figure}

Now, let us prove statement (ii), namely $N_{\alpha \beta}^{\gamma} = N_{\overrightarrow{\beta} \overrightarrow{\alpha}}^{\overrightarrow{\gamma}}$. This proof is similar to the proof of the bulk version (Proposition 4.10 of Ref.~\cite{SKK2019}), but with some modifications. 

The overall picture of the derivation is depicted in Fig.~\ref{fig:4_hole_combined}. The density matrix in Fig.~\ref{fig:4_hole_combined}(a) is the unique element of $\Sigma^1_{\alpha\overrightarrow{\alpha}}(G)$, where $G$ is the depicted subsystem. After taking a partial trace, we merge this density matrix with the unique element of $\Sigma^1_{\beta\overrightarrow{\beta}}(G')$, where $G'$ is the subsystem on the right side of Fig.~\ref{fig:4_hole_combined}(b).
The resulting 4-hole disk $W$ is depicted in Fig.~\ref{fig:4_hole_combined}(c).
The key object in the proof is the density matrix obtained from this merging process, which we shall refer to as $\rho_W^{(\alpha\overrightarrow{\alpha};\beta\overrightarrow{\beta})}$.

To see why the merged state $\rho_W^{(\alpha\overrightarrow{\alpha};\beta\overrightarrow{\beta})}$ helps in the proof of statement (ii), we consider its reduced density matrices on subsystems $G_L$, $G_M$ and $G_R$ depicted in Fig.~\ref{fig:4_hole_combined}(d). In general, while we have fixed the sectors $\alpha, \overrightarrow{\alpha},\beta,\overrightarrow{\beta}$, the sectors on the outer boundary of $G_L$ and $G_R$ are, in general, a  mixture. By inspecting the subsystem $G_M$, we see that the superselection sectors on the outer boundary of $G_L$ and $G_R$ must fuse to identity. Thus, we can denote the sectors as $\gamma$ and $\overrightarrow{\gamma}$ respectively. While there can be multiple possible choices of $\gamma$, we measure the sector $\overrightarrow{\gamma}$ on the outer boundary of $G_R$ whenever we measure the sector $\gamma$ on the outer boundary of $G_L$.  Therefore, the probability of finding the sector $\gamma$ on the outer boundary of $G_L$ equals to the probability of finding the sector $\overrightarrow{\gamma}$ on the outer boundary of $G_R$. Formally, we can write this fact as
\begin{equation}
P(\gamma \vert \rho_{G_L}^{(\alpha\overrightarrow{\alpha}; \beta \overrightarrow{\beta})})= P(\overrightarrow{\gamma} \vert \rho^{(\alpha\overrightarrow{\alpha}; \beta \overrightarrow{\beta})}_{G_R}). \label{eq:P_gamma}
\end{equation}

The next step is to calculate both sides of Eq.~\eqref{eq:P_gamma}. The key observation is the fact that the reduced density matrices of $\rho_W^{(\alpha \overrightarrow{\alpha};\beta \overrightarrow{\beta})}$ on $G_L$ and $G_R$ are the maximmum-entropy states with the respective superselection sector choices. 
In other words, 
\begin{equation}
\begin{aligned}
\Tr_{W\setminus G_L} \, \rho_W^{(\alpha \overrightarrow{\alpha} ;\beta \overrightarrow{\beta})}&= \rho_{G_L}^{\alpha \merge \beta} \\
\Tr_{W\setminus G_R} \, \rho_W^{(\alpha \overrightarrow{\alpha} ;\beta \overrightarrow{\beta})} &=  \rho_{G_R}^{\overrightarrow{\beta} \merge \overrightarrow{\alpha}},
\end{aligned}
\label{eq:merge_temp}
\end{equation}
where $\rho_{G_L}^{\alpha \merge \beta}$ and $\rho_{G_R}^{\overrightarrow{\beta} \merge \overrightarrow{\alpha}}$ can be obtained by merging two annuli associated with the specified superselection sectors.

Let us study the consequence of Eq.~\eqref{eq:merge_temp}, deferring the proof of Eq.~\eqref{eq:merge_temp} to Appendix~\ref{appendix:alpha_dual}. 
By using the fact that $\rho_{G_L}^{\alpha \protect\merge \beta}$ is the maximum-entropy state consistent with the chosen sectors $\alpha$ and $\beta$, we obtain
\begin{equation}
P(\gamma \vert \rho_{G_L}^{(\alpha\overrightarrow{\alpha}; \beta \overrightarrow{\beta})}) = \frac{N_{\alpha \beta}^{\gamma} d_{\gamma}}{d_{\alpha} d_{\beta}}.
\end{equation}
Similarly, 
\begin{equation}
P(\overrightarrow{\gamma} \vert \rho^{(\alpha\overrightarrow{\alpha}; \beta \overrightarrow{\beta})}_{G_R})  = \frac{N_{\overrightarrow{\beta}\overrightarrow{\alpha}}^{\overrightarrow{\gamma}} d_{\overrightarrow{\gamma}} }{d_{\overrightarrow{\beta}} d_{\overrightarrow{\alpha}}}.
\end{equation}
Because $d_{\alpha}= d_{\overrightarrow{\alpha}}$ (see Eq.~\eqref{eq:apendix_B_dim_arrow}), we conclude $N_{\alpha \beta}^{\gamma} =N_{\overrightarrow{\beta}\overrightarrow{\alpha}}^{\overrightarrow{\gamma}}$, as we claimed. This completes the proof of statement (ii).

In conclusion, we have justified Eq.~\eqref{eq:wall_excitation_fusion}.

\subsection{Fusion onto the domain wall}

In this section, we discuss the fusion rules of anyons, i.e., the bulk superselection sectors, onto the domain wall. We will use $a,b,c, \ldots \in \mathcal{C}_P$ to denote the anyons on the $P$ side and $x,y,z\ldots \in \mathcal{C}_Q$ to denote the anyons on the $Q$ side.

The fusion of anyons onto domain walls gives rise to fusion spaces that involve the superselection sectors of the bulk and the domain wall. One may consider moving an anyon $a \in \mathcal{C}_P$ onto the domain wall; moving an anyon $x \in \mathcal{C}_Q$ onto the domain wall; or alternatively, bringing a pair of anyons $a \in \mathcal{C}_P$ and $x \in \mathcal{C}_Q$ onto the domain wall. These processes can be formally written as
\begin{eqnarray}
a &=& \sum_{\alpha} N_{a}^{\alpha} \, \alpha,\\
x &=& \sum_{\alpha} N_{x}^{\alpha}\, \alpha, \\
a\times x &=& \sum_{\alpha} N_{ax}^{\alpha} \,\alpha,
\end{eqnarray}
where the fusion multiplicities $\{N_{a}^{\alpha} \}$, $\{N_{x}^{\alpha} \}$ and $\{ N_{ax}^{\alpha} \}$ are again non-negative integers. Here, $\alpha$ is in $\mathcal{C}_O$.

The fusion multiplicities have the following physical interpretation. $N_a^1$ is relevant to the process of condensing anyon $a$ onto the domain wall. What we mean by condensing is that if $N_a^1\ge 1$, it is possible to move an anyon $a$ onto the domain wall and annihilate it by a local process. This also means that if $N_a^1\ge 1$, we can create a single anyon $a$ in the bulk with a string operator attached to the domain wall. The physical interpretation of $N_x^1$ is similar. A pair $(a,x)$ with $N_{ax}^1\ge 1$ can be simultaneously annihilated (or created) in the vicinity of the gapped domain wall.  Similarly, $N_{a}^{\alpha}$ determines whether it is possible to fuse an anyon $a$ onto the domain wall and turn it into a domain wall sector $\alpha \in \mathcal{C}_O$.

\begin{figure}[h]
	\centering
\includegraphics[scale=1]{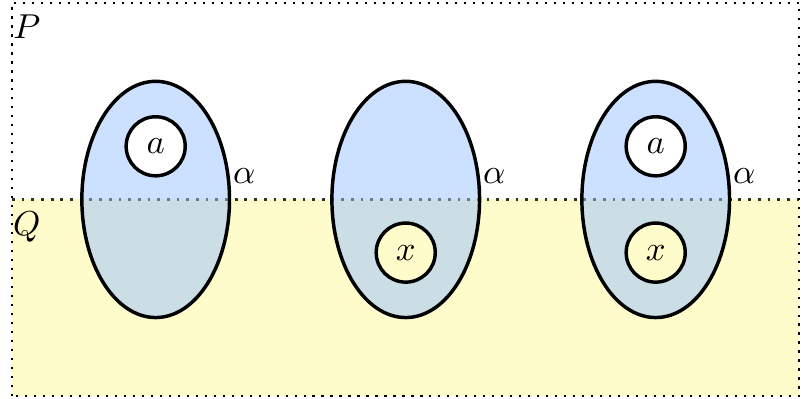}
	\caption{Three basic subsystem types that are relevant to the fusion of anyons onto the domain wall: (a) is relevant to $\{ N_{a}^{\alpha} \}$, (b) is relevant to $\{ N_{x}^{\alpha} \}$ and  (c) is relevant to  $\{ N_{ax}^{\alpha} \}$. }
	\label{anyon_to_wall_3_basic}
\end{figure}

The concrete rules that govern these processes can be deduced from three types of subsystems described in Fig.~\ref{anyon_to_wall_3_basic}. Repeating the analysis in Section~\ref{Sec:fusion_alpha}, we obtain the following results.

\begin{itemize}
	\item Fusion of the vacuum:
	\begin{eqnarray}
	\textrm{among }\{ N_a^{\alpha}\}:&& \,\,\,\, N_1^{\alpha}=\delta_{ \alpha,1 },\\
	\textrm{among }\{ N_x^{\alpha}\}:&& \,\,\,\, N_1^{\alpha}=\delta_{ \alpha,1 },\\
	\textrm{among }\{ N_{ax}^{\alpha}\}:&& \,\,\,\, N_{a 1}^{\alpha} = N_{a}^{\alpha},\,\,\,\, N_{1x}^{\alpha} = N_{x}^{\alpha}.
	\end{eqnarray}
\item Relations between quantum dimensions and fusion multiplicities
\begin{eqnarray}
d_a &=& \sum_{\alpha} N_a^{\alpha} d_{\alpha}\\
d_x &=& \sum_{\alpha} N_x^{\alpha} d_{\alpha}\\
 d_a d_x &=& \sum_{\alpha} N_{ax}^{\alpha} d_{\alpha}.  \label{eq:sum_alpha}
\end{eqnarray}
\item Relation between a fusion space and an fusion space formed by the anti-sectors:
\begin{eqnarray}
N_a^{\alpha}&=& N_{\bar{a}}^{\bar{\alpha}}\\
N_x^{\alpha}&=& N_{\bar{x}}^{\bar{\alpha}}\\
 N_{ax}^{\alpha}  &=& N_{\bar{a}\bar{x}}^{\bar{\alpha}}.
\end{eqnarray}
\item Associativity conditions:
\begin{eqnarray}
N_{ax}^{\alpha} &=& \sum_{\beta,\gamma} N_{a}^{\beta} N_{x}^{\gamma} N_{\beta\gamma}^{\alpha}\\
N_{ax}^{\alpha} &=& \sum_{\beta,\gamma} N_{x}^{\beta} N_{a}^{\gamma} N_{\beta\gamma}^{\alpha}\\
\sum_c N_{ab}^c N_c^{\gamma} &=& \sum_{\alpha,\beta} N_{a}^{\alpha} N_{b}^{\beta} N_{\alpha\beta}^{\gamma}\\
\sum_z N_{xy}^z N_z^{\gamma} &=& \sum_{\alpha,\beta} N_{x}^{\alpha} N_{y}^{\beta} N_{\alpha\beta}^{\gamma}\\
\sum_{\beta,\gamma} N_{\beta\gamma}^{\alpha} N_{ax}^{\beta} N_{by}^{\gamma} &=& \sum_{c,z}N_{cz}^{\alpha} N_{ab}^c N_{xy}^z  \label{eq:imply_stable_condition}
\end{eqnarray}
\end{itemize}

These identities are consistent with the proposal in the existing literature. For example, Eq.~(\ref{eq:imply_stable_condition}) implies the so called  \emph{stable condition} of Ref.~\cite{Lan2015}.

\subsection{ Fusions with $\mathbb{N}$-type and $\mathbb{U}$-type sectors}\label{Sec:consistency_N_U}

In this section, we discuss a few things related to the fusion of $\mathbb{N}$- and $\mathbb{U}$-type sectors.  Specifically, we consider the fusion spaces $\{ N_{\mathcal{N} \mathcal{U}}^{\alpha} \}$ and $\{ N_{\mathcal{U}}^{\alpha} \}$, defined in Figs.~\ref{fig:NbbandUbb_fusion_space}(a) and (b), respectively. 

\begin{figure}[h]
	\centering
\includegraphics[scale=1]{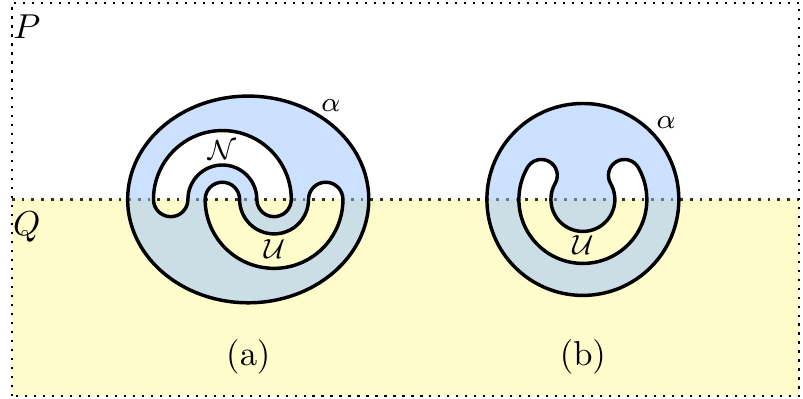}
	\caption{ (a) A subsystem relevant to the fusion involving the sectors $\mathcal{N} \in \mathcal{C}_{\mathbb{N}}$, $\mathcal{U} \in \mathcal{C}_{\mathbb{U}}$ and $\alpha \in \mathcal{C}_O$. (b) A subsystem relevant to the fusion involving the sector $\mathcal{U} \in \mathcal{C}_{\mathbb{U}}$ and $\alpha \in \mathcal{C}_O$.}
	\label{fig:NbbandUbb_fusion_space}
\end{figure}

These fusions are a bit abstract, so one may wonder why we consider them in the first place. A simple reason is that the constraints on these fusion spaces enable us to prove some fundamental properties of the simpler $N$-type and $U$-type parton sectors. These sectors do not obey the ordinary fusion rule, as we have briefly discussed in Section~\ref{Sec:quasi_fusion_subsection}. However, we can still derive nontrivial facts about their fusion by embedding those sectors into the $\mathbb{N}$- and $\mathbb{U}$-type sectors (see Section~\ref{sec:bbN_bbU}). 

%two integers
The most notable implication is Proposition~\ref{Prop:d_n_quantization}, which implies that the quantum dimensions of the partons, \emph{i.e.,} $\{d_n\}$, are uniquely determined by two sets of fusion multiplicities $\{N_{\alpha\beta}^{\gamma}\}$ and $\{N_{\mu(n)}^{\alpha}\}$. In that sense, these quantum dimensions are ``quantized.'' Moreover, unlike the $N$- and $U$-type sectors, the $\mathbb{N}$- and $\mathbb{U}$-sectors do allow a conventional definition of fusion space.

To study these fusion rules, let us begin by showing some simple properties when one of the sectors involved is the vacuum sector.
\begin{Proposition}
	Among the fusion multiplicities $\{ N_{\mathcal{N} \mathcal{U}}^{\alpha} \}$, we have 
	\begin{equation}
	N_{1\mathcal{U}}^{1} =\delta_{\mathcal{U},1} \label{eq:U11_fusion}
	\end{equation}
	and 
	\begin{equation}
	N_{\mathcal{N}1}^{1} =\delta_{\mathcal{N},1}. \label{eq:N11_fusion}
	\end{equation}
	Furthermore, among the fusion multiplicities $\{ N_{\mathcal{U}}^{\alpha} \}$, we have
	\begin{equation}
	N_{1}^{\alpha} = \delta_{ \alpha,1 }. \label{eq:1alpha_fusion}
	\end{equation}
\end{Proposition}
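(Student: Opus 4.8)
The plan is to prove all three identities by the same mechanism used throughout Section~\ref{sec:fusion_from_entanglement} and Section~\ref{Sec:fusion_alpha}: reduce each claim to a statement about an information convex set whose structure is already known, using the factorization property of extreme points and the triviality of information convex sets on disk-like regions. For \eqref{eq:1alpha_fusion}, I would start from the subsystem in Fig.~\ref{fig:NbbandUbb_fusion_space}(b) with the $\mathbb{U}$-sector fixed to the vacuum $\mathcal{U}=1$. When $\mathcal{U}=1$, the inner $\mathbb{U}$-shaped boundary carries the vacuum parton label $u=1$ on both its $N$-shaped and $U$-shaped constituents (using $\eta_U(1)=1$ and the fact that the vacuum extreme point is just $\sigma$ restricted to the region); this means the ``slot'' of the $\mathbb{U}$-shaped hole can be filled in by an element of the reference state, exactly as in the $|\calC_S^{[n,1]}|=1$ argument of Section~\ref{sec:snake} (see Fig.~\ref{fig:fill_the_slot}). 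Concretely, partition the region as $ABC$ where $BCD$ is a disk on the domain wall, so that the subsystem with the hole filled in becomes a disk-like (or $O$-annular) region; merging via the merging theorem is permitted because $u=1$. The resulting state's only remaining sector degree of freedom is $\alpha$ on the outer annulus, and the filled-in region forces $\alpha=1$, giving $N_1^\alpha=\delta_{\alpha,1}$.

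For \eqref{eq:U11_fusion} and \eqref{eq:N11_fusion} I would argue analogously on the subsystem of Fig.~\ref{fig:NbbandUbb_fusion_space}(a). In \eqref{eq:U11_fusion} we set $\mathcal{N}=1$; the $\mathbb{N}$-shaped hole then carries vacuum parton labels, so it can be filled in, collapsing the four-hole-like configuration down to a configuration involving only $\mathcal{U}$ and the outer $O$-annulus $\alpha$. Once $\mathcal{N}$ is filled in, requiring $\alpha=1$ on the outer boundary forces the $\mathbb{U}$-label to also be trivial (the region between them is disk-like / its information convex set is a point), i.e.\ $N_{1\mathcal{U}}^1=\delta_{\mathcal{U},1}$. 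Equation~\eqref{eq:N11_fusion} is the mirror-image statement with the roles of $\mathbb{N}$ and $\mathbb{U}$ exchanged and $\mathcal{U}=1$ filled in; the argument is identical after relabeling, so I would simply say ``by the same argument with $P$ and $Q$ interchanged.'' Throughout, I would invoke the isomorphism theorem (Theorem~\ref{thm:isomorphism}) to pass between the region and a slightly enlarged version so the merging theorem's geometric conditions are met, and the factorization property to guarantee the filled-in state actually lands in the correct information convex set.

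The main obstacle I anticipate is purely geometric bookkeeping: I must exhibit, for each of Figs.~\ref{fig:NbbandUbb_fusion_space}(a) and (b), an explicit partition into the $ABC$/$BCD$ form such that (i) $BCD$ together with the disk $D$ filling the hole is a legitimate disk (or $O$-annulus) on the domain wall whose information convex set is known, and (ii) the non-overlapping parts are separated by more than $2r$ so Theorem~\ref{thm:merging_info_convex_set} applies. This is the same kind of argument carried out in Section~\ref{sec:snake}, so I expect it to go through, but care is needed because the $\mathbb{N}$- and $\mathbb{U}$-shaped holes sit on the domain wall and the ``filling'' disk must respect the $P/Q$ decomposition. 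Once the correct picture is drawn, each identity follows because the enveloping information convex set (after filling) is either a point or the simplex $\Sigma(O)$ with its known structure, and reading off the coefficient of the vacuum extreme point yields the Kronecker deltas. I would also remark that \eqref{eq:1alpha_fusion} is the special case of \eqref{eq:U11_fusion} obtained by further filling in the $\mathbb{N}$-hole, so strictly speaking only two of the three identities need an independent geometric picture.
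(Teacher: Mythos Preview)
Your overall strategy—fill holes via the merging theorem and reduce to regions whose information convex sets are already characterized—matches the paper's approach, and your treatment of \eqref{eq:1alpha_fusion} is essentially correct.

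However, your argument for \eqref{eq:U11_fusion} has a genuine gap. After filling the $\mathbb{N}$-hole (your only merging step), the resulting region is \emph{not} disk-like: it is of the same topological type as the region of Fig.~\ref{fig:NbbandUbb_fusion_space}(b), which still carries a $\mathbb{U}$-shaped boundary component labeled by $\mathcal{U}\in\calC_{\mathbb{U}}$. Its information convex set is therefore not a point, and fixing $\alpha=1$ on the outer boundary does \emph{not} by itself force $\mathcal{U}=1$. (Indeed, the paper later establishes $N_{\mathcal{U}}^{1}=\sum_{n\in\calC_N}\delta_{\mathcal{U},\mu(n)}$, so many nontrivial $\mathcal{U}$ are compatible with $\alpha=1$ in that region.) What you are missing is the paper's second merging step (see Fig.~\ref{NU_fill_the_vacuum}(a)): because of the specific adjacency of the two holes in Fig.~\ref{fig:NbbandUbb_fusion_space}(a), the condition $\mathcal{N}=1$ forces one of the $N$-type parton labels carried by $\mathcal{U}$ to be the vacuum, and this extra constraint permits a further filling that converts the remaining hole into an $O$-type hole. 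Only then is the region an $O$-shaped annulus on the domain wall, where $\alpha=1$ on the outer boundary propagates to the inner boundary, the hole can be filled, and the state is uniquely the restriction of $\sigma$—hence $\mathcal{U}=1$. The same two-step structure is needed for \eqref{eq:N11_fusion}.
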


\begin{figure}[h]
	\centering
\includegraphics[scale=1]{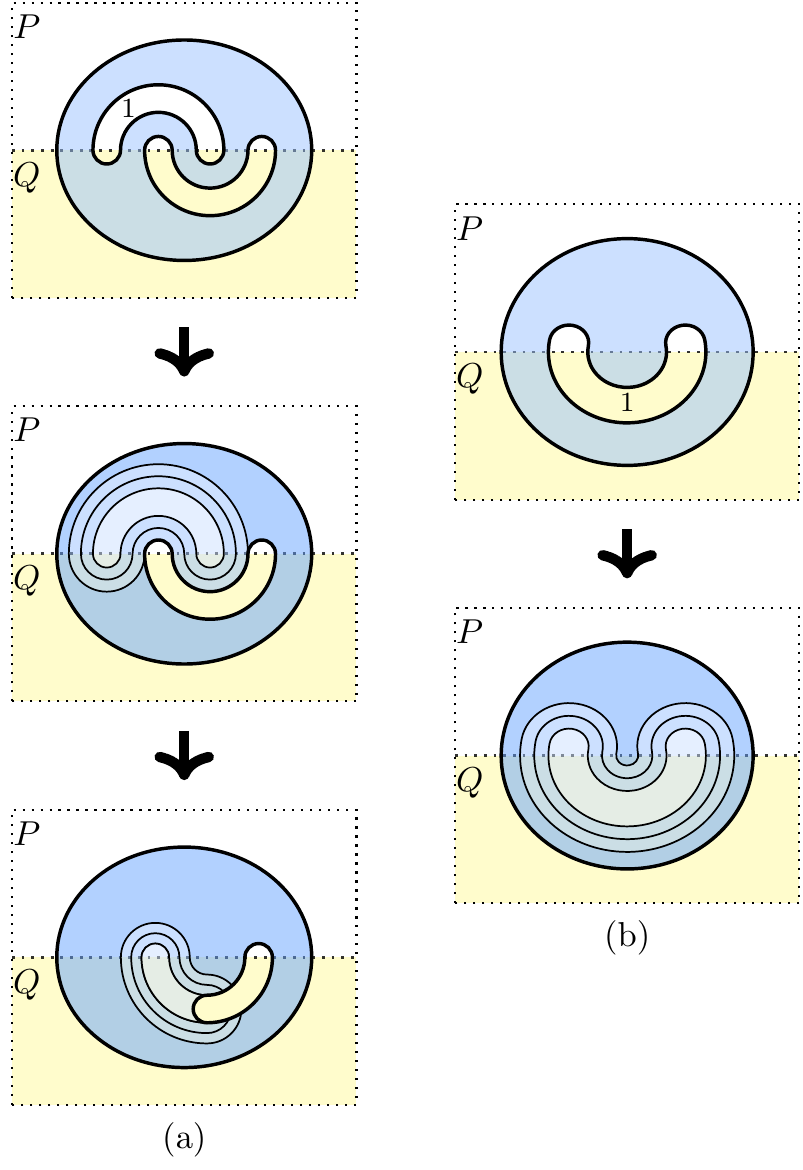}
	\caption{(a) A two-step merging process for $\mathcal{N}=1$. It is useful in the proof of $N^1_{1\mathcal{U}}= \delta_{\mathcal{U},1}$.  (b) A merging process for $\mathcal{U}=1$. It is useful in the proof of $N^{\alpha}_{\mathcal{U}=1}= \delta_{\alpha,1}$.}
	\label{NU_fill_the_vacuum}
\end{figure}
The key idea of the proof is to use the properties of the vacuum sector to design merging processes that can ``fill'' a hole. See Fig.~\ref{NU_fill_the_vacuum} for an illustration of the relevant merging processes.
\begin{proof}
	Let us first prove Eqs.~\eqref{eq:U11_fusion} Eq.~\eqref{eq:N11_fusion}. These two proofs are analogous to each other, so we only discuss the proof of Eq.~\eqref{eq:U11_fusion}. Recall that the fusion multiplicities $\{N_{\mathcal{N}\mathcal{U}}^{\alpha} \}$ are associated with the subsystem in Fig.~\ref{fig:NbbandUbb_fusion_space}(a), where the superselection sectors involved are $\alpha \in \calC_O$, $\mathcal{N}\in \calC_{ \mathbb{N}}$ and $\mathcal{U}\in \calC_{\mathbb{U}}$. To show $N_{1\mathcal{U}}^1 =\delta_{\mathcal{U},1}$, it suffices to prove the following statement. If $\alpha=1$ and $\mathcal{N}=1$, the density matrix of the blue region in Fig.~\ref{fig:NbbandUbb_fusion_space}(a) is equal to the reduced density matrix obtained from $\sigma$.
	
	To see why this is the case, we consider the two-step merging process shown in Fig.~\ref{NU_fill_the_vacuum}(a). This merging process is possible when $\mathcal{N}=1$ is the vacuum sector. The first step ``fills'' the $N$-shaped hole with the vacuum, by merging density matrix over the subsystem on the top of Fig.~\ref{NU_fill_the_vacuum} to the density matrix obtained from the reference state. This is possible because $\mathcal{N}=1$; the density matrix on the surrounding $\mathbb{N}$-shaped region is identical to that of the reference state, satisfying the requisite condition for the merging theorem (Theorem~\ref{thm:merging_info_convex_set}). The second step fills part of the $U$-shaped hole and turns it into a point-like area intersecting with the domain wall. This step is possible because $\mathcal{N}=1$ implies that one of the parton labels of the sector $\mathcal{U}$ must be the vacuum. (More precisely, $\mathcal{U}$ carries a pair of $N$-type parton sectors and a pair of $U$-type parton sectors. The $N$-type parton sector on the left is the vacuum sector due to $\mathcal{N}=1$.)
	
	After the two-step merging process, we obtain an element of the information convex set on an $O$-shaped region. Therefore the same $O$-type sector $\alpha\in \calC_O$ must appear on both boundaries of the $O$-shaped region, i.e., the annulus on the bottom of Fig.~\ref{NU_fill_the_vacuum}(a). If $\alpha=1$, the hole can be filled. The end result is the reduced density matrix of the reference state on a disk-like region on the domain wall. Therefore we must have $\mathcal{U}=1$. Furthermore, the density matrix labeled by $\alpha=1$, $\mathcal{N}=1$ and $\mathcal{U}=1$ in Fig.~\ref{fig:NbbandUbb_fusion_space}(a) is unique. This completes the proof of Eq.~\eqref{eq:U11_fusion}.
	
	The proof of Eq.~\eqref{eq:1alpha_fusion} follows from a  similar line of reasoning. The merging process in Fig.~\ref{NU_fill_the_vacuum}(b) is possible when $\mathcal{U}=1$. The end result is a disk on the domain wall. This implies that the $\alpha\in \calC_O$ in the original density matrix must be the vacuum sector. Moreover, this density matrix is unique. Therefore Eq.~\eqref{eq:1alpha_fusion} holds. This completes the proof.
\end{proof}

\begin{figure}[h]
	\centering
\includegraphics[scale=1]{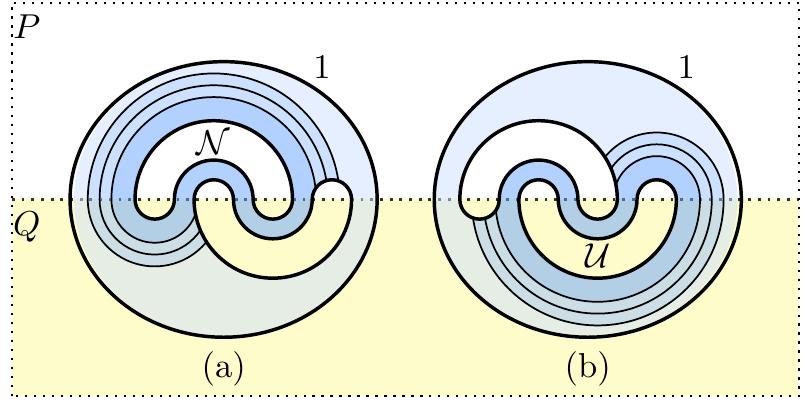}
	\caption{(a) A merging process involving a pair of regions, which carry $1\in \mathcal{C}_O$ and $\mathcal{N}\in \calC_{\mathbb{N}}$ respectively. (b) A merging process involving a pair of regions, which carry $1\in \mathcal{C}_O$ and $\mathcal{U}\in \calC_{\mathbb{U}}$ respectively.}
	\label{fig:Merging_N_1_new}
\end{figure}

Next, we show that there is a nontrivial isomorphism between $\mathcal{C}_{\mathbb{N}}$ and $\mathcal{C}_{\mathbb{U}}$. If the gapped domain wall is trivial, this result would be trivially true because the $\mathbb{N}$-shaped and $\mathbb{U}$-shaped subsystems can be smoothly deformed into each other. However, this fact is less obvious when the domain wall is nontrivial.

\begin{Proposition}
	There is an isomorphism\footnote{It is possible to define another isomorphism between $\calC_{\mathbb{N}}$ and $\calC_{\mathbb{U}}$ by considering the mirror image of Fig.~\ref{fig:Merging_N_1_new}. This isomorphism can be different from $\varphi$.} 
	\begin{equation}
	\varphi: \calC_{\mathbb{N}}\to \calC_{\mathbb{U}} \label{def:varphi}
	\end{equation}
	such that
	\begin{equation}
	N^1_{\mathcal{NU}}= \delta_{\mathcal{U},\varphi(\mathcal{N})}, \label{eq:1NU}
	\end{equation}
	and 
	\begin{equation}
	d_{\mathcal{N}} = d_{ \varphi(\mathcal{N}) }. \label{eq:quantum_dim_varphi}
	\end{equation}
\end{Proposition}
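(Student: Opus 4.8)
The plan is to copy the anti-sector argument for $\calC_O$ in Section~\ref{Sec:fusion_alpha} (statement (i) and Eq.~\eqref{eq:apendix_B_dim_arrow}), with the pair of merging processes in Fig.~\ref{fig:merging_anti_alpha} replaced by the pair in Fig.~\ref{fig:Merging_N_1_new}. First I would take an extreme point $\rho_{\mathbb{N}}^{\mathcal{N}}\in\Sigma(\mathbb{N})$ and merge it with the vacuum element of an $O$-type annulus in the arrangement of Fig.~\ref{fig:Merging_N_1_new}(a), producing a state $\tau^{\mathcal{N}\merge 1}\in\Sigma(R)$, where $R$ is the region of Fig.~\ref{fig:NbbandUbb_fusion_space}(a). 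By the Merging Theorem (Theorem~\ref{thm:merging_info_convex_set}) this is legitimate: the disk-like overlap separates the non-overlapping parts by more than $2r$, and the required vanishing conditional mutual informations follow from the extended axioms, exactly as in Section~\ref{sec:fusion_from_entanglement}. By the defining marginals of the merge, $\tau^{\mathcal{N}\merge 1}$ has the extreme point $\rho_{\mathbb{N}}^{\mathcal{N}}$ on its $\mathbb{N}$-boundary and the vacuum on its $O$-boundary, so only the $\mathbb{U}$-type boundary sector $\mathcal{U}$ is left free; the companion merge of Fig.~\ref{fig:Merging_N_1_new}(b) produces $\tau^{\mathcal{U}\merge 1}\in\Sigma(R)$ with $\mathcal{U}$ and the vacuum frozen and $\mathcal{N}$ free. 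Both merges exist, so for every $\mathcal{N}$ there is at least one $\mathcal{U}$ with $N^{1}_{\mathcal{N}\mathcal{U}}\ge 1$, and vice versa.

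Next I would evaluate the entropy of each merged state in the two standard ways. On the one hand, using Eqs.~\eqref{eq:merge_identities1} and~\eqref{eq:merge_identities2}, the definition of $d_{\mathcal{N}}$, and the fact that the vacuum element of an $O$-annulus is $\sigma$ restricted to it (so merging two vacua reproduces $\sigma_R$), one gets $S(\tau^{\mathcal{N}\merge 1})=S(\sigma_R)+2\ln d_{\mathcal{N}}$. On the other hand, $\tau^{\mathcal{N}\merge 1}$ is the maximum-entropy element of $\Sigma(R)$ compatible with the frozen sectors; decomposing it over the free label $\mathcal{U}$ and invoking Eq.~\eqref{eq:fusion_space2} together with the factorization formula for extreme points — which gives $\max_{\rho\in\Sigma^{1}_{\mathcal{N}\mathcal{U}}(R)}S(\rho)=S(\sigma_R)+\ln N^{1}_{\mathcal{N}\mathcal{U}}+\ln(d_{\mathcal{N}}d_{\mathcal{U}})$ since $d_1=1$ — the free-energy/relative-entropy extremization of Section~\ref{sec:intro_qd_fm} yields $S(\tau^{\mathcal{N}\merge 1})=S(\sigma_R)+\ln d_{\mathcal{N}}+\ln\big(\sum_{\mathcal{U}}N^{1}_{\mathcal{N}\mathcal{U}}d_{\mathcal{U}}\big)$. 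Comparing the two expressions gives $d_{\mathcal{N}}=\sum_{\mathcal{U}}N^{1}_{\mathcal{N}\mathcal{U}}d_{\mathcal{U}}$, and symmetrically from Fig.~\ref{fig:Merging_N_1_new}(b), $d_{\mathcal{U}}=\sum_{\mathcal{N}}N^{1}_{\mathcal{N}\mathcal{U}}d_{\mathcal{N}}$.

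Finally, the conclusion follows from positivity, just as for $\calC_O$. For any pair with $N^{1}_{\mathcal{N}\mathcal{U}}\ge 1$, the two sum rules give $d_{\mathcal{N}}\ge N^{1}_{\mathcal{N}\mathcal{U}}d_{\mathcal{U}}\ge d_{\mathcal{U}}$ and $d_{\mathcal{U}}\ge N^{1}_{\mathcal{N}\mathcal{U}}d_{\mathcal{N}}\ge d_{\mathcal{N}}$, so $d_{\mathcal{N}}=d_{\mathcal{U}}$ and $N^{1}_{\mathcal{N}\mathcal{U}}=1$; feeding this back into $d_{\mathcal{N}}=\sum_{\mathcal{U}'}N^{1}_{\mathcal{N}\mathcal{U}'}d_{\mathcal{U}'}$ (all $d_{\mathcal{U}'}>0$) forces $N^{1}_{\mathcal{N}\mathcal{U}'}=0$ for every other $\mathcal{U}'$. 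Hence the $\mathbb{U}$-partner of $\mathcal{N}$ is unique; set $\varphi(\mathcal{N})$ equal to it, so that $N^{1}_{\mathcal{N}\mathcal{U}}=\delta_{\mathcal{U},\varphi(\mathcal{N})}$ and $d_{\varphi(\mathcal{N})}=d_{\mathcal{N}}$. Running the same argument from the $\mathbb{U}$ side shows each $\mathcal{U}$ has a unique $\mathbb{N}$-partner, and this two-sided uniqueness makes $\varphi$ a bijection (the previous Proposition's $N^{1}_{\mathcal{N}1}=\delta_{\mathcal{N},1}$ additionally gives $\varphi(1)=1$, a consistency check). The hard part will be the geometry of the first step: correctly identifying the region obtained from the merges of Fig.~\ref{fig:Merging_N_1_new} with that of Fig.~\ref{fig:NbbandUbb_fusion_space}(a), pinning down which boundary sectors are frozen by the merge marginals and which remain free, and checking the Merging Theorem's hypotheses; once the two quantum-dimension sum rules are in hand, the remainder is the routine positivity argument already used for $\calC_O$.
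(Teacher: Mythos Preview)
Your proposal is correct and follows essentially the same route as the paper: merge the extreme point on $\mathbb{N}$ with the vacuum on the $O$-annulus (and symmetrically on the $\mathbb{U}$ side) as in Fig.~\ref{fig:Merging_N_1_new}, compute the merged-state entropy in two ways to obtain the pair of sum rules $d_{\mathcal{N}}=\sum_{\mathcal{U}}N^{1}_{\mathcal{N}\mathcal{U}}d_{\mathcal{U}}$ and $d_{\mathcal{U}}=\sum_{\mathcal{N}}N^{1}_{\mathcal{N}\mathcal{U}}d_{\mathcal{N}}$, and then run the positivity/uniqueness argument exactly as for $\calC_O$. The only cosmetic difference is that the paper phrases the calculation as an entropy \emph{difference} between the $(\mathcal{N},1)$ and $(1,1)$ merges and invokes the previous Proposition ($N^{1}_{1\mathcal{U}}=\delta_{\mathcal{U},1}$) to normalize, whereas you use $\sigma_R$ directly as the reference; note that your extreme-point entropy formula $S(\sigma_R)+\ln N^{1}_{\mathcal{N}\mathcal{U}}+\ln(d_{\mathcal{N}}d_{\mathcal{U}})$ implicitly requires $\sigma_R$ to be an extreme point of $\Sigma^{1}_{11}(R)$, which is precisely the content of that same previous Proposition, so you should cite it there rather than only as a final consistency check.
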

\begin{proof}
	We will use the merging processes described in Fig.~\ref{fig:Merging_N_1_new}, using the logic used in the proof of Proposition~4.9 of Ref.~\cite{SKK2019}.
	
	We consider the merging process in Fig.~\ref{fig:Merging_N_1_new}(a), which involves two sectors  $1\in \calC_O$ and $\mathcal{N}\in \calC_{ \mathbb{N} }$. Note that any $\mathcal{N}\in \calC_{ \mathbb{N} }$ is allowed in the merging process. This implies that $\sum_{\mathcal{U}} N^1_{\mathcal{N}\mathcal{U}}\ge 1, \,\forall \mathcal{N}\in\calC_{ \mathbb{N} }$. By calculating the entropy difference [between two sector choices (1, $\mathcal{N}$) and (1, 1)]in two different ways, we find 
	\begin{equation}
	2 \ln d_{\mathcal{N}} = \ln d_{\mathcal{N}} + \ln \sum_{\mathcal{U}} N_{\mathcal{N}\mathcal{U}}^1 d_{\mathcal{U}}. \label{eq:comparing_1N}
	\end{equation}
	The left-hand side is the entropy difference based on the entropy on the $\mathbb{N}$-shaped subsystem. The right-hand side is obtained by solving a maximization problem on the merged region.
	In the calculation of the right-hand side, we have applied Eq.~(\ref{eq:U11_fusion}), which implies $\sum_{\mathcal{U}} N_{1\mathcal{U}}^1 d_{\mathcal{U}} =1$. There is an analogous equation for the merging process in Fig.~\ref{fig:Merging_N_1_new}(b). 
	By simplifying these two equations, we find
	\begin{equation}
	\begin{aligned}
	d_{\mathcal{N}} &= \sum_{\mathcal{U}} N^1_{\mathcal{N}\mathcal{U}} d_{\mathcal{U}},\\
	d_{\mathcal{U}} &= \sum_{\mathcal{N}} N^1_{\mathcal{N}\mathcal{U}} d_{\mathcal{N}}. \label{eq:a_strong_constraint_NU}
	\end{aligned}
	\end{equation}
	Equation~\eqref{eq:a_strong_constraint_NU} is a strong constraint. For a chosen $\mathcal{N}$, pick a sector  $\mathcal{U}$ that satisfies $N^1_{\mathcal{N}\mathcal{U}} \ge 1$. It follows from Eq.~\eqref{eq:a_strong_constraint_NU} that $d_{\mathcal{N}}\ge d_{\mathcal{U}}$ and $d_{\mathcal{U}}\ge d_{\mathcal{N}}$.  We have used the fact that the multiplicities are nonnegative integers and that the quantum dimensions are positive. Therefore, for every choice of $\mathcal{N}$, there is a unique $\mathcal{U}$ for which the fusion multiplicity obeys $N^1_{\mathcal{N}\mathcal{U}} = 1$. Moreover, we have $d_{\mathcal{U}}=d_{\mathcal{N}}$ for this choice. For the same $\mathcal{N}$, a different choice of $\mathcal{U}$ gives $N^1_{\mathcal{N}\mathcal{U}} = 0$.
	
	Let $\varphi$ be the map from $\calC_{\mathbb{N}}$ to $\calC_{ \mathbb{U}}$, mapping a sector $\mathcal{N}\in \calC_{\mathbb{N}}$ to the unique sector $\mathcal{U}\in\calC_{ \mathbb{U}}$ satisfying $N^1_{\mathcal{N}\mathcal{U}}=1$. $\varphi$ is bijective because there is an inverse map obtained by the same argument, choosing $\mathcal{U}$ instead of $\mathcal{N}$ first. This completes the proof. 
\end{proof}

For later purpose, it will be convenient to consider an embedding $\mu: \calC_N \hookrightarrow \calC_{\mathbb{U}}$, defined as $\mu=\varphi\circ \eta_N$. Here, $\eta_N$ is the embedding defined in Eq.~(\ref{eq:embedding_N2NN}). From Eqs.~(\ref{eq:eta_relation}) and (\ref{eq:quantum_dim_varphi}), it follows that
\begin{equation}
d_{\mu(n)} = d_n^2,\quad \forall\, n\in \calC_N. \label{eq:quantum_dim_mu}
\end{equation}

There is an important subtlety about the fusion space $N_{\mathcal{U}}^{\alpha}$. (See Fig.~\ref{fig:NbbandUbb_fusion_space} for the relevant subsystems.) For some $\mathcal{U}\in \mathcal{C}_{\mathbb{U}}$, $\sum_{\alpha} N_{\mathcal{U}}^{\alpha}$ may vanish. This can happen when $\calC_U$ contains two or more elements. Recall that each $\mathcal{U}\in \calC_{ \mathbb{U}}$ can be labeled by four parton sectors, two of which are $U$-type parton sectors. If a sector $\mathcal{U}$ labels an extreme point of the subsystem in Fig.~\ref{fig:NbbandUbb_fusion_space}(b), the two $U$-type parton sectors must be in the vacuum because the disk that sits in the slot of the $\mathbb{U}$-shaped susbsystem is in the reference state. Importantly, this implies that generally
\begin{equation}
d_{\mathcal{U}} \ne \sum_{\alpha} N_{{\mathcal{U}}}^{\alpha} d_{\alpha}.
\end{equation}
However, the following proposition \emph{is} true.

\begin{Proposition}
	For every $\mathcal{U}$ that satisfies $\sum_{\alpha} N_{\mathcal{U}}^{\alpha}\ge 1$, the quantum dimension of $\mathcal{U}$ is given by
	\begin{equation}
	d_{\mathcal{U}}= \sum_{\alpha} N_{{\mathcal{U}}}^{\alpha} d_{\alpha}. \label{eq:f(U)}
	\end{equation}
\end{Proposition}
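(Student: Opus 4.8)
The plan is to run the same ``compute the entropy of a merged state two ways'' argument that produced Eq.~\eqref{eq:qd_parton_composite}, but now with one of the two ingredients being the reference state on a disk that fills the slot of the $\mathbb{U}$-shaped region. Fix an extreme point $\rho_{\mathbb{U}}^{\mathcal{U}}\in\Sigma(\mathbb{U})$ carrying the sector $\mathcal{U}$, and let $\Omega$ be the subsystem of Fig.~\ref{fig:NbbandUbb_fusion_space}(b), obtained by enlarging $\mathbb{U}$ and gluing in a disk $D$ on the domain wall that closes the slot, so that the thickened boundary of $\Omega$ is the union of a $\mathbb{U}$-shaped annulus (carrying $\mathcal{U}$) and an $O$-shaped annulus (carrying an $O$-sector). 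I would form the merged state $\tau^{\mathcal{U}} := \rho_{\mathbb{U}}^{\mathcal{U}}\merge\sigma$, where $\sigma$ is supplied on a neighbourhood of $D$ that overlaps $\mathbb{U}$ on a disk-like region. The conditional-independence hypotheses of the merging theorem (Theorem~\ref{thm:merging_info_convex_set}) are verified exactly as in Section~\ref{sec:intro_qd_fm} and in the proof of the preceding Proposition (introduce an auxiliary collar subsystem and use the extension of the axioms); the one genuinely new point is that this slot-filling merge is admissible \emph{precisely} when the two $U$-type parton labels of $\mathcal{U}$ are the vacuum, which is exactly the hypothesis $\sum_\alpha N_{\mathcal{U}}^\alpha\ge 1$. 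Granting this, $\tau^{\mathcal{U}}\in\Sigma(\Omega)$, and by Eq.~\eqref{eq:merge_identities1} it is the maximum-entropy element of $\Sigma(\Omega)$ whose restriction to $\mathbb{U}$ is $\rho_{\mathbb{U}}^{\mathcal{U}}$ and whose restriction to $D$ is $\sigma_D$; for $\mathcal{U}=1$ one gets $\tau^{1}=\sigma_{\Omega}$.

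Next I would evaluate $S(\tau^{\mathcal{U}})-S(\tau^{1})$ in two ways. First, since $\tau^{\mathcal{U}}$ is the merge of $\rho_{\mathbb{U}}^{\mathcal{U}}$ with the reference state on a disk-like region, Eq.~\eqref{eq:merge_identities2} gives $S(\tau^{\mathcal{U}})-S(\tau^{1})=S(\rho_{\mathbb{U}}^{\mathcal{U}})-S(\rho_{\mathbb{U}}^{1})=2\ln d_{\mathcal{U}}$ by the definition of $d_{\mathcal{U}}$. Second, I would use the fusion-space description Eqs.~\eqref{eq:fusion_space1}--\eqref{eq:fusion_space2}: write $\tau^{\mathcal{U}}=\bigoplus_\alpha p_\alpha\,\rho_\Omega^{\mathcal{U}\alpha}$ with $\rho_\Omega^{\mathcal{U}\alpha}\in\Sigma_{\mathcal{U}}^{\alpha}(\Omega)\cong\mathcal{S}(\mathbb{V}_{\mathcal{U}}^{\alpha})$ and $\dim\mathbb{V}_{\mathcal{U}}^{\alpha}=N_{\mathcal{U}}^{\alpha}$; the maximum-entropy property forces each $\rho_\Omega^{\mathcal{U}\alpha}$ to be the uniform mixture of $N_{\mathcal{U}}^{\alpha}$ orthogonal extreme points, whose common entropy is obtained from the factorization property Eq.~\eqref{eq:factorization_extreme_points} applied to the thickened boundary of $\Omega$. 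Repeating the ``shell'' computation that led to Eq.~\eqref{eq:2-hole_quantum_dim}, the two thickened-boundary annuli contribute $2\ln d_{\mathcal{U}}$ and $2\ln d_{\alpha}$, so $S(\rho_\Omega^{\langle e\rangle\,\mathcal{U}\alpha})=S(\sigma_\Omega)+\ln(d_{\mathcal{U}}d_{\alpha})$. Maximizing $H(\{p_\alpha\})+\sum_\alpha p_\alpha\!\left(\ln N_{\mathcal{U}}^{\alpha}+\ln d_{\mathcal{U}}+\ln d_{\alpha}\right)$ over $\{p_\alpha\}$ is the same free-energy minimization as in Section~\ref{sec:intro_qd_fm} (partition function $\sum_\alpha N_{\mathcal{U}}^{\alpha}d_{\alpha}$), giving $S(\tau^{\mathcal{U}})=S(\sigma_\Omega)+\ln d_{\mathcal{U}}+\ln\!\big(\sum_\alpha N_{\mathcal{U}}^{\alpha}d_{\alpha}\big)$; since Eq.~\eqref{eq:1alpha_fusion} gives $S(\tau^{1})=S(\sigma_\Omega)$, we get $S(\tau^{\mathcal{U}})-S(\tau^{1})=\ln d_{\mathcal{U}}+\ln\!\big(\sum_\alpha N_{\mathcal{U}}^{\alpha}d_{\alpha}\big)$. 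Equating the two evaluations yields $2\ln d_{\mathcal{U}}=\ln d_{\mathcal{U}}+\ln\sum_\alpha N_{\mathcal{U}}^{\alpha}d_{\alpha}$, which is Eq.~\eqref{eq:f(U)}.

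The step I expect to be the real obstacle is the first one: establishing cleanly that the slot-filling merge is admissible exactly when $\sum_\alpha N_{\mathcal{U}}^{\alpha}\ge 1$ (equivalently, that the $U$-parton labels of $\mathcal{U}$ are trivial), and arranging the regions so that the thickened boundary of $\Omega$ is honestly a $\mathbb{U}$-annulus together with an $O$-annulus, so that $\Sigma_{\mathcal{U}}^{\alpha}(\Omega)$ is genuinely the fusion space counted by $N_{\mathcal{U}}^{\alpha}$ and the factorization computation delivers $d_{\mathcal{U}}d_{\alpha}$ rather than some other product of quantum dimensions. Once those geometric/admissibility points are pinned down, the two entropy evaluations and their comparison are a routine instance of the merging-and-counting technique already used in Sections~\ref{sec:intro_qd_fm}, \ref{sec:O_type}, and \ref{Sec:consistency_N_U}. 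As a sanity check one may specialize to $\mathcal{U}=\mu(n)$ and use $d_{\mu(n)}=d_n^2$ from Eq.~\eqref{eq:quantum_dim_mu}, which turns Eq.~\eqref{eq:f(U)} into a relation compatible with the $O$-type sum rule Eq.~\eqref{eq:qd_parton_composite}.
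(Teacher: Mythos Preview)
Your proposal is correct and follows essentially the same approach as the paper: merge the extreme point $\rho_{\mathbb{U}}^{\mathcal{U}}$ with the reference state on a disk filling the slot (the paper's Fig.~\ref{fig:merge_U_and_disk}), then compute $S(\tau^{\mathcal{U}})-S(\tau^{1})$ two ways to obtain $2\ln d_{\mathcal{U}}=\ln d_{\mathcal{U}}+\ln\sum_\alpha N_{\mathcal{U}}^{\alpha}d_{\alpha}$, using Eq.~\eqref{eq:1alpha_fusion} to normalize the vacuum case. The only minor caveat is your phrase ``precisely when'': the paper establishes (just before the Proposition) that $\sum_\alpha N_{\mathcal{U}}^{\alpha}\ge 1$ forces the two $U$-type parton labels of $\mathcal{U}$ to be trivial, which is the direction you need for admissibility of the merge; the converse equivalence is not proved and not required.
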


\begin{figure}[h]
	\centering
\includegraphics[scale=1]{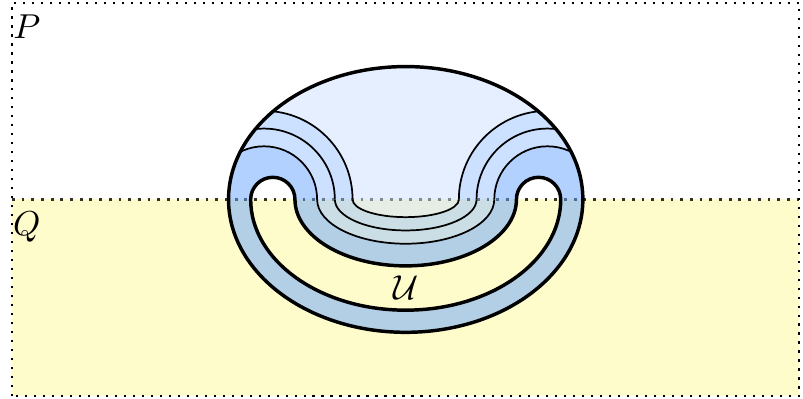}
	\caption{Merging of a disk with a $\mathbb{U}$-shaped region. This $\mathbb{U}$-shaped region carries a sector $\mathcal{U}\in \calC_{\mathbb{U}}$ that satisfies $\sum_{\alpha} N_{\mathcal{U}}^{\alpha} \geq 1$. }
	\label{fig:merge_U_and_disk}
\end{figure}

\begin{proof}
	We consider the merging process in Fig.~\ref{fig:merge_U_and_disk}. Note that any $\mathcal{U}$ such that $ \sum_{\alpha} N_{\mathcal{U}}^{\alpha} \ge 1$ is allowed. We can calculate the entropy difference between an arbitrary (allowed) choice of $\mathcal{U}$ and the vacuum ($1\in \calC_{\mathbb{U}})$. There are two different ways to do the same calculation. After comparing them, we can derive
	\begin{equation}
	2 \ln d_{\mathcal{U}} = \ln d_{\mathcal{U}} + \ln \sum_{\alpha} N_{\mathcal{U}}^{\alpha} d_{\alpha}. \label{eq:consistency_1U}
	\end{equation}
	The left-hand side of this equation is the entropy difference calculated using the entropy of the $\mathbb{U}$-shaped subsystem. The right-hand side is calculated by solving a maximization problem on the merged region. In the derivation of Eq.~(\ref{eq:consistency_1U}), we have applied Eq.~(\ref{eq:1alpha_fusion}) to conclude that $\sum_{\alpha} N_1^{\alpha} d_{\alpha}=1$.
	By simplifying Eq.~(\ref{eq:consistency_1U}), we obtain Eq.(\ref{eq:f(U)}).
\end{proof}

The following statement will be useful for proving the ``quantization'' of $d_n$.
\begin{Proposition} The multiplicities satisfy:
	\begin{equation}
	N^1_{\mathcal{U}}= \sum_{n\in\calC_N} \delta_{{\mathcal{U}},\mu(n)}. \label{eq:N_U^1}
	\end{equation}
\end{Proposition}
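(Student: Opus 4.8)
The plan is to establish the two inequalities $N^{1}_{\mathcal{U}}\ge \sum_{n}\delta_{\mathcal{U},\mu(n)}$ and $N^{1}_{\mathcal{U}}\le \sum_{n}\delta_{\mathcal{U},\mu(n)}$ separately, using that the right-hand side equals $1$ when $\mathcal{U}\in\mu(\calC_N)$ and $0$ otherwise (recall $\mu=\varphi\circ\eta_N$ is injective). The whole argument is driven by one geometric observation about Fig.~\ref{fig:NbbandUbb_fusion_space}: the region $\Omega_b$ underlying $\{N^{\alpha}_{\mathcal{U}}\}$ [Fig.~\ref{fig:NbbandUbb_fusion_space}(b)] contains an $N$-shaped subregion $N'$ whose thickened boundary is the $\mathbb{N}$-annulus, such that deleting $N'$ from $\Omega_b$ leaves a region of the same type as $\Omega_a$, the region underlying $\{N^{\alpha}_{\mathcal{N}\mathcal{U}}\}$ [Fig.~\ref{fig:NbbandUbb_fusion_space}(a)], with the new $\mathbb{N}$-boundary encircling $N'$; conversely, $\Omega_b$ is obtained from $\Omega_a$ by plugging its $\mathbb{N}$-hole with such an $N'$ together with the disk that fills the slot of $N'$ (which is necessarily in the reference state). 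Setting $\alpha=1$ throughout, this lets us transfer the already-proved identity $N^{1}_{\mathcal{N}\mathcal{U}}=\delta_{\mathcal{U},\varphi(\mathcal{N})}$, Eq.~\eqref{eq:1NU}.

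For the lower bound, fix $n\in\calC_N$, set $\mathcal{N}=\eta_N(n)$ and $\mathcal{U}=\mu(n)=\varphi(\mathcal{N})$. By Eq.~\eqref{eq:1NU} the space $\Sigma^{1}_{\mathcal{N}\mathcal{U}}(\Omega_a)$ is nonempty; its (unique) element, restricted to the $\mathbb{N}$-boundary, is the extreme point of $\Sigma(\mathbb{N})$ labelled $\eta_N(n)$, which by the definition of $\eta_N$, Eq.~\eqref{eq:embedding_N2NN}, coincides with the reduced state on that annulus of the extreme point $\rho^{n}_{N'}\in\Sigma(N')$. The two states therefore agree on the shared collar, and since $N'$ is far from the $\mathbb{U}$- and $O$-boundaries of $\Omega_a$, the hypotheses of the merging theorem (Theorem~\ref{thm:merging_info_convex_set}) hold; merging $\rho^{n}_{N'}$ into the $\mathbb{N}$-hole (and $\sigma$ on the slot disk) produces an element of $\Sigma_{(\mathcal{U},1)}(\Omega_b)$. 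Hence $N^{1}_{\mu(n)}\ge 1$, which is the claimed lower bound.

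For the upper bound, let $\rho$ be any extreme point of $\Sigma_{(\mathcal{U},1)}(\Omega_b)$; it suffices to show that $\mathcal{U}\in\mu(\calC_N)$ and that $\rho$ is then unique. Choose $N'\subset\Omega_b$ as above. Exactly as in the proof that $\eta_N$ is well defined (monotonicity of mutual information applied to the vanishing $I(\cdot:\cdot)_{\rho}$ coming from factorization of the extreme point), $\rho|_{N'}$ is an extreme point of $\Sigma(N')$, carrying some parton $n\in\calC_N$. Restricting $\rho$ to $\Omega_b\setminus N'$ gives an element of $\Sigma(\Omega_a)$ whose $\mathbb{N}$-, $\mathbb{U}$-, and $O$-boundaries carry $\eta_N(n)$, $\mathcal{U}$, and $1$ respectively; by Eq.~\eqref{eq:1NU} this forces $\mathcal{U}=\varphi(\eta_N(n))=\mu(n)$, so $\mathcal{U}\in\mu(\calC_N)$, and injectivity of $\mu$ fixes $n$ in terms of $\mathcal{U}$. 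Finally, $\Sigma^{1}_{\eta_N(n)\mathcal{U}}(\Omega_a)$ is a single point (its multiplicity is $1$ by Eq.~\eqref{eq:1NU}) and $\rho|_{N'}=\rho^{n}_{N'}$ is determined, so $\rho$ --- being the merge of these two uniquely determined restrictions, since an extreme point satisfies the conditional-independence conditions that make it the merged state --- is unique. Thus $N^{1}_{\mathcal{U}}\le \sum_{n}\delta_{\mathcal{U},\mu(n)}$, and together with the lower bound we obtain Eq.~\eqref{eq:N_U^1}.

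The main obstacle is the topological setup in the first paragraph: one must check carefully that $\Omega_b$ decomposes as ``$\Omega_a$ with its $\mathbb{N}$-hole plugged by $N'$ plus a slot disk,'' that all boundary components and slot disks match after this surgery, and that the chosen partitions of the overlap collars satisfy the separation hypothesis of Theorem~\ref{thm:merging_info_convex_set}. Once the regions are arranged correctly, the two steps are routine applications of the merging theorem, the factorization property, and Eq.~\eqref{eq:1NU}, in the same style as the propositions proved earlier in this section.
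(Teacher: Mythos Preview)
Your proposal is correct and follows essentially the same strategy as the paper's (very terse) proof: relate $\Sigma_{(\mathcal{U},1)}(\Omega_b)$ to $\Sigma^{1}_{\mathcal{N}\mathcal{U}}(\Omega_a)$ by plugging/removing an $N$-shaped piece, and then invoke Eq.~\eqref{eq:1NU}. The paper compresses both directions into two sentences (``fill the hole with $\eta_N(n)$; the merging can be inverted''), whereas you spell out the lower and upper bounds separately.

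Two small remarks. First, the ``slot disk'' you mention when filling the $\mathbb{N}$-hole is spurious: the interior of an $\mathbb{N}$-annulus is already an $N$-shaped half-disk, so a single $N'$ suffices. Second, in the upper bound you assert that $\rho|_{N'}$ is an extreme point of $\Sigma(N')$ by invoking factorization; this is true but not quite for the reason you cite, since $N'$ is interior to $\Omega_b$ rather than part of $\partial\Omega_b$. The cleanest route is the one you implicitly use anyway: decompose $\rho|_{\Omega_a}$ over the simplex $\Sigma(\mathbb{N})$; Eq.~\eqref{eq:1NU} forces a single term $\mathcal{N}=\varphi^{-1}(\mathcal{U})$, and since the $\mathbb{N}$-collar is also the thickened boundary of $N'$, the simplex decomposition of $\rho|_{N'}\in\Sigma(N')$ then forces $\varphi^{-1}(\mathcal{U})=\eta_N(n)$ for a unique $n$, giving both $\mathcal{U}=\mu(n)$ and extremality of $\rho|_{N'}$ in one stroke.
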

\begin{proof}
	If $N_{\mathcal{U}}^1 \ge 1$, then  $\exists ! n \in \calC_N$ such that $\mathcal{U} =\mu(n)$. This follows from Eq.~(\ref{eq:1NU}). Furthermore,  when $\mathcal{U} =\mu(n)$, we can ``fill'' the hole with the sector $\eta_N(n)$. The associated merging process can be inverted, which implies that $N^1_{\mu(n)}=1$. This completes the proof.
\end{proof}
From this proposition, we deduce the quantization of $d_n$.
\begin{Proposition}\label{Prop:d_n_quantization}
	The quantum dimensions of the $N$-type parton sectors, $\{d_n \}_{n\in \calC_N}$ are uniquely determined by two sets of integers $\{N_{\alpha\beta}^{\gamma}\}$ and $\{N_{\mu(n)}^{\alpha}\}$ according to
\begin{equation}
d_n^2=\sum_{\alpha} N_{\mu(n)}^{\alpha} d_{\alpha}.
\end{equation}
	Furthermore, $d_n\ge 1$ and $d_n$ cannot be in the interval $(1,\sqrt{2})$,  $\forall n\in \calC_N$.
\end{Proposition}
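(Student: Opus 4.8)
The plan is to obtain the displayed formula $d_n^2=\sum_{\alpha}N_{\mu(n)}^{\alpha}d_{\alpha}$ by simply chaining together three facts already in hand from Section~\ref{Sec:consistency_N_U}, and then to read off the quantization as an elementary consequence. First I would fix $n\in\calC_N$ and set $\mathcal{U}:=\mu(n)=\varphi(\eta_N(n))\in\calC_{\mathbb{U}}$. Before applying Eq.~\eqref{eq:f(U)} I must check its hypothesis $\sum_{\alpha}N_{\mathcal{U}}^{\alpha}\ge 1$: this is immediate from Eq.~\eqref{eq:N_U^1}, which gives $N^1_{\mu(n)}=1$ and hence $\sum_{\alpha}N_{\mu(n)}^{\alpha}\ge N^1_{\mu(n)}=1$. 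Thus the Proposition containing Eq.~\eqref{eq:f(U)} applies and yields $d_{\mu(n)}=\sum_{\alpha}N_{\mu(n)}^{\alpha}d_{\alpha}$. Substituting $d_{\mu(n)}=d_n^2$ from Eq.~\eqref{eq:quantum_dim_mu} gives the claimed identity.

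Next I would address the ``uniquely determined'' assertion. The point is that the $O$-type quantum dimensions $\{d_{\alpha}\}_{\alpha\in\calC_O}$ are already pinned down by the integers $\{N_{\alpha\beta}^{\gamma}\}$: the system $d_{\alpha}d_{\beta}=\sum_{\gamma}N_{\alpha\beta}^{\gamma}d_{\gamma}$ of Eq.~\eqref{eq:quantum_dimension_wall}, together with the fusion axioms Eq.~\eqref{eq:wall_excitation_fusion}, determines $\{d_{\alpha}\}$ uniquely as the Perron--Frobenius data of the fusion matrices (exactly as in the bulk, cf.\ the discussion after Eq.~\eqref{eq:quantum_dimension_wall}). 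Once the $d_{\alpha}$ are fixed by $\{N_{\alpha\beta}^{\gamma}\}$, the formula $d_n^2=\sum_{\alpha}N_{\mu(n)}^{\alpha}d_{\alpha}$ together with the positivity $d_n>0$ fixes $d_n$, as the positive square root of a quantity built from $\{N_{\mu(n)}^{\alpha}\}$ and $\{d_{\alpha}\}$. This establishes the first half of the Proposition.

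For the quantization, I would separate out the vacuum term. Since $N^1_{\mu(n)}=1$ (Eq.~\eqref{eq:N_U^1}) and $d_1=1$, we may write $d_n^2=1+\sum_{\alpha\neq 1}N_{\mu(n)}^{\alpha}d_{\alpha}$. Each $d_{\alpha}\ge 1$ (established around Eq.~\eqref{eq:quantum_dimension_wall}) and each $N_{\mu(n)}^{\alpha}$ is a nonnegative integer, so the sum over $\alpha\neq 1$ is either $0$ or at least $1$. Hence $d_n^2\ge 1$, so $d_n\ge 1$, and $d_n^2\notin(1,2)$, i.e.\ $d_n\notin(1,\sqrt{2})$, which is the second half of the claim.

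I do not expect any genuine obstacle here: every ingredient has been proved earlier, and the argument is bookkeeping. The only step that requires a moment of care is verifying the hypothesis $\sum_{\alpha}N_{\mu(n)}^{\alpha}\ge 1$ so that Eq.~\eqref{eq:f(U)} is legitimately applicable to $\mathcal{U}=\mu(n)$ — but this is supplied cleanly by Eq.~\eqref{eq:N_U^1}, and it is also worth remarking explicitly that this is precisely the reason one works through the auxiliary $\mathbb{N}$- and $\mathbb{U}$-type sectors rather than trying to fuse the $N$-type partons directly.
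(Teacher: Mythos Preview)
Your proposal is correct and follows essentially the same approach as the paper's proof: both verify the hypothesis of Eq.~\eqref{eq:f(U)} via $N^1_{\mu(n)}=1$ from Eq.~\eqref{eq:N_U^1}, chain Eq.~\eqref{eq:quantum_dim_mu} with Eq.~\eqref{eq:f(U)} to obtain $d_n^2=\sum_{\alpha}N_{\mu(n)}^{\alpha}d_{\alpha}$, invoke the determination of $\{d_{\alpha}\}$ by $\{N_{\alpha\beta}^{\gamma}\}$, and then isolate the vacuum term to get $d_n^2=1+\sum_{\alpha\neq 1}N_{\mu(n)}^{\alpha}d_{\alpha}$ for the quantization bound.
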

\begin{proof}
	Note that  $\sum_{\alpha} N_{\mu(n)}^{\alpha}\ge N_{\mu(n)}^1.$ Moreover, $N_{\mu(n)}^1 = 1$ because of Eq.~(\ref{eq:N_U^1}).
	It follows that
	\begin{equation}
	\begin{aligned}
	d_n^2 &= d_{\mu(n)} \\
	&= \sum_{\alpha} N_{\mu(n)}^{\alpha} d_{\alpha}.
	\end{aligned}\label{eq:f(n)}
	\end{equation}
	The first line follows from Eq.~(\ref{eq:quantum_dim_mu}). The second line follows from Eq.~(\ref{eq:f(U)}). Recall that $\{d_{\alpha}\}$ is uniquely determined by the set of fusion multiplicities $\{N_{\alpha\beta}^{\gamma}\}$ according to Eq.~(\ref{eq:quantum_dimension_wall}). Therefore,  $\{d_n \}_{n\in \calC_N}$ are uniquely determined by two sets of integers $\{N_{\alpha\beta}^{\gamma}\}$ and $\{N_{\mu(n)}^{\alpha}\}$.
	
	Note that, $d_n^2 = 1 + \sum_{\alpha\ne 1} N_{\mu(n)}^{\alpha} d_{\alpha}$, due to $N_{\mu(n)}^1 d_1=1$.
	Furthermore, $\{N_{\mu(n)}^{\alpha} \}$ are non-negative integers and $d_{\alpha}\ge 1$. Therefore, $d_n\ge 1$ and no value in the interval $(1,\sqrt{2})$ is allowed for $d_n$. This completes the proof. Obviously, a similar statement applies to $d_u$ as well.
\end{proof}

\section{Quasi-fusion rules}\label{Sectionquasi_fusion}

As we have briefly discussed already, parton sectors do not have the familiar notion of fusion space; see Section~\ref{Sec:quasi_fusion_subsection}. This phenomena, which we refer to as quasi-fusion, stems from the difference between Figs.~\ref{fig:M-shape_and_quasi_fusion} and \ref{fig:fusion_characterization}. The key point is that the three $N$-shaped subsystems in Fig.~\ref{fig:M-shape_and_quasi_fusion} do not constitute $\partial M$. On the other hand, in Fig.~\ref{fig:fusion_characterization}, the three annuli do constitute the entire $\partial \Omega$. Therefore, the three $N$-shaped subsystems do not generally fix the sector in $\partial M$. When there is more than one sector in $\partial M$, for the chosen parton labels, multiple fusion spaces are involved in the description of quasi-fusion. 

However, this does not mean that the quasi-fusion of parton sectors can be completely arbitrary. In this section, we will explain the basic rules that the parton sectors must obey when they are fused together, focusing on the similarities and differences with the ordinary rule of fusion. We will refer to these rules as \emph{quasi-fusion rules.}

Let us first say that there are some similarities between the fusion rule and the quasi-fusion rule. In particular, the notion of anti-sector is well-defined for the parton sectors as well. Given a sector $n\in \mathcal{C}_N$, one can show that there is a unique sector $\bar{n}\in\mathcal{C}_N$ such that
\begin{equation}
    d_n= d_{\bar{n}} \quad \textrm{and}\quad \bar{\bar{n}}= n.
\end{equation}
Similarly, we can define the anti-sectors for $u\in \mathcal{C}_U$ as well. We will prove these statements in Appendix~\ref{appendix:quasi-fusion}.

To further study the quasi-fusion rules, it is convenient to introduce the following sets.
Let us define $\Sigma_{nn'}^{n''}(M)$ with $n,n',n'' \in \calC_N$ to be the subset of $\Sigma(M)$ consisting of elements that reduce to the extreme points of these three sectors on the three $N$-shaped subsystems.\footnote{Note that $\Sigma_{nn'}^{n''}(M)$ may be an empty set for some choices of $n,n',n'' \in \calC_N$.} With this definition, one can verify the following proposition. We leave the proof in Appendix~\ref{ap:Subsystem_M_related}.

\begin{Proposition}
	The convex set  $\Sigma_{nn'}^{n''}(M)$ with $n,n',n'' \in \calC_N$ satisfies the following properties:
	\begin{enumerate}
		\item Every extreme point of $\Sigma(M)$ is contained in some $\Sigma_{nn'}^{n''}(M)$.
		\item $\cup_{n''} \Sigma_{nn'}^{n''}(M)$ is nonempty for $\forall \, n,n'\in\calC_N$.
		\item $\Sigma_{n1}^{n''}(M)$ is the empty set for $n''\ne n$. For $n''=n$, it has a unique element.
		\item $\Sigma_{1n'}^{n''}(M)$ is the empty set for $n''\ne n'$. For $n''=n'$, it has a unique element.
		\item $\Sigma_{nn'}^{1}(M)$ is the empty set for $n'\ne \bar{n}$. For $n'=\bar{n}$, it has a unique element.
	\end{enumerate}
\label{Prop:Sigma_nn'^n''}
\end{Proposition}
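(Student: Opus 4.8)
The plan is to prove the five statements by reducing them to facts already established for the sectorizable regions $\Sigma(N)$, $\Sigma(\mathbb{N})$, $\Sigma(O)$ and to merging arguments analogous to those used in Sections~\ref{sec:intro_qd_fm}, \ref{sec:O_type}, and \ref{Sec:consistency_N_U}. The key conceptual point throughout is that the thickened boundary $\partial M$ is a sectorizable region (being a disjoint union of annuli-type and $N$-shaped pieces), so by Lemma~\ref{lemma:sectorizable} its information convex set is a simplex, and by the factorization property every extreme point of $\Sigma(M)$ restricts to an extreme point of $\Sigma(\partial M)$. Statement~1 is then immediate: each extreme point of $\Sigma(M)$ has a definite sector on $\partial M$, hence a definite sector on each of the three $N$-shaped subsystems (which are components of $\partial M$, or retracts thereof via the isomorphism theorem), so it lies in the corresponding $\Sigma_{nn'}^{n''}(M)$.

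For statement~2, I would use the merging theorem: given $n,n'\in\calC_N$, take extreme points $\rho^{n}$, $\rho^{n'}$ of the information convex sets of the two ``input'' $N$-shaped regions and merge them (together with the reference state on the remaining part of $M$) into an element of $\Sigma(M)$, exactly as in Fig.~\ref{fig:merging_into_composite_point}; the overlap regions can be chosen thick enough that the conditions of Theorem~\ref{thm:merging_info_convex_set} hold, because the domain wall separates the non-overlapping parts. The merged state lies in $\Sigma(M)$ and by construction reduces to $\rho^n$ and $\rho^{n'}$ on the two input $N$-shaped subsystems; decomposing it into extreme points and projecting onto the third $N$-shaped subsystem shows $\cup_{n''}\Sigma_{nn'}^{n''}(M)$ is nonempty. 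Statements~3 and~4 are the ``fusion with the vacuum is trivial'' statements for quasi-fusion. For statement~3, when $n'=1$ the relevant $N$-shaped subsystem carries the vacuum, so its reduced density matrix agrees with $\sigma$; I would run a merging/``hole-filling'' argument as in the proof of Proposition~\ref{NU_fill_the_vacuum}'s parent — fill the vacuum-labelled $N$-shaped slot with $\sigma$ to deform $M$ (via a topology-respecting path, using the extended isomorphism theorem) into a region in which the remaining two $N$-shaped subsystems are smoothly connected, forcing $n''=n$ and uniqueness of the element. Statement~4 is the mirror-image argument.

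Statement~5 is where the notion of anti-sector enters, and it is the step I expect to be the main obstacle, since it requires knowing that $\bar n$ exists and is unique before the quasi-fusion formalism is fully in place — so I would either cite the anti-sector result promised for Appendix~\ref{appendix:quasi-fusion} or prove it inline by the following route: embed $n,n'$ into $\mathbb{N}$-type sectors via $\eta_N$ (Eq.~\eqref{eq:embedding_N2NN}) and use the bijection $\varphi:\calC_{\mathbb{N}}\to\calC_{\mathbb{U}}$ together with the $O$-type anti-sector structure (Eq.~\eqref{eq:wall_excitation_fusion}) to transport uniqueness of anti-sectors down to $\calC_N$. Concretely, $\Sigma_{nn'}^{1}(M)$ nonempty means one can merge the $n$- and $n'$-labelled extreme points into an element of $\Sigma(M)$ whose outer $N$-shaped boundary carries the vacuum; filling that vacuum boundary with $\sigma$ turns $M$ into an $O$-shaped (annular) region on the domain wall carrying a fixed $O$-sector, and the resulting constraint — read off by comparing entropies two ways, as in Eqs.~\eqref{eq:for_2'}–\eqref{eq:L_2'} — forces $d_n=d_{n'}$ and, by the integrality of the relevant multiplicities, pins $n'$ to a unique sector, which we name $\bar n$; uniqueness of the element of $\Sigma_{n\bar n}^{1}(M)$ follows because the merged state is the maximum-entropy state and the fusion-into-vacuum multiplicity is $1$. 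The main difficulty is bookkeeping: ensuring all the merging steps satisfy the geometric separation condition of Theorem~\ref{thm:merging_info_convex_set} on the $M$-shaped region (whose boundary is more intricate than an annulus), and ensuring the ``hole-filling'' deformations are genuine paths in the domain-wall-refined sense of Theorem~\ref{thm:isomorphism}. Once those geometric checks are done, each of the five statements follows from an argument already rehearsed earlier in the paper.
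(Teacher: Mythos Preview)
Your approach to statements~1--4 is essentially the paper's. Statement~1 is exactly the factorization-on-$\partial M$ argument; statement~2 is the merging of two $N$-shaped extreme points (Fig.~\ref{fig:M_proofs}(a)); statements~3 and~4 are the slot-filling arguments (Fig.~\ref{fig:M_proofs}(b)). One terminological wrinkle: you describe the slot-filling as ``deform $M$ via a topology-respecting path, using the extended isomorphism theorem,'' but filling a vacuum slot is a topology-changing merge, not a path in the sense of Theorem~\ref{thm:isomorphism}. The paper uses the merging theorem here, not the isomorphism theorem. This is cosmetic --- your substance is right.

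For statement~5, your first option (cite the anti-sector construction from Appendix~\ref{appendix:quasi-fusion}) is exactly what the paper does: the anti-sector $\bar n$ is \emph{defined} just before this proof via Fig.~\ref{fig:n_bar}, and statement~5 is then a direct consequence. Your second, ``concrete'' route has a genuine geometric error. You write that when $n''=1$, ``filling that vacuum boundary with $\sigma$ turns $M$ into an $O$-shaped (annular) region on the domain wall.'' It does not. The condition $n''=1$ constrains only the outer $N$-shaped arc of $M$, not the full $\partial M$; the two inner humps of the $M$ remain, and no filling on the $P$ side produces an $O$-annulus. The paper's move is different: it merges $M$ with an $O$-shaped annulus carrying the vacuum sector $1\in\calC_O$ (Fig.~\ref{fig:M_proofs}(c)), which is possible precisely because $n''=1$. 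The resulting region is the one in Fig.~\ref{fig:n_bar} --- outer boundary an $O$-annulus in the vacuum, inner boundary a $\mathbb{U}$-shaped hole carrying $\mu(n)$ --- and uniqueness of that state forces $n'=\bar n$ by the \emph{definition} of $\bar n$. No entropy-comparison argument of the Eq.~\eqref{eq:for_2'}--\eqref{eq:L_2'} type is used or needed here; that machinery lives upstream, in establishing that $\varphi$ and $\mu$ are well-defined (Section~\ref{Sec:consistency_N_U}). So drop the inline alternative and stick with your first option.
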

Physically, the content of Proposition~\ref{Prop:Sigma_nn'^n''} should be viewed as a relaxation of the fusion rule. For instance, one can see that two parton sectors can always fuse into some parton sector; see the second statement of Proposition~\ref{Prop:Sigma_nn'^n''}. Moreover, the triviality of the vacuum sector is stated in the third and the fourth result. The fifth result states that a parton sector and its anti-sector can fuse to the vacuum.

Note that $\Sigma_{nn'}^{n''}(M)$ is similar to $\Sigma_{ab}^c(Y)$ in that it may store quantum information. However, unlike $\Sigma_{ab}^c(Y)$, $\Sigma_{nn'}^{n''}(M)$ is generally \emph{not} isomorphic to a state space of some Hilbert space. Moreover, the entropy difference between two extreme points $\rho\in \Sigma_{nn'}^{n''}(M)$ and $\sigma\in  \Sigma_{11}^{1}(M)$ can be
\begin{equation}
	S(\rho)-S(\sigma) \ne \ln d_{n} + \ln d_{n'} + \ln d_{n''}, \label{eq:ee_exotic}
\end{equation}
unlike the extreme points in $\Sigma_{ab}^c(Y)$.

\section{Domain wall topological entanglement entropy}\label{sec:tee}
In this section, we introduce and study the domain wall topological entanglement entropies. These are order parameters that can detect the presence of gapped domain walls. Moreover, for the class of states that we considered, these order parameters are invariant under a small deformation of the subsystem. The following is a list of objects that one can obtain directly from the ground state entanglement entropy. 
\begin{itemize}
\item $\calD_N=\sqrt{\sum_{n\in \calC_N} d_n^2}$.
\item $\calD_U=\sqrt{\sum_{u\in \calC_U} d_u^2}$.
\item $\calD_O=\sqrt{\sum_{\alpha\in \calC_O} d_{\alpha}^2}$.
\end{itemize}
During this analysis, we also prove that
\begin{equation}
    \calD_O^2=\sqrt{\sum_{a\in\calC_P} d_a^2} \cdot \sqrt{\sum_{x\in\calC_Q} d_x^2}
\end{equation}
Furthermore, the total quantum dimension of the snake sectors in $\calC_S$,  defined as $\calD_S=\sqrt{\sum_{s\in \calC_S} d_s^2}$,  and the total quantum dimension of the sectors in $\calC_O^{[1,1]}$, defined as $\calD_{O^{[1,1]}}=\sqrt{\sum_{\alpha\in \calC_O^{[1,1]}} d_{\alpha}^2 } $, are encoded in the ground state as well. This is because they can be expressed as
\begin{eqnarray}
    \calD_S &=& \calD_N \calD_U, \label{eq:D_S}\\
    \calD_{O^{[1,1]}} &=& \frac{\calD_O}{ \calD_N \calD_U}, \label{eq:D_O11}
\end{eqnarray}
where Eq.~\eqref{eq:D_S} follows from Eqs.~\eqref{eq:d_s_consistency} and \eqref{eq:D_O11} follows from Eq.~\eqref{eq:qd_parton_composite}.

These derivations are quite similar to each other. To start with, let us consider 
\begin{equation}
 S_{\text{topo}, N} :=  (S_{BC}+S_{CD}-S_B-S_D)_{\sigma},  \label{eq:def_S_P_to_E}
\end{equation}
where the subsystems $B,$ $C$, abd $D$ are described in Fig.~\ref{fig:S_P_to_E}(a).

We obtain a number of (equivalent) expressions for this quantity; see Proposition~\ref{Prop:S_P_to_E}. As a byproduct of this analysis, we also obtain a nontrivial identity:
\begin{equation}
    \sum_{n\in \mathcal{C}_N} d_n^2 = \sum_{a\in \mathcal{C}_P} N_a^1 d_a.
\end{equation}
\begin{Proposition}\label{Prop:S_P_to_E}
\begin{equation}
\exp\left( S_{\text{topo}, N} \right)= \calD_N^2 = \sum_{a\in \mathcal{C}_P} N_a^1 d_a. \label{eq:S_P_to_E}
\end{equation}
Moreover,
\begin{equation}
 S_{\text{topo}, N} = I(A:C\vert B)_{\sigma} \label{eq:TEE_2nd_subsystem}
\end{equation}
for the subsystem $A, B,$ and $C$ shown in Fig.~\ref{fig:S_P_to_E}(b).
\end{Proposition}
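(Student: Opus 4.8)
The plan is to compute $S_{\text{topo},N} = (S_{BC}+S_{CD}-S_B-S_D)_\sigma$ by the merging technique, turning the linear combination into the entropy of a merged state over a well-chosen region, and then to identify that region's information convex set with the structures already analyzed (the $O$-type annulus on the domain wall and its decomposition into parton labels). First I would recognize the subsystems in Fig.~\ref{fig:S_P_to_E}(a): $BC$ and $CD$ overlap on a region $C$ (or an appropriate sub-partition $C=C'C''$), so that merging an element of $\Sigma(BC)$ with an element of $\Sigma(CD)$ along $C$ produces an element of $\Sigma(BCD)$. By the merging identities, Eq.~\eqref{eq:merge_identities1}, $I(B:D|C)=0$ in the merged state, hence $S_{BCD}=S_{BC}+S_{CD}-S_C$, and combined with $S_{CD}+S_{C}-S_D$ type relations on the disk-like pieces (where elements of the information convex set agree with $\sigma$), one rewrites $S_{\text{topo},N}$ as a difference of entropies of merged states, analogous to Eq.~\eqref{eq:1st_calculation} versus Eq.~\eqref{eq:2nd_calculation}. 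The reference state $\sigma$ itself is the merged state built from vacuum sectors, so $S_{\text{topo},N}$ becomes $S(\tau^{\text{max}}) - S(\tau^{1})$ for the maximum-entropy element of the relevant information convex set.

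Next I would identify the relevant region. The region $BCD$ assembled this way should be (path-connected to) an $O$-shaped annulus sitting on the domain wall — more precisely, a region whose information convex set decomposes, via Lemma~\ref{lemma:sectorizable} and the $O$-type analysis of Section~\ref{sec:O_type}, as a simplex over $\calC_O$, with the further refinement that the parton content on one side is unconstrained while the other is pinned to vacuum. The maximum-entropy computation then reads, as in the $O$-type derivation, $\exp(S_{\text{topo},N}) = \sum_{\alpha\in\calC_O^{[n,u]}\text{-type sum}} d_\alpha^2 \big/ (\text{vacuum normalization})$; the geometry of Fig.~\ref{fig:S_P_to_E}(a) is designed so that this collapses to $\sum_{n\in\calC_N} d_n^2 = \calD_N^2$. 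Concretely I expect to use Eq.~\eqref{eq:qd_parton_composite} (or rather the sum over $n$ of it, with $u=1$ and $\sum_{\alpha\in\calC_O^{[1,1]}}d_\alpha^2$ cancelling) together with $\sum_{u} d_u^2$ factoring out as $1$ because the $Q$-side hole is filled by vacuum. For the third expression $\sum_{a\in\calC_P} N_a^1 d_a$: the subsystem $BC$, pushed entirely into the $P$ bulk, is an annulus in $P$, so its vacuum-sector content is governed by anyons $a\in\calC_P$ that condense on the wall, i.e.\ $N_a^1\ge 1$; the merged-state entropy maximization over $\{p_a\}$ with weights $N_a^1 d_a$ yields $\ln\sum_a N_a^1 d_a$, giving the identity. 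This also supplies the byproduct identity $\sum_n d_n^2 = \sum_a N_a^1 d_a$.

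For Eq.~\eqref{eq:TEE_2nd_subsystem}, $S_{\text{topo},N} = I(A:C|B)_\sigma$ for the subsystems of Fig.~\ref{fig:S_P_to_E}(b), I would argue that the two subsystem choices compute the same linear combination of entropies. The CMI $I(A:C|B)_\sigma = (S_{AB}+S_{BC}-S_B-S_{ABC})_\sigma$; by choosing $A,B,C$ in Fig.~\ref{fig:S_P_to_E}(b) so that $AB$, $BC$, $B$, $ABC$ are deformable (within the allowed class — not crossing the domain wall) to $BC$, $CD$, $B$, $D$ of part (a) up to relabeling, and invoking the isomorphism theorem (Theorem~\ref{thm:isomorphism}) to equate entropies across such deformations, the two expressions coincide term by term. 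One must be careful that the deformation respects the domain-wall constraint (the second axiom is relaxed on the wall), which is exactly why $S_{\text{topo},N}$ need not vanish.

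The main obstacle I anticipate is verifying that the merging theorem's hypotheses (Theorem~\ref{thm:merging_info_convex_set}) genuinely hold for the partition of $BCD$ drawn in Fig.~\ref{fig:S_P_to_E}(a) — i.e.\ that one can choose $B'C'=C$ with no radius-$r$ disk overlapping both $BB'$ and $C''D$, and that the conditional-independence conditions follow from the extended axioms (using an auxiliary region as in Fig.~\ref{fig:conditional_independence_annulus}). This is routine in spirit but the domain-wall geometry makes it fiddly. The second delicate point is the bookkeeping that shows the $\calC_O$-sum truncates to $\sum_{n\in\calC_N}d_n^2$ rather than the full $\calD_O^2$ or $\calD_{O^{[1,1]}}$-weighted sum: this hinges on precisely which holes of the $O$-annulus lie in $P$ versus $Q$ in Fig.~\ref{fig:S_P_to_E}(a), and on Eq.~\eqref{eq:qd_parton_composite} together with $|\calC_S^{[1,u]}|=1$-type facts from Section~\ref{sec:snake}. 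Everything else is a direct transcription of the bulk derivation in Section~\ref{sec:intro_qd_fm}.
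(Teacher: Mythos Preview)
Your overall strategy --- rewrite $S_{\text{topo},N}$ as a merged-state entropy difference and evaluate it two ways --- matches the paper, but you have misidentified the region whose information convex set does the work. The paper first passes from the Fig.~\ref{fig:S_P_to_E}(a) partition to a different partition $A,B,C,D,E$ (their Fig.~\ref{fig:Merging_for_TEE}(a)) and rewrites $S_{\text{topo},N}=I(A:C\vert B)_\sigma=I(A:C\vert D)_\sigma$ via \emph{purification} identities (e.g.\ $S_{AB}=S_{CD}+S_E$). The merged region $ABC$ is then the subsystem of Fig.~\ref{fig:NbbandUbb_fusion_space}(b), whose boundary carries a $\mathbb{U}$-type sector and an $O$-type hole fixed to $1\in\calC_O$; the max-entropy calculation gives $\ln\sum_{\mathcal{U}\in\calC_{\mathbb{U}}} N_{\mathcal{U}}^{1} d_{\mathcal{U}}$, and the reduction to $\sum_{n} d_n^2$ goes through the nontrivial results $N_{\mathcal{U}}^{1}=\sum_n\delta_{\mathcal{U},\mu(n)}$ (Eq.~\eqref{eq:N_U^1}) and $d_{\mu(n)}=d_n^2$ (Eq.~\eqref{eq:quantum_dim_mu}). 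Your proposed route through an $O$-shaped annulus and Eq.~\eqref{eq:qd_parton_composite} does not land here: an $O$-annulus has no mechanism to ``pin the $U$-side to vacuum,'' and if you fill that side you no longer have an $O$-annulus at all --- you have precisely the $\mathbb{U}$-related region the paper uses. The snake-sector facts you invoke are not the right tool; the $\mu=\varphi\circ\eta_N$ embedding is. For the second equality $\sum_a N_a^1 d_a$, the paper uses the other CMI form $I(A:C\vert D)_\sigma$ and merges into the Fig.~\ref{anyon_to_wall_3_basic}(a) region with $\alpha=1$; your sketch is close to this but should be made precise in the same way.

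For Eq.~\eqref{eq:TEE_2nd_subsystem} there is a second gap: you propose to equate individual entropies of $\sigma$ on deformation-related regions via the isomorphism theorem, but Theorem~\ref{thm:isomorphism} preserves only entropy \emph{differences} between two elements of the same information convex set, not absolute entropies of the reference state on topologically equivalent regions. The paper instead establishes the CMI rewriting by purification (complementary-entropy) relations combined with the extended axioms, and then obtains the Levin--Wen form in Fig.~\ref{fig:S_P_to_E}(b) by running the same merged-state computation on that partition and matching the common value $2\ln\calD_N$.
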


\begin{figure}[h]
	\centering
\includegraphics[scale=1]{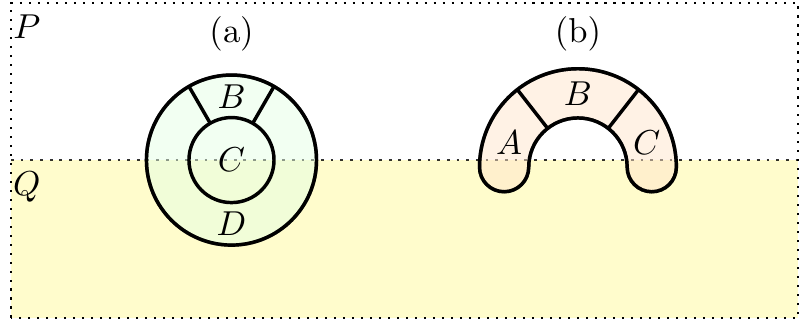}
\caption{The total quantum dimension of $N$-type parton sectors $\calD_N$ shows up in the ground state entanglement entropy for both of these partitions. (a) $S_{\text{topo}, N}=(S_{BC} + S_{CD} - S_C - S_D)_{\sigma}$. (b) $S_{\text{topo}, N} = I(A:C|B)_{\sigma}$. This is a domain wall version of the Levin-Wen partition~\cite{Levin2006}.}
\label{fig:S_P_to_E}
\end{figure}

\begin{figure}[h]
  \centering
\includegraphics[scale=1]{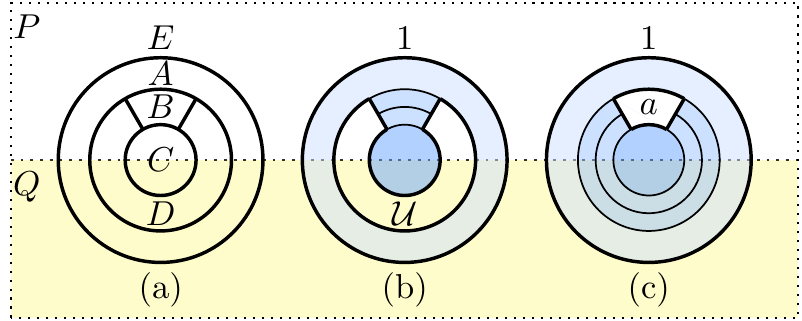}
	\caption{(a) Subsystems $A$, $B$, $C$, $D$ and $E$. (b) The merging process that generates $\tau_{ABC}= \sigma_{AB}\merge \sigma_{BC}$, a state involving different choices of $\mathcal{U}$. (c) The merging process that generates $\lambda_{ACD}= \sigma_{AD}\merge \sigma_{CD}$, a state involving different choices of $a$. }
	\label{fig:Merging_for_TEE}
\end{figure}

\begin{proof}
First, we observe that the entropy combination in Eq.~(\ref{eq:def_S_P_to_E}) can be rewritten as a conditional mutual information. For the  partitions in Fig.~\ref{fig:Merging_for_TEE}(a),
\begin{eqnarray}
 S_{\text{topo}, N} &=& I(A:C\vert B)_{\sigma}, \label{eq:TEE_mutual_1}\\ 
 S_{\text{topo}, N} &=& I(A:C\vert D)_{\sigma}, \label{eq:TEE_mutual_2}
\end{eqnarray}
To derive Eq.~(\ref{eq:TEE_mutual_1}), we use the fact that $S_{AB}= S_{CD}+S_E$ and $S_{ABC}= S_D + S_E$ for the reference state.\footnote{If the reference state is pure, $E$ refers to the complement of $ABCD$. If the reference state is a mixed state, we purify the reference state and let $E$ include the purifying system.} These relations follow from the domain wall version of {\bf A0} and the fact that we could deform the regions with other axioms and SSA. A similar derivation applies to Eq.~(\ref{eq:TEE_mutual_2}).

Let $\tau_{ABC} = \sigma_{AB} \merge \sigma_{BC}$.  It follows that
\begin{equation}
\begin{aligned}
 S_{\text{topo}, N} &= I(A:C\vert B)_{\sigma}     \\   
           &= S(\tau_{ABC}) - S(\sigma_{ABC})\\
           &= \ln (\sum_{\mathcal{U}\in \calC_{\mathbb{U}}} N_{\mathcal{U}}^1 d_{\mathcal{U}})\\
           &= \ln (\sum_{n\in\calC_N} d_{\mu(n)})\\
           &= \ln (\sum_{n \in \calC_N} d_n^2)\\
           &= 2\ln \calD_N. 
\end{aligned}\label{TEE_calculation_ABC}
\end{equation}
The first line is Eq.~(\ref{eq:TEE_mutual_1}). The second line follows from the fact that $\tau_{ABC}$ and $\sigma_{ABC}$ have identical reduced density matrices over $AB$ and $BC$ and that $I(A:C\vert B)_{\tau}=0$. In the third line, we have computed the maximum entropy over the set of density matrices with a sector $1\in \mathcal{C}_O$ and subtracted it from the entanglement entropy of the reference state. [See Fig.~\ref{fig:Merging_for_TEE}(b).] The fourth and the fifth line follows from Eqs.~(\ref{eq:N_U^1}) and (\ref{eq:quantum_dim_mu}), respectively. The last line follows from the definition of $\calD_N$.

For the second half of the main claim, let $\lambda_{ACD} =\sigma_{AD} \merge \sigma_{CD}$.  It follows that
\begin{equation}
\begin{aligned}
 S_{\text{topo}, N} &= I(A:C\vert D)_{\sigma} \\   
&= S(\lambda_{ACD}) - S(\sigma_{ACD})\\
&= \ln (\sum_{a} N_a^1 d_a)
\end{aligned} \label{TEE_calculation_ACD}
\end{equation}
The second line follows from the fact that $\lambda_{ACD}$ and $\sigma_{ACD}$ have identical reduced density matrices over $AD$ and $CD$ and that $I(A:C\vert D)_{\lambda}=0$. In the third line, we computed the maximum entropy over the set of density matrices with a sector $1\in \mathcal{C}_O$ and subtracted it from the entanglement entropy of the reference state. [See Fig.~\ref{fig:Merging_for_TEE}(c).]

From Eqs.~(\ref{TEE_calculation_ABC}) and (\ref{TEE_calculation_ACD}) we conclude Eq.~(\ref{eq:S_P_to_E}). 
For the subsystem choice in Fig.~\ref{fig:S_P_to_E}(b), the mutual information is $I(A:C\vert B)_{\sigma}= 2\ln \calD_N$, which justifies Eq.~(\ref{eq:TEE_2nd_subsystem}).
\end{proof}

Similarly, we can define an analogous quantity $S_{\text{topo}, U}$ by considering a set of subsystems that are mirror images of the aforementioned subsystems along the domain wall. 

Let us emphasize that the pair of domain wall topological entanglement entropies ($S_{\text{topo}, N}$ and $S_{\text{topo}, U}$) contains genuine data about the gapped domain wall. These quantities are not completely determined by the bulk data that defines the phase $P$ and $Q$. Physically, there are two ways to interpret $S_{\text{topo}, N}$. An interpretation of $S_{\text{topo}, N}= \ln\left(\sum_{a\in \mathcal{C}_P} N_a^1 d_a \right)$ is that $S_{\text{topo}, N}$ measures the total amount of anyon condensation from $P$ to the gapped domain wall. Alternatively, one can look at $S_{\text{topo}, N} = \ln \left( \sum_{n\in \mathcal{C}_N} d_n^2\right)$ and say that it measures the total quantum dimension of the $N$-type parton sectors. Both are valid interpretation of the same result.

Another interesting point is that, by definition, $S_{\text{topo}, N}$ is the amount by which the axiom {\bf A1} breaks down in the presence of a gapped domain wall. What is interesting is that the axiom {\bf A1} cannot be broken in an arbitrary way. Instead, there has to be a minimal ``gap,'' which is bounded from below by $\ln 2$.

In fact, we can consider another quantity $S_{\text{topo}, O}$. Unlike the previous ones, $S_{\text{topo},O}$ characterizes the total amount of condensation of anyon pair $a\in \mathcal{C}_P$ and $x \in \mathcal{C}_Q$ to the gapped domain wall.  Let us define $S_{\text{topo}, O}$ as the entropy combination on the reference state
\begin{equation}
S_{\text{topo}, O}:= (S_{BC} +S_{CD}- S_B -S_D)_{\sigma},    \label{eq:S_PQ_to_E}
\end{equation}
where $B$, $C,$ and $D$ are shown in Fig.~\ref{S_PQ_to_E}(a). This quantity is completely determined by the bulk data, unlike $S_{\text{topo}, N}$ and $S_{\text{topo}, U}$; see Proposition~\ref{prop:232}.

\begin{Proposition}
\label{prop:232}
	\begin{equation}
	\exp \left(S_{\text{topo}, O} \right) = \calD_O^2= \sum_{a\in\calC_P,x\in \calC_Q} N_{ax}^1 d_a d_x. \label{eq:S_PQ_identities}
	\end{equation}
	Moreover,
	\begin{equation}
	\exp \left(S_{\text{topo}, O} \right) = \sqrt{\sum_{a\in \calC_P} d_a^2}\cdot \sqrt{\sum_{x\in \calC_Q} d_x^2} \label{eq:S_PQ_bulk_expression}
	\end{equation}
	and
	\begin{equation}
	S_{\text{topo}, O}= I(A:C\vert B)_{\sigma} \label{eq:S_PQ_CMI}
	\end{equation}
	for the $A$, $B$, and $C$ in Fig.~\ref{S_PQ_to_E}(b).
\end{Proposition}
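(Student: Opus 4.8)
The plan is to transcribe the proof of Proposition~\ref{Prop:S_P_to_E} essentially verbatim, only replacing the subsystem geometry of Fig.~\ref{fig:Merging_for_TEE} by the one that straddles the domain wall in Fig.~\ref{S_PQ_to_E}. First I would refine Fig.~\ref{S_PQ_to_E}(a) into a partition $A,B,C,D,E$ (with $E$ the purifying complement of $ABCD$) and rewrite $S_{\text{topo},O}=(S_{BC}+S_{CD}-S_B-S_D)_\sigma$ as a conditional mutual information. Using the domain-wall form of axiom \textbf{A0} on both the $P$-side and the $Q$-side, together with the remaining axioms and SSA to justify the region deformations, one gets $S_{AB}=S_{CD}+S_E$ and $S_{ABC}=S_D+S_E$ for the reference state, whence $S_{\text{topo},O}=I(A:C\vert B)_\sigma=I(A:C\vert D)_\sigma$, exactly as in Eqs.~\eqref{eq:TEE_mutual_1}--\eqref{eq:TEE_mutual_2}. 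Running the same manipulation on the Levin--Wen-type partition of Fig.~\ref{S_PQ_to_E}(b) then produces Eq.~\eqref{eq:S_PQ_CMI}; this part is routine.

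Next I would evaluate the conditional mutual information in two ways, in direct analogy with Eqs.~\eqref{TEE_calculation_ABC} and \eqref{TEE_calculation_ACD}. (i) Put $\tau_{ABC}=\sigma_{AB}\merge\sigma_{BC}$, the overlap being chosen thick enough that Theorem~\ref{thm:merging_info_convex_set} applies; then $I(A:C\vert B)_\tau=0$ and $\tau$ restricts to $\sigma$ on $AB$ and $BC$, so $S_{\text{topo},O}=S(\tau_{ABC})-S(\sigma_{ABC})$. Here $ABC$ is an annulus on the domain wall and $\tau_{ABC}$ is the maximum-entropy element of $\Sigma(ABC)$ compatible with the reference state on the disk-like pieces $AB,BC$ (which imposes nothing beyond membership in $\Sigma(ABC)$); since $\Sigma(ABC)=\{\bigoplus_{\alpha\in\calC_O}p_\alpha\rho^\alpha_{ABC}\}$ with $S(\rho^\alpha_{ABC})=S(\sigma_{ABC})+2\ln d_\alpha$, the free-energy maximisation used in Section~\ref{sec:intro_qd_fm} gives $S(\tau_{ABC})-S(\sigma_{ABC})=\ln\sum_\alpha d_\alpha^2=\ln\calD_O^2$. (ii) Put $\lambda_{ACD}=\sigma_{AD}\merge\sigma_{CD}$; by the same reasoning $S_{\text{topo},O}=S(\lambda_{ACD})-S(\sigma_{ACD})$, but now $ACD$ carries a bulk annulus in $P$ and a bulk annulus in $Q$ joined across a domain-wall annulus pinned to $1\in\calC_O$ — the fusion space of Fig.~\ref{anyon_to_wall_3_basic}(c) with $\alpha=1$. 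Decomposing the maximisation over $(a,x)\in\calC_P\times\calC_Q$, with the factorization of extreme points giving the offset $\ln(N^1_{ax}d_ad_x)$ in the maximand for each $(a,x)$ (exactly as around Eq.~\eqref{eq:2-hole_quantum_dim}), one obtains $\ln\sum_{a,x}N^1_{ax}d_ad_x$. Comparing (i) and (ii) yields Eq.~\eqref{eq:S_PQ_identities}.

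It remains to prove Eq.~\eqref{eq:S_PQ_bulk_expression}, i.e.\ $\sum_{a,x}N^1_{ax}d_ad_x=\calD_P\calD_Q$ with $\calD_P:=\sqrt{\sum_a d_a^2}$, $\calD_Q:=\sqrt{\sum_x d_x^2}$. Combinatorially, one may factor $N^1_{ax}=\sum_{\beta\in\calC_O}N_a^\beta N_x^{\bar\beta}$ (from $N^\alpha_{ax}=\sum_{\beta,\gamma}N_a^\beta N_x^\gamma N^\alpha_{\beta\gamma}$ and $N^1_{\beta\gamma}=\delta_{\gamma,\bar\beta}$), reducing the sum to $\sum_\beta(\sum_aN_a^\beta d_a)(\sum_xN_x^{\bar\beta}d_x)$; then, using $\sum_\beta(\sum_aN_a^\beta d_a)d_\beta=\sum_ad_a^2=\calD_P^2$, a proof that $\sum_aN_a^\beta d_a$ is proportional to $d_\beta$ forces $\sum_aN_a^\beta d_a=(\calD_P^2/\calD_O^2)d_\beta$ and collapses the sum to $\calD_P^2\calD_Q^2/\calD_O^2$, which with the already-proven $\calD_O^2=\sum_{a,x}N^1_{ax}d_ad_x$ gives $\calD_O^4=\calD_P^2\calD_Q^2$. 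I would instead prefer a \emph{third} merging: choose a partition whose merged state factorizes into a $P$-piece and a $Q$-piece (merge first along a disk interior to $P$, then along a disk interior to $Q$), so the entropy difference is manifestly the sum $\ln\calD_P+\ln\calD_Q$ — the domain-wall reflection of the fact that a gapped boundary's topological entanglement entropy is ``half'' of the bulk one.

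I expect this factorization to be the one genuinely new step beyond Proposition~\ref{Prop:S_P_to_E} and hence the main obstacle: one must show that pinning the domain-wall sector to $1$ decouples the $P$-annulus from the $Q$-annulus in the merged state. This should follow from the fact that $\sigma$ restricted to the disk on the wall that fills the ``slot'' between the two annuli is the reference state, together with the isomorphism and factorization theorems — the same mechanism that yielded $d_{\eta_N(n)}=d_n^2$ in Section~\ref{sec:bbN_bbU} (around Eq.~\eqref{eq:eta_relation}). Once Eq.~\eqref{eq:S_PQ_bulk_expression} is established, Eq.~\eqref{eq:S_PQ_CMI} for Fig.~\ref{S_PQ_to_E}(b) is immediate from the conditional-mutual-information rewriting of the first paragraph, completing Proposition~\ref{prop:232}.
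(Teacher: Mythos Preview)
Your derivation of Eq.~\eqref{eq:S_PQ_identities} and Eq.~\eqref{eq:S_PQ_CMI} is correct and matches the paper: both follow by transcribing the proof of Proposition~\ref{Prop:S_P_to_E} to the straddling geometry, with $\tau_{ABC}=\sigma_{AB}\merge\sigma_{BC}$ giving $\ln\calD_O^2$ and $\lambda_{ACD}=\sigma_{AD}\merge\sigma_{CD}$ giving $\ln\sum_{a,x}N_{ax}^1 d_a d_x$.

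For Eq.~\eqref{eq:S_PQ_bulk_expression} your route diverges from the paper's, and the step you flag as the main obstacle is precisely where the paper does something different. Neither of your two proposals is complete: the combinatorial one hinges on $\sum_a N_a^\beta d_a\propto d_\beta$, which is true but not established anywhere in the paper's framework, and the ``third merging'' with $\alpha=1$ pinned requires a $P$--$Q$ decoupling of the merged state that you have not shown how to obtain. The paper sidesteps both by \emph{dropping} the $\alpha=1$ constraint: it considers the unconstrained maximum-entropy element $\widetilde\rho_F\in\Sigma(F)$ on $F:=ACD$ and computes $S(\widetilde\rho_F)-S(\sigma_F)$ in two ways. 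Direct maximisation over all triples $(a,x,\alpha)$ gives $\ln\sum_{a,x,\alpha}N_{ax}^\alpha d_a d_x d_\alpha$, which collapses to $\ln\bigl(\sum_a d_a^2\bigr)+\ln\bigl(\sum_x d_x^2\bigr)$ via the already-established identity $\sum_\alpha N_{ax}^\alpha d_\alpha=d_a d_x$ of Eq.~\eqref{eq:sum_alpha}. On the other hand, writing $\widetilde\rho_F=\widetilde\rho_{AD}\merge\sigma_{CD}$ with $\widetilde\rho_{AD}$ the maximum-entropy element of the domain-wall annulus $\Sigma(AD)$, and comparing to $\lambda_F=\sigma_{AD}\merge\sigma_{CD}$, one gets $S(\widetilde\rho_F)-S(\lambda_F)=\ln\calD_O^2$ while $S(\lambda_F)-S(\sigma_F)=\ln\sum_{a,x}N_{ax}^1 d_a d_x$, the latter being exactly your step~(ii). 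Equating the two computations yields $\calD_O^4=\bigl(\sum_a d_a^2\bigr)\bigl(\sum_x d_x^2\bigr)$. The point is that Eq.~\eqref{eq:sum_alpha} already encodes the $P$--$Q$ factorisation you were looking for, so no decoupling argument at the level of states is needed.
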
 

\begin{figure}[h]
	\centering
\includegraphics[scale=1]{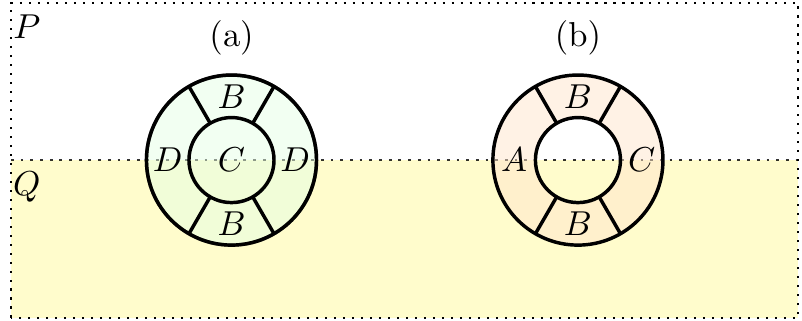}
	\caption{ (a) Subsystems appearing in the definition of $S_{\text{topo}, O}$.
	(b) A Levin-Wen type partition on the domain wall.}
	\label{S_PQ_to_E}
\end{figure}

\begin{figure}[h]
	\centering
\includegraphics[scale=1]{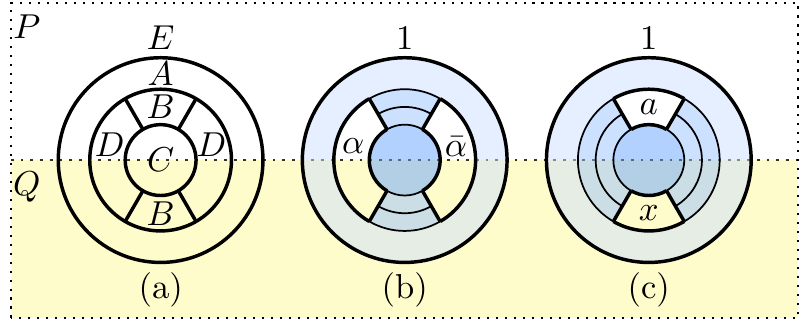}
	\caption{(a) Subsystems $A$, $B$, $C$, $D$ and $E$. (b) The merging process that generates the state $\sigma_{AB}\merge \sigma_{BC}$. Different choices of $\alpha\in \calC_O$ exist in the state. (c) The merging process that generates the state $\sigma_{AD}\merge \sigma_{CD}$. Different choices of $a$ and $x$ exist in the state.}
	\label{fig:Merging_for_TEE2}
\end{figure}

\begin{proof}
	At a technical level, the proof of Eqs.~(\ref{eq:S_PQ_identities}) and (\ref{eq:S_PQ_CMI}) are similar to the proof of Proposition~\ref{Prop:S_P_to_E}. The only difference is the choice of subsystems, shown in Fig.~\ref{fig:Merging_for_TEE2}.

    The derivation of Eq.~(\ref{eq:S_PQ_bulk_expression}) involves the comparison of two calculations of the entropy of the maximum-entropy state in  $\Sigma(F)$, where $F=ACD$ for the partition in Fig.~\ref{fig:Merging_for_TEE2}(a).  We denote the maximum-entropy state as $\widetilde{\rho}_F$. 
    
    One way to calculate the entropy difference $S(\widetilde{\rho}_F) - S(\sigma_F)$ is to solve a maximization problem. This involves writing $\widetilde{\rho}_F$ as a convex combination involving different choices of $\alpha$, $a$, $x$ sectors. By applying the entropy-maximization procedure, which has been repeatedly used in Section~\ref{sec:composite_sectors}, we find
    \begin{equation}
    \begin{aligned}
        S( \widetilde{\rho}_F )- S(\sigma_F) &= \ln (\sum_{a\in \calC_P,x\in\calC_Q} \sum_{\alpha\in \calC_O} N_{ax}^{\alpha} d_a d_x d_{\alpha})\\
                                            &=\ln (\sum_{a\in \mathcal{C}_P} d_a^2) + \ln (\sum_{x\in \mathcal{C}_Q} d_x^2). \label{eq:max-min_1}
    \end{aligned}
    \end{equation}
    In the second line, we have applied Eq.~\eqref{eq:sum_alpha} to simplify the sum.
    
    The second way to calculate the entropy difference $S(\widetilde{\rho}_F) - S(\sigma_F)$ is to calculate the entropy difference on subsystems. For this purpose, we use the merging technique. Consider a third state $\lambda_F := \sigma_{AD}\merge \sigma_{CD}$  obtained by the merging process in Fig.~\ref{fig:Merging_for_TEE2}(c).  We further observe that the state $\widetilde{\rho}_F$ is a merged state, namely $\widetilde{\rho}_F= \widetilde{\rho}_{AD}\merge \sigma_{CD}$. Here, $\widetilde{\rho}_{AD}$ is the maximum-entropy state of $\Sigma(AD)$. Note that the subsystem involved in this merging process is again Fig.~\ref{fig:Merging_for_TEE2}(c); the only difference lies in the difference of the state on $\Sigma(AD)$.
    Instead of computing $S(\widetilde{\rho}_F) - S(\sigma_F)$ directly, we can compute the following two quantities:
    \begin{equation}
        S(\widetilde{\rho}_F) - S(\lambda_F) \quad \text{ and }\quad S(\lambda_F)-S(\sigma_F).\label{eq:709}
    \end{equation}
    Because the merging with the disk-like region $CD$ does not change the entropy difference, the first quantity in Eq.~\eqref{eq:709} equals to $\ln \left(\sum_{\alpha \in \mathcal{C}_O} d_{\alpha}^2 \right)$. The value of the second quantity in Eq.~\eqref{eq:709} can be computed by the entropy-maximization procedure. The end result is $\ln (\sum_{a\in \mathcal{C}_P,x\in\mathcal{C}_Q} N_{ax}^1 d_a d_x)$, leading to the following result:
    \begin{equation}
    S(\widetilde{\rho}_F)- S(\sigma_F)= \ln (\sum_{\alpha \in \mathcal{C}_O} d_{\alpha}^2) + \ln (\sum_{a\in \mathcal{C}_P,x\in\mathcal{C}_Q} N_{ax}^1 d_a d_x). \label{eq:max-min_2}
    \end{equation}
    By comparing Eqs.~(\ref{eq:max-min_1}) and (\ref{eq:max-min_2}) and then using Eq.~\eqref{eq:S_PQ_identities}, one can verify Eq.~(\ref{eq:S_PQ_bulk_expression}).
\end{proof}

\section{String operators}\label{sec:string}

In the bulk, an anyon and its antiparticle can be created by a string-like unitary operator; see the operator $U^{(a,\bar{a})}$ in Fig.~\ref{fig:bulk_string}. In fact, such a unitary operator can be deformed freely. The deformability of these operators follow from axiom \textbf{A0} and \textbf{A1} \cite{SKK2019}, which were briefly discussed in Section~\ref{sec:fusion_from_entanglement}. 

\begin{figure}[h]
	\centering
\includegraphics[scale=1]{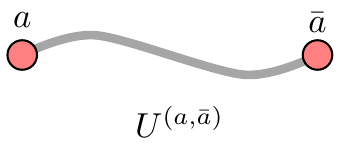}
	\caption{A  unitary string operator $U^{(a,\bar{a})}$ in the bulk. It creates an anyon $a$ and its antiparticle $\bar{a}$.}\label{fig:bulk_string}
\end{figure}

Similarly, we can establish the existence of string-like operators in the vicinity of gapped domain walls. The underlying logic is similar to the discussion in Appendix H of Ref.~\cite{SKK2019}. We will explain how that analysis can be applied to our setup, focusing on the physical meaning. A pair of domain wall excitations $\alpha\in \calC_O$ and $\bar{\alpha} \in \calC_O$ can be created by a string-like unitary operator $U^{(\alpha,\bar{\alpha})}$; see Fig.~\ref{fig:string_wall}(a). Note that, in general, the support of the string stretches into both sides of the bulk. However, the support of this operator can be restricted further to one side if $\alpha$ is either in $\calC_O^{[n,1]}$ or $\calC_O^{[1,u]}$. In the former case, the string can be restricted to $P$; in the latter case, the string can be restricted to $Q$. See Figs.~\ref{fig:string_wall}(b) and \ref{fig:string_wall}(c).

\begin{figure}[h]
	\centering
\includegraphics[scale=1]{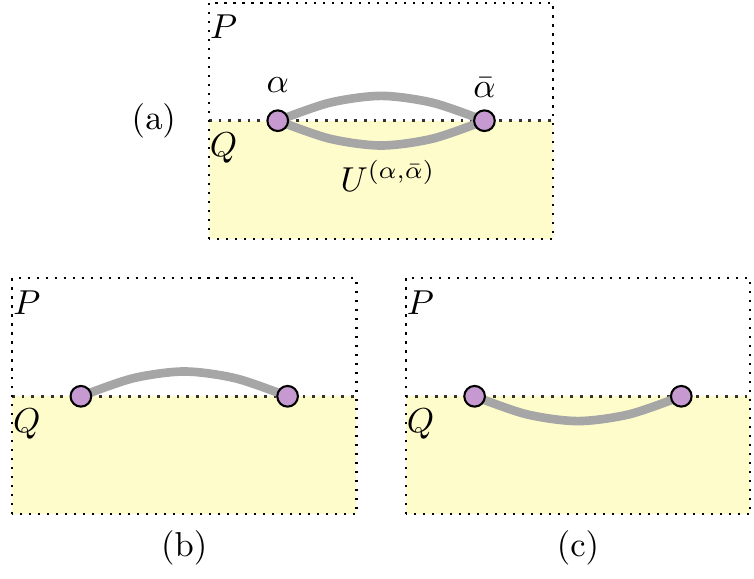}
	\caption{ (a) A unitary string operator $U^{(\alpha,\bar{\alpha})}$ near the gapped domain wall. It can create a general domain wall excitation pair $\alpha$ and $\bar{\alpha}$, where $\alpha\in \calC_O$. (b) The support of $U^{(\alpha,\bar{\alpha})}$ can be restricted to $P$, when $\alpha \in \cup_{n\in \calC_N} \calC_O^{[n,1]}$.  (c) The support of $U^{(\alpha,\bar{\alpha})}$ can be restricted to $Q$, when $\alpha \in \cup_{u\in \calC_U} \calC_O^{[1,u]}$. }
	\label{fig:string_wall}
\end{figure}

\begin{figure}
	\centering
\includegraphics[scale=1]{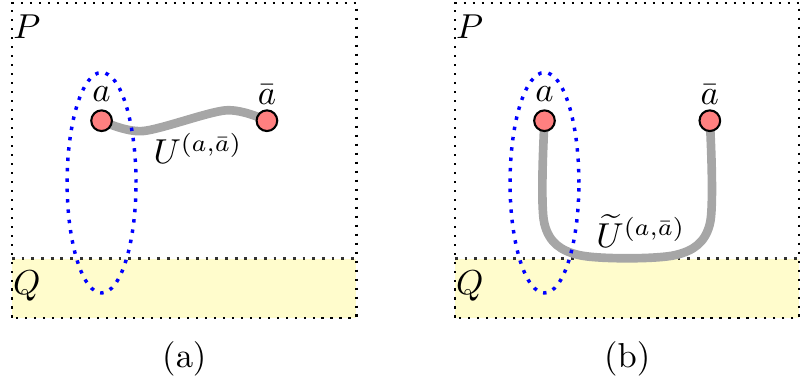}
	\caption{ Detecting the domain wall sector carried by an anyon $a$ with a measurement on the dotted circle.
		In the two figures, the pair of anyons $a$ and $\bar{a}$ are created in two ways, which are generically inequivalent. (a) The string $U^{(a,\bar{a})}$ is in the bulk. (b) The string $\widetilde{U}^{(a,\bar{a})}$ touches the domain wall.} \label{fig:deconfined}
\end{figure}

The domain wall superselection sectors in $\calC_O$ are defined in the vicinity of the domain wall. However, we should not interpret them as excitations confined to the domain wall. As illustrated in Fig.~\ref{fig:deconfined}, anyons in the bulk can carry a domain wall sector as well. That sector makes sense if we perform a measurement on a region that is (i) anchored on the domain wall and (ii) surrounding the anyon. 

Once we create an anyon with a superselection sector $a$ [using a string operator $U^{(a,\bar{a})}$ shown in Fig.~\ref{fig:deconfined}(a)], the domain wall sector for that anyon will be generally indefinite. The probability of finding a sector $\alpha \in \calC_O^{[n,1]}$ on the annulus around the dotted circle in Fig.~\ref{fig:deconfined}(a) can be computed. The result is:
\begin{equation}
P_{(a\to \alpha)} = \frac{N_{a}^{\alpha} d_{\alpha}}{d_a}.
\end{equation}

On the other hand, if we apply a string operator that can touch the domain wall, e.g., the string operator $\widetilde{U}^{(a,\bar{a})}$ shown in Fig.~\ref{fig:deconfined}(b), the anyon with a superselection sector $a$ can carry any sector $\alpha \in \calC_O^{[n,1]}$ that has $N_{a}^{\alpha}\ge 1$. The specific sector $\alpha$ depends on the choice of the string operator. This sector can be detected on the annulus around the dotted circle in Fig.~\ref{fig:deconfined}(b).

\begin{figure}
	\centering
\includegraphics[scale=1]{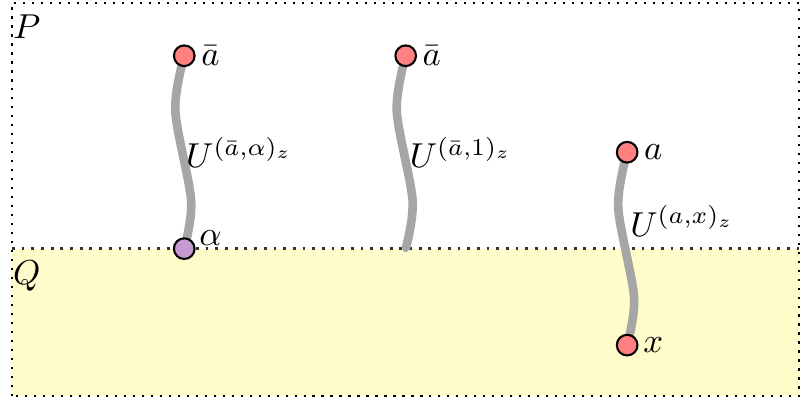}
	\caption{Three sets of string operators $\{U^{(\bar{a},\alpha)_z} \}$, $\{U^{(\bar{a}, 1)_z} \}$ and $U^{(a,x)_z} $. The label $z$ parametrize the corresponding fusion space.
		A string operator $U^{(\bar{a},\alpha)_z}$ that creates $\bar{a}$ and $\alpha$ exists when $N_a^{\alpha}\ge 1$.  A string operator $U^{(\bar{a}, 1)_z} $  that creates $\bar{a}$ exists when $N_a^{1}\ge 1$. A string operator $U^{(a,x)_z} $ that creates $a$ and $x$ exists when $N_{ax}^1\ge 1$.}
	\label{fig:string_fusion_to_wall}
\end{figure}

We can also identify the set of string operators that connect a bulk excitation and a domain wall excitation; see Fig.~\ref{fig:string_fusion_to_wall}. Specifically, there exists a set of string operators $\{ U^{(\bar{a},\alpha)_z} \}$ that creates $\bar{a} \in \mathcal{C}_P$ and $\alpha \in \mathcal{C}_O$ if $N_a^{\alpha}\ge 1$. The label $z$ is introduced to parametrize the states in a $N_a^{\alpha}$ dimensional Hilbert space. For the special case $N_{a}^{\alpha=1}\ge 1$, it is possible to choose the string operators such that no excitation is created on the domain wall. For the set of string operators $\{ U^{(\bar{a},1 )_z} \}$ in Fig.~\ref{fig:string_fusion_to_wall}, the strings can be deformed freely, including the endpoint on the domain wall; note that the string cannot detach from the domain wall. 

Similarly, there exists a set of string operators $\{ U^{(a, x)_z} \}$ for any pair of anyons $a\in \calC_P$ and $x\in\calC_Q$ that satisfies $N_{ax}^1\ge 1$; see Fig.~\ref{fig:string_fusion_to_wall}. Here, $z$ parametrize the state in a $N_{ax}^1$ dimensional Hilbert space. Such a string can simultaneously create an anyon $a$ on the $P$ side and an anyon $x$ on the $Q$ side. This type of string is related to the phenomena of tunneling an anyon through the wall. In general, although all anyons can fuse onto the domain wall, only a subset can condense on (or tunnel through) the domain wall.

\section{Discussion}
\label{sec:discussion}

In this paper, we have observed a remarkably rich structure of many-body quantum entanglement that arises from our simple assumptions (Fig.~\ref{fig:axioms_all}). The notion of superselection sectors, fusion spaces, and fusion multiplicities in the vicinity of a gapped domain wall were all deduced from these assumptions. Moreover, we derived a set of nontrivial identities relating these objects. These results extend the bulk version of the entanglement bootstrap method~\cite{SKK2019} to a broader physical context.

While some of these results are known, others are new. In particular, we have identified a new type of arguably fundamental superselection sector called the parton sector. While prior studies did not necessarily exclude the possibility of such sectors, parton sectors provide more fine-grained information about the superselection sectors on the gapped domain wall. This is because they subdivide the known superselection sectors for the point-like excitations on the domain wall~\cite{KitaevKong2012,Kong2014}.

We could also derive an expression for the domain wall topological entanglement entropy, which may prove useful for detecting the presence of nontrivial domain walls numerically. A more in-depth discussion of this will appear in our companion paper~\cite{EntanglementBootstrap_short}.

The main philosophy behind these derivations was simple. We assume that the subleading contribution to the entanglement entropy obeys a set of sensible constraints but do not assume anything more than that. What is surprising is that merely specifying these rules constrain the subleading terms so strongly that we can derive a large number of nontrivial constraints. Moreover, these constraints are strong enough to imply that fundamental objects such as the quantum dimensions cannot have an arbitrary value; for example, the quantum dimensions of the parton sectors cannot possess value in the range of $(1, \sqrt{2})$. In other words, these values are ``quantized.'' 

For deriving the fusion rules, the essential observation seems to be the following. Given any sufficiently thick subsystem $\Omega$, an extreme point of $\Sigma(\Omega)$  always carries a well-defined set of sectors on the thickened boundary of $\Omega$. Furthermore, once the sectors in the thickened boundary of $\Omega$ are fixed, the remaining degrees of freedom form a convex subset of $\Sigma(\Omega)$. This subset is isomorphic to the state space of some finite-dimensional Hilbert space. As such, it makes sense to refer to them as \emph{fusion spaces}. One can relate the superselection sectors defined over different subsystems by either smoothly deforming one subsystem to another, merging the subsystems, or reducing to a smaller subsystem.

The key advantage of this approach is that the basic emergent laws that govern the low-energy excitations of topologically ordered systems are derived from our principle that has been elucidated in Fig.~\ref{fig:axioms_all}. This led us to the discovery of parton sectors, which would have been difficult to envision otherwise.

It will be good to understand the mathematical framework that can accurately describe our findings. This framework should naturally explain, among other things, the parton sectors, all kinds of composite sectors as well as the quasi-fusion rules of the parton sectors. The number of snake sectors is unbounded. It is not clear to us how to determine them from a finite set of data in general. In ordinary fusion rule, the state space in which two sectors fuse to another sector is isomorphic to the state space of some finite-dimensional Hilbert space. For the quasi-fusion of parton sectors, the state space in which two sectors ``fuse'' to another parton can be a \emph{convex hull} of more than two state spaces, each of which is isomorphic to the state space of some finite-dimensional Hilbert space. In other words, there is a piece of information about the ``composite charge'' that remains unspecified. Would category theory continue to be the right framework to describe these properties? We do not know the answer to this question. 

We deduced our results from the axioms in Fig.~\ref{fig:axioms_all}. Because the axioms are expected to hold on very general classes of physical systems, including the gapped domain wall between 2D chiral phases, our results are expected to hold with the same generality. Nonetheless, readers may wonder whether nontrivial parton sectors exist in known solvable models for gapped domain walls~\cite{Beigi2010,KitaevKong2012}. Our theory predicts that they do because, one can verify our axioms and the fact that $S_{\textrm{topo}, N}$ or $S_{\textrm{topo}, U}$ is nonzero for some of these models.\footnote{In particular, we can consider the models described in Section~6 of Ref.~\cite{Beigi2010}, which depends on a finite group $U\subseteq G\times G'$ and a $2$-cocycle. If we let $U=\{ (k,k) : k\in K\}$, where $K$ is a subgroup of both $G$ and $G'$ and let the $2$-cocycle be that in the trivial class, then one can verify $S_{\textrm{topo}, N}= \ln \frac{\vert G\vert }{\vert K\vert}$ and $S_{\textrm{topo}, U}= \ln \frac{\vert G'\vert }{\vert K\vert}$.}

% Closely related physical setup.
For future work, closely related physical setups can be studied. A codimension-2 defect can separate two different gapped domain walls.  As a special case, if the two topologically ordered systems are identical 2D phases, the gapped domain wall may have endpoints. If, in addition, the domain walls are transparent,  the endpoints are isolated point-like regions that break condition {\bf A1}; these endpoints are topological defects~\cite{Bombin2010}. Our approach can be generalized to accommodate these physical situations. We also expect our approach to be generalized to the gapped domain walls between higher-dimensional topologically ordered systems.

% Mention future direction about braiding
Another direction to pursue is the ``braiding properties'' of these sectors. For example, what kind of anyons can disappear on a gapped domain wall? What are the consistency conditions the domain wall excitations have to satisfy that are beyond the fusion rules? If the two phases that lie on each side of the domain wall are both nontrivial, can we ``factorize'' the boundary version of the Verlinde formula~\cite{2019arXiv190108285S} further into simpler ones?
What are the necessary conditions the two phases have to satisfy for the gapped domain wall to exist? Can we show the chiral central charge of the two phases must match from merely the axioms in Fig.~\ref{fig:axioms_all}? Answers to these questions can be nontrivial. Progress in this direction may be made from a generalization of the method in Ref.~\cite{2020PhRvR...2b3132S}, which derives the mutual braiding statistics of anyons from the same set of axioms.

\section*{Acknowledgment}
We thank Peter Huston and especially Corey Jones for describing a few mathematical structures that the parton sectors cannot correspond to, in existing categorical frameworks.
IK’s work was supported by the Simons Foundation It from Qubit Collaboration and by the Australian Research Council via the Centre of Excellence in Engineered Quantum Systems (EQUS) project number CE170100009.
BS is supported by the National Science Foundation under Grant No. NSF DMR-1653769, University of California Laboratory Fees Research Program, grant LFR-20-653926, as well as the Simons Collaboration on Ultra-Quantum Matter, grant 651440 from the Simons Foundation.

\appendix

\section{Extensions of axioms}\label{appendix:extensions_of_axioms}
In this section, we explain why the axioms described in Fig.~\ref{fig:axioms_all} implies that the same set of conditions hold at an arbitrarily large scale. The proof of this statement is essentially identical to the proof of the analogous statement in Ref.~\cite{SKK2019}. These are straightforward consequences of the strong subadditivity of entropy~\cite{Lieb1973}.

In this appendix, we shall refer to the entropy conditions in Fig.~\ref{fig:axioms_all} in red color as bulk or domain wall version of {\bf A0}, and we refer to the entropy conditions in Fig.~\ref{fig:axioms_all} in green color as bulk and domain wall version of {\bf A1}. The fact that condition {\bf A0} and {\bf A1} hold on arbitrarily large length scales in the bulk is derived in Proposition 3.3 of Ref.~\cite{SKK2019}. Therefore, we shall focus on the extension of the domain wall version of condition {\bf A0} and {\bf A1}. The ideas behind these proofs are similar.

 \begin{figure}[h]
	\centering
\includegraphics[scale=1]{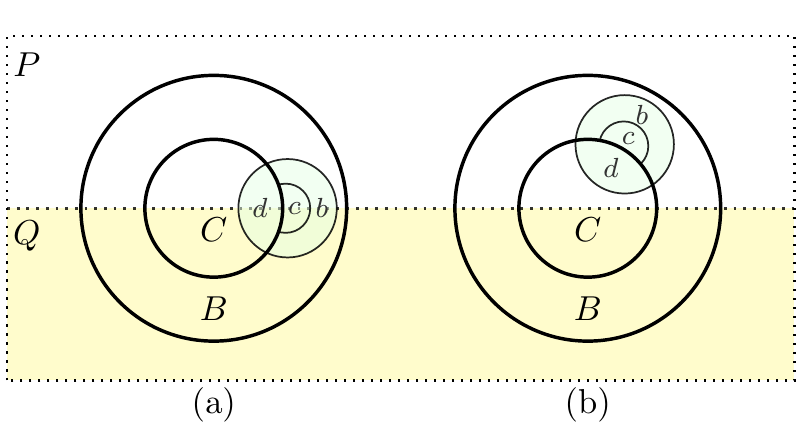}
	\caption{The extension of the domain wall version of {\bf A0}. In both figures, $d \subset C$ and $bc \subset B$. For (a), the green disk $bcd$ is a partition for the domain wall version of axiom {\bf A1}. For (b), the green disk $bcd$ is a partition for the bulk version of axiom {\bf A1}. }
	\label{fig_axiom_A0_enlarge}
\end{figure}

First, let us study how the domain wall version of axiom {\bf A0} can be extended. We want to show
\begin{equation}
    (S_{C} + S_{BC} -S_B)_{\sigma} = 0, \label{eq:temp111}
\end{equation}
for large $B$ and $C$ that is topologically equivalent to the red disk $BC$ in Fig.~\ref{fig:axioms_all}.

It suffices to consider two ways of deforming the subsystem. First, consider enlarging $B$ to $BB'\supset B$ while keeping $C$ fixed. We will see that 
\begin{equation}
    (S_{C} + S_{CBB'} - S_{BB'})_{\sigma} =0. \label{eq:enlarge_outer_B}
\end{equation}
To see why this condition holds, we note that
\begin{equation}
\begin{aligned}
   (S_{C} + S_{CBB'} - S_{BB'})_{\sigma} & \le (S_{C} + S_{CB} - S_{B})_{\sigma}\\
   &=0.
\end{aligned}
\end{equation}
The first line follows from SSA. In the second line, we applied the domain wall version of condition {\bf A0} on $BC$. On the other hand, SSA implies that $(S_{C} + S_{CBB'} - S_{BB'})_{\sigma} \ge 0 $. Thus, Eq.~(\ref{eq:enlarge_outer_B}) holds.

Secondly, consider deforming the boundary between $B$ and $C$ so that $C$ is enlarged while $BC$ as a whole remains unchanged; see Fig.~\ref{fig_axiom_A0_enlarge} for an illustration.  More precisely, we consider a disk in green color, divided into $b$, $c$, and $d$. Here, $c\subset B$, attached to the boundary between $B$ and $C$, is sufficiently small such that there are subsystems $d\subset C$ and $b \subset B$ that surrounds $c$. We have
\begin{equation}
\begin{aligned}
(S_{Cc} + S_{BC} - S_{B\setminus c})_{\sigma} & \le (S_{Cc} - S_{B\setminus c} + S_B - S_C)_{\sigma}\\
& \le (S_{dc} + S_{bc} - S_d - S_b)_{\sigma}\\
&=0.
\end{aligned}
\end{equation}
In the first line, we applied the domain wall version of the condition {\bf A0} for $BC$. In the second line, we used SSA. In the third line, we applied axiom {\bf A1} to the partition $bcd$ in Fig.~\ref{fig_axiom_A0_enlarge}(a), and applied the domain wall version of axiom {\bf A1} to the partition $bcd$ in Fig.~\ref{fig_axiom_A0_enlarge}(b).

This argument implies that we can expand $C$ into the bulk and also along the domain wall so long as we can choose an appropriate green disk to apply our axioms (versions of {\bf A1}).

Here are a few remarks. The green disk formed by the union of $b$, $c$ and $d$ has a radius at most $r$ so that our axioms can apply.  $B$ must be thick enough so that $b$ and $c$ together is a subset of $B$. Moreover, the boundary between $b$ and $d$ should not cross the domain wall, as we did not make any assumptions about the entanglement entropies of such subsystems. We emphasize that we do not attempt to exhaust all possible ways of enlarging $C$. Nevertheless, the deformations shown in Fig.~\ref{fig_axiom_A0_enlarge} provide at least one way of enlarging $C$ from a small-sized subsystem to an arbitrarily large one. Therefore, the deformations we have explained are enough to accomplish our proof.

Now, let us move onto the enlargement of the domain wall version of axiom {\bf A1}. Using SSA, one can again see that the axiom continues to hold when we expand $B$ or $D$ whilst fixing $C$. It remains to consider two cases: deforming the boundary between $B$ and $C$ while keeping both $BC$ and $D$ fixed and deforming the boundary between $C$ and $D$ while keeping both $CD$ and $B$ fixed. The underlying arguments are practically identical, so we just consider the first case and omit the argument for the second case. 

The idea is again to find green disks (of radius $r$) on which we can apply a version of {\bf A1}. As illustrated in Fig.~\ref{fig_axiom_A1_enlarge}, there are three (inequivalent) possibilities. By using SSA, we can show that it is possible to enlarge $C$ by adding the central area of the green disk while preserving the boundary version of entropy condition {\bf A1}. These types of deformations are all we need to deform a small $C$ to an arbitrarily large one, with a position of our choice.

One may wonder why we did not discuss the deformation of the boundary between $B$ and $D$. While we can include more partitions to achieve these, that is unnecessary. When we enlarge the subsystems, we can enlarge them in such a way that the boundary between $B$ and $D$ is the correct one. This is possible because, when we expand $B$ and $D$ while keeping $C$ fixed, the same axiom holds independent of how we expand them. This completes the proof of the extensions of axioms.

 \begin{figure}[h]
	\centering
\includegraphics[scale=1]{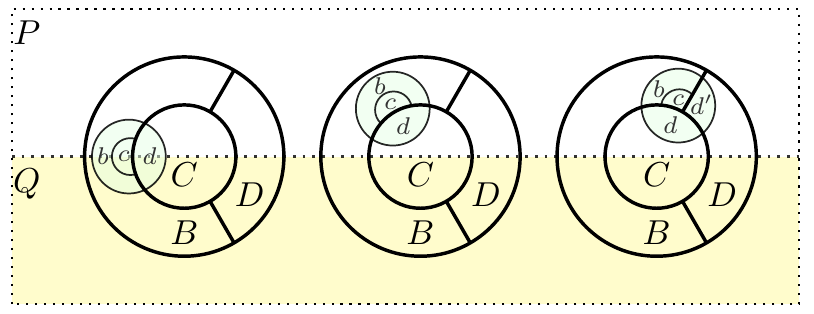}
	\caption{The extension of the domain wall version of axiom {\bf A1}. On each green disk, a version of axiom {\bf A1} is used to show that we can enlarge $C$ to include the central area ($c$) of the green disk.}
	\label{fig_axiom_A1_enlarge}
\end{figure}

\section{Isomorphism theorem}\label{appendix:isomorphism}

The proof of the isomorphism theorem (Theorem~\ref{thm:isomorphism}) is also very similar to its bulk analog in Ref.~\cite{SKK2019}. We briefly sketch the main idea, focusing on how the arguments of Ref.~\cite{SKK2019} can be applied to our setup. 

The main idea is to use versions of axiom {\bf A1} to smoothly deform the region by a sequence of ``small'' deformations called \emph{elementary step} of deformation. A finite sequence of elementary steps is a \emph{path} that connects a pair of regions that can be separated far apart. This way, we establish a well-defined notion of smooth deformation. The isomorphism theorem states that the information convex sets associated with a pair of regions are isomorphic if the pair can be connected by a path.

Let us consider the elementary steps illustrated in Fig.~\ref{fig:isomorphism_proof_sketch}. For illustration purpose, an annulus topology is shown. However, the argument applies more generally. Consider the information convex sets $\Sigma(AB)$ and $\Sigma(ABC)$ for the subsystem choice shown in either Fig.~\ref{fig:isomorphism_proof_sketch}(a) or Fig.~\ref{fig:isomorphism_proof_sketch}(b). Here, the deformation of region is $AB\leftrightharpoons ABC$.

 \begin{figure}[h]
	\centering
\includegraphics[scale=1]{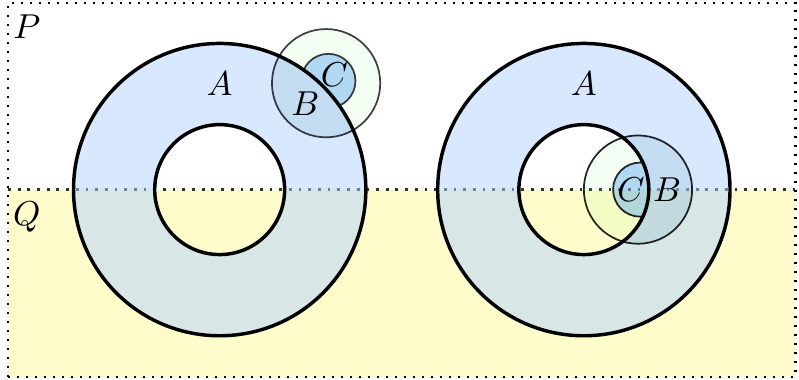}
	\caption{For the proof of the isomorphism theorem. For the partitions in this figure, $I(A:C\vert B)=0$ for any $\rho_{ABC}\in \Sigma(ABC)$. For (a), this fact follows from the bulk version of axiom {\bf A1}. For (b), this fact follows from the domain wall version of axiom {\bf A1}. }
	\label{fig:isomorphism_proof_sketch}
\end{figure}

Note that the usage of axiom {\bf A1} in this figure is similar to that in the proof of the extension of axioms (Appendix~\ref{appendix:extensions_of_axioms}). However, the proof of the isomorphism theorem turns out to be much more subtle and intricate. The subtlety is due to the fact that we consider the information convex set rather than the reference state. We will discuss this subtlety as we go.

We wish to show $\Sigma(AB)$ and $\Sigma(ABC)$ are isomorphic, denoted as $\Sigma(AB)\cong \Sigma(ABC)$, by which we mean the following.
\begin{enumerate}
	\item The partial trace $\Tr_C$ and the Petz map\footnote{
	The Petz map~\cite{Petz1987} is a quantum channel, which has an explicit expression $\calE_{B\to BC}^{\sigma}(X_{AB})= \sigma_{BC}^{\frac{1}{2}} \sigma_{B}^{-\frac{1}{2}} X_{AB} \sigma_{B}^{-\frac{1}{2}} \sigma_{BC}^{\frac{1}{2}}$ on the support of $\sigma_{BC}$.} $\calE_{B\to BC}^{\sigma}$ are maps between $\Sigma(AB)$ and $\Sigma(ABC)$:
	\begin{eqnarray}
	\Tr_C \rho_{ABC} \in \Sigma(AB),&&\quad \forall \rho_{ABC}\in \Sigma(ABC), \label{eq:iso_simple_1}\\
	\calE_{B\to BC}^{\sigma}(\rho_{AB}) \in \Sigma(ABC), &&\quad \forall \rho_{AB} \in \Sigma(AB).	\label{eq:iso_hard_1}
	\end{eqnarray}
	\item The following two operations are identity maps on the respective information convex set:
	\begin{eqnarray}
		 \calE_{B\to BC}^{\sigma}\circ \Tr_C: \,\,\,\,&&\Sigma(ABC)\to \Sigma(ABC), \label{eq:iso_simple_2}\\ 
	\Tr_C \circ \calE_{B\to BC}^{\sigma}:\,\,\,\, &&\Sigma(AB)\to \Sigma(AB). \label{eq:iso_hard_2}
	\end{eqnarray}
	This establishes the fact that $\Tr_D$ and $\calE_{B\to BC}^{\sigma}$ are bijections between $\Sigma(AB)$ and $\Sigma(ABC)$.
	\item The entropy difference and distance measures are preserved under the isomorphism.
\end{enumerate}
 Equation~(\ref{eq:iso_simple_1}) and (\ref{eq:iso_simple_2}) are simple to derive. Equation~(\ref{eq:iso_simple_1})  follows from the definition of the information convex set. Equation~(\ref{eq:iso_simple_2}) follows from the fact that $I(A:C\vert B)_{\rho}=0$ for any $\rho_{ABC}\in \Sigma(ABC)$. Under this condition, the Petz map ($\calE_{B\to BC}^{\sigma}$) is a quantum channel that recovers the state $\rho_{ABC}$ from $\rho_{AB}$.
 
  The preservation of entropy difference and distance measure is easy to establish once the isomorphism is established. The preservation of entropy difference follows from the conditional independence of $I(A:C\vert B)_{\rho}=0$ for any $\rho_{ABC}\in \Sigma(ABC)$ and that $\Tr_A \rho_{ABC}=\sigma_{BC}$ for the partitions in Fig.~\ref{fig:isomorphism_proof_sketch}. The preservation of distance measure is a consequence of the fact that distance measures are monotonic under the action of quantum channels; both $\Tr_D$ and $\calE_{B\to BC}^{\sigma}$ are quantum channels, and they reverse each other on the information convex sets.
  
 The subtle part of the proof is the justification of Eqs.~(\ref{eq:iso_hard_1}) and (\ref{eq:iso_hard_2}). If every $\rho_{AB}\in \Sigma(AB)$ has an ``extension'' in  $\Sigma(ABC)$, which has $\rho_{AB}$ as its reduced density matrix, then both Eqs.~(\ref{eq:iso_hard_1}) and (\ref{eq:iso_hard_2}) follow. However, the existence of this extension is part of what we need to prove. (Recall that we cannot use the isomorphism theorem at this point because we are trying to prove it.)

 The rest of this section is devoted to a sketch of the existence of this extension. How to show that for any $\rho_{AB}\in \Sigma(AB)$, there is a density matrix on a larger region $ABC$ that matches it? Furthermore, how do we show this extension $\rho_{ABC}$ belongs to $\Sigma(ABC)$?
 The key technique is the merging lemma (Lemma~\ref{lemma:merging_lemma}) and the merging theorem (Theorem~\ref{thm:merging_info_convex_set}). Specifically, we can prove that for every $\rho_{AB}\in \Sigma(AB)$ there exists an element  $\rho_{ABC}\in \Sigma(ABC)$ such that $\Tr_C \rho_{ABC}=\rho_{AB}$ so long as $AB$ is thick enough. The requirement that $AB$ is thick enough is for the purpose of avoiding potential pathological counterexamples.\footnote{ $AB$ should be thicker than $2r$. If $AB$ is thinner than that, there can be pathological counterexamples for which there is no room to achieve the deformation of regions required in the merging theorem.} 
 
 The merging lemma provides a way to generate a density matrix on the larger region $ABC$ that is consistent with $\rho_{AB}$; the merging theorem guarantees that the resulting density matrix is an element of $\Sigma(ABC)$. This is why the extension exists.
 This completes the sketch of the proof of the isomorphism theorem.

Finally, for completeness, we also provide a very brief sketch on why the merging theorem (Theorem~\ref{thm:merging_info_convex_set}) is true. The full proof is technical, and the interested reader is encouraged to read  Appendix C of Ref.~\cite{SKK2019} for the details.
The key idea behind the proof of the merging theorem is to introduce a convex set of density matrices, which we denote as $\hat{\Sigma}(\Omega)$. The definition of $\hat{\Sigma}(\Omega)$ does not make use of an extra layer as $\Sigma(\Omega)$ does. Instead, it requires some additional \emph{internal} conditional independence condition on its elements; these involve partitions near the boundary of $\Omega$. [These additional conditions mimic the conditional independence induced by the extra layer in the definition of $\Sigma(\Omega)$.]
A version of the merging theorem can be proved for $\hat{\Sigma}(\Omega)$. This merging theorem on $\hat{\Sigma}(\Omega)$ implies that every element in $\hat{\Sigma}(\Omega)$ can be consistently extended to a larger region containing $\Omega$, which subsequently implies that $\hat{\Sigma}(\Omega)= {\Sigma}(\Omega)$. Therefore, the merging theorem applies to the information convex set $\Sigma(\Omega)$ as well.

\section{Factorization of extreme points}\label{appendix:extreme_points_factorization}

In this appendix, we provide a streamlined proof of the factorization of extreme points of information convex sets. The main idea is to make use of (an enlarged version of) axiom {\bf A0}. It is amusing to contrast this usage of {\bf A0} with previous usage of {\bf A1} in the proof of two other important properties (Appendixes~\ref{appendix:extensions_of_axioms} and \ref{appendix:isomorphism}). Note that the condition {\bf A0} on the domain wall has no difference with that in the bulk. Therefore, this proof is essentially a recap of that in Ref.~\cite{SKK2019}. 

\begin{figure}[h]
	\centering
\includegraphics[scale=1]{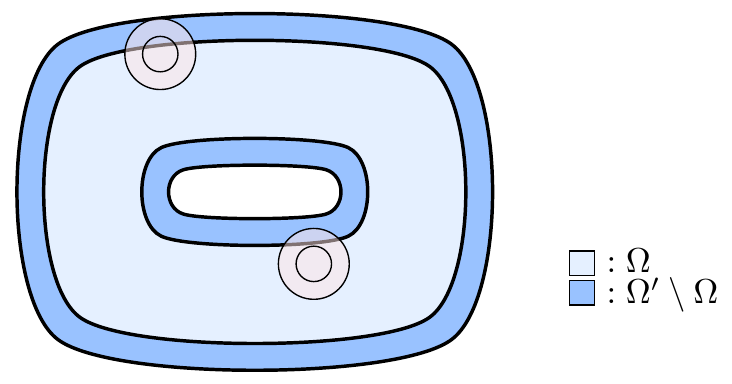}
	\caption{A thick enough subsystem $\Omega$ and a shell ($\Omega'\setminus\Omega$) around it. They form a region $\Omega'$. Note that $\Omega'$ can be smoothly deformed into $\Omega$ and $\Omega'\setminus \Omega$ is a thickened boundary of $\Omega'$. While we only depicted an annulus topology in this figure, the same factorization property applies to any sufficiently smooth subsystems. The red disks are regions on which an enlarged version of axiom {\bf A0} is considered; they will be called as $b'$ in the proof.}
	\label{fig:factorization_appendix}
\end{figure}

Let $\Omega$ be a thick enough but otherwise arbitrary subsystem. Let ${\Omega}'\supset \Omega$ be a subsystem that can be smoothly deformed into $\Omega$ for which $\Omega'\setminus\Omega$ is a thickened boundary of $\Omega'$. See Fig.~\ref{fig:factorization_appendix} for an illustration. For any extreme point $\rho_{{\Omega'}}^{\langle e\rangle}\in \Sigma({\Omega'})$, we show that
\begin{equation}
(S_{{\Omega'}} + S_{\Omega} - S_{{\Omega'} \setminus \Omega})_{\rho^{\langle e\rangle}}  =0. \label{eq:appendix_factorization}
\end{equation}
Because Eq.~\eqref{eq:appendix_factorization} is a straightforward consequence of the following equation:
\begin{equation}
\Tr_{{\Omega'}\setminus \Omega} \,|i\rangle_{{\Omega'}}\langle j| =\delta_{i,j} \rho_{\Omega}^{\langle e\rangle}, \label{eq:ltqo}
\end{equation}
where $\{|i\rangle_{{\Omega'}}\}$ is the set of eigenvectors of $\rho_{{\Omega'}}^{\langle e\rangle}$ with positive eigenvalues, we will focus on proving Eq.~\eqref{eq:ltqo}.

For this purpose, we first show that the states in the span of $\{|i\rangle_{{\Omega'}}\}$, reduced to $\Omega$, are in $\Sigma(\Omega)$. It suffices to show that
\begin{equation}
\Tr_{{\Omega'}\setminus b} |i\rangle_{{\Omega'}}\langle j| = \delta_{i,j} \,\sigma_b \label{eq:inclusion_information_convex}
\end{equation}
for any disk $b$ of radius $r$ that can be enlarged into $b'\subset \Omega'$. Here, $b'\setminus b$ is a thickened boundary of $b'$. (As an illustration, a red disk in Fig.~\ref{fig:factorization_appendix} is a $b'$, which contains a smaller disk $b$ in the middle.) By the extension of our axioms (Appendix~\ref{appendix:extensions_of_axioms}), we have
\begin{equation}
(S_b + S_{b'} - S_{b'\setminus b})_{\rho^{\langle e\rangle}} =0.
\end{equation}
This subsequently implies that the purification of $\rho^{\langle e\rangle}_{\Omega'}$ has a certain ``factorization property.'' Specifically, let 
\begin{equation}
|\varphi\rangle_{{\Omega'}P} = \sum_i \sqrt{p_i} \vert i\rangle_{\Omega'} \otimes \vert i\rangle_P 
\end{equation}
be the purification of $\rho^{\langle e\rangle}_{\Omega'}$ with a purifying space $P$, where $p_i >0$ for all $i$. By SSA, we can conclude that $(S_b + S_P - S_{bP})_{\vert \varphi\rangle}=0$. Because any bipartite state with a vanishing mutual information $I(A:B) := S_{A} + S_B - S_{AB}$ must be a factorized state, we can explicitly write down the following identity:
\begin{equation}
\sum_{i,j} \sqrt{p_i p_j} \left( \Tr_{{\Omega'} \setminus b}|i\rangle_{{\Omega'}}\langle j| \right) \otimes |i\rangle_P\langle j| = \sigma_b \otimes \sum_j p_j |j\rangle_P\langle j|, \label{eq:temp}
\end{equation}
where the summation is taken over $i$ and $j$ such that $p_i, p_j>0$. Eq.~\eqref{eq:inclusion_information_convex} follows straightforwardly from Eq.~\eqref{eq:temp}. Therefore, the reduced density matrix of any state in the span of $\{|i\rangle_{{\Omega'}} \}$ to $\Omega$ must belong to $\Sigma(\Omega)$.

Now, we are in a position to prove Eq.~\eqref{eq:ltqo}. We present a proof by contradiction. 
Suppose there is a state $|\phi\rangle_{{\Omega'}}$ in the span of $\{|i\rangle_{{\Omega'}}\}$ whose reduced density matrix is different from $\rho_{\Omega}^{\langle e\rangle}$. Note that $\rho_{{\Omega'}} = p|\phi\rangle_{{\Omega'}}\langle \phi| + (1-p) \rho'_{\Omega'}$ for some $p>0$ and $\rho'_{\Omega'}$ living in the state space of the Hilbert space spanned by $\{ |i\rangle_{{\Omega'}}\}$. This means that 
\begin{equation}
\rho_{\Omega}^{\langle e\rangle } = p \,\Tr_{{\Omega'} \setminus \Omega}\,|\phi\rangle_{{\Omega'}}\langle \phi| + (1-p) \Tr_{{\Omega'} \setminus \Omega}\,\rho'_{\Omega'},
\end{equation}
where both density matrices on the right-hand-side belong to $\Sigma(\Omega)$. However, this is a contradiction because the left-hand-side must be an extreme point by the isomorphism theorem; $\rho^{\langle e\rangle}_{\Omega}$ was obtained from an extreme point of another information convex set via an isomorphism. Therefore, both terms on the right-hand-side must be equal to $\rho^{\langle e\rangle}_{\Omega}$. We thus conclude that any state in the span of $\{
|i\rangle_{{\Omega'}}\}$, restricted to $\Omega$, must be equal to $\rho_{\Omega}^{\langle e \rangle}$. By inspecting the matrix elements, we conclude Eq.~\eqref{eq:ltqo}.

Finally, our main claim (Eq.~\eqref{eq:appendix_factorization}) follows because Eq.~\eqref{eq:ltqo} implies that any purification of $\rho^{\langle e\rangle}_{\Omega'}$ must have vanishing mutual information between the purifying space and $\Omega$. This completes the derivation of Eq.~\eqref{eq:appendix_factorization}.

\section{Fusion space}\label{appendix:fusion_space}
In this appendix, we provide a proof of Theorem~\ref{thm:Hilbert}. We shall refer to this result as the \emph{Hilbert space theorem}. Specifically, consider a sufficiently thick but otherwise arbitrary subsystem $\Omega$. We claimed that, once we fix the extreme point associated with $\partial \Omega$, i.e., the thickened boundary of $\Omega$, the remaining degrees of freedom is isomorphic to the state space of some finite-dimensional Hilbert space.

The proof of Theorem~\ref{thm:Hilbert} presented below is an improvement of that in Appendix E of Ref.~\cite{SKK2019}. While we will depict subsystems in the bulk for concreteness, the underlying logic applies more generally, for instance, to the subsystems intersecting with the domain wall. This is because every argument is based on the extensions of axioms, isomorphism theorem, and the factorization of extreme points, which we have generalized in Appendixes~\ref{appendix:extensions_of_axioms},~\ref{appendix:isomorphism}, and~\ref{appendix:extreme_points_factorization}.

We will use a well-known structure theorem of quantum Markov state \cite{2004CMaPh.246..359H}. The precise statement is presented below as a lemma.
\begin{lemma}[Structure of quantum Markov states~\cite{2004CMaPh.246..359H}] \label{lemma:HJPW}
	If $\rho_{ABC}$ satisfies $I(A:C\vert B)=0$, there exists a decomposition $\calH_B= \bigoplus_j \calH_{B^L_j} \otimes \calH_{B^R_j}$, such that
	\begin{equation}
	\rho_{ABC} = \bigoplus_j p_j \rho_{AB^L_j} \otimes \rho_{B^R_j C},
	\end{equation}
	where $\{ p_j\}$ is a probability distribution, $\rho_{AB^L_j} $ is a density matrix on $\calH_A \otimes \calH_{B^L_j}$ and $\rho_{B^R_j C}$ is a density matrix on $ \calH_{B^R_j} \otimes \calH_C$.
\end{lemma}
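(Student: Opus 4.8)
The plan is to recognize this as the Hayden--Jozsa--Petz--Winter structure theorem and to prove it through the recoverability characterization of equality in strong subadditivity. First I would reduce to the generic case in which $\rho_A$, $\rho_B$, and $\rho_C$ all have full rank, replacing $\mathcal{H}_A$, $\mathcal{H}_B$, $\mathcal{H}_C$ by the supports of the corresponding marginals; the only point needing a check is that $\mathrm{supp}\,\rho_{ABC}$ lies inside the tensor product of these supports, which follows from monotonicity of the partial trace. With faithful marginals in hand I would use the identity $I(A:C\vert B)_\rho = D(\rho_{ABC}\Vert\rho_A\otimes\rho_{BC}) - D(\rho_{AB}\Vert\rho_A\otimes\rho_B)$ and the fact that $\Tr_C$ maps the first pair of states to the second. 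Petz's equality condition for the monotonicity of relative entropy then gives that $I(A:C\vert B)_\rho = 0$ if and only if the Petz transpose channel $\mathcal{R}_{B\to BC}(X):=\rho_{BC}^{1/2}\rho_B^{-1/2}\,X\,\rho_B^{-1/2}\rho_{BC}^{1/2}$ of $\Tr_C$ recovers $\rho_{ABC}$ from $\rho_{AB}$, i.e.
\[
\rho_{ABC} = \rho_{BC}^{1/2}\,\rho_B^{-1/2}\,\rho_{AB}\,\rho_B^{-1/2}\,\rho_{BC}^{1/2},
\]
with all operators tacitly extended by the identity on the missing factor and the map acting only on the $B$ subsystem.

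Next I would extract the Hilbert-space decomposition from this identity. The composition $\Lambda_B := \Tr_C\circ\mathcal{R}_{B\to BC}$ is a trace-preserving completely positive map on $\mathcal{B}(\mathcal{H}_B)$ that fixes the faithful state $\rho_B$ (indeed $\mathcal{R}_{B\to BC}(\rho_B)=\rho_{BC}$ and $\Tr_C\rho_{BC}=\rho_B$). Since $\Lambda_B$ possesses a faithful fixed point, its fixed-point set is a von Neumann subalgebra $\mathcal{F}\subseteq\mathcal{B}(\mathcal{H}_B)$, and by the classification of finite-dimensional von Neumann algebras $\mathcal{F}\cong\bigoplus_j\mathcal{B}(\mathcal{H}_{B^L_j})\otimes\mathrm{id}_{\mathcal{H}_{B^R_j}}$ --- precisely the asserted splitting $\mathcal{H}_B=\bigoplus_j\mathcal{H}_{B^L_j}\otimes\mathcal{H}_{B^R_j}$, with central projectors $\{Z_j\}$. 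Because $\rho_B\in\mathcal{F}$ it is maximally mixed on each right factor $\mathcal{H}_{B^R_j}$, and $\rho_{BC}$ is correspondingly compatible with the block decomposition; feeding these facts into the recovery identity shows that $\mathcal{R}_{B\to BC}$ acts within each block and trivially on the left factors, and a short computation then yields that $\rho_{ABC}$ is block diagonal in $j$ and, within the $j$-th block, factorizes across the $B^L_j / B^R_j$ cut as $p_j\,\rho_{AB^L_j}\otimes\rho_{B^R_j C}$. Positivity of each piece and $\sum_j p_j = 1$ are automatic from $\rho_{ABC}\ge 0$ and $\Tr\rho_{ABC}=1$.

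The principal obstacle is making this last step fully rigorous: turning the operator identity into a genuine direct sum of tensor products of $\mathcal{H}_B$ and establishing the in-block factorization of $\rho_{ABC}$. That is where essentially all the content of the Hayden--Jozsa--Petz--Winter argument sits, relying on the structure theory of finite-dimensional operator algebras together with a careful identification of exactly which subalgebra of $\mathcal{B}(\mathcal{H}_B)$ is preserved by the Petz map. An equivalent --- and sometimes more transparent --- entry point is Petz's operator form of the equality condition, $\log\rho_{ABC}+\log\rho_B = \log\rho_{AB}+\log\rho_{BC}$, inserted into the Lie--Trotter product formula for $\rho_{ABC}^{it}$, from which one reads off that the modular automorphism group of $\rho_{ABC}$ is generated by mutually commuting pieces supported on $AB^L_j$ and $B^R_j C$; this again reduces to the same algebraic extraction. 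Since the statement is classical and used here only as an input, in the body of the paper it suffices to cite Ref.~\cite{2004CMaPh.246..359H}; the above is how one would reconstruct it from scratch.
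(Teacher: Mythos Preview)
The paper does not prove this lemma; it is quoted verbatim from Hayden--Jozsa--Petz--Winter and used as a black box (as you yourself note in your final sentence). So there is nothing in the paper to compare against, and your instinct to simply cite Ref.~\cite{2004CMaPh.246..359H} is exactly what the authors do.

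Your reconstruction is broadly the right shape --- Petz recovery plus the structure theory of finite-dimensional operator algebras --- but one step is stated imprecisely. You write that since the CPTP map $\Lambda_B=\Tr_C\circ\mathcal{R}_{B\to BC}$ has a faithful fixed point $\rho_B$, its fixed-point set is a von Neumann algebra. That is not quite the correct hypothesis: the standard result (Lindblad, Kribs, et al.) says that the fixed-point set of a \emph{unital} CP map preserving a faithful state is a $*$-algebra. Here $\Lambda_B$ is trace-preserving, not unital; it is the adjoint $\Lambda_B^*$ that is UCP and whose fixed points form the algebra $\mathcal{F}$. Correspondingly, $\rho_B$ is a fixed state of $\Lambda_B$, not an element of $\mathcal{F}$, so your inference that ``$\rho_B\in\mathcal{F}$ hence maximally mixed on each $\mathcal{H}_{B^R_j}$'' does not go through as written. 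The actual HJPW argument sidesteps this by working directly with the $*$-algebra in $\mathcal{B}(\mathcal{H}_B)$ generated by the matrix slices $\langle a|\,\rho_B^{-1/2}\rho_{AB}\rho_B^{-1/2}\,|a'\rangle$ and showing it commutes with the analogous algebra coming from $\rho_{BC}$; the block decomposition then follows from the double-commutant structure. Your alternative route via the log identity and modular flow is also valid and closer in spirit to Petz's original characterization. Either way, the gap is only in the bookkeeping of which map's fixed points carry the algebra structure, not in the overall strategy.
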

\begin{remark}
	There is no known generalization of this lemma for approximate quantum Markov states, states with small but nonzero $I(A:C\vert B)$; see Ref.~\cite{Ibinson2008} for a related discussion. Therefore, while we expect the conclusion of this paper to be extended to the setup in which the assumptions in Fig.~\ref{fig:axioms_all} hold approximately, the proofs in this appendix do not. Additional techniques need to be developed for approximate cases. 
\end{remark}

\begin{figure}[h]
	\centering
\includegraphics[scale=0.97]{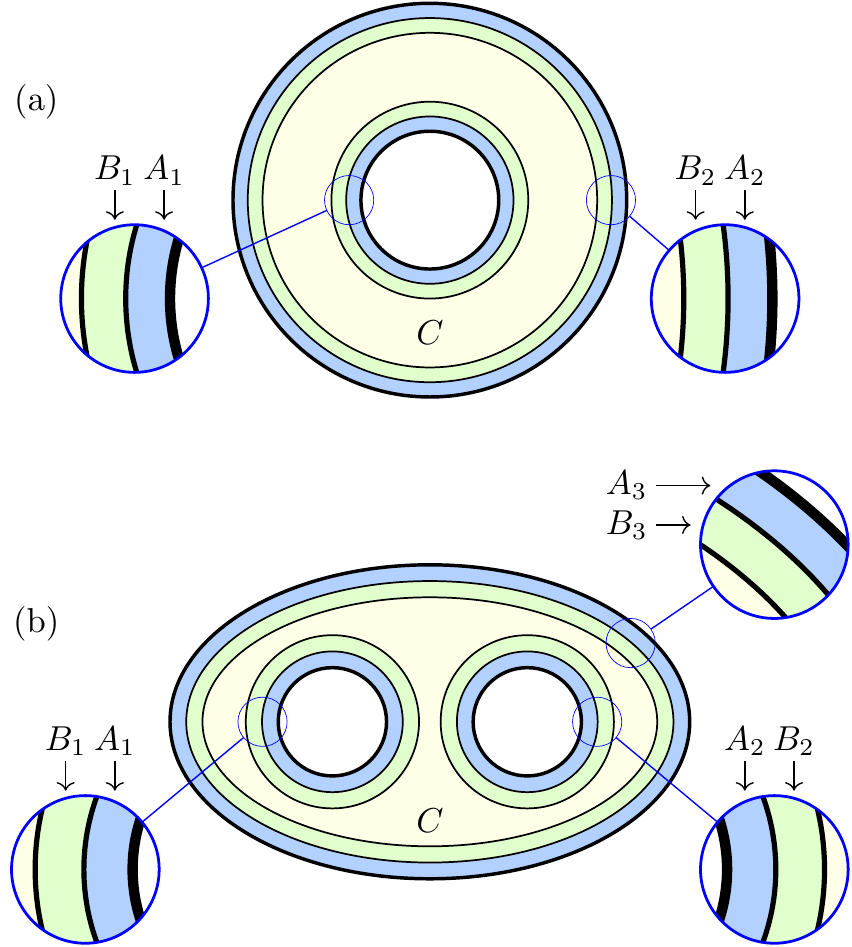}
	\caption{The partition $\Omega= (\cup_{i=1}^K A_i B_i) \cup C$ for: (a) an annulus, which has $K=2$, and (b) a 2-hole disk, which has $K=3$.}
	\label{fig:A_iB_iC_partition_of_Omega}
\end{figure}

The following is another useful result. 
\begin{lemma}\label{Lemma:Trace}
Consider a subsystem $\Omega' \supset \Omega$ that can be smoothly deformed into $\Omega$, where $\Omega'\setminus \Omega$ is the thickened boundary of $\Omega'$. Suppose $\rho_{\Omega'} \in \Sigma(\Omega')$ can be written as $\rho_{\Omega'} = \sum_i q_i \lambda^i_{\Omega'}$, where $\{q_i\}$ is a probability distribution with $q_i >0$, $\forall i$ and $\{ \lambda^i_{\Omega'}\}$ is a set of density matrices. Then
\begin{equation}
\Tr_{\Omega'\setminus \Omega} \lambda^i_{\Omega'} \in \Sigma(\Omega).
\end{equation}
\end{lemma}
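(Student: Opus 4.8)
The plan is to deduce Lemma~\ref{Lemma:Trace} from the factorization of extreme points (Appendix~\ref{appendix:extreme_points_factorization}) together with the definition of the information convex set. The essential point is that the thickened boundary $\Omega'\setminus\Omega$ behaves like a ``screen'': any state that agrees locally with $\sigma$ on $\Omega'$ is, on $\Omega'\setminus\Omega$, forced onto a fixed factorization structure, and this structure decouples the interior $\Omega$ from the purifying system.

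First I would observe that the claim is local in nature: to show $\Tr_{\Omega'\setminus\Omega}\lambda^i_{\Omega'}\in\Sigma(\Omega)$ it suffices to check that for every ball $b\in\mathcal{B}(r)$ with $b\subset\Omega$ one has $\Tr_{\Omega\setminus b}\Tr_{\Omega'\setminus\Omega}\lambda^i_{\Omega'}=\sigma_b$. So the real content is: if $\rho_{\Omega'}=\sum_i q_i\lambda^i_{\Omega'}$ with $q_i>0$ and $\rho_{\Omega'}\in\Sigma(\Omega')$, then each $\lambda^i_{\Omega'}$, once traced down to $\Omega$, is indistinguishable from $\sigma$ on every such $b$. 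The key technical input is Eq.~\eqref{eq:ltqo} from Appendix~\ref{appendix:extreme_points_factorization}: for an \emph{extreme} point $\rho^{\langle e\rangle}_{\Omega'}$ with spectral decomposition $\rho^{\langle e\rangle}_{\Omega'}=\sum_k p_k |k\rangle_{\Omega'}\langle k|$, one has $\Tr_{\Omega'\setminus\Omega}|k\rangle_{\Omega'}\langle l|=\delta_{kl}\rho^{\langle e\rangle}_\Omega$, i.e. \emph{every} vector in the support of $\rho^{\langle e\rangle}_{\Omega'}$ reduces on $\Omega$ to the single state $\rho^{\langle e\rangle}_\Omega$. Hence any density matrix supported in that span — in particular any $\lambda^i$ arising in a convex decomposition of an extreme point — reduces to $\rho^{\langle e\rangle}_\Omega\in\Sigma(\Omega)$.

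Next I would reduce the general case to the extreme-point case. Write $\rho_{\Omega'}=\sum_e t_e\,\rho^{\langle e\rangle,(e)}_{\Omega'}$ as a convex combination of extreme points of $\Sigma(\Omega')$. Each $\lambda^i_{\Omega'}$ lies in the support of $\rho_{\Omega'}$, which is contained in the direct sum of the supports of the $\rho^{\langle e\rangle,(e)}_{\Omega'}$; but I should be careful, since the supports of distinct extreme points need not be orthogonal for a general (non-sectorizable) $\Omega'$. The cleaner route is to purify: let $|\varphi\rangle_{\Omega'P}$ purify $\rho_{\Omega'}$, with $P$ a reference system holding the index $i$ plus whatever is needed, so that $\lambda^i_{\Omega'}$ is the conditional state of $\Omega'$ given an outcome on $P$. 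Now re-run the argument of Appendix~\ref{appendix:extreme_points_factorization} applied to $\rho_{\Omega'}$ itself rather than an extreme point: for each ball $b$ that extends to $b'\subset\Omega'$, the extension of axiom \textbf{A0} gives $(S_b+S_{b'}-S_{b'\setminus b})_{\rho_{\Omega'}}=0$, hence by SSA $I(b:P)_{|\varphi\rangle}=0$, hence $\Tr_{\Omega'\setminus b}|\varphi\rangle\langle\varphi|=\sigma_b\otimes\rho_P$ factorizes. Conditioning on any outcome on $P$ therefore still yields $\sigma_b$ on $b$; applying this for all such $b\subset\Omega$ gives $\Tr_{\Omega'\setminus\Omega}\lambda^i_{\Omega'}\in\Sigma(\Omega)$.

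The main obstacle, I expect, is making the purification/conditioning step rigorous when $\Omega'$ is an arbitrary thick subsystem rather than an annulus: one has to confirm that the relevant version of axiom \textbf{A0} is available on a sufficient supply of balls $b'$ covering $\Omega$ (which is exactly what the extensions of axioms in Appendix~\ref{appendix:extensions_of_axioms} provide, including near the domain wall), and that ``conditioning on an outcome in $P$'' genuinely reproduces the given decomposition $\{\lambda^i_{\Omega'}\}$ — any convex decomposition with positive weights can be realized this way, so this is harmless, but it should be stated. Once these bookkeeping points are in place, the lemma follows, and in fact the argument shows something slightly stronger: every $\lambda^i$ in such a decomposition reduces on $\Omega$ to an element of $\Sigma(\Omega)$ regardless of whether $\rho_{\Omega'}$ is extreme, which is precisely what is needed for the Hilbert space theorem in the rest of Appendix~\ref{appendix:fusion_space}.
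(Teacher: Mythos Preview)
Your approach is correct and matches the paper's: the paper defers to Lemma~D.1 of Ref.~\cite{SKK2019}, noting only that the domain wall version of the extension of axiom~\textbf{A0} (Appendix~\ref{appendix:extensions_of_axioms}) is the needed additional input, and the argument there is precisely the purification-plus-\textbf{A0} route you outline in your second paragraph --- purify $\rho_{\Omega'}$, use the extended \textbf{A0} on disks $b\subset b'\subset\Omega'$ to obtain $I(b:P)_{|\varphi\rangle}=0$, and then realize the decomposition $\{\lambda^i\}$ by a measurement on $P$ to conclude $(\lambda^i)_b=\sigma_b$. Your observation that this part of the Appendix~\ref{appendix:extreme_points_factorization} argument does not require extremality is exactly the point; the detour through the extreme-point decomposition of $\rho_{\Omega'}$ is indeed unnecessary (and, as you note, obstructed by non-orthogonality of extreme points for non-sectorizable $\Omega'$). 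One small imprecision: membership in $\Sigma(\Omega)$ is not literally characterized by agreement with $\sigma$ on balls $b\subset\Omega$, but rather by the existence of an extension to $\Omega'$ agreeing with $\sigma$ on balls $b\subset\Omega'$ --- your argument in fact establishes the stronger statement for $\lambda^i_{\Omega'}$ itself (on balls extendable within $\Omega'$), which is what is used.
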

The proof of this statement is nearly identical to that of Lemma D.1 in Ref.~\cite{SKK2019}. The only difference is that we also need the domain wall version of the extension of condition {\bf A0} discussed in Appendix~\ref{appendix:extensions_of_axioms}, which generalizes an analogous statement in Ref.~\cite{SKK2019}.

We will frequently consider the partition of $\Omega$  shown in Fig.~\ref{fig:A_iB_iC_partition_of_Omega}. 
Explicitly, we have $\Omega= (\cup_{i=1}^K A_i B_i) \cup C$, where $A_i$, $B_i$ and $A_i B_i$ are thickenings of the $i$-th boundary of $\Omega$ with different thicknesses; $A_i$ is the outer layer and $B_i$ is the inner layer. For example, as  illustrated in Fig.~\ref{fig:A_iB_iC_partition_of_Omega}, an annulus has $K=2$ and a 2-hole disk has $K=3$. Note that these types of partitions are very general, and they can be applied to regions intersecting with the gapped domain wall as well.

First, we observe a useful conditional independence property of this partition.
\begin{lemma}\label{prop:CMI_general}
	Let $\Omega$ be a subsystem with $K$ disjoint boundaries. Let $\Omega= (\cup_{i=1}^K A_i B_i) \cup C$, where $A_i$, $B_i$ and $A_i B_i$ are thickenings of the $i$-th boundary of $\Omega$ with different thickness; $A_i$ is the outer layer and $B_i$ is the inner layer. (See Fig.~\ref{fig:A_iB_iC_partition_of_Omega}.) We have
	\begin{equation}
	I(A_i : \Omega\setminus A_i B_i \vert B_i)_{\rho} =0,\quad \forall \, i \textrm{ and } \forall \rho_{\Omega} \in \Sigma(\Omega).
	\end{equation}
\end{lemma}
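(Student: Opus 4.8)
## Proof Proposal

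The plan is to prove the conditional independence $I(A_i : \Omega\setminus A_iB_i \mid B_i)_\rho = 0$ for every $\rho_\Omega \in \Sigma(\Omega)$ by reducing it, via strong subadditivity, to a local entropic identity that is already established through the extension of the axioms (Appendix~\ref{appendix:extensions_of_axioms}). The key observation is that $B_i$, the inner layer of the thickening of the $i$-th boundary, can be chosen thick enough (thickness $> 2r$) that it \emph{screens} $A_i$ from the rest of $\Omega$. More precisely, I would introduce a ``middle annular layer'' $B_i = B_i^{\text{out}} B_i^{\text{in}}$, where $B_i^{\text{out}}$ is adjacent to $A_i$ and $B_i^{\text{in}}$ is adjacent to the interior $C$. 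Then $A_i B_i^{\text{out}}$ is itself a collar of the $i$-th boundary on which the extended axiom \textbf{A0} applies: $(S_{B_i^{\text{out}}} + S_{A_i B_i^{\text{out}}} - S_{A_i})_\rho$... wait, more carefully, I want the annular region $A_i B_i^{\text{out}} B_i^{\text{in}}$ together with a small portion of $C$ so that the extended \textbf{A0}/\textbf{A1} can be invoked.

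The cleanest route: first establish that for any $\rho_\Omega \in \Sigma(\Omega)$, the reduced state on the collar $A_i B_i$ together with its immediate interior neighborhood obeys the extended axiom. Concretely, let $D \subset C$ be a thin annular region adjacent to $B_i$, so that $A_i B_i D$ is an annular thickening of the $i$-th boundary with $A_i$ the outermost layer. Since $\rho_\Omega$ restricted to any disk of radius $r$ equals $\sigma$, and since the extension of the axioms (Appendix~\ref{appendix:extensions_of_axioms}) shows the entropic constraints propagate to arbitrarily large smooth subsystems, we obtain
\begin{equation}
(S_{B_i} + S_{A_i B_i} - S_{A_i})_{\rho} = 0
\end{equation}
by viewing $A_i B_i$ as a collar on which the (bulk or domain-wall) version of \textbf{A0} holds at large scale. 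Here I should be a bit careful: the extended axioms are statements about $\sigma$, not about a general $\rho_\Omega$. So the actual argument must instead use the isomorphism theorem: extend $\rho_\Omega$ to $\rho_{\Omega'} \in \Sigma(\Omega')$ where $\Omega' \supset \Omega$ enlarges the $i$-th boundary outward, producing a fresh collar on which $\rho_{\Omega'}$ agrees with $\sigma$ on all radius-$r$ disks, and then apply the extended \textbf{A0}/\textbf{A1} there, pulling back via the isomorphism (which preserves entropy differences). This is precisely the mechanism used in the merging-example discussion around Fig.~\ref{fig:conditional_independence_annulus}.

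The key steps, in order: (1) Fix $i$ and $\rho_\Omega \in \Sigma(\Omega)$; partition $B_i = B_i' C_i'$ appropriately and enlarge $\Omega$ along its $i$-th boundary using the isomorphism theorem to get $\rho_{\Omega'}$ with an auxiliary collar $E$ attached outside $A_i$. (2) Since $B_i E A_i$ (a disk-like or annular neighborhood built from the new collar) is consistent with $\sigma$ on all radius-$r$ balls, invoke the extended axioms to get $(S_{A_iB_i} + S_{B_iE} - S_{B_i} - S_E)_{\rho'} = 0$ or the analogous \textbf{A0}-type identity. (3) Apply SSA in the form $I(A_i : \Omega\setminus A_iB_i \mid B_i)_\rho \le (\text{that vanishing combination})$, exactly as in the displayed inequality $I(A:C|B)_\rho \le (S_{BC}+S_{CE}-S_B-S_E)_{\rho'} = 0$ right after Fig.~\ref{fig:merge_example}. (4) Conclude nonnegativity of CMI from SSA forces equality to $0$, and note $i$ and $\rho$ were arbitrary. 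The main obstacle I anticipate is step (2): one must choose the geometry of the auxiliary enlargement so that the relevant neighborhoods are genuinely disks (or unions of disks) of radius comparable to $r$ on which the axioms literally apply — this requires $\Omega$ (hence each $B_i$) to be ``thick enough,'' and near the domain wall one must additionally ensure the partitioning cuts do not cross the wall in a forbidden way, so that only the permitted domain-wall versions of \textbf{A0}/\textbf{A1} are used. Handling the domain-wall-intersecting case uniformly with the bulk case is the delicate bookkeeping, but it is purely the same argument as in Appendix~\ref{appendix:extensions_of_axioms} applied boundary-component by boundary-component.
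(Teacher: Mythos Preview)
There is a genuine gap in step (2). You correctly flag that the extended axioms are statements about $\sigma$, not about a general $\rho$, and your fix is to extend $\rho_\Omega$ outward via the isomorphism theorem to $\rho'_{\Omega'}$ with a fresh collar $E$. But the identity you then want, of the form $(S_{EA_i} + S_{A_iB_i} - S_E - S_{B_i})_{\rho'} = 0$, still does not follow: it would require $\rho'_{EA_iB_i} = \sigma_{EA_iB_i}$, and $EA_iB_i$ is an \emph{annular} collar around the $i$-th boundary, whose information convex set is a nontrivial simplex. So $\rho'$ can carry a nonvacuum sector there and differ from $\sigma$. In the template you are imitating (Fig.~\ref{fig:conditional_independence_annulus}), the argument works precisely because $BCE$ is \emph{disk-like}: the information convex set of a disk has a unique element, forcing $\rho'_{BCE} = \sigma_{BCE}$, which is what licenses the use of the extended \textbf{A1} there. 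Your collar $EA_iB_i$ is not disk-like, and ``consistent with $\sigma$ on all radius-$r$ balls'' is not enough to transfer a global entropic identity from $\sigma$ to $\rho'$ on a region with nontrivial topology.

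The paper's route avoids this by going the other direction: it deforms $\Omega\setminus A_i$ to $\Omega$, rebuilding the collar $A_i$ via a sequence of elementary steps. Each step adds a piece of size $\sim r$ and invokes a version of \textbf{A1} on a single small disk, where $\rho$ genuinely coincides with $\sigma$, to obtain a local conditional-independence relation (exactly the mechanism in the proof of the isomorphism theorem, Appendix~\ref{appendix:isomorphism}). Chaining these local relations via the conditional-mutual-information chain rule and SSA then yields $I(A_i:\Omega\setminus A_iB_i\mid B_i)_\rho=0$. Your instincts about SSA and about the domain-wall bookkeeping are fine; the missing idea is that \textbf{A1} must be applied at the scale of elementary steps (small disks on which $\rho=\sigma$) rather than on the full annular collar at once.
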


The proof of this proposition is the same as the bulk version in Ref.~\cite{SKK2019}; see Lemma~D.2 therein. The idea is that we can smoothly deform $\Omega\setminus A_i$ to $\Omega$. Using versions of axiom {\bf A1}, we can derive the claimed conditional independence relation.

The following proposition characterizes the universal structure of elements in $\Sigma_I(\Omega)$.
\begin{Proposition}\label{prop:docomposition_general}
		Consider $\rho^I_{\Omega}\in \Sigma_I(\Omega)$, (see Section~\ref{sec:fusion}). For the partition of $\Omega= (\cup_{i=1}^K A_i B_i) \cup C$ described above, there exists a decomposition
	\begin{equation}
	\calH_{B_i} = (\calH_{B_i^L} \otimes \calH_{B_i^R}) \oplus \mathcal{H}' \label{eq:B_i_decomposition_1}
	\end{equation}
	for some Hilbert space $\mathcal{H}'$ such that
	\begin{equation}
	\rho^I_{\Omega} = \left(\otimes_{i=1}^K \,\rho^I_{A_i B_i^L} \right) \otimes \rho_{(\cup_i B_i^R) \cup C}, \label{eq:decomposition_for_state_in_Sigma_I}
	\end{equation}
	where the density matrix $\rho^I_{A_iB_i^L}$, supported on $\calH_{A_i}\otimes \calH_{B_i^L}$, is independent of the specific choice of element in $\Sigma_I(\Omega)$ once $I$ is fixed. $\rho_{(\cup_i B_i^R) \cup C}$ is a density matrix supported on $(\otimes_{i=1}^K \calH_{ B_i^R} )\otimes \calH_C$.
\end{Proposition}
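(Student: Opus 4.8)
The plan is to prove Proposition~\ref{prop:docomposition_general} by iterating the quantum Markov state structure theorem (Lemma~\ref{lemma:HJPW}) over the $K$ disjoint boundary components, using the conditional independence from Lemma~\ref{prop:CMI_general} as the input at each step, and then pinning down the ``left'' factors $\rho^I_{A_i B_i^L}$ using the factorization of extreme points together with the definition of $\Sigma_I(\Omega)$. First I would fix $i$ and apply Lemma~\ref{prop:CMI_general} to get $I(A_i : \Omega\setminus A_iB_i \vert B_i)_{\rho^I} = 0$ for our chosen $\rho^I_\Omega \in \Sigma_I(\Omega)$. By Lemma~\ref{lemma:HJPW} applied with $A = A_i$, $B = B_i$, $C = \Omega\setminus A_iB_i$, there is a decomposition $\calH_{B_i} = \bigoplus_j \calH_{B_i^{L,j}}\otimes \calH_{B_i^{R,j}}$ with $\rho^I_\Omega = \bigoplus_j p_j\, \rho_{A_i B_i^{L,j}} \otimes \rho_{B_i^{R,j}(\Omega\setminus A_iB_i)}$. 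The first real task is to show the direct sum collapses to a single term $j$; this is where I would use that the reduced density matrix on the $i$-th boundary annulus is an \emph{extreme} point of $\Sigma(\partial_i\Omega)$. Concretely, trace out everything but a thickening of the $i$-th boundary: the Markov decomposition above descends to a nontrivial convex decomposition of that reduced state unless only one $j$ survives, contradicting extremality (the orthogonality of the blocks $\calH_{B_i^{L,j}}\otimes\calH_{B_i^{R,j}}$ makes the decomposition genuinely convex, not a mere mixing within one sector). This gives Eq.~\eqref{eq:B_i_decomposition_1} with $\mathcal{H}'$ the orthogonal complement of the surviving block, and a factorization $\rho^I_\Omega = \rho^I_{A_iB_i^L}\otimes \rho_{B_i^R(\Omega\setminus A_iB_i)}$ across the $i$-th boundary.

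Next I would iterate: having split off $A_1B_1^L$, restrict attention to the region $\Omega\setminus A_1$ (or rather to the complementary factor on $B_1^R\cup$ rest), which still has boundary components $2,\dots,K$, and repeat the same argument for $i=2$, using that Lemma~\ref{prop:CMI_general} applies to every boundary and that the reduced state on each $\partial_i\Omega$ is still the extreme point labelled by $I$ (this is where I invoke that $I \in \mathcal C_{\partial\Omega}$ is, by the product rule Lemma~\ref{lemma:product_rule}, a tuple of sectors, one per component, and $\Sigma_I(\Omega)$ fixes each of them). After $K$ steps one obtains Eq.~\eqref{eq:decomposition_for_state_in_Sigma_I}, the tensor product of the $K$ left-factors with a residual density matrix on $(\cup_i B_i^R)\cup C$. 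One subtlety to handle carefully is commutativity/compatibility of the $K$ successive decompositions — peeling off boundary $1$ must not disturb the Markov structure at boundary $2$; this follows because the regions $A_iB_i$ are mutually disjoint and each conditional-independence statement of Lemma~\ref{prop:CMI_general} is insensitive to how the far region $\Omega\setminus A_iB_i$ is further partitioned, so the decompositions act on disjoint tensor factors and commute.

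The last point is the claim that $\rho^I_{A_iB_i^L}$ depends only on $I$, not on the particular element of $\Sigma_I(\Omega)$. Here I would argue as follows: extend $\Omega$ to $\Omega'\supset\Omega$ with $\Omega'\setminus\Omega$ the thickened boundary, and use the factorization of extreme points (Appendix~\ref{appendix:extreme_points_factorization}) — for an \emph{extreme} point of $\Sigma_I(\Omega')$, the reduced state on a thickened neighborhood of each boundary component is completely determined by the sector $I$ (it is, up to isomorphism, the extreme point of the annulus information convex set for that sector), hence so is its further Markov splitting into the $A_iB_i^L$ piece. A general element of $\Sigma_I(\Omega)$ is a convex combination of extreme points all carrying the same boundary sectors $I$; since all of them share the identical $A_iB_i^L$ factor, so does the mixture — here the key is that the $A_iB_i^L$ factor sits inside the fixed block $\calH_{B_i^L}\otimes\calH_{B_i^R}$ and is forced by the boundary reduced state, which is common to the whole set $\Sigma_I(\Omega)$ by definition. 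The main obstacle I anticipate is making the ``collapse to a single $j$'' step fully rigorous: one has to be careful that extremality of the boundary reduced state genuinely precludes a nontrivial $\bigoplus_j$, which requires knowing that the blocks $\calH_{B_i^{L,j}}\otimes\calH_{B_i^{R,j}}$, after tracing down to the boundary annulus, land in mutually orthogonal subspaces — this uses Lemma~\ref{Lemma:Trace} (so that each summand, suitably traced, lies in $\Sigma(\partial_i\Omega)$) combined with the simplex structure of $\Sigma(\partial_i\Omega)$ from Lemma~\ref{lemma:sectorizable}. Once that is in place, the rest is bookkeeping.
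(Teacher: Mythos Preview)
Your proposal is essentially correct and follows the paper's route: invoke Lemma~\ref{prop:CMI_general} for conditional independence, apply the HJPW structure theorem (Lemma~\ref{lemma:HJPW}), collapse the block sum to a single $j$ using extremality of the boundary sector together with Lemma~\ref{Lemma:Trace}, and iterate over the $K$ boundaries. The paper does exactly this, tracing down to the sectorizable region $B_1$ rather than the full boundary annulus, but the argument is the same.

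The one place where your argument diverges from the paper is the final claim that $\rho^I_{A_iB_i^L}$ is independent of the chosen element of $\Sigma_I(\Omega)$. You propose to argue via extreme points: all extreme points share the same boundary reduced state, hence the same $A_iB_i^L$ factor, hence so do their convex combinations. This is plausible but leaves a loose end: the HJPW decomposition of $\calH_{B_i}$ is a priori a property of the \emph{full} Markov state, not just of the $A_iB_i$ marginal, so you would still need to argue that the block $\calH_{B_i^L}\otimes\calH_{B_i^R}$ (and the factor $\rho^I_{A_iB_i^L}$) is actually determined by the common marginal, or else that the decompositions for different elements can be chosen compatibly. The paper sidesteps this entirely with a cleaner argument: any two elements of $\Sigma_I(\Omega)$ are related by a quantum channel supported on $\Omega\setminus(A_iB_i)$ (partial trace over $\Omega\setminus(A_iB_iD_i)$ followed by the Petz map $D_i\to\Omega\setminus(A_iB_i)$, using $I(A_iB_i:\Omega\setminus(A_iB_iD_i)\vert D_i)=0$). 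Since such a channel acts as the identity on $\calH_{A_i}\otimes\calH_{B_i^L}$, the factor $\rho^I_{A_iB_i^L}$ and the decomposition of $\calH_{B_i}$ are manifestly preserved. This avoids any appeal to extreme points or to the detailed structure of the HJPW decomposition.
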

\begin{remark}
    The decomposition Eq.~\eqref{eq:B_i_decomposition_1} does not necessarily imply that $B^L_i$ is a subsystem of $B$. In general it is not. Therefore, Eq.~\eqref{eq:decomposition_for_state_in_Sigma_I} does \emph{not} imply that the state $\rho^I_{\Omega}$ is a tensor product over a subsystem of $A_iB_i$ and its remainder within $\Omega$.
\end{remark}
\begin{proof}	
	First, because $\rho^I_{\Omega} \in \Sigma(\Omega)$, according to Lemma~\ref{prop:CMI_general}, we have $I(A_1 : \Omega\setminus A_1 B_1 \vert B_1)_{\rho^I} =0$. Then, by Lemma~\ref{lemma:HJPW}, there exists a decomposition $\calH_{B_1}= \bigoplus_j \calH_{B_{1j}^L} \otimes \calH_{B_{1j}^R}$ such that 
	\begin{equation}
	\rho^I_{\Omega} = \sum_j p_j \rho^I_{A_1B^L_{1j}} \otimes \rho_{B_{1j}^R C (\cup_{i\ne 1} A_i B_i)}, \label{eq:general_composition}
	\end{equation}
	where $\{ p_j\}$ is a probability distribution.
	Furthermore, because $\rho^I_{\Omega} \in \Sigma_I(\Omega)$, it carries a fixed sector label $I \in \calC_{\partial\Omega}$. Therefore, its reduced density matrix on sectorizable subsystem $B_1$ must be an extreme point. In fact, a stronger condition holds. According to Lemma~\ref{Lemma:Trace}, $\rho^I_{A_1B^L_{1j}} \otimes \rho_{B_{1j}^R C(\cup_{i\ne 1} A_i B_i)}$, for any $j$ with $p_j>0$, reduces to the same extreme point of $\Sigma(B_1)$. However, the Hilbert spaces $\calH_{B^L_{1j}}\otimes \calH_{B^R_{1j}}$ for different chocies of $j$ are orthogonal subspaces. The only consistent choice is that $ p_j$ is nonzero for only one choice of $j$. Therefore, Eq.~\eqref{eq:general_composition} can be simplified into
	\begin{equation}
	\rho^I_{\Omega} = \rho^I_{A_1 B_1^L} \otimes \rho_{B_1^R C (\cup_{i\ne 1} A_i B_i)}. \label{eq:simplified_composition}
	\end{equation}
	We can repeat the same logic for any $i$. The end result of this analysis is Eq.~\eqref{eq:decomposition_for_state_in_Sigma_I}. 
	
	Finally, we explain the fact that for any $\rho^I_{\Omega} \in \Sigma_I(\Omega)$, the decomposition of $\calH_{B_i}$ in Eq.~\eqref{eq:B_i_decomposition_1} and the density matrices $\{ \rho^I_{A_i B_i^L}\}$ in Eq.~\eqref{eq:decomposition_for_state_in_Sigma_I} can be chosen to be the same. This follows from the fact that different elements of $\Sigma_I(\Omega)$, for a fixed $I$, can be converted into each other by a quantum channel on $\Omega\setminus (A_iB_i)$, for any $i$.
	Specifically, without loss of generality, consider two density matrices $\rho_{\Omega}^{I,1}$ and $\rho_{\Omega}^{I,2}$. We can consider an additional layer $D_i \subset \Omega \setminus (A_iB_i)$ that surrounds $B_i$. We have $I(A_iB_i : \Omega\setminus (A_iB_iD_i) \vert D_i)=0$. Therefore, one can map $\rho_{\Omega}^{I,1}$ to $\rho_{\Omega}^{I,2}$ and vice versa by taking a partial trace on $\Omega \setminus (A_iB_iD_i)$ and then applying the Petz map from $D_i$ to $\Omega \setminus (A_iB_i)$. This two-step process only involved quantum channels acting on $\Omega\setminus (A_iB_i)$, thus completing the proof.
\end{proof}

Below, we provide a proof of the \emph{Hilbert space theorem}. This proof is in many sense simpler than the original one in Ref.~\cite{SKK2019}. Furthermore, it manifests the fact that the fusion space is physically accessible on a deformable region $(\cup_i B_i) \cup C$ within $\Omega$. 
\begin{theorem}[Hilbert space theorem]\label{thm:Hilbert}
	\begin{equation}
	\Sigma_I(\Omega) \cong \mathcal{S}(\mathbb{V}_I), \label{eq:Hilbert_1}
	\end{equation}
	where $\mathcal{S}(\mathbb{V}_I)$ is the state space of a finite dimensional Hilbert space $\mathbb{V}_I$. 
	Moreover, under the partition $\Omega= (\cup_{i=1}^K A_i B_i) \cup C$ described above, an arbitrary extreme point of $\Sigma_I(\Omega)$ has the following explicit expression 
	\begin{equation}
	\rho^{I \langle e\rangle}_{\Omega} = \left(\otimes_{i=1}^K \,\rho^I_{A_i B_i^L} \right) \otimes \vert \varphi \rangle \langle \varphi \vert, \label{eq:Hilbert_2}
	\end{equation}
	where the set of possible states $\{ \vert \varphi \rangle \}$ is the set of normalized pure states of a $\dim \mathbb{V}_I$ dimensional subspace of $(\otimes_{i=1}^K\calH_{ B_i^R}) \otimes \calH_C$.
\end{theorem}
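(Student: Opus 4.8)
The plan is to build on Proposition~\ref{prop:docomposition_general}, which already hands us the structural form $\rho^I_{\Omega} = \left(\otimes_{i=1}^K \rho^I_{A_iB_i^L}\right)\otimes \rho_{(\cup_i B_i^R)\cup C}$ for every $\rho^I_\Omega\in\Sigma_I(\Omega)$, with the factors $\rho^I_{A_iB_i^L}$ fixed once $I$ is fixed and the decompositions $\mathcal{H}_{B_i}=(\mathcal{H}_{B_i^L}\otimes\mathcal{H}_{B_i^R})\oplus\mathcal{H}'$ chosen uniformly across $\Sigma_I(\Omega)$. Thus the only freedom left in an element of $\Sigma_I(\Omega)$ lives in the density matrix $\rho_{(\cup_i B_i^R)\cup C}$ on the Hilbert space $\mathbb{H} := (\otimes_{i=1}^K\mathcal{H}_{B_i^R})\otimes\mathcal{H}_C$. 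Concretely, I would define $\mathbb{V}_I$ to be the support of the set of all such $\rho_{(\cup_i B_i^R)\cup C}$ arising from elements of $\Sigma_I(\Omega)$ — or more precisely the smallest subspace of $\mathbb{H}$ containing all of their supports — and then show that $\Sigma_I(\Omega)$ is isomorphic to the \emph{full} state space $\mathcal{S}(\mathbb{V}_I)$, not merely some convex subset of it.

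The key steps, in order, are: (1) Invoke Proposition~\ref{prop:docomposition_general} to reduce $\Sigma_I(\Omega)$ to the convex set $\mathcal{K}$ of the free factors $\rho_{(\cup_i B_i^R)\cup C}$ inside $\mathcal{S}(\mathbb{H})$; the map $\rho^I_\Omega\mapsto \rho_{(\cup_i B_i^R)\cup C}$ is affine, injective (the fixed factors carry no information), and its inverse is affine, so $\Sigma_I(\Omega)\cong\mathcal{K}$ as convex sets and this isomorphism manifestly preserves entropy differences and all monotone distance measures (the fixed tensor factors contribute equal additive constants). (2) Show $\mathcal{K}$ is the full state space of its supporting subspace $\mathbb{V}_I$. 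For this I would use the factorization of extreme points (Appendix~\ref{appendix:extreme_points_factorization}) together with the purification argument: an extreme point $\rho^{I\langle e\rangle}_\Omega$ of $\Sigma_I(\Omega)$ has, when extended to $\Omega'\supset\Omega$ and restricted, a reduced density matrix on the thickened boundary that is pure in the relevant sense, forcing $\rho_{(\cup_i B_i^R)\cup C}$ to be a pure state $|\varphi\rangle\langle\varphi|$ on $\mathbb{V}_I$ — this gives Eq.~\eqref{eq:Hilbert_2}. (3) Conversely, show every pure state of $\mathbb{V}_I$ is attained: take any two extreme points, which by the merging theorem (Theorem~\ref{thm:merging_info_convex_set}) and the isomorphism theorem can be "rotated" into one another by unitaries acting on $(\cup_i B_i^R)\cup C$ that preserve $I$; since the extreme points are exactly the pure states $|\varphi\rangle$ and the set of attainable $|\varphi\rangle$ is closed under the relevant unitary orbit and under convex combination-plus-extreme-point-extraction, it must be all of the projective space of $\mathbb{V}_I$. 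Then $\mathcal{K}=\mathcal{S}(\mathbb{V}_I)$ because a convex set whose extreme points are all the pure states of a subspace is the full state space of that subspace.

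The main obstacle I anticipate is step (3): proving that \emph{every} pure state of $\mathbb{V}_I$ genuinely arises from a legitimate element of $\Sigma_I(\Omega)$, i.e. that $\mathbb{V}_I$ has no "unphysical" directions. In Ref.~\cite{SKK2019} this was the content of the harder half of the fusion-space analysis, and here one must be careful that the merging/isomorphism operations used to move between extreme points stay inside $\Sigma_I(\Omega)$ — in particular that the unitaries implementing the "rotation" act only on $\Omega\setminus(A_iB_i)$ so that the boundary sectors and hence $I$ are preserved, and that the conditions of Theorem~\ref{thm:merging_info_convex_set} are met (which requires $\Omega$ to be sufficiently thick, exactly the hypothesis we are granted). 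The closure of the set of attainable $|\varphi\rangle$ under these operations, combined with the fact that the unitary orbit of a single vector together with convexity generates the whole projective space, is what pins down $\mathcal{K}=\mathcal{S}(\mathbb{V}_I)$. Everything else — the reduction via Proposition~\ref{prop:docomposition_general}, the affineness and entropy/distance preservation of the isomorphism, and the purity of extreme points — is routine given the tools already assembled in the earlier appendices, since, as the text emphasizes, all the inputs (extension of axioms, isomorphism theorem, factorization of extreme points, merging theorem) hold verbatim in the domain-wall setting.
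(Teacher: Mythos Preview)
Your reduction via Proposition~\ref{prop:docomposition_general} (step~1) is exactly how the paper begins, and the observation that the remaining freedom sits in a convex set $\mathcal{K}$ of density matrices on $(\otimes_i\mathcal{H}_{B_i^R})\otimes\mathcal{H}_C$ is correct. The divergence comes in how you propose to show $\mathcal{K}=\mathcal{S}(\mathbb{V}_I)$.

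Your step~(3) is where the real gap lies. You propose to rotate one extreme point into another by unitaries on $(\cup_i B_i^R)\cup C$ obtained from the merging and isomorphism theorems, and then argue that ``the unitary orbit of a single vector together with convexity generates the whole projective space.'' But neither the merging theorem nor the isomorphism theorem hands you unitaries on a fixed region; the isomorphism theorem produces CPTP maps (partial trace and Petz recovery) between information convex sets of \emph{different} regions. Even granting some unitary action, you would still need to know it acts transitively on the pure states of $\mathbb{V}_I$, which is essentially the conclusion you are after. A convex set whose extreme points are \emph{some} pure states of $\mathbb{V}_I$ need not be all of $\mathcal{S}(\mathbb{V}_I)$; it could be the state space of a proper subspace, and your argument gives no obstruction to that. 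Your step~(2), deducing purity of the free factor from the factorization property of Appendix~\ref{appendix:extreme_points_factorization}, is also not straightforward: that property controls decoupling between $\Omega$ and the \emph{exterior} $\Omega'^c$, not purity of an internal tensor factor.

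The paper's route is different and avoids both issues at once. It enlarges $\Omega$ to $\Omega'=(\cup_i A_i'B_i)\cup C$ and notes (via the isomorphism theorem, applied layer by layer) that the extension still has the form $(\otimes_i\rho^I_{A_i'B_i^L})\otimes\rho_{(\cup_iB_i^R)\cup C}$ with the \emph{same} free factor. Then for \emph{any} unit vector $|\varphi\rangle$ in the support of the free factor of \emph{any} element, one writes $\rho^I_{\Omega'}=p\,\rho^{I;\varphi}_{\Omega'}+(1-p)\widetilde{\rho}^I_{\Omega'}$ with $\rho^{I;\varphi}_{\Omega'}=(\otimes_i\rho^I_{A_i'B_i^L})\otimes|\varphi\rangle\langle\varphi|$, and invokes Lemma~\ref{Lemma:Trace} (the extension of axiom~\textbf{A0}) to conclude that $\Tr_{\Omega'\setminus\Omega}\rho^{I;\varphi}_{\Omega'}\in\Sigma(\Omega)$, hence in $\Sigma_I(\Omega)$. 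This directly shows that every pure state in the support of any element's free factor is itself an admissible free factor; convexity then gives the full state space, and purity of the free factor at extreme points falls out for free (any mixed free factor is now a nontrivial convex combination inside $\Sigma_I(\Omega)$). The missing ingredient in your proposal is precisely Lemma~\ref{Lemma:Trace}: it is what lets you pull an arbitrary pure component of a legitimate state back into the information convex set, replacing your unitary-rotation argument with a one-line application of \textbf{A0}.
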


\begin{proof}
	Let us enlarge $\Omega$ into $\Omega'$ by letting $\Omega' = (\cup_i A'_i B_i)\cup C$, where $A'_i \supset A_i$ and $A'_i\setminus A_i$ is a thickened $i$-th connected piece of the boundary of $\Omega'$.  According to Proposition~\ref{prop:docomposition_general}, any element of $\Sigma_I(\Omega)$ can be written as
	\begin{equation}
	\rho^{I}_{\Omega} = \left(\otimes_{i=1}^K \,\rho^I_{A_i B_i^L} \right) \otimes \rho_{(\cup_i B_i^R) \cup C}, \label{eq:ext_decomposition_1}
	\end{equation}
	for some density matrix $\rho_{(\cup_i B_i^R) \cup C}$, where the set of density matrices $\{ \rho^I_{A_i B_i^L} \}$ are fixed by the choice of $I$.
	This implies that $\Sigma_I(\Omega) \cong \{ \rho_{(\cup_i B_i^R) \cup C} \}$, where the isomorphism ``$\cong$'' preserves the entropy difference and any distance measure. 
	
	Below, we determine the set of density matrices $\{ \rho_{(\cup_i B_i^R) \cup C} \}$. We will show that $\{ \rho_{(\cup_i B_i^R) \cup C} \}$ forms the state space of a finite dimensional subspace of $(\otimes_{i=1}^K\calH_{ B_i^R}) \otimes \calH_C$.
	By the isomorphism theorem, we can obtain an element of $\Sigma_I(\Omega')$ by an extension of $\rho^I_{\Omega}$. This element can be written as
	\begin{equation}
	\rho^{I }_{\Omega'} = \left(\otimes_{i=1}^K \,\rho^I_{A'_i B_i^L} \right) \otimes \rho_{(\cup_i B_i^R) \cup C}, \label{eq:ext_decomposition_enlarged}
	\end{equation}
	where $ \Tr_{A'_i \setminus A_i} \rho_{A'_i B_i^L} = \rho_{A_i B_i^L}$. Equation.~\eqref{eq:ext_decomposition_enlarged} holds because the extension from $\Omega$ to $\Omega'$ can be done by applying a sequence of nonoverlapping quantum channels on each $A_i$.
	
	Let $\vert \varphi \rangle$ be a (normalized) state in the span of the eigenstates of $\rho_{(\cup_i B_i^R) \cup C}$ with positive eigenvalues. Then it follows that \begin{equation}
	\rho_{(\cup_i B_i^R) \cup C} = p \vert \varphi \rangle \langle \varphi\vert +(1-p) \widetilde{\rho}_{(\cup_i B_i^R) \cup C}
	\end{equation}
	for some $p\in (0,1)$ and density matrix $\widetilde{\rho}_{(\cup_i B_i^R) \cup C}$. Therefore,
	\begin{equation}
	\rho^{I}_{\Omega'} = p \rho^{I;\varphi}_{\Omega'} + (1-p) \widetilde{\rho}^{I}_{\Omega'}
	\end{equation}
	where 
	\begin{eqnarray}
	\rho^{I;\varphi}_{\Omega'} &=&  \left(\otimes_{i=1}^K \,\rho^I_{A'_i B_i^L} \right) \otimes \vert \varphi \rangle \langle \varphi \vert,\\
	\widetilde{\rho}^{I}_{\Omega'} &=&  \left(\otimes_{i=1}^K \,\rho^I_{A'_i B_i^L} \right) \otimes \widetilde{\rho}_{(\cup_i B_i^R) \cup C},
	\end{eqnarray}
	Because of Lemma~\ref{Lemma:Trace}, $\Tr_{\Omega'\setminus \Omega} \, \rho^{I;\varphi}_{\Omega'}$ must belong to $\Sigma_I(\Omega)$. Therefore,
	\begin{equation}
	\left(\otimes_{i=1}^K \,\rho^I_{A_i B_i^L} \right) \otimes \vert \varphi \rangle \langle \varphi \vert \in \Sigma_I(\Omega). \label{eq:extreme_factorized_varphi}
	\end{equation}
	Moreover, the state on the left-hand side of Eq.~\eqref{eq:extreme_factorized_varphi} must be an extreme point of $\Sigma_I(\Omega)$. This is because every element of $\Sigma_I(\Omega)$ is of the form Eq.~\eqref{eq:ext_decomposition_1}. Therefore,  $\{ \rho_{(\cup_i B_i^R) \cup C} \}$ forms the state space of a finite dimensional subspace of $(\otimes_{i=1}^K\calH_{ B_i^R}) \otimes \calH_C$.

	In particular, $\{ \rho_{(\cup_i B_i^R) \cup C} \}\cong \mathcal{S}(\mathbb{V}_I)$ for some finite dimensional Hilbert space $\mathbb{V}_I$. This justifies Eq.~\eqref{eq:Hilbert_1}. Equation~\eqref{eq:Hilbert_2} holds because every extreme point of $\Sigma_I(\Omega)$ is of the form shown on the left-hand side of Eq.~\eqref{eq:extreme_factorized_varphi} for some $\vert \varphi\rangle$.
	This completes the proof.
\end{proof}

\section{Aspects of quasi-fusion}\label{appendix:quasi-fusion}
In this section, we initiate a yet-to-be-completed theory of quasi-fusion. We begin by showing that the notion of anti-sector is well-defined.

\begin{figure}[h]
	\centering
\includegraphics[scale=1]{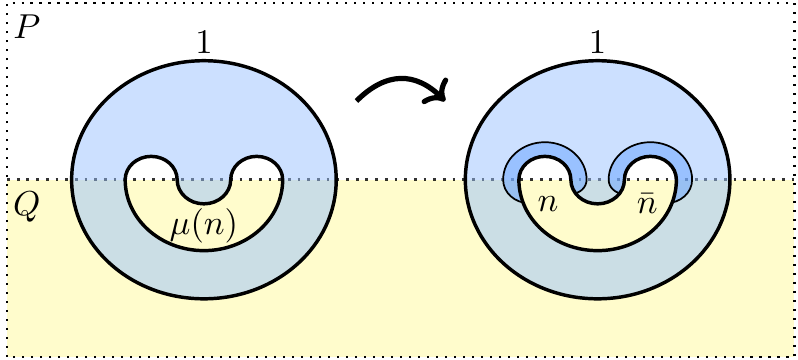}
	\caption{The definition of anti-sector $\bar{n}$.}

	\label{fig:n_bar}
\end{figure}

We define the anti-sector map ($n\to \bar{n}$) as the automorphism of $\calC_N$ illustrated in Fig.~\ref{fig:n_bar}. Here, the two disconnected boundaries of the subsystem carry sector labels $1\in \calC_O$ and $\mu(n)\in\calC_{\mathbb{U}}$ respectively. Then we look at the two $N$-shaped subsystems in darker blue. On the left side, we have $n\in \calC_N$. The unique sector on the right side, which we denote as $\bar{n}\in \calC_N$, is \emph{defined as} the anti-sector of $n$.

To see why the anti-sector map is an automorphism of $\calC_N$, we show the map is a bijection. This fact follows from two observations. First, the map $n \to \mu(n)$ is a bijection. This follows from the definition $\mu= \varphi \circ \eta_N$, where $\eta_N: \calC_N \to \calC_{ \mathbb{N} }$ is an embedding and $\varphi: \calC_{\mathbb{N}} \to \calC_{ \mathbb{U}}$ is an isomorphism; see Eqs.~\eqref{eq:embedding_N2NN} and \eqref{def:varphi}. Second, the map $\bar{n} \to \mu(n)$ is a bijection. The details are similar to the analysis shown above; we simply need to consider the ``mirror image'' of the isomorphism $\varphi$. 
Thus, the anti-sector map $n\to \bar{n}$ is an automorphism of $\calC_N$.

\begin{figure}[h]
	\centering
\includegraphics[scale=1]{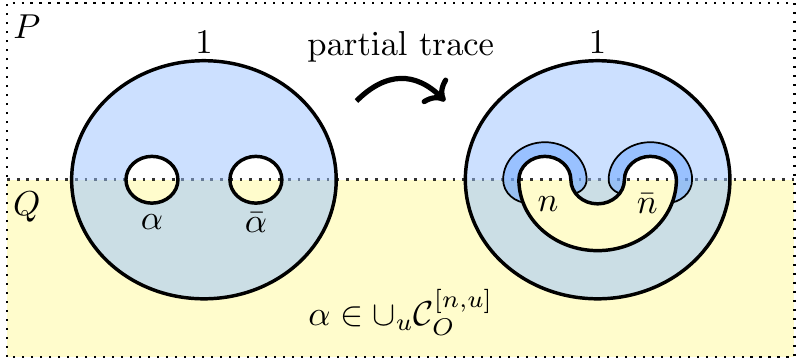}
	\caption{An alternative definition of $\bar{n}$.}
	\label{fig:n_bar_alternative}
\end{figure}

Furthermore, from Fig.~\ref{fig:n_bar_alternative}, it is easy to see that $\alpha \in \calC_O^{[n,u]}$ if and only if $\bar{\alpha} \in \calC_O^{[\bar{n}, \bar{u}]}$.  This is because, when $\alpha \in \calC_O^{[n,u]}$, after taking a partial trace, we obtain the same density matrix shown in Fig.~\ref{fig:n_bar}. This fact can serve as an alternative definition of the anti-sector. From this alternative (equivalent) definition, we have
\begin{equation}
\begin{aligned}
   \bar{\bar{n}} &= n,\\
   d_n &= d_{\bar{n}}.
\end{aligned}
\end{equation}
The first line follows from the fact that $\bar{\bar{\alpha}}=\alpha$. The second line follows from $d_{\alpha}=d_{\bar{\alpha}}$ and Eq.~\eqref{eq:qd_parton_composite}.

\subsection{Quasi-fusion rule} \label{ap:Subsystem_M_related}
In this section, we provide some details of the quasi-fusion rule of the $N$-type parton sectors. In particular, we present the proof of Proposition~\ref{Prop:Sigma_nn'^n''}. We restate the content below for the readers' convenience. These results concern the information convex set of a $M$-shaped subsystem, denoted as $M$. This $M$-shaped subsystem (see Fig.~\ref{fig:M-shape_and_quasi_fusion}) contains three $N$-shaped subsystems. For this reason, we can consider the following convex subsets of $\Sigma(M)$:
\begin{equation}
    \Sigma_{nn'}^{n''}(M),
\end{equation}
where $n, n', n'' \in \calC_N$ are the sector labels for the three $N$-shaped subsystems. These sets satisfy the following statements:
	\begin{enumerate}
		\item Every extreme point of $\Sigma(M)$ is contained in some $\Sigma_{nn'}^{n''}(M)$.
		\item $\cup_{n''} \Sigma_{nn'}^{n''}(M)$ is nonempty for $\forall \, n,n'\in\calC_N$.
		\item $\Sigma_{n1}^{n''}(M)$ is the empty set for $n''\ne n$. For $n''=n$, it has a unique element.
		\item $\Sigma_{1n'}^{n''}(M)$ is the empty set for $n''\ne n'$. For $n''=n'$, it has a unique element.
		\item $\Sigma_{nn'}^{1}(M)$ is the empty set for $n'\ne \bar{n}$. For $n'=\bar{n}$, it has a unique element.
	\end{enumerate}

\begin{figure}[h]
	\centering
\includegraphics[scale=0.97]{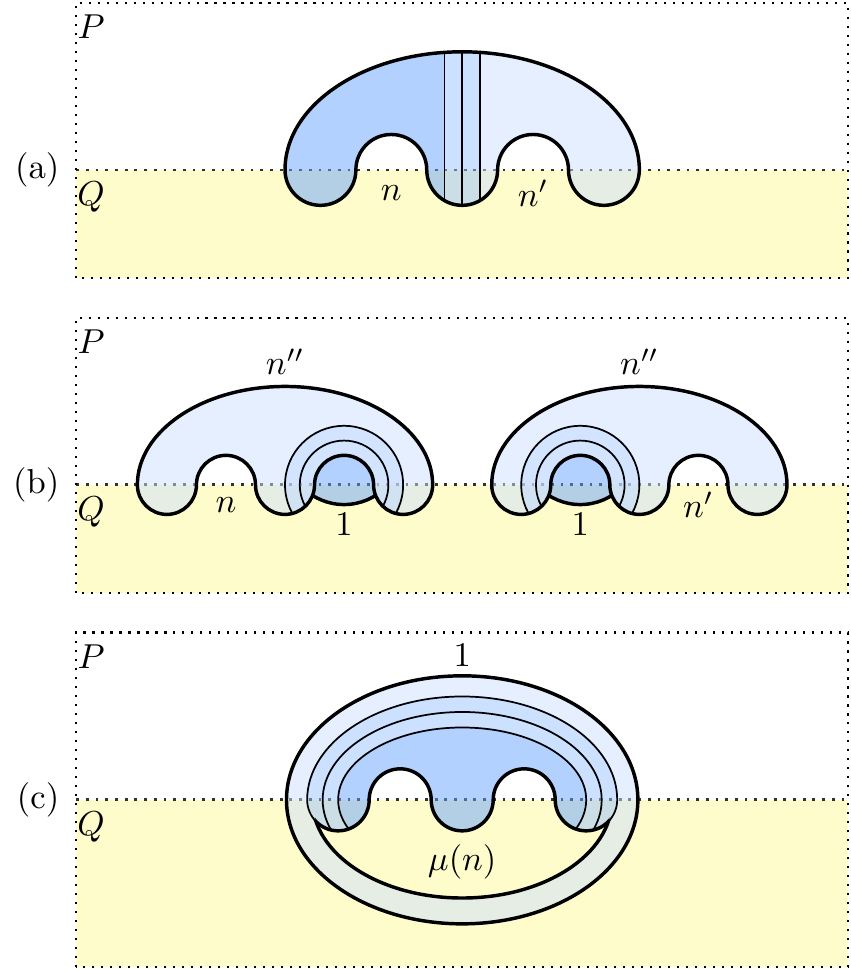}
	\caption{(a) The merging of two $N$-shaped subsystems. (b) Patching a slot for the sector choice $n'=1\in \calC_N$ (left) and $n=1\in \calC_N$ (right). (c) The merging of $M$ with an annulus for $n''=1\in\calC_N$, where the annulus is in the vacuum sector $1\in \calC_O$. }
	\label{fig:M_proofs}
\end{figure}

Let us prove these statements one by one.
For the proof of the first statement, let $\rho_M^{\langle e\rangle }$ be an extreme point of $\Sigma(M)$. Let $\partial M$ be the thickened boundary of $M$. According to the general discussion of the fusion space at the beginning of Section~\ref{sec:fusion}, $\partial M$ is a sectorizable subsystem; moreover, an extreme point $ \rho_M^{\langle e\rangle }$ must carry a fixed sector $I\in \calC_{\partial M}$. Furthermore, we see from the shape of $\partial M$ that the sector $I$ is a composite sector (see Section~\ref{sec:composite_sectors}). Because the three $N$-shaped subsystems in Fig.~\ref{fig:M-shape_and_quasi_fusion} can be identified as subsystems of $\partial M$, we always find a set of parton sectors $n,n', n''\in \calC_N$ on them. This implies that each extreme point of $\Sigma(M)$ must carry a definite set of sectors $n, n', n'' \in \calC_N$. This establishes the first statement.
	
The second statement follows from the merging process described in Fig.~\ref{fig:M_proofs}(a). For any choice of $n,n'\in \calC_N$, the merged state exists, and it belongs to $\conv\big(\cup_{n''} \Sigma_{nn'}^{n''}(M)\big)$. Thus $\cup_{n''} \Sigma_{nn'}^{n''}(M)$ is nonempty.

The third and the fourth statements follow from the same idea. The relevant merging processes are illustrated in Fig.~\ref{fig:M_proofs}(b). Due to the similarity, we only provide the proof for the third statement. Suppose $\Sigma_{n1}^{n''}(M)$ contains at least one element. Then we can apply the merging process in the left figure of Fig.~\ref{fig:M_proofs}(b). Let the resulting subsystem be $N$. Then the original density matrix must be the reduced density matrix of the extreme point $\rho_N^n$ of $\Sigma(N)$. This implies that $\Sigma_{n1}^{n''}(M)$ is an empty set for $n''\ne n$. For the same reason, $\Sigma_{n1}^{n}(M)$ must have a unique element for $\forall n\in \calC_N$. 
	
For the fifth statement, suppose $\Sigma_{nn'}^1(M)$ is nonempty. For an element of $\Sigma_{nn'}^1(M)$, we can apply the merging process shown in Fig.~\ref{fig:M_proofs}(c). Here, the annulus carries $1\in \calC_O$. This merging process is possible because $n''=1$. The merged state must carry the sector $\mu(n)\in \calC_{\mathbb{U}}$ in the newly formed boundary. Therefore, the merged state is unique; it is the state depicted in Fig.~\ref{fig:n_bar}. Thus, $n'=\bar{n}$ is a necessary condition for $\Sigma_{n n'}^1(M)$ to be nonempty.
Furthermore, there is a unique element in $\Sigma_{n \bar{n}}^1(M)$. The existence follows from the fact that for each $n$, a unique state depicted in Fig.~\ref{fig:n_bar}  exists, which is labeled by $1\in \calC_O$ and $\mu(n)\in \calC_{ \mathbb{U}}$. Every element in $\Sigma_{n \bar{n}}^1(M)$ must be the reduced density matrix of this unique density matrix, and therefore the choice is unique.
This completes the proof.

\subsection{When quasi-fusion becomes a fusion}\label{appendix:gapped boundary}

Sometimes, the quasi-fusion rule reduces to the ordinary rule of fusion. This happens when one side of the domain wall, say $Q$, has a trivial anyon content. In this section, we explain this reasoning.

Let us emphasize that our argument applies even if $Q$ is \emph{not} adiabatically connected to the (trivial) product state. There is one such nontrivial example, namely the ${E}_8$ state~\cite{Kitaev2006solo}. While such phases support a chiral edge mode, one may be able to gap out this mode by placing a topological phase on the $P$ side that matches the chiral central charge and turning on some perturbations along the domain wall. If this is possible, our argument would still apply.

Here are the key results.
\begin{enumerate}
	\item $\calC_U =\{ 1\}$.
	\item $\calC_N $ and $ \calC_O$ are isomorphic. Furthermore, under the isomorphism $\mathbf{C}: \calC_N \to \calC_O$, we have $d_n = d_{\mathbf{C}(n)}$.
	\item The quasi-fusion rule of parton sectors in Fig.~\ref{fig:M-shape_and_quasi_fusion} coincides with the conventional fusion rule. Namely, when we specify $n,n',n''\in \calC_N$, there is a unique fusion space, which can be labeled as $\mathbb{V}_{nn'}^{n''}$ and the associated fusion multiplicity $N_{nn'}^{n''}$ satisfies 
	\begin{equation}
	N_{nn'}^{n''} = N_{\mathbf{C}(n)\mathbf{C}(n')}^{\mathbf{C}(n'')}.
	\end{equation}
\end{enumerate}

\begin{figure}[h]
	\centering
\includegraphics[scale=1]{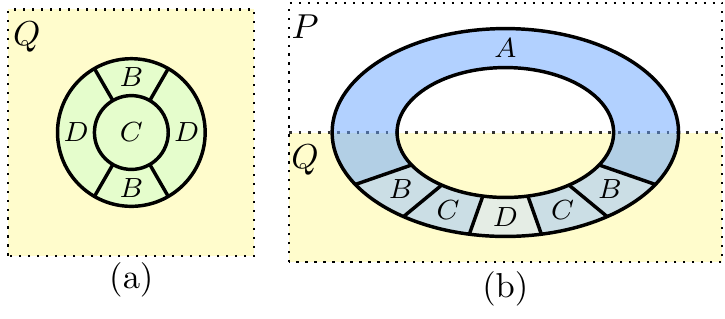}
	\caption{(a) If $Q$ has a trivial anyon content, the entropy combination $(S_{BC} + S_{CD} - S_B - S_D)_{\sigma}$ vanishes for the partition shown in this diagram. (b) The merging process in this figure is a ``connection process'', which turns an $N$-shaped subsystem into an $O$-shaped subsystem.}\label{fig:C_D_trivial_Q}
\end{figure}

The proofs of all three statements are similar.
The idea is to strengthen the isomorphism theorem when $Q$ has a trivial anyon content. We shall derive the first fact in detail. The rest follows straightforwardly. (The key idea has been illustrated in Fig.~\ref{fig:boundary_connect_M-shape}.)

Under the assumption that $Q$ has trivial anyon content, we have an additional identity. Namely, for the set of subsystems of $Q$ described in Fig.\ref{fig:C_D_trivial_Q}(a), we have
\begin{equation}
(S_{BC} + S_{CD} - S_B - S_D)_{\sigma} = 0. \label{eq:extra_condition}
\end{equation}
This additional identity implies that one can establish an isomorphism theorem between two subsystems with different topologies. Specifically, we can imagine a topology-changing connection/disconnection of a subsystem on the $Q$ side; see Figs.~\ref{fig:boundary_connect_M-shape} and \ref{fig:C_D_trivial_Q}(b). This connection/disconnection preserves the structure of the information convex sets.

Specifically, let $N=ABC$  ($O=ABCD$) be the $N$-shaped ($O$-shaped) subsystem shown in Fig.~\ref{fig:C_D_trivial_Q}(b). A connection process is a pair of operations acting on the region $N$ and its information convex set $\Sigma(N)$. The connection process turns $N$ into $O$. This connection process is associated with a map $\mathbf{C}: \Sigma(N)\to \Sigma(O)$, which is defined by the merging process in Fig.~\ref{fig:C_D_trivial_Q}. Conversely, the disconnection process $\mathbf{D}$ turns $O$ into $N$. The action of this map in $\Sigma(O)$ is simple; simply take a partial trace on $O\setminus N$.

While both of these processes can be applied to any choice of $P$ and $Q$, $\mathbf{C} \circ \mathbf{D} : \Sigma(O)\to \Sigma(O)$ is irreversible in general. However, when $Q$ has a trivial anyon content, due to the extra condition Eq.~(\ref{eq:extra_condition}), an arbitrary element $\rho_O\in \Sigma(O)$ satisfies
\begin{equation}
I(A:CD\vert B)_{\rho} = I(AB:D \vert C)_{\rho}=0.
\end{equation}In this case, $\mathbf{C} \circ \mathbf{D}$ is the identity operation on $\Sigma(O)$. This implies that $\Sigma(N)$ and $\Sigma(O)$ are isomorphic, with isomorphisms given by $\mathbf{C}$ and $\mathbf{D}$.

\section{Proof of maximal entropy}\label{appendix:alpha_dual}

In this appendix, we prove Eq.~\eqref{eq:merge_temp}, which implies that the reduced density matrices of $\rho_{W}^{(\alpha \overrightarrow{\alpha}; \beta \overrightarrow{\beta})}$ on $G_L$ and $G_R$ are certain maximum-entropy states. Because the derivation of the two identities are similar, we will only present the derivation of
\begin{equation}
\Tr_{W\setminus G_L} \, \rho_W^{(\alpha \overrightarrow{\alpha} ;\beta \overrightarrow{\beta})}= \rho_{G_L}^{\alpha \merge \beta} \label{eq:_rho_independence_G_L}
\end{equation}
in details. The subsystems and the merging process relevant to this proof are illustrated in Fig.~\ref{fig:4_hole_combined_appendix}.
\begin{figure}[h]
	\centering
\includegraphics[width=0.85\columnwidth]{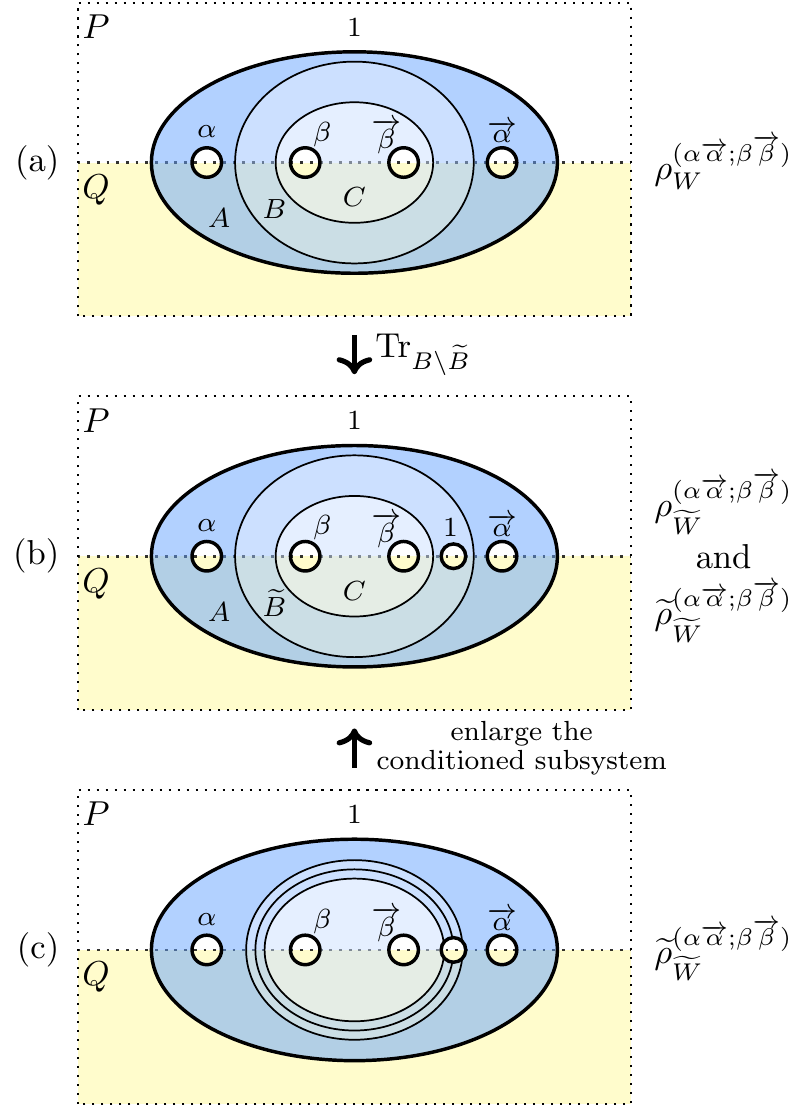}
	\caption{(a) The partition $W=ABC$.
	(b) Tracing out a hole from $B$ and get $\widetilde{B}$. We denote $A\widetilde{B}C$ as $\widetilde{W}$. For the density matrices we consider, the fifth hole is in the vacuum sector. (c) The merging process that defines $\widetilde{\rho}_{\widetilde{W}}^{(\alpha\protect\overrightarrow{\alpha};\beta\protect\overrightarrow{\beta})}$.}
	\label{fig:4_hole_combined_appendix}
\end{figure}

\begin{proof}
The main strategy is to construct another merged state $\widetilde{\rho}^{(\alpha\overrightarrow{\alpha}; \beta\overrightarrow{\beta})}$. We show that this is identical to ${\rho}^{(\alpha\overrightarrow{\alpha}; \beta\overrightarrow{\beta})}$ on a subsystem containing $G_L$. We further show that the reduced density matrix of $\widetilde{\rho}^{(\alpha\overrightarrow{\alpha}; \beta\overrightarrow{\beta})}$ on $G_L$ is $\rho^{\alpha\merge \beta}_{G_L}$. Below are the details.

First, if we divide the 4-hole disk $W$ into $W=ABC$ as shown in Fig.~\ref{fig:4_hole_combined_appendix}(a), we obtain a conditional independence condition
\begin{equation}
I(A:C\vert B)_{\rho^{(\alpha\overrightarrow{\alpha}; \beta\overrightarrow{\beta})}}=0, \label{eq:4-hole_ABC}
\end{equation}
where the annulus $B$ can be either the overlapping region of the merging process depicted in Fig.~\ref{fig:4_hole_combined}(b$\to$c) or a region enlarged from it. If $B$ is the overlapping region, the conditional independence follows from that of the merged state; if $B$ is enlarged from the overlapping region, we use the following consequence of SSA, $I(AA':CC' \vert B)\le I(A:C\vert A'C'B)$, to establish the conditional independence relation.

Second, we cut a hole from $B$ and reduce it to $\widetilde{B}$; see Figs.~\ref{fig:4_hole_combined_appendix}(a) and \ref{fig:4_hole_combined_appendix}(b). We shall refer to this hole as the fifth hole from now on. Also, we shall denote $A\widetilde{B} C$ as $\widetilde{W}$. Obviously, for the state ${\rho}^{(\alpha\overrightarrow{\alpha}; \beta\overrightarrow{\beta})}$, the fifth hole is in the vacuum sector. One can show
\begin{equation}
I(A:C\vert \widetilde{B})_{\rho^{(\alpha\overrightarrow{\alpha}; \beta\overrightarrow{\beta})}}=0 \label{eq:4-hole_ABC_reduced}
\end{equation}
using the following argument. Note that the state ${\rho}_W^{(\alpha\overrightarrow{\alpha}; \beta\overrightarrow{\beta})}$ satisfies an \emph{extended} domain wall version of condition \textbf{A0} on the disk covering the fifth hole and an extra layer surrounding that hole, leading to
\begin{equation}
\begin{aligned}
(S_{B\setminus\widetilde{B}})_{\rho^{(\alpha\overrightarrow{\alpha}; \beta\overrightarrow{\beta})}} &= (S_{\widetilde{B}}-S_{B})_{\rho^{(\alpha\overrightarrow{\alpha}; \beta\overrightarrow{\beta})}}\\
&=(S_{A\widetilde{B}}-S_{AB})_{\rho^{(\alpha\overrightarrow{\alpha}; \beta\overrightarrow{\beta})}} \\
    &= (S_{\widetilde{B}C}-S_{BC})_{\rho^{(\alpha\overrightarrow{\alpha}; \beta\overrightarrow{\beta})}} \\
    &= (S_{A\widetilde{B}C}-S_{ABC})_{\rho^{(\alpha\overrightarrow{\alpha}; \beta\overrightarrow{\beta})}}.
\end{aligned}
\end{equation}
Plugging in these identities to Eq.~\eqref{eq:4-hole_ABC}, we obtain Eq.~\eqref{eq:4-hole_ABC_reduced}.

Third, we consider a different merging process depicted in Fig.~\ref{fig:4_hole_combined_appendix}(c). The density matrices involved are identical to that involved in the merging process in Fig.~\ref{fig:4_hole_combined}(a,b$\to $c), but the subsystem choices are different; in the case of Fig.~\ref{fig:4_hole_combined_appendix}(c), we reduce the density matrices to the ones on smaller regions before merging them. The resulting region is $\widetilde{W}$ instead of $W$. Let us denote the merged state as $\widetilde{\rho}_{\widetilde{W}}^{(\alpha\overrightarrow{\alpha}; \beta\overrightarrow{\beta})}$, for which the fifth hole of $\widetilde{W}$ carries the vacuum sector because $N_{1\gamma}^1=\delta_{\gamma,1}$.
This density matrix has the following properties:
\begin{equation}
\begin{aligned}
I(A:C\vert \widetilde{B})_{\widetilde{\rho}^{(\alpha\overrightarrow{\alpha}; \beta\overrightarrow{\beta})}} &=0\\
\widetilde{\rho}_{A\widetilde{B}}^{(\alpha\overrightarrow{\alpha}; \beta\overrightarrow{\beta})}&={\rho}_{A\widetilde{B}}^{(\alpha\overrightarrow{\alpha}; \beta\overrightarrow{\beta})}\\
\widetilde{\rho}_{\widetilde{B}C}^{(\alpha\overrightarrow{\alpha}; \beta\overrightarrow{\beta})}&={\rho}_{\widetilde{B}C}^{(\alpha\overrightarrow{\alpha}; \beta\overrightarrow{\beta})}.
\end{aligned} \label{eq:properties_widetilde_rho}
\end{equation}
Equations~\eqref{eq:properties_widetilde_rho} and \eqref{eq:4-hole_ABC_reduced} imply that 
\begin{equation}
\rho^{(\alpha\overrightarrow{\alpha}; \beta\overrightarrow{\beta})}_{\widetilde{W}}= \widetilde{\rho}^{(\alpha\overrightarrow{\alpha}; \beta\overrightarrow{\beta})}_{\widetilde{W}}. \label{eq:eq_of_rho_and_tilde_rho}
\end{equation}
 This is because any two tripartite states over $A, B, $ and $C$ obeying $I(A:C|B)=0$ and having identical reduced density matrices over $AB$ and $BC$ are equal~\cite{Kim2014a}.

Next, we observe that 
\begin{equation}
\Tr_{\widetilde{W}\setminus G_L} \widetilde{\rho}_{\widetilde{W}}^{(\alpha\overrightarrow{\alpha}; \beta\overrightarrow{\beta})}=\rho_{G_L}^{\alpha\merge \beta}. \label{eq:tilde_rho_independence_G_L}
\end{equation}
This statement follows from two facts. (i) The state $\widetilde{\rho}_{\widetilde{W}}^{(\alpha\overrightarrow{\alpha}; \beta\overrightarrow{\beta})}$  is conditionally independent with respect to the partition in Fig.~\ref{fig:4_hole_combined_appendix}(c), where the conditioned subsystem is that between the triple line. (ii) We can apply a partial trace on the unconditioned subsystems to connect the holes with sectors $\overrightarrow{\beta}, 1 , \overrightarrow{\alpha}$ and the outer boundary of $\widetilde{W}$. This partial trace reduces $A\to \widetilde{A}$ and $C \to \widetilde{C}$. Thus, $I(\widetilde{A}: \widetilde{C} \vert \widetilde{B})_{\widetilde{\rho}^{(\alpha\overrightarrow{\alpha}; \beta\overrightarrow{\beta})}}=0$. Furthermore, $\widetilde{A}\widetilde{B}\widetilde{C}$ can be smoothly deformed into $G_L$. 

These two facts imply that the state $\widetilde{\rho}^{(\alpha\overrightarrow{\alpha}; \beta\overrightarrow{\beta})}$, after reduced to $G_L$, must be the maximum-entropy state with the sector choice $\alpha$ and $\beta$. This implies Eq.~\eqref{eq:tilde_rho_independence_G_L}.  

Finally, it follows from Eq.~\eqref{eq:eq_of_rho_and_tilde_rho} and~\eqref{eq:tilde_rho_independence_G_L} that Eq.~\eqref{eq:_rho_independence_G_L} is true. This completes the proof.
\end{proof}

\newpage

\bibliography{ref}

%apsrev4-2.bst 2019-01-14 (MD) hand-edited version of apsrev4-1.bst
%Control: key (0)
%Control: author (8) initials jnrlst
%Control: editor formatted (1) identically to author
%Control: production of article title (0) allowed
%Control: page (0) single
%Control: year (1) truncated
%Control: production of eprint (0) enabled
\begin{thebibliography}{55}%
\makeatletter
\providecommand \@ifxundefined [1]{%
 \@ifx{#1\undefined}
}%
\providecommand \@ifnum [1]{%
 \ifnum #1\expandafter \@firstoftwo
 \else \expandafter \@secondoftwo
 \fi
}%
\providecommand \@ifx [1]{%
 \ifx #1\expandafter \@firstoftwo
 \else \expandafter \@secondoftwo
 \fi
}%
\providecommand \natexlab [1]{#1}%
\providecommand \enquote  [1]{``#1''}%
\providecommand \bibnamefont  [1]{#1}%
\providecommand \bibfnamefont [1]{#1}%
\providecommand \citenamefont [1]{#1}%
\providecommand \href@noop [0]{\@secondoftwo}%
\providecommand \href [0]{\begingroup \@sanitize@url \@href}%
\providecommand \@href[1]{\@@startlink{#1}\@@href}%
\providecommand \@@href[1]{\endgroup#1\@@endlink}%
\providecommand \@sanitize@url [0]{\catcode `\\12\catcode `\$12\catcode
  `\&12\catcode `\#12\catcode `\^12\catcode `\_12\catcode `\%12\relax}%
\providecommand \@@startlink[1]{}%
\providecommand \@@endlink[0]{}%
\providecommand \url  [0]{\begingroup\@sanitize@url \@url }%
\providecommand \@url [1]{\endgroup\@href {#1}{\urlprefix }}%
\providecommand \urlprefix  [0]{URL }%
\providecommand \Eprint [0]{\href }%
\providecommand \doibase [0]{https://doi.org/}%
\providecommand \selectlanguage [0]{\@gobble}%
\providecommand \bibinfo  [0]{\@secondoftwo}%
\providecommand \bibfield  [0]{\@secondoftwo}%
\providecommand \translation [1]{[#1]}%
\providecommand \BibitemOpen [0]{}%
\providecommand \bibitemStop [0]{}%
\providecommand \bibitemNoStop [0]{.\EOS\space}%
\providecommand \EOS [0]{\spacefactor3000\relax}%
\providecommand \BibitemShut  [1]{\csname bibitem#1\endcsname}%
\let\auto@bib@innerbib\@empty
%</preamble>
\bibitem [{\citenamefont {Wen}(2004)}]{Wen2004}%
  \BibitemOpen
  \bibfield  {author} {\bibinfo {author} {\bibfnamefont {X.-G.}\ \bibnamefont
  {Wen}},\ }\href@noop {} {\emph {\bibinfo {title} {Quantum Field Theory of
  Many-Body Systems}}}\ (\bibinfo  {publisher} {Oxford Univ. Press, Oxford},\
  \bibinfo {year} {2004})\BibitemShut {NoStop}%
\bibitem [{\citenamefont {Leinaas}\ and\ \citenamefont
  {Myrheim}(1977)}]{Leinaas1977}%
  \BibitemOpen
  \bibfield  {author} {\bibinfo {author} {\bibfnamefont {J.~M.}\ \bibnamefont
  {Leinaas}}\ and\ \bibinfo {author} {\bibfnamefont {J.}~\bibnamefont
  {Myrheim}},\ }\bibfield  {title} {\bibinfo {title} {On the theory of
  identical particles},\ }\href {https://doi.org/10.1007/BF02727953} {\bibfield
   {journal} {\bibinfo  {journal} {Il Nuovo Cimento B (1971-1996)}\ }\textbf
  {\bibinfo {volume} {37}},\ \bibinfo {pages} {1} (\bibinfo {year}
  {1977})}\BibitemShut {NoStop}%
\bibitem [{\citenamefont {Wilczek}(1982)}]{PhysRevLett.48.1144}%
  \BibitemOpen
  \bibfield  {author} {\bibinfo {author} {\bibfnamefont {F.}~\bibnamefont
  {Wilczek}},\ }\bibfield  {title} {\bibinfo {title} {Magnetic flux, angular
  momentum, and statistics},\ }\href
  {https://doi.org/10.1103/PhysRevLett.48.1144} {\bibfield  {journal} {\bibinfo
   {journal} {Phys. Rev. Lett.}\ }\textbf {\bibinfo {volume} {48}},\ \bibinfo
  {pages} {1144} (\bibinfo {year} {1982})}\BibitemShut {NoStop}%
\bibitem [{\citenamefont {Arovas}\ \emph {et~al.}(1984)\citenamefont {Arovas},
  \citenamefont {Schrieffer},\ and\ \citenamefont
  {Wilczek}}]{PhysRevLett.53.722}%
  \BibitemOpen
  \bibfield  {author} {\bibinfo {author} {\bibfnamefont {D.}~\bibnamefont
  {Arovas}}, \bibinfo {author} {\bibfnamefont {J.~R.}\ \bibnamefont
  {Schrieffer}},\ and\ \bibinfo {author} {\bibfnamefont {F.}~\bibnamefont
  {Wilczek}},\ }\bibfield  {title} {\bibinfo {title} {Fractional statistics and
  the quantum hall effect},\ }\href
  {https://doi.org/10.1103/PhysRevLett.53.722} {\bibfield  {journal} {\bibinfo
  {journal} {Phys. Rev. Lett.}\ }\textbf {\bibinfo {volume} {53}},\ \bibinfo
  {pages} {722} (\bibinfo {year} {1984})}\BibitemShut {NoStop}%
\bibitem [{\citenamefont {Kalmeyer}\ and\ \citenamefont
  {Laughlin}(1987)}]{Kalmeyer1987}%
  \BibitemOpen
  \bibfield  {author} {\bibinfo {author} {\bibfnamefont {V.}~\bibnamefont
  {Kalmeyer}}\ and\ \bibinfo {author} {\bibfnamefont {R.~B.}\ \bibnamefont
  {Laughlin}},\ }\bibfield  {title} {\bibinfo {title} {Equivalence of the
  resonating-valence-bond and fractional quantum hall states},\ }\href
  {https://doi.org/10.1103/PhysRevLett.59.2095} {\bibfield  {journal} {\bibinfo
   {journal} {Phys. Rev. Lett.}\ }\textbf {\bibinfo {volume} {59}},\ \bibinfo
  {pages} {2095} (\bibinfo {year} {1987})}\BibitemShut {NoStop}%
\bibitem [{\citenamefont {Moore}\ and\ \citenamefont {Read}(1991)}]{Moore1991}%
  \BibitemOpen
  \bibfield  {author} {\bibinfo {author} {\bibfnamefont {G.}~\bibnamefont
  {Moore}}\ and\ \bibinfo {author} {\bibfnamefont {N.}~\bibnamefont {Read}},\
  }\bibfield  {title} {\bibinfo {title} {Nonabelions in the fractional quantum
  hall effect},\ }\href@noop {} {\bibfield  {journal} {\bibinfo  {journal}
  {Nucl. Phys. B}\ }\textbf {\bibinfo {volume} {360}},\ \bibinfo {pages} {362 }
  (\bibinfo {year} {1991})}\BibitemShut {NoStop}%
\bibitem [{\citenamefont {Tsui}\ \emph {et~al.}(1982)\citenamefont {Tsui},
  \citenamefont {Stormer},\ and\ \citenamefont {Gossard}}]{Tsui1982}%
  \BibitemOpen
  \bibfield  {author} {\bibinfo {author} {\bibfnamefont {D.~C.}\ \bibnamefont
  {Tsui}}, \bibinfo {author} {\bibfnamefont {H.~L.}\ \bibnamefont {Stormer}},\
  and\ \bibinfo {author} {\bibfnamefont {A.~C.}\ \bibnamefont {Gossard}},\
  }\bibfield  {title} {\bibinfo {title} {Two-dimensional magnetotransport in
  the extreme quantum limit},\ }\href@noop {} {\bibfield  {journal} {\bibinfo
  {journal} {Phys. Rev. Lett.}\ }\textbf {\bibinfo {volume} {48}},\ \bibinfo
  {pages} {1559} (\bibinfo {year} {1982})}\BibitemShut {NoStop}%
\bibitem [{\citenamefont {{Kane}}\ and\ \citenamefont
  {{Fisher}}(1997)}]{1997PhRvB..5515832K}%
  \BibitemOpen
  \bibfield  {author} {\bibinfo {author} {\bibfnamefont {C.~L.}\ \bibnamefont
  {{Kane}}}\ and\ \bibinfo {author} {\bibfnamefont {M.~P.~A.}\ \bibnamefont
  {{Fisher}}},\ }\bibfield  {title} {\bibinfo {title} {{Quantized thermal
  transport in the fractional quantum Hall effect}},\ }\href
  {https://doi.org/10.1103/PhysRevB.55.15832} {\bibfield  {journal} {\bibinfo
  {journal} {\prb}\ }\textbf {\bibinfo {volume} {55}},\ \bibinfo {pages}
  {15832} (\bibinfo {year} {1997})},\ \Eprint
  {https://arxiv.org/abs/cond-mat/9603118} {arXiv:cond-mat/9603118 [cond-mat]}
  \BibitemShut {NoStop}%
\bibitem [{\citenamefont {{Kitaev}}(2006)}]{Kitaev2006solo}%
  \BibitemOpen
  \bibfield  {author} {\bibinfo {author} {\bibfnamefont {A.}~\bibnamefont
  {{Kitaev}}},\ }\bibfield  {title} {\bibinfo {title} {{Anyons in an exactly
  solved model and beyond}},\ }\href
  {https://doi.org/10.1016/j.aop.2005.10.005} {\bibfield  {journal} {\bibinfo
  {journal} {Ann. Phys.}\ }\textbf {\bibinfo {volume} {321}},\ \bibinfo {pages}
  {2} (\bibinfo {year} {2006})},\ \Eprint
  {https://arxiv.org/abs/cond-mat/0506438} {arXiv:cond-mat/0506438
  [cond-mat.mes-hall]} \BibitemShut {NoStop}%
\bibitem [{\citenamefont {Bravyi}\ and\ \citenamefont
  {Kitaev}(1998)}]{Bravyi1998}%
  \BibitemOpen
  \bibfield  {author} {\bibinfo {author} {\bibfnamefont {S.}~\bibnamefont
  {Bravyi}}\ and\ \bibinfo {author} {\bibfnamefont {A.}~\bibnamefont
  {Kitaev}},\ }\bibfield  {title} {\bibinfo {title} {{Quantum codes on a
  lattice with boundary}},\ }\href@noop {} {\bibfield  {journal} {\bibinfo
  {journal} {ArXiv e-prints}\ } (\bibinfo {year} {1998})},\ \Eprint
  {https://arxiv.org/abs/quant-ph/9811052} {arXiv:quant-ph/9811052
  [cond-mat.str-el]} \BibitemShut {NoStop}%
\bibitem [{\citenamefont {Beigi}\ \emph {et~al.}(2011)\citenamefont {Beigi},
  \citenamefont {Shor},\ and\ \citenamefont {Whalen}}]{Beigi2010}%
  \BibitemOpen
  \bibfield  {author} {\bibinfo {author} {\bibfnamefont {S.}~\bibnamefont
  {Beigi}}, \bibinfo {author} {\bibfnamefont {P.~W.}\ \bibnamefont {Shor}},\
  and\ \bibinfo {author} {\bibfnamefont {D.}~\bibnamefont {Whalen}},\
  }\bibfield  {title} {\bibinfo {title} {The quantum double model with
  boundary: Condensations and symmetries},\ }\href
  {https://doi.org/10.1007/s00220-011-1294-x} {\bibfield  {journal} {\bibinfo
  {journal} {Comm. Math. Phys.}\ }\textbf {\bibinfo {volume} {306}},\ \bibinfo
  {pages} {663} (\bibinfo {year} {2011})},\ \Eprint
  {https://arxiv.org/abs/1006.5479v5} {1006.5479v5} \BibitemShut {NoStop}%
\bibitem [{\citenamefont {{Kitaev}}\ and\ \citenamefont
  {{Kong}}(2012)}]{KitaevKong2012}%
  \BibitemOpen
  \bibfield  {author} {\bibinfo {author} {\bibfnamefont {A.}~\bibnamefont
  {{Kitaev}}}\ and\ \bibinfo {author} {\bibfnamefont {L.}~\bibnamefont
  {{Kong}}},\ }\bibfield  {title} {\bibinfo {title} {{Models for Gapped
  Boundaries and Domain Walls}},\ }\href
  {https://doi.org/10.1007/s00220-012-1500-5} {\bibfield  {journal} {\bibinfo
  {journal} {Comm. Math. Phys.}\ }\textbf {\bibinfo {volume} {313}},\ \bibinfo
  {pages} {351} (\bibinfo {year} {2012})},\ \Eprint
  {https://arxiv.org/abs/1104.5047} {arXiv:1104.5047 [cond-mat.str-el]}
  \BibitemShut {NoStop}%
\bibitem [{\citenamefont {{Levin}}(2013)}]{2013PhRvX...3b1009L}%
  \BibitemOpen
  \bibfield  {author} {\bibinfo {author} {\bibfnamefont {M.}~\bibnamefont
  {{Levin}}},\ }\bibfield  {title} {\bibinfo {title} {{Protected Edge Modes
  without Symmetry}},\ }\href {https://doi.org/10.1103/PhysRevX.3.021009}
  {\bibfield  {journal} {\bibinfo  {journal} {Phys. Rev. X}\ }\textbf {\bibinfo
  {volume} {3}},\ \bibinfo {eid} {021009} (\bibinfo {year} {2013})},\ \Eprint
  {https://arxiv.org/abs/1301.7355} {arXiv:1301.7355 [cond-mat.str-el]}
  \BibitemShut {NoStop}%
\bibitem [{\citenamefont {Barkeshli}\ \emph {et~al.}(2013)\citenamefont
  {Barkeshli}, \citenamefont {Jian},\ and\ \citenamefont
  {Qi}}]{Barkeshli2013a}%
  \BibitemOpen
  \bibfield  {author} {\bibinfo {author} {\bibfnamefont {M.}~\bibnamefont
  {Barkeshli}}, \bibinfo {author} {\bibfnamefont {C.-M.}\ \bibnamefont
  {Jian}},\ and\ \bibinfo {author} {\bibfnamefont {X.-L.}\ \bibnamefont {Qi}},\
  }\bibfield  {title} {\bibinfo {title} {Theory of defects in abelian
  topological states},\ }\href@noop {} {\bibfield  {journal} {\bibinfo
  {journal} {Phys. Rev. B}\ }\textbf {\bibinfo {volume} {88}},\ \bibinfo
  {pages} {235103} (\bibinfo {year} {2013})}\BibitemShut {NoStop}%
\bibitem [{\citenamefont {{Kong}}(2014)}]{Kong2014}%
  \BibitemOpen
  \bibfield  {author} {\bibinfo {author} {\bibfnamefont {L.}~\bibnamefont
  {{Kong}}},\ }\bibfield  {title} {\bibinfo {title} {{Anyon condensation and
  tensor categories}},\ }\href
  {https://doi.org/10.1016/j.nuclphysb.2014.07.003} {\bibfield  {journal}
  {\bibinfo  {journal} {Nucl. Phys. B}\ }\textbf {\bibinfo {volume} {886}},\
  \bibinfo {pages} {436} (\bibinfo {year} {2014})},\ \Eprint
  {https://arxiv.org/abs/1307.8244} {arXiv:1307.8244 [cond-mat.str-el]}
  \BibitemShut {NoStop}%
\bibitem [{\citenamefont {Lan}\ \emph {et~al.}(2015)\citenamefont {Lan},
  \citenamefont {Wang},\ and\ \citenamefont {Wen}}]{Lan2015}%
  \BibitemOpen
  \bibfield  {author} {\bibinfo {author} {\bibfnamefont {T.}~\bibnamefont
  {Lan}}, \bibinfo {author} {\bibfnamefont {J.~C.}\ \bibnamefont {Wang}},\ and\
  \bibinfo {author} {\bibfnamefont {X.-G.}\ \bibnamefont {Wen}},\ }\bibfield
  {title} {\bibinfo {title} {Gapped domain walls, gapped boundaries, and
  topological degeneracy},\ }\href@noop {} {\bibfield  {journal} {\bibinfo
  {journal} {Phys. Rev. Lett.}\ }\textbf {\bibinfo {volume} {114}},\ \bibinfo
  {pages} {076402} (\bibinfo {year} {2015})}\BibitemShut {NoStop}%
\bibitem [{\citenamefont {Hung}\ and\ \citenamefont
  {Wan}(2015{\natexlab{a}})}]{Hung2015}%
  \BibitemOpen
  \bibfield  {author} {\bibinfo {author} {\bibfnamefont {L.-Y.}\ \bibnamefont
  {Hung}}\ and\ \bibinfo {author} {\bibfnamefont {Y.}~\bibnamefont {Wan}},\
  }\bibfield  {title} {\bibinfo {title} {Ground-state degeneracy of topological
  phases on open surfaces},\ }\href@noop {} {\bibfield  {journal} {\bibinfo
  {journal} {Phys. Rev. Lett.}\ }\textbf {\bibinfo {volume} {114}},\ \bibinfo
  {pages} {076401} (\bibinfo {year} {2015}{\natexlab{a}})}\BibitemShut
  {NoStop}%
\bibitem [{\citenamefont {Cong}\ \emph
  {et~al.}(2017{\natexlab{a}})\citenamefont {Cong}, \citenamefont {Cheng},\
  and\ \citenamefont {Wang}}]{Cong2017}%
  \BibitemOpen
  \bibfield  {author} {\bibinfo {author} {\bibfnamefont {I.}~\bibnamefont
  {Cong}}, \bibinfo {author} {\bibfnamefont {M.}~\bibnamefont {Cheng}},\ and\
  \bibinfo {author} {\bibfnamefont {Z.}~\bibnamefont {Wang}},\ }\bibfield
  {title} {\bibinfo {title} {Universal quantum computation with gapped
  boundaries},\ }\href@noop {} {\bibfield  {journal} {\bibinfo  {journal}
  {Phys. Rev. Lett.}\ }\textbf {\bibinfo {volume} {119}},\ \bibinfo {pages}
  {170504} (\bibinfo {year} {2017}{\natexlab{a}})}\BibitemShut {NoStop}%
\bibitem [{\citenamefont {Kitaev}(2003)}]{Kitaev2003}%
  \BibitemOpen
  \bibfield  {author} {\bibinfo {author} {\bibfnamefont {A.}~\bibnamefont
  {Kitaev}},\ }\bibfield  {title} {\bibinfo {title} {Fault-tolerant quantum
  computation by anyons},\ }\href
  {https://doi.org/https://doi.org/10.1016/S0003-4916(02)00018-0} {\bibfield
  {journal} {\bibinfo  {journal} {Ann. Phys.}\ }\textbf {\bibinfo {volume}
  {303}},\ \bibinfo {pages} {2 } (\bibinfo {year} {2003})}\BibitemShut
  {NoStop}%
\bibitem [{\citenamefont {{Kapustin}}\ and\ \citenamefont
  {{Saulina}}(2011)}]{2011NuPhB.845..393K}%
  \BibitemOpen
  \bibfield  {author} {\bibinfo {author} {\bibfnamefont {A.}~\bibnamefont
  {{Kapustin}}}\ and\ \bibinfo {author} {\bibfnamefont {N.}~\bibnamefont
  {{Saulina}}},\ }\bibfield  {title} {\bibinfo {title} {{Topological boundary
  conditions in abelian Chern-Simons theory}},\ }\href
  {https://doi.org/10.1016/j.nuclphysb.2010.12.017} {\bibfield  {journal}
  {\bibinfo  {journal} {Nucl. Phys. B}\ }\textbf {\bibinfo {volume} {845}},\
  \bibinfo {pages} {393} (\bibinfo {year} {2011})},\ \Eprint
  {https://arxiv.org/abs/1008.0654} {arXiv:1008.0654 [hep-th]} \BibitemShut
  {NoStop}%
\bibitem [{\citenamefont {{Fuchs}}\ \emph {et~al.}(2013)\citenamefont
  {{Fuchs}}, \citenamefont {{Schweigert}},\ and\ \citenamefont
  {{Valentino}}}]{2013CMaPh.321..543F}%
  \BibitemOpen
  \bibfield  {author} {\bibinfo {author} {\bibfnamefont {J.}~\bibnamefont
  {{Fuchs}}}, \bibinfo {author} {\bibfnamefont {C.}~\bibnamefont
  {{Schweigert}}},\ and\ \bibinfo {author} {\bibfnamefont {A.~r.}\ \bibnamefont
  {{Valentino}}},\ }\bibfield  {title} {\bibinfo {title} {{Bicategories for
  Boundary Conditions and for Surface Defects in 3-d TFT}},\ }\href
  {https://doi.org/10.1007/s00220-013-1723-0} {\bibfield  {journal} {\bibinfo
  {journal} {Comm. Math. Phys.}\ }\textbf {\bibinfo {volume} {321}},\ \bibinfo
  {pages} {543} (\bibinfo {year} {2013})},\ \Eprint
  {https://arxiv.org/abs/1203.4568} {arXiv:1203.4568 [hep-th]} \BibitemShut
  {NoStop}%
\bibitem [{\citenamefont {Bais}\ and\ \citenamefont
  {Slingerland}(2009)}]{Bais2009}%
  \BibitemOpen
  \bibfield  {author} {\bibinfo {author} {\bibfnamefont {F.~A.}\ \bibnamefont
  {Bais}}\ and\ \bibinfo {author} {\bibfnamefont {J.~K.}\ \bibnamefont
  {Slingerland}},\ }\bibfield  {title} {\bibinfo {title} {Condensate-induced
  transitions between topologically ordered phases},\ }\href@noop {} {\bibfield
   {journal} {\bibinfo  {journal} {Phys. Rev. B}\ }\textbf {\bibinfo {volume}
  {79}},\ \bibinfo {pages} {045316} (\bibinfo {year} {2009})}\BibitemShut
  {NoStop}%
\bibitem [{\citenamefont {{Wang}}\ and\ \citenamefont
  {{Wen}}(2015)}]{2015PhRvB..91l5124W}%
  \BibitemOpen
  \bibfield  {author} {\bibinfo {author} {\bibfnamefont {J.~C.}\ \bibnamefont
  {{Wang}}}\ and\ \bibinfo {author} {\bibfnamefont {X.-G.}\ \bibnamefont
  {{Wen}}},\ }\bibfield  {title} {\bibinfo {title} {{Boundary degeneracy of
  topological order}},\ }\href {https://doi.org/10.1103/PhysRevB.91.125124}
  {\bibfield  {journal} {\bibinfo  {journal} {\prb}\ }\textbf {\bibinfo
  {volume} {91}},\ \bibinfo {eid} {125124} (\bibinfo {year} {2015})},\ \Eprint
  {https://arxiv.org/abs/1212.4863} {arXiv:1212.4863 [cond-mat.str-el]}
  \BibitemShut {NoStop}%
\bibitem [{\citenamefont {Hung}\ and\ \citenamefont
  {Wan}(2015{\natexlab{b}})}]{Hung2015a}%
  \BibitemOpen
  \bibfield  {author} {\bibinfo {author} {\bibfnamefont {L.-Y.}\ \bibnamefont
  {Hung}}\ and\ \bibinfo {author} {\bibfnamefont {Y.}~\bibnamefont {Wan}},\
  }\bibfield  {title} {\bibinfo {title} {Generalized ade classification of
  topological boundaries and anyon condensation},\ }\href@noop {} {\bibfield
  {journal} {\bibinfo  {journal} {JHEP}\ }\textbf {\bibinfo {volume} {1507}},\
  \bibinfo {pages} {120}}\BibitemShut {NoStop}%
\bibitem [{\citenamefont {Neupert}\ \emph
  {et~al.}(2016{\natexlab{a}})\citenamefont {Neupert}, \citenamefont {He},
  \citenamefont {von Keyserlingk}, \citenamefont {Sierra},\ and\ \citenamefont
  {Bernevig}}]{Neupert2016}%
  \BibitemOpen
  \bibfield  {author} {\bibinfo {author} {\bibfnamefont {T.}~\bibnamefont
  {Neupert}}, \bibinfo {author} {\bibfnamefont {H.}~\bibnamefont {He}},
  \bibinfo {author} {\bibfnamefont {C.}~\bibnamefont {von Keyserlingk}},
  \bibinfo {author} {\bibfnamefont {G.}~\bibnamefont {Sierra}},\ and\ \bibinfo
  {author} {\bibfnamefont {B.~A.}\ \bibnamefont {Bernevig}},\ }\bibfield
  {title} {\bibinfo {title} {Boson condensation in topologically ordered
  quantum liquids},\ }\href@noop {} {\bibfield  {journal} {\bibinfo  {journal}
  {Phys. Rev. B}\ }\textbf {\bibinfo {volume} {93}},\ \bibinfo {pages} {115103}
  (\bibinfo {year} {2016}{\natexlab{a}})}\BibitemShut {NoStop}%
\bibitem [{\citenamefont {Neupert}\ \emph
  {et~al.}(2016{\natexlab{b}})\citenamefont {Neupert}, \citenamefont {He},
  \citenamefont {von Keyserlingk}, \citenamefont {Sierra},\ and\ \citenamefont
  {Bernevig}}]{Neupert2016a}%
  \BibitemOpen
  \bibfield  {author} {\bibinfo {author} {\bibfnamefont {T.}~\bibnamefont
  {Neupert}}, \bibinfo {author} {\bibfnamefont {H.}~\bibnamefont {He}},
  \bibinfo {author} {\bibfnamefont {C.}~\bibnamefont {von Keyserlingk}},
  \bibinfo {author} {\bibfnamefont {G.}~\bibnamefont {Sierra}},\ and\ \bibinfo
  {author} {\bibfnamefont {B.~A.}\ \bibnamefont {Bernevig}},\ }\bibfield
  {title} {\bibinfo {title} {No-go theorem for boson condensation in
  topologically ordered quantum liquids},\ }\href
  {https://doi.org/10.1088/1367-2630/18/12/123009} {\bibfield  {journal}
  {\bibinfo  {journal} {New J. Phys.}\ }\textbf {\bibinfo {volume} {18}},\
  \bibinfo {pages} {123009} (\bibinfo {year} {2016}{\natexlab{b}})}\BibitemShut
  {NoStop}%
\bibitem [{\citenamefont {Cong}\ \emph
  {et~al.}(2017{\natexlab{b}})\citenamefont {Cong}, \citenamefont {Cheng},\
  and\ \citenamefont {Wang}}]{Cong2017a}%
  \BibitemOpen
  \bibfield  {author} {\bibinfo {author} {\bibfnamefont {I.}~\bibnamefont
  {Cong}}, \bibinfo {author} {\bibfnamefont {M.}~\bibnamefont {Cheng}},\ and\
  \bibinfo {author} {\bibfnamefont {Z.}~\bibnamefont {Wang}},\ }\bibfield
  {title} {\bibinfo {title} {Hamiltonian and algebraic theories of gapped
  boundaries in topological phases of matter},\ }\href@noop {} {\bibfield
  {journal} {\bibinfo  {journal} {Comm. Math. Phys.}\ }\textbf {\bibinfo
  {volume} {355}},\ \bibinfo {pages} {645} (\bibinfo {year}
  {2017}{\natexlab{b}})}\BibitemShut {NoStop}%
\bibitem [{\citenamefont {{Hu}}\ \emph {et~al.}(2018)\citenamefont {{Hu}},
  \citenamefont {{Luo}}, \citenamefont {{Pankovich}}, \citenamefont {{Wan}},\
  and\ \citenamefont {{Wu}}}]{2018JHEP...01..134H}%
  \BibitemOpen
  \bibfield  {author} {\bibinfo {author} {\bibfnamefont {Y.}~\bibnamefont
  {{Hu}}}, \bibinfo {author} {\bibfnamefont {Z.-X.}\ \bibnamefont {{Luo}}},
  \bibinfo {author} {\bibfnamefont {R.}~\bibnamefont {{Pankovich}}}, \bibinfo
  {author} {\bibfnamefont {Y.}~\bibnamefont {{Wan}}},\ and\ \bibinfo {author}
  {\bibfnamefont {Y.-S.}\ \bibnamefont {{Wu}}},\ }\bibfield  {title} {\bibinfo
  {title} {{Boundary Hamiltonian theory for gapped topological phases on an
  open surface}},\ }\href {https://doi.org/10.1007/JHEP01(2018)134} {\bibfield
  {journal} {\bibinfo  {journal} {Journal of High Energy Physics}\ }\textbf
  {\bibinfo {volume} {2018}},\ \bibinfo {eid} {134} (\bibinfo {year} {2018})},\
  \Eprint {https://arxiv.org/abs/1706.03329} {arXiv:1706.03329
  [cond-mat.str-el]} \BibitemShut {NoStop}%
\bibitem [{\citenamefont {{May-Mann}}\ and\ \citenamefont
  {{Hughes}}(2019)}]{2019PhRvB..99o5134M}%
  \BibitemOpen
  \bibfield  {author} {\bibinfo {author} {\bibfnamefont {J.}~\bibnamefont
  {{May-Mann}}}\ and\ \bibinfo {author} {\bibfnamefont {T.~L.}\ \bibnamefont
  {{Hughes}}},\ }\bibfield  {title} {\bibinfo {title} {{Families of gapped
  interfaces between fractional quantum Hall states}},\ }\href
  {https://doi.org/10.1103/PhysRevB.99.155134} {\bibfield  {journal} {\bibinfo
  {journal} {\prb}\ }\textbf {\bibinfo {volume} {99}},\ \bibinfo {eid} {155134}
  (\bibinfo {year} {2019})},\ \Eprint {https://arxiv.org/abs/1810.03673}
  {arXiv:1810.03673 [cond-mat.str-el]} \BibitemShut {NoStop}%
\bibitem [{\citenamefont {Shen}\ and\ \citenamefont
  {Hung}(2019)}]{2019arXiv190108285S}%
  \BibitemOpen
  \bibfield  {author} {\bibinfo {author} {\bibfnamefont {C.}~\bibnamefont
  {Shen}}\ and\ \bibinfo {author} {\bibfnamefont {L.-Y.}\ \bibnamefont
  {Hung}},\ }\bibfield  {title} {\bibinfo {title} {Defect verlinde formula for
  edge excitations in topological order},\ }\href
  {https://doi.org/10.1103/PhysRevLett.123.051602} {\bibfield  {journal}
  {\bibinfo  {journal} {Phys. Rev. Lett.}\ }\textbf {\bibinfo {volume} {123}},\
  \bibinfo {pages} {051602} (\bibinfo {year} {2019})},\ \Eprint
  {https://arxiv.org/abs/1901.08285} {1901.08285} \BibitemShut {NoStop}%
\bibitem [{\citenamefont {{Hu}}\ and\ \citenamefont
  {{Wan}}(2019)}]{2019JHEP...05..110H}%
  \BibitemOpen
  \bibfield  {author} {\bibinfo {author} {\bibfnamefont {Y.}~\bibnamefont
  {{Hu}}}\ and\ \bibinfo {author} {\bibfnamefont {Y.}~\bibnamefont {{Wan}}},\
  }\bibfield  {title} {\bibinfo {title} {{Entanglement entropy, quantum
  fluctuations, and thermal entropy in topological phases}},\ }\href
  {https://doi.org/10.1007/JHEP05(2019)110} {\bibfield  {journal} {\bibinfo
  {journal} {Journal of High Energy Physics}\ }\textbf {\bibinfo {volume}
  {2019}},\ \bibinfo {eid} {110} (\bibinfo {year} {2019})},\ \Eprint
  {https://arxiv.org/abs/1901.09033} {arXiv:1901.09033 [cond-mat.str-el]}
  \BibitemShut {NoStop}%
\bibitem [{\citenamefont {Bridgeman}\ and\ \citenamefont
  {Barter}(2020)}]{2019arXiv190706692B}%
  \BibitemOpen
  \bibfield  {author} {\bibinfo {author} {\bibfnamefont {J.~C.}\ \bibnamefont
  {Bridgeman}}\ and\ \bibinfo {author} {\bibfnamefont {D.}~\bibnamefont
  {Barter}},\ }\bibfield  {title} {\bibinfo {title} {Computing data for
  {L}evin-{W}en with defects},\ }\href
  {https://doi.org/10.22331/q-2020-06-04-277} {\bibfield  {journal} {\bibinfo
  {journal} {{Quantum}}\ }\textbf {\bibinfo {volume} {4}},\ \bibinfo {pages}
  {277} (\bibinfo {year} {2020})},\ \Eprint {https://arxiv.org/abs/1907.06692}
  {1907.06692} \BibitemShut {NoStop}%
\bibitem [{\citenamefont {{Lan}}\ \emph {et~al.}(2020)\citenamefont {{Lan}},
  \citenamefont {{Wen}}, \citenamefont {{Kong}},\ and\ \citenamefont
  {{Wen}}}]{Lan2019}%
  \BibitemOpen
  \bibfield  {author} {\bibinfo {author} {\bibfnamefont {T.}~\bibnamefont
  {{Lan}}}, \bibinfo {author} {\bibfnamefont {X.}~\bibnamefont {{Wen}}},
  \bibinfo {author} {\bibfnamefont {L.}~\bibnamefont {{Kong}}},\ and\ \bibinfo
  {author} {\bibfnamefont {X.-G.}\ \bibnamefont {{Wen}}},\ }\bibfield  {title}
  {\bibinfo {title} {{Gapped domain walls between 2+1D topologically ordered
  states}},\ }\href {https://doi.org/10.1103/PhysRevResearch.2.023331}
  {\bibfield  {journal} {\bibinfo  {journal} {Phys. Rev. Res.}\ }\textbf
  {\bibinfo {volume} {2}},\ \bibinfo {eid} {023331} (\bibinfo {year} {2020})},\
  \Eprint {https://arxiv.org/abs/1911.08470} {arXiv:1911.08470
  [cond-mat.str-el]} \BibitemShut {NoStop}%
\bibitem [{\citenamefont {{Lootens}}\ \emph {et~al.}(2020)\citenamefont
  {{Lootens}}, \citenamefont {{Fuchs}}, \citenamefont {{Haegeman}},
  \citenamefont {{Schweigert}},\ and\ \citenamefont
  {{Verstraete}}}]{2020arXiv200811187L}%
  \BibitemOpen
  \bibfield  {author} {\bibinfo {author} {\bibfnamefont {L.}~\bibnamefont
  {{Lootens}}}, \bibinfo {author} {\bibfnamefont {J.}~\bibnamefont {{Fuchs}}},
  \bibinfo {author} {\bibfnamefont {J.}~\bibnamefont {{Haegeman}}}, \bibinfo
  {author} {\bibfnamefont {C.}~\bibnamefont {{Schweigert}}},\ and\ \bibinfo
  {author} {\bibfnamefont {F.}~\bibnamefont {{Verstraete}}},\ }\bibfield
  {title} {\bibinfo {title} {{Matrix product operator symmetries and
  intertwiners in string-nets with domain walls}},\ }\href@noop {} {\bibfield
  {journal} {\bibinfo  {journal} {arXiv e-prints}\ ,\ \bibinfo {eid}
  {arXiv:2008.11187}} (\bibinfo {year} {2020})},\ \Eprint
  {https://arxiv.org/abs/2008.11187} {arXiv:2008.11187 [quant-ph]} \BibitemShut
  {NoStop}%
\bibitem [{\citenamefont {Kitaev}\ and\ \citenamefont
  {Preskill}(2006)}]{Kitaev2006}%
  \BibitemOpen
  \bibfield  {author} {\bibinfo {author} {\bibfnamefont {A.}~\bibnamefont
  {Kitaev}}\ and\ \bibinfo {author} {\bibfnamefont {J.}~\bibnamefont
  {Preskill}},\ }\bibfield  {title} {\bibinfo {title} {Topological entanglement
  entropy},\ }\href {https://doi.org/10.1103/PhysRevLett.96.110404} {\bibfield
  {journal} {\bibinfo  {journal} {Phys. Rev. Lett.}\ }\textbf {\bibinfo
  {volume} {96}},\ \bibinfo {pages} {110404} (\bibinfo {year}
  {2006})}\BibitemShut {NoStop}%
\bibitem [{\citenamefont {Levin}\ and\ \citenamefont {Wen}(2006)}]{Levin2006}%
  \BibitemOpen
  \bibfield  {author} {\bibinfo {author} {\bibfnamefont {M.}~\bibnamefont
  {Levin}}\ and\ \bibinfo {author} {\bibfnamefont {X.-G.}\ \bibnamefont
  {Wen}},\ }\bibfield  {title} {\bibinfo {title} {Detecting topological order
  in a ground state wave function},\ }\href
  {https://doi.org/10.1103/PhysRevLett.96.110405} {\bibfield  {journal}
  {\bibinfo  {journal} {Phys. Rev. Lett.}\ }\textbf {\bibinfo {volume} {96}},\
  \bibinfo {pages} {110405} (\bibinfo {year} {2006})}\BibitemShut {NoStop}%
\bibitem [{\citenamefont {Zhang}\ \emph {et~al.}(2012)\citenamefont {Zhang},
  \citenamefont {Grover}, \citenamefont {Turner}, \citenamefont {Oshikawa},\
  and\ \citenamefont {Vishwanath}}]{Zhang2012}%
  \BibitemOpen
  \bibfield  {author} {\bibinfo {author} {\bibfnamefont {Y.}~\bibnamefont
  {Zhang}}, \bibinfo {author} {\bibfnamefont {T.}~\bibnamefont {Grover}},
  \bibinfo {author} {\bibfnamefont {A.}~\bibnamefont {Turner}}, \bibinfo
  {author} {\bibfnamefont {M.}~\bibnamefont {Oshikawa}},\ and\ \bibinfo
  {author} {\bibfnamefont {A.}~\bibnamefont {Vishwanath}},\ }\bibfield  {title}
  {\bibinfo {title} {Quasiparticle statistics and braiding from ground-state
  entanglement},\ }\href {https://doi.org/10.1103/PhysRevB.85.235151}
  {\bibfield  {journal} {\bibinfo  {journal} {Phys. Rev. B}\ }\textbf {\bibinfo
  {volume} {85}},\ \bibinfo {pages} {235151} (\bibinfo {year}
  {2012})}\BibitemShut {NoStop}%
\bibitem [{\citenamefont {Isakov}\ \emph {et~al.}(2011)\citenamefont {Isakov},
  \citenamefont {Hastings},\ and\ \citenamefont {Melko}}]{Isakov2011}%
  \BibitemOpen
  \bibfield  {author} {\bibinfo {author} {\bibfnamefont {S.~V.}\ \bibnamefont
  {Isakov}}, \bibinfo {author} {\bibfnamefont {M.~B.}\ \bibnamefont
  {Hastings}},\ and\ \bibinfo {author} {\bibfnamefont {R.~G.}\ \bibnamefont
  {Melko}},\ }\bibfield  {title} {\bibinfo {title} {Topological entanglement
  entropy of a bose-hubbard spin liquid},\ }\href@noop {} {\bibfield  {journal}
  {\bibinfo  {journal} {Nature Phys.}\ }\textbf {\bibinfo {volume} {7}},\
  \bibinfo {pages} {772} (\bibinfo {year} {2011})}\BibitemShut {NoStop}%
\bibitem [{\citenamefont {Jiang}\ \emph {et~al.}(2012)\citenamefont {Jiang},
  \citenamefont {Wang},\ and\ \citenamefont {Balents}}]{Jiang2012}%
  \BibitemOpen
  \bibfield  {author} {\bibinfo {author} {\bibfnamefont {H.-C.}\ \bibnamefont
  {Jiang}}, \bibinfo {author} {\bibfnamefont {Z.}~\bibnamefont {Wang}},\ and\
  \bibinfo {author} {\bibfnamefont {L.}~\bibnamefont {Balents}},\ }\bibfield
  {title} {\bibinfo {title} {Identifying topological order by entanglement
  entropy},\ }\href@noop {} {\bibfield  {journal} {\bibinfo  {journal} {Nature
  Phys.}\ }\textbf {\bibinfo {volume} {8}},\ \bibinfo {pages} {902} (\bibinfo
  {year} {2012})}\BibitemShut {NoStop}%
\bibitem [{\citenamefont {Cincio}\ and\ \citenamefont
  {Vidal}(2013)}]{Cincio2013}%
  \BibitemOpen
  \bibfield  {author} {\bibinfo {author} {\bibfnamefont {L.}~\bibnamefont
  {Cincio}}\ and\ \bibinfo {author} {\bibfnamefont {G.}~\bibnamefont {Vidal}},\
  }\bibfield  {title} {\bibinfo {title} {Characterizing topological order by
  studying the ground states on an infinite cylinder},\ }\href
  {https://doi.org/10.1103/PhysRevLett.110.067208} {\bibfield  {journal}
  {\bibinfo  {journal} {Phys. Rev. Lett.}\ }\textbf {\bibinfo {volume} {110}},\
  \bibinfo {pages} {067208} (\bibinfo {year} {2013})}\BibitemShut {NoStop}%
\bibitem [{\citenamefont {{Shi}}\ \emph {et~al.}(2020)\citenamefont {{Shi}},
  \citenamefont {{Kato}},\ and\ \citenamefont {{Kim}}}]{SKK2019}%
  \BibitemOpen
  \bibfield  {author} {\bibinfo {author} {\bibfnamefont {B.}~\bibnamefont
  {{Shi}}}, \bibinfo {author} {\bibfnamefont {K.}~\bibnamefont {{Kato}}},\ and\
  \bibinfo {author} {\bibfnamefont {I.~H.}\ \bibnamefont {{Kim}}},\ }\bibfield
  {title} {\bibinfo {title} {{Fusion rules from entanglement}},\ }\href
  {https://doi.org/10.1016/j.aop.2020.168164} {\bibfield  {journal} {\bibinfo
  {journal} {Ann. Phys.}\ }\textbf {\bibinfo {volume} {418}},\ \bibinfo {eid}
  {168164} (\bibinfo {year} {2020})},\ \Eprint
  {https://arxiv.org/abs/1906.09376} {arXiv:1906.09376 [cond-mat.str-el]}
  \BibitemShut {NoStop}%
\bibitem [{\citenamefont {{Shi}}\ and\ \citenamefont
  {{Kim}}(2020)}]{EntanglementBootstrap_short}%
  \BibitemOpen
  \bibfield  {author} {\bibinfo {author} {\bibfnamefont {B.}~\bibnamefont
  {{Shi}}}\ and\ \bibinfo {author} {\bibfnamefont {I.~H.}\ \bibnamefont
  {{Kim}}},\ }\bibfield  {title} {\bibinfo {title} {{Domain wall topological
  entanglement entropy}},\ }\href@noop {} {\bibfield  {journal} {\bibinfo
  {journal} {arXiv e-prints}\ ,\ \bibinfo {eid} {arXiv:2008.11794}} (\bibinfo
  {year} {2020})},\ \Eprint {https://arxiv.org/abs/2008.11794}
  {arXiv:2008.11794 [cond-mat.str-el]} \BibitemShut {NoStop}%
\bibitem [{\citenamefont {Ferrara}\ \emph {et~al.}(1973)\citenamefont
  {Ferrara}, \citenamefont {Grillo},\ and\ \citenamefont
  {Gatto}}]{Ferrara1973}%
  \BibitemOpen
  \bibfield  {author} {\bibinfo {author} {\bibfnamefont {S.}~\bibnamefont
  {Ferrara}}, \bibinfo {author} {\bibfnamefont {A.}~\bibnamefont {Grillo}},\
  and\ \bibinfo {author} {\bibfnamefont {R.}~\bibnamefont {Gatto}},\ }\bibfield
   {title} {\bibinfo {title} {Tensor representations of conformal algebra and
  conformally covariant operator product expansion},\ }\href
  {https://doi.org/https://doi.org/10.1016/0003-4916(73)90446-6} {\bibfield
  {journal} {\bibinfo  {journal} {Ann. Phys.}\ }\textbf {\bibinfo {volume}
  {76}},\ \bibinfo {pages} {161 } (\bibinfo {year} {1973})}\BibitemShut
  {NoStop}%
\bibitem [{\citenamefont {Polyakov}(1974)}]{Polyakov1974}%
  \BibitemOpen
  \bibfield  {author} {\bibinfo {author} {\bibfnamefont {A.~M.}\ \bibnamefont
  {Polyakov}},\ }\bibfield  {title} {\bibinfo {title} {{Nonhamiltonian approach
  to conformal quantum field theory}},\ }\href@noop {} {\bibfield  {journal}
  {\bibinfo  {journal} {Zh. Eksp. Teor. Fiz.}\ }\textbf {\bibinfo {volume}
  {66}},\ \bibinfo {pages} {23} (\bibinfo {year} {1974})},\ \bibinfo {note}
  {[Sov. Phys. JETP39,9(1974)]}\BibitemShut {NoStop}%
%%CITATION = ZETFA,66,23;%%
\bibitem [{\citenamefont {Kim}(2015)}]{Kim2015sydney}%
  \BibitemOpen
  \bibfield  {author} {\bibinfo {author} {\bibfnamefont {I.~H.}\ \bibnamefont
  {Kim}},\ }\bibfield  {title} {\bibinfo {title} {{Conservation laws from
  entanglement}},\ }\href
  {www.physics.usyd.edu.au/quantum/Coogee2015/Presentations/Kim.pdf} {\bibfield
   {journal} {\bibinfo  {journal} {Sydney Quantum Information Theory
  Workshop,}\ } (\bibinfo {year} {Jan. 22, 2015})}\BibitemShut {NoStop}%
\bibitem [{\citenamefont {{Shi}}(2019)}]{Shi2018}%
  \BibitemOpen
  \bibfield  {author} {\bibinfo {author} {\bibfnamefont {B.}~\bibnamefont
  {{Shi}}},\ }\bibfield  {title} {\bibinfo {title} {{Seeing topological
  entanglement through the information convex}},\ }\href
  {https://doi.org/10.1103/PhysRevResearch.1.033048} {\bibfield  {journal}
  {\bibinfo  {journal} {Phys. Rev. Res.}\ }\textbf {\bibinfo {volume} {1}},\
  \bibinfo {eid} {033048} (\bibinfo {year} {2019})},\ \Eprint
  {https://arxiv.org/abs/1810.01986} {arXiv:1810.01986 [cond-mat.str-el]}
  \BibitemShut {NoStop}%
\bibitem [{\citenamefont {Williamson}\ \emph {et~al.}(2019)\citenamefont
  {Williamson}, \citenamefont {Dua},\ and\ \citenamefont
  {Cheng}}]{Williamson2019}%
  \BibitemOpen
  \bibfield  {author} {\bibinfo {author} {\bibfnamefont {D.~J.}\ \bibnamefont
  {Williamson}}, \bibinfo {author} {\bibfnamefont {A.}~\bibnamefont {Dua}},\
  and\ \bibinfo {author} {\bibfnamefont {M.}~\bibnamefont {Cheng}},\ }\bibfield
   {title} {\bibinfo {title} {Spurious topological entanglement entropy from
  subsystem symmetries},\ }\href
  {https://doi.org/10.1103/PhysRevLett.122.140506} {\bibfield  {journal}
  {\bibinfo  {journal} {Phys. Rev. Lett.}\ }\textbf {\bibinfo {volume} {122}},\
  \bibinfo {pages} {140506} (\bibinfo {year} {2019})}\BibitemShut {NoStop}%
\bibitem [{\citenamefont {Lieb}\ and\ \citenamefont {Ruskai}(1973)}]{Lieb1973}%
  \BibitemOpen
  \bibfield  {author} {\bibinfo {author} {\bibfnamefont {E.~H.}\ \bibnamefont
  {Lieb}}\ and\ \bibinfo {author} {\bibfnamefont {M.~B.}\ \bibnamefont
  {Ruskai}},\ }\bibfield  {title} {\bibinfo {title} {Proof of the strong
  subadditivity of quantum mechanical entropy},\ }\href
  {https://doi.org/10.1063/1.1666274} {\bibfield  {journal} {\bibinfo
  {journal} {J. Math. Phys.}\ }\textbf {\bibinfo {volume} {14}},\ \bibinfo
  {pages} {1938} (\bibinfo {year} {1973})},\ \Eprint
  {https://arxiv.org/abs/https://doi.org/10.1063/1.1666274}
  {https://doi.org/10.1063/1.1666274} \BibitemShut {NoStop}%
\bibitem [{\citenamefont {{Kato}}\ \emph {et~al.}(2016)\citenamefont {{Kato}},
  \citenamefont {{Furrer}},\ and\ \citenamefont {{Murao}}}]{Kato2016}%
  \BibitemOpen
  \bibfield  {author} {\bibinfo {author} {\bibfnamefont {K.}~\bibnamefont
  {{Kato}}}, \bibinfo {author} {\bibfnamefont {F.}~\bibnamefont {{Furrer}}},\
  and\ \bibinfo {author} {\bibfnamefont {M.}~\bibnamefont {{Murao}}},\
  }\bibfield  {title} {\bibinfo {title} {{Information-theoretical analysis of
  topological entanglement entropy and multipartite correlations}},\ }\href
  {https://doi.org/10.1103/PhysRevA.93.022317} {\bibfield  {journal} {\bibinfo
  {journal} {\pra}\ }\textbf {\bibinfo {volume} {93}},\ \bibinfo {eid} {022317}
  (\bibinfo {year} {2016})},\ \Eprint {https://arxiv.org/abs/1505.01917}
  {arXiv:1505.01917 [quant-ph]} \BibitemShut {NoStop}%
\bibitem [{\citenamefont {Bombin}(2010)}]{Bombin2010}%
  \BibitemOpen
  \bibfield  {author} {\bibinfo {author} {\bibfnamefont {H.}~\bibnamefont
  {Bombin}},\ }\bibfield  {title} {\bibinfo {title} {Topological order with a
  twist: Ising anyons from an abelian model},\ }\href
  {https://doi.org/10.1103/PhysRevLett.105.030403} {\bibfield  {journal}
  {\bibinfo  {journal} {Phys. Rev. Lett.}\ }\textbf {\bibinfo {volume} {105}},\
  \bibinfo {pages} {030403} (\bibinfo {year} {2010})}\BibitemShut {NoStop}%
\bibitem [{\citenamefont {{Shi}}(2020)}]{2020PhRvR...2b3132S}%
  \BibitemOpen
  \bibfield  {author} {\bibinfo {author} {\bibfnamefont {B.}~\bibnamefont
  {{Shi}}},\ }\bibfield  {title} {\bibinfo {title} {{Verlinde formula from
  entanglement}},\ }\href {https://doi.org/10.1103/PhysRevResearch.2.023132}
  {\bibfield  {journal} {\bibinfo  {journal} {Phys. Rev. Res.}\ }\textbf
  {\bibinfo {volume} {2}},\ \bibinfo {eid} {023132} (\bibinfo {year} {2020})},\
  \Eprint {https://arxiv.org/abs/1911.01470} {arXiv:1911.01470
  [cond-mat.str-el]} \BibitemShut {NoStop}%
\bibitem [{\citenamefont {Petz}(1988)}]{Petz1987}%
  \BibitemOpen
  \bibfield  {author} {\bibinfo {author} {\bibfnamefont {D.}~\bibnamefont
  {Petz}},\ }\bibfield  {title} {\bibinfo {title} {{Sufficiency of channels
  over von Neumann algebras}},\ }\href {https://doi.org/10.1093/qmath/39.1.97}
  {\bibfield  {journal} {\bibinfo  {journal} {Q. J. Math.}\ }\textbf {\bibinfo
  {volume} {39}},\ \bibinfo {pages} {97} (\bibinfo {year} {1988})}\BibitemShut
  {NoStop}%
\bibitem [{\citenamefont {{Hayden}}\ \emph {et~al.}(2004)\citenamefont
  {{Hayden}}, \citenamefont {{Jozsa}}, \citenamefont {{Petz}},\ and\
  \citenamefont {{Winter}}}]{2004CMaPh.246..359H}%
  \BibitemOpen
  \bibfield  {author} {\bibinfo {author} {\bibfnamefont {P.}~\bibnamefont
  {{Hayden}}}, \bibinfo {author} {\bibfnamefont {R.}~\bibnamefont {{Jozsa}}},
  \bibinfo {author} {\bibfnamefont {D.}~\bibnamefont {{Petz}}},\ and\ \bibinfo
  {author} {\bibfnamefont {A.}~\bibnamefont {{Winter}}},\ }\bibfield  {title}
  {\bibinfo {title} {{Structure of States Which Satisfy Strong Subadditivity of
  Quantum Entropy with Equality}},\ }\href
  {https://doi.org/10.1007/s00220-004-1049-z} {\bibfield  {journal} {\bibinfo
  {journal} {Comm. Math. Phys.}\ }\textbf {\bibinfo {volume} {246}},\ \bibinfo
  {pages} {359} (\bibinfo {year} {2004})},\ \Eprint
  {https://arxiv.org/abs/quant-ph/0304007} {quant-ph/0304007} \BibitemShut
  {NoStop}%
\bibitem [{\citenamefont {Ibinson}\ \emph {et~al.}(2008)\citenamefont
  {Ibinson}, \citenamefont {Linden},\ and\ \citenamefont
  {Winter}}]{Ibinson2008}%
  \BibitemOpen
  \bibfield  {author} {\bibinfo {author} {\bibfnamefont {B.}~\bibnamefont
  {Ibinson}}, \bibinfo {author} {\bibfnamefont {N.}~\bibnamefont {Linden}},\
  and\ \bibinfo {author} {\bibfnamefont {A.}~\bibnamefont {Winter}},\
  }\bibfield  {title} {\bibinfo {title} {Robustness of quantum markov chains},\
  }\href {https://doi.org/10.1007/s00220-007-0362-8} {\bibfield  {journal}
  {\bibinfo  {journal} {Comm. Math. Phys.}\ }\textbf {\bibinfo {volume}
  {277}},\ \bibinfo {pages} {289} (\bibinfo {year} {2008})}\BibitemShut
  {NoStop}%
\bibitem [{\citenamefont {Kim}(2014)}]{Kim2014a}%
  \BibitemOpen
  \bibfield  {author} {\bibinfo {author} {\bibfnamefont {I.~H.}\ \bibnamefont
  {Kim}},\ }\bibfield  {title} {\bibinfo {title} {On the informational
  completeness of local observables},\ }\href@noop {} {\bibfield  {journal}
  {\bibinfo  {journal} {arXiv:1405.0137}\ } (\bibinfo {year}
  {2014})}\BibitemShut {NoStop}%
\end{thebibliography}%
\end{document}